\newcommand{\cO}{\mathcal{O}}
\newcommand{\term}[1]{#1}
\newcommand{\pname}{\textsc}
\newcommand{\ProblemFormat}[1]{\pname{#1}}
\newcommand{\ProblemIndex}[1]{\index{problem!\ProblemFormat{#1}}}
\newcommand{\ProblemName}[1]{\ProblemFormat{#1}\ProblemIndex{#1}{}\xspace}
 \newcommand{\probVDHp}[1]{\ProblemName{Vertex Deletion to #1}}
 \newcommand{\probVDH}{\ProblemName{Vertex Deletion to $\cH$}}
 \newcommand{\probEDH}{\ProblemName{Elimination Distance to $\cH$}}
 \newcommand{\probTDH}{\ProblemName{Treewidth Decomposition to $\cH$}}
\newcommand{\probPVD}{\ProblemName{Planar Vertex Deletion}}
\newcommand{\probCVD}{\ProblemName{Chordal Vertex Deletion}}
\newcommand{\probIVD}{\ProblemName{Interval Vertex Deletion}}
\newcommand{\probOCT}{\ProblemName{Odd Cycle Transversal}}
\newcommand{\probFVS}{\ProblemName{Feedback Vertex Set}}
 \newcommand{\probVC}{\ProblemName{Vertex Cover}}
\newcommand{\rw}{\mathbf{rwd}}
\newcommand{\tw}{\mathbf{tw}}
\newcommand{\td}{\mathbf{td}}
\newcommand{\ed}{\mathbf{ed}}
\newcommand{\md}{\mathbf{mod}}
\newcommand{\Oh}{\mathcal{O}}
\newcommand{\hh}{\ensuremath{\mathcal{H}}}
\newcommand{\mso}{{\sf CMSO}\xspace}
\newcommand{\fii}{{\sf FII}\xspace}
\newcommand{\msotwo}{{\sf CMSO}\xspace}
\definecolor{blueish}{rgb}{0.122, 0.435, 0.698}
\definecolor{dagstuhlyellow}{rgb}{0.99,0.78,0.07}
\definecolor{lightgray}{rgb}{0.9,0.9,0.9}
\newtcbox{\colbox}{
size=title,
  nobeforeafter,
  colframe=white,
  colback=blue!5!white,
  arc=10pt,
  tcbox raise base}
\newcommand{\hhtw}[1][\hh]{\tw_{#1}}
\newcommand{\hhdepth}[1][\hh]{\ed_{#1}}
\newcommand{\hhmd}[1][\hh]{\md_{#1}}
\newcommand{\hhtwfull}[1][\hh]{{#1}-treewidth}
\begin{document}

\title{\vspace{-0cm} \bf Deleting, Eliminating and Decomposing to Hereditary Classes Are All FPT-Equivalent}

\date{}



%
%
%

\author{
Akanksha Agrawal\thanks{Indian Institute of Technology Madras, Chennai, India. \texttt{akanksha@cse.iitm.ac.in}}
\and 
Lawqueen Kanesh\thanks{National University of Singapore, Singapore, Singapore, \texttt{lawqueen@comp.nus.edu.sg}}
\and 
Daniel Lokshtanov\thanks{University of California Santa Barbara, USA. \texttt{daniello@ucsb.edu}}
  \and 
 Fahad Panolan\thanks{Indian Institute of Technology, Hyderabad, India, \texttt{fahad@cse.iith.ac.in}}
 \and 
 M. S. Ramanujan\thanks{University of Warwick, Coventry, UK, \texttt{R.Maadapuzhi-Sridharan@warwick.ac.uk}}
 \and  Saket Saurabh\thanks{The Institute of Mathematical Sciences, HBNI, Chennai, India, and University of Bergen, Bergen, Norway. \texttt{saket@imsc.res.in}}
  \and Meirav Zehavi\thanks{Ben-Gurion University of the Negev, Beersheba, Israel. \texttt{meiravze@bgu.ac.il}}
}

\maketitle

\thispagestyle{empty}

\maketitle

\begin{abstract}

  Vertex-deletion problems have been at the heart of parameterized complexity throughout its history. Here, the aim is to determine the minimum size (denoted by $\hhmd$) of a modulator to a graph class $\cH$, i.e., a set of vertices whose deletion results in a graph in $\cH$. 
 Recent years have seen the development of a research programme where the complexity of modulators is measured in ways other than size. 
  For instance, for a graph class $\cH$, the graph parameters elimination distance to $\cH$ (denoted by $\hhdepth$) [Bulian and Dawar, Algorithmica, 2016], and $\cH$-treewidth (denoted by $\hhtw$) [Eiben et al. JCSS, 2021]  aim to minimize the treedepth and treewidth,  respectively, of the ``torso" of the graph induced on a modulator to the graph class $\cH$. Here, the torso of a vertex set $S$ in a graph $G$ is the graph with vertex set $S$ and an edge between two vertices $u, v \in S$ if there is a path between $u$ and $v$ in $G$ whose internal vertices all lie outside $S$. 

  In this paper, we show that from the perspective of (non-uniform) fixed-parameter tractability (FPT), the three parameters described above give equally powerful parameterizations 
  for every hereditary graph class $\cH$ that satisfies mild additional conditions. 
  In fact, 
  we show that for every hereditary graph class $\cH$ satisfying mild additional conditions, with the exception of $\hhtw$ parameterized by $\hhdepth$, for every pair of these parameters, computing one parameterized by itself or any of the others is {\FPT}-equivalent to the standard vertex-deletion (to $\cH$) problem. As an example, we prove that an {\FPT} algorithm for the vertex-deletion problem implies a non-uniform {\FPT} algorithm for computing $\hhdepth$ and $\hhtw$. 
  
  The conclusions of non-uniform {\FPT} algorithms being somewhat unsatisfactory,  we essentially prove that if $\cH$ is hereditary, union-closed, CMSO-definable, and (a) the canonical equivalence relation (or any refinement thereof) for membership in the class can be efficiently computed, or (b) the class admits a ``strong irrelevant vertex rule", then there exists a uniform {\FPT} algorithm for $\hhdepth$. Using these sufficient conditions, we obtain {\em uniform} {\FPT} algorithms for computing $\hhdepth$, when $\cH$ is defined by excluding a finite number of connected (a)~minors, or   (b)~topological minors, or (c)~induced subgraphs, or when  $\hh$ is any of bipartite, chordal or interval graphs. 
 For most of these problems, the existence of a uniform {\FPT} algorithm has remained open in the literature. In fact, for some of them, even a non-uniform {\FPT} algorithm was not known. For example, Jansen et al.~[STOC 2021] ask for such an algorithm when $\cH$ is defined by excluding a finite number of connected topological minors. We resolve their question in the affirmative.

%
%
%
%
\end{abstract}
\newpage
\pagenumbering{gobble}
\tableofcontents

\newpage

\pagestyle{plain}
\setcounter{page}{1}
\pagenumbering{arabic}
\section{Introduction}



Delineating borders of fixed-parameter tractability (FPT) is arguably the biggest quest of parameterized complexity. Central to this goal is the class of vertex-deletion problems (to a graph class $\hh$). In the {\probVDH} problem, the goal is to compute a minimum set of vertices to delete from the input graph, in order to obtain a graph contained in $\hh$.  For a graph $G$, $\hhmd(G)$ denotes the size of a smallest vertex set $S$ such that $G-S \in {\cal H}$. If $G-S\in \cH$, then $S$ is called a {\em modulator to $\cal H$}.  The parameterized complexity of vertex-deletion problems has been extensively studied for numerous choices of $\cH$, e.g.,  when ${\cal H} $ is planar, bipartite, chordal, interval, acyclic or edgeless, respectively, we get  the classical \probPVD,  \probOCT,  \probCVD,  \probIVD,  \probFVS, and \probVC problems (see, for example, \cite{CyganFKLMPPS15}).
 
 In the study of the parameterized complexity of vertex-deletion problems, 
 solution size and various graph-width measures have been the most frequently considered parameterizations, and over the past three decades their power has been well understood. In particular, numerous vertex-deletion problems are known to be fixed-parameter tractable (i.e., can be solved in time $f(k)n^{\bigoh(1)}$, where $k$ is the parameter and $n$ is the input size) under these parameterizations. In light of this state of the art, recent efforts have shifted to the goal of identifying and exploring ``hybrid" parameters such that:
\vspace{-3 pt}
 \begin{enumerate}[{\bf (a)}]\setlength{\itemsep}{-3pt}
\item   they are upper bounded by the solution size as well as certain graph-width measures, \item  they can be arbitrarily (and simultaneously) smaller than both the solution size and graph-width measures, and \item  a significant number of the problems that are in the class FPT when parameterized by solution size or graph-width measures, can be shown to be in the class FPT also when parameterized by these new parameters. \end{enumerate}

Two recently introduced parameters in this line of research are: 
(a) $\hh$-elimination distance and (b) 
\hhtwfull{} of~$G$. The $\hh$-elimination distance of  a graph $G$ to $\cH$ (denoted $\hhdepth(G)$) was introduced by Bulian and Dawar~\cite{BulianD16} and roughly speaking, it expresses the number of rounds needed to obtain a graph in $\cH$ by removing one vertex from every connected component in each round. We refer the reader to Section~\ref{sec:prelims} for a more formal definition. The reader familiar with the notion of treedepth~\cite{NesetrilM06} will be able to see that this closely follows the recursive definition of treedepth. That is, if $\cH$ is the class of empty graphs, then the $\cH$-elimination distance of $G$ is nothing but the treedepth of $G$.  In fact, if $\cH$ is union-closed ({\em as will be the case for all graph classes we consider in this paper}), then one gets the following equivalent perspective on this notion. The $\hh$-elimination distance of $G$ is defined as the minimum possible {\em treedepth} of the torso of a modulator of $G$ to $\hh$. Here, the torso of a vertex set $S$ in a graph $G$ is the graph with vertex set $S$ and an edge between two vertices $u, v \in S$ if there is a path between $u$ and $v$ in $G$ whose internal vertices all lie outside $S$. 
Consequently, it is easy to see that $\hh$-elimination distance of $G$ is always upper bounded by both the size of the smallest modulator of $G$ to $\hh$ (i.e., $\hhmd(G)$) as well as the treedepth of $G$\footnote{For this, we always assume that $\hh$ contains the empty graph and so $V(G)$ is a trivial modulator to $\hh$.}. 

 The second parameter of interest for us is \hhtwfull{} which, roughly speaking, aims to ``generalize" treewidth and solution size, the same way that elimination distance aims to generalize treedepth and solution size. This notion was recently introduced by Eiben et al.~\cite{EibenGHK19} and builds on a similar hybrid parameterization which was first developed in the context of solving CSPs~\cite{GanianRS17d} and found applications also in   algorithms for SAT~\cite{GanianRS17a} and Mixed ILPs~\cite{GanianOR17}.   
Specifically, 
%
%
%
%
  a tree $\hh$-decomposition of a graph~$G$ of width $\ell$ is a tree decomposition of $G$ along with a set $L\subseteq V(G)$ (called base vertices), such that  
%
%
%
%
%
%
(i) each  base vertex appears in exactly one bag, (ii) the base vertices in a bag induce a subgraph belonging to~$\hh$, and (iii) the number of non-base vertices in any bag is at most $\ell+1$.  The value of \hhtwfull\ of $G$ (denoted $\hhtw(G)$) is the minimum width taken over all tree $\hh$-decompositions of $G$. The utility of this definition arises from the fact that \hhtwfull\ of $G$ is always upper bounded by the treewidth of $G$ (indeed, one could simply take a tree-decomposition of $G$ attaining the treewidth of $G$ and set $L=\emptyset$) and moreover, one can design fixed-parameter algorithms for several  problems parameterized by the $\hhtw$ of the graph~\cite{EibenGHK19,JansenKW21}. On the other hand, for union-closed graph classes $\cH$, the $\cH$-treewidth of $G$ is nothing but the minimum possible {\em treewidth} of the torso of a modulator of $G$ to $\hh$. This immediately implies that the $\cH$-treewidth of $G$ is always upper bounded by both the treewidth of $G$ as well as the size of the smallest modulator of $G$ to $\hh$ (i.e., $\hhmd(G)$). 

It is fairly easy to see that $\hhdepth(G)$ (respectively, $\hhtw(G)$) can be arbitrarily smaller than both $\hhmd(G)$ and the treedepth of $G$ (respectively, the treewidth of $G$). Similarly, $\hhtw(G)$ itself can be arbitrarily smaller than $\hhdepth(G)$. 
We refer the reader to \cite{GanianRS17,JansenK021} for some illustrative examples of this fact.  As  both these parameters satisfy Properties {\bf (a)} and {\bf (b)} listed above, there has been a sustained effort in the last few years to investigate the extent to which Property {\bf (c)} is satisfied by these two parameters. In this effort, one naturally encounters the following two fundamental algorithmic questions:

\smallskip

\begin{tcolorbox}[colback=blue!5!white,colframe=white!100!black] 
\begin{description}
\item[Question 1:]  For which families $\hh$ of graphs  is  \probEDH (\probTDH)  \FPT\ when parameterized by $\hhdepth$ (respectively, $\hhtw$)? 

\label{abc}
\item[Question 2:]  For which families $\hh$ of graphs is \probVDH parameterized by  $\hhdepth(G)$ ($\hhtw(G)$) \FPT? 
\end{description}
 \end{tcolorbox}
 
In {\probEDH} ({\probTDH}), the input is a graph $G$ and integer $k$ and the goal is to decide whether $\hhdepth(G)$ (respectively, $\hhtw(G)$) is at most $k$. The parameter in both problems is $k$.  
Both the questions listed above are extremely wide-ranging and challenging. Indeed, for Question 1, notice that not even an \XP\ algorithm (running in time $n^{\cO(g(k))}$, where $k$ is $\hhdepth$ or $\hhtw$) is obvious even for well-understood graph classes $\cH$, such as bipartite graphs. On the other hand, $\probVDH$ in this case has a trivial  $n^{\cO(k)}$-time algorithm where one simply guesses the minimum modulator to $\cH$ and checks whether the graph induced by the rest of the vertices is bipartite, in linear time. 
In the absence of a resolution to Question 1, Question 2 then brings with it the challenge of solving \probVDH\ (or indeed, any problem) without necessarily being able to efficiently compute $\hhdepth$ or $\hhtw$. 


\paragraph{State of the art for Question 1.} 
 In their work, Bulian and Dawar~\cite{BulianD17} showed that the $\probEDH$ 
problem
is \FPT, when $\hh$ is a minor-closed class of graphs and asked whether  
it is \FPT, when $\hh$ is the family of graphs of degree at most  $d$. 
 In a partial resolution to this question, Lindermayr et al.~\cite{LindermayrSV20} showed that \probEDH is 
 \FPT\ when we restrict the input  graph to be planar. Finally,  Agrawal et al.~\cite{AgrawalKFR21}, resolved this question completely by showing that the problem is  (non-uniformly) {\FPT} (we refer the reader to Section~\ref{sec:towardsUniform} for a definition of non-uniform {\FPT}). In fact, they obtained their result for all 
 $\hh$ that are characterized by a finite family of induced subgraphs. 
  Recently,  Jansen and de Kroon~\cite{corr/abs-2106-04191} extended the aforementioned result of Agrawal et al.~\cite{AgrawalKFR21} further, and showed that \probTDH is also (non-uniformly) \FPT\ for $\hh$ that are characterized by  a finite family of induced subgraphs. In the same paper they also showed that 
  \probTDH  (\probEDH) is non-uniformly \FPT\  for $\hh$ being the family of bipartite graphs.  Even more recently, Fomin et al.~\cite{corr/abs-2104-02998} showed that for every graph family $\hh$  expressible by a first order-logic formula 
\probEDH is (non-uniformly) \FPT. Since  a family of graphs characterized by a finite set of forbidden induced subgraphs is expressible in this fragment of logic,  this result also generalizes the result of Agrawal et al.~\cite{AgrawalKFR21}. Until this result of Fomin et al., the research on Question 1 has essentially proceeded on a case-by-case basis, where each paper considers a specific choice of $\cH$. 
\paragraph{State of the art for Question 2.} 
In a recent paper, Jansen et al.~\cite{JansenK021} 
provide a general framework to  design {\em {\FPT}-approximation} algorithms 
for $\hhdepth$ and $\hhtw$ for various choices of $\cH$. For instance, when  $\cH$ is bipartite or characterized by a finite set of forbidden (topological) minors, they give \FPT\ algorithms (parameterized by $\hhtw$) that compute a tree $\hh$-decomposition of $G$ whose width is {\em  not necessarily optimal}, but polynomially bounded in the $\cH$-treewidth of the input, i.e., an approximation. 
%
%
These approximation algorithms enable them to address Question 2 for various classes $\cH$ without having to exactly compute $\hhdepth(G)$ or $\hhtw(G)$ (i.e., without resolving Question 1 for these classes). 
 Towards answering Question~$2$, they give the following {\FPT} algorithms for \probVDH parameterized by $\hhtw$. 
  Let $\hh$ be a hereditary class of graphs that is defined by a finite number of forbidden {connected} (a)~minors, or (b) induced subgraphs, or (c)~$\hh \in \{\mathsf{bipartite~graphs},\mathsf{chordal~graphs}\}$. There is an algorithm that, given an $n$-vertex graph $G$, computes a minimum vertex set~$X$ such that~$G - X \in \hh$ in time~$f(\hhtw(G)) \cdot n^{\Oh(1)}$. Note that all of these \FPT\  algorithms are uniform.

\subsection{Our Motivation}

The starting point of our work lies in the aforementioned recent advances made by Agrawal et al.~\cite{AgrawalKFR21}, Fomin et al.~\cite{corr/abs-2104-02998} and Jansen et al.~\cite{JansenK021}. A closer look at these algorithms shows an interesting property of these 
algorithms: known algorithms for \probEDH and \probTDH utilize the corresponding (known) algorithms for \probVDH in a non-trivial manner. This fact, plus the recent successes in designing (typically, non-uniform) {\FPT} algorithms for  \probEDH and \probTDH naturally raises the following questions. 

\medskip

\begin{tcolorbox}[colback=blue!5!white,colframe=white!100!black] 
\begin{description}
\item[Question 3:]  When (if at all) is the parameterized complexity of \probEDH\ or \probTDH\ different from that of \probVDH?  

\label{abc}
\item[Question 4:] When (if at all) is the parameterized complexity of
\probVDH is different from that of \probVDH parameterized by  $\hhdepth(G)$, or that of  \probVDH parameterized by  $\hhtw(G)$? 
 
 \item[Question 5:] Could one obtain uniform \FPT\ algorithms for \probEDH and \probTDH for 
natural families of graphs?

\end{description}
 \end{tcolorbox}
 
 The main objective of the paper is to provide satisfactory answers to Questions $3$, $4$ and $5$. 
 
 \subsection{Answering Questions 3 and 4: {\FPT}-equivalence of deletion, elimination and decomposition}
 
 Roughly speaking, we show: 
 
 \smallskip
\begin{tcolorbox}[colback=green!5!white,colframe=white!100!black] 
\begin{description}
 	\item[For Question 3:] If $\cH$ has certain properties which are also possessed by numerous well-studied graph classes, then there is {\em no difference} in the parameterized complexity of {\probEDH},   {\probTDH}, and \probVDH.   This explains, unifies and extends almost all known results aimed at Question 1 in the literature.
 	\item[For Question 4:] If $\cH$ has the properties referred to above, then there is {\em no difference} in the parameterized complexity of \probVDH, \probVDH parameterized by  $\hhdepth(G)$, and  \probVDH parameterized by  $\hhtw(G)$. This explains, unifies and extends almost all known results aimed at Question 2 in the literature. 
 	\item[For Question 5:] There are several fundamental graph classes $\cH$ for which we are able to give sufficiency conditions that imply the {\em first uniform} {\FPT} algorithms for \probEDH. In some of these cases, not even a non-uniform {\FPT} algorithm was previously known. 

 \end{description}
  \end{tcolorbox}


 In what follows, we formally state our results and go into more detail about their implications and significance.  Towards that we first define a notion of \FPT-equivalence.  We say that two parameterized problems are (non-uniformly) 
 {\em \FPT-equivalent} if, given an \FPT\ algorithm for any one of the two problems we can obtain a (non-uniform) \FPT\ algorithm for the other problem. Let $\hh$ be a family of graphs. Then, by parameterizing   \probVDH, \probEDH, and \probTDH, by  any of $\hhmd(G)$, $\hhdepth(G)$,  and $\hhtw(G)$ we get nine different problems. 
 Our first result shows \FPT-equivalences among eight of these.
 

\begin{theorem}
\label{thm:mainEquiv}
$\cH$ be a hereditary family of graphs that is \mso\footnote{In this paper, when we say  {\mso}, we refer to the fragment that is sometimes referred to as ${\mso}_2$ in the literature. We refer the reader to Section~\ref{sec:prelimsCMSO} for the formal description of this fragment.} definable and closed under disjoint union. Then the following problems are (non-uniformly) {\FPT}-equivalent.  
\begin{enumerate}
\setlength{\itemsep}{-2pt}
\item  \probVDH parameterized by  $\hhmd(G)$
\item   \probVDH parameterized by  $\hhdepth(G)$
\item   \probVDH parameterized by  $\hhtw(G)$
\item \probEDH parameterized by  $\hhmd(G)$
\item   \probEDH  parameterized by  $\hhdepth(G)$
\item \probTDH parameterized by  $\hhmd(G)$
\item \probTDH parameterized by  $\hhdepth(G)$
\item \probTDH parameterized by  $\hhtw(G)$
\end{enumerate}
\end{theorem}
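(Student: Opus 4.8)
The plan is to establish a cycle of implications linking all eight problems, organized around three natural ``layers'' of reductions. First I would observe the trivial direction: since for union-closed $\hh$ we have $\hhtw(G) \le \hhdepth(G) \le \hhmd(G)$, any \FPT\ algorithm for a problem parameterized by a smaller parameter immediately yields one for the same problem parameterized by a larger one. Thus, within each of the three ``columns'' (\probVDH, \probEDH, \probTDH), the $\hhtw$-parameterization implies the $\hhdepth$-parameterization implies the $\hhmd$-parameterization for free. This already collapses much of the diagram and reduces the task to (i) showing that \probVDH parameterized by $\hhmd$ (the classical vertex-deletion problem) implies, say, \probVDH parameterized by $\hhtw$, thereby closing the \probVDH column into an equivalence, and (ii) tying the \probEDH and \probTDH columns to the \probVDH column at a single node each.

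For step (i), the key tool is an \emph{iterative-compression / branching-plus-CMSO} strategy: given an \FPT\ algorithm for \probVDH parameterized by solution size, one wants to solve it when the promised parameter is only $\hhtw(G) = k$. The idea is that a tree $\hh$-decomposition of width $k$ has a small modulator ``skeleton'': although the modulator can be large, it has treewidth at most $k$, so the whole graph has a structure of bounded ``$\hh$-treewidth''. One would guess (via the definability of $\hh$ and Courcelle-type machinery, or via the approximation algorithms of Jansen et al.\ cited in the excerpt) an approximate tree $\hh$-decomposition, then use dynamic programming over it where the states at each bag record the CMSO-type of the base-vertex part together with a bounded interface; the \probVDH oracle is invoked on bounded-size pieces to resolve the base components. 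The non-uniformity enters precisely here: for each fixed $k$ we may need a different finite set of ``good'' CMSO sentences / a different bound, extracted non-constructively from the existence of the oracle. The same engine, run with the decomposition being an elimination forest rather than a general tree decomposition, handles the $\hhdepth$ case.

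For step (ii), I would connect \probEDH and \probTDH to \probVDH via the torso characterization stated in the excerpt: $\hhdepth(G) \le k$ iff $G$ has a modulator $S$ to $\hh$ with $\td(\mathrm{torso}(S)) \le k$, and similarly for $\hhtw$. To decide this given an oracle for \probVDH, one first notes that if $\hhdepth(G)\le k$ then $\hhmd(G)$ could still be huge, but the structure is an elimination forest of depth $k$ whose leaves' pieces lie in $\hh$; a recursive peeling argument, using the \probVDH oracle to certify that lower levels are $\hh$-modulators of controlled size and CMSO to verify the torso has small treedepth, reconstructs the optimal value. Crucially, the excerpt tells us known \probEDH/\probTDH algorithms already ``utilize \probVDH in a non-trivial manner,'' so this direction formalizes and generalizes that phenomenon. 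The converse directions (\probEDH or \probTDH parameterized by $\hhmd$ implies \probVDH parameterized by $\hhmd$) follow because an algorithm deciding $\hhdepth(G)\le k$ or $\hhtw(G)\le k$ can be bootstrapped, together with self-reduction and the hereditary/union-closed properties, to actually extract a modulator of size $\le k$ when one exists (e.g., $\hhmd(G)\le k$ implies $\hhdepth(G)\le k$, so a ``no'' from the \probEDH oracle certifies $\hhmd(G) > k$; combine with a standard search-to-decision argument).

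The main obstacle I expect is step (i) --- specifically, leveraging a solution-size \probVDH oracle to solve the problem under the much weaker promise that only $\hhtw(G)$ or $\hhdepth(G)$ is small. This requires (a) obtaining a workable approximate decomposition to run dynamic programming over, and (b) ensuring that the CMSO-types one tracks are finitely many \emph{and} that membership in $\hh$ of the base pieces can be tested using only bounded-size oracle calls; the interplay between ``base vertices form an $\hh$-graph'' (a global condition, since $\hh$ need not be decomposable) and the local DP is delicate, and is exactly where CMSO-definability plus closure under disjoint union are essential. I would handle (b) by exploiting that a CMSO-definable, union-closed $\hh$ has only finitely many types with respect to any bounded boundary, so the ``profile'' of an already-processed part of the graph is finite; the non-uniform quantifier over $k$ absorbs the fact that we cannot necessarily compute these types. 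Everything else --- the trivial monotonicity implications and the search-to-decision bootstrapping --- should be routine.
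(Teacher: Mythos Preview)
Your high-level organization is sound --- the monotonicity chain $\hhtw\le\hhdepth\le\hhmd$ indeed gives the ``free'' implications, and the remaining work is exactly to push back from Statement~1 to Statements~3, 5, 8, and to show Statements~4 and~6 (\probEDH/\probTDH under $\hhmd$) imply Statement~1. But there is a genuine gap at the heart of both nontrivial directions.

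In your step~(i) you assume you can obtain a (possibly approximate) tree $\hh$-decomposition and then run DP over it. Obtaining that decomposition \emph{is} the problem \probTDH, and for a general CMSO-definable hereditary union-closed $\hh$ there is no prior approximation algorithm to invoke: the Jansen et al.\ approximations you cite only cover specific classes (bipartite, finite forbidden (topological) minors), and Courcelle needs bounded treewidth, which you do not have. The paper breaks this circularity by a completely different mechanism you do not mention: the CMSO meta-theorem of Lokshtanov et al.\ reduces the task to $(\alpha(k),k)$-\emph{unbreakable} graphs, and on such graphs one proves a structural lemma that any depth-/width-$k$ $\hh$-decomposition has exactly one large base component $C^\star$ with $|V(G)\setminus V(C^\star)|\le \alpha(k)+k$. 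This turns the problem into a bounded-depth branching that repeatedly grows the set of vertices on the root-to-$t^\star$ path, using the \probVDH oracle only on the current large component. Once \probTDH is FPT this way, a self-reduction extracts an actual decomposition, and \emph{then} your FII/CMSO-type replacement idea (which is essentially what the paper does next) shrinks the base components and lets Courcelle finish \probVDH under $\hhtw$. So your DP idea is the \emph{second} half of the argument; the missing first half is the unbreakable-graph reduction.

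The same gap bites in your step~(ii) reverse direction. From an \probEDH oracle you can certify $\hhmd(G)>k$ when it answers ``no'', but a ``yes'' on $\hhdepth(G)\le k$ says nothing about $\hhmd(G)$ (which can be unbounded), so ``standard search-to-decision'' does not recover a size-$k$ modulator. The paper again passes to unbreakable graphs, uses the self-reduction to \emph{compute} an $\hh$-elimination decomposition, reads off the big component $C^\star$ and the small remainder, and either $N_G(C^\star)$ is already a size-$\le k$ modulator or one branches on the $\le \alpha(k)+k$ remaining vertices. Without unbreakability, your bootstrapping argument does not close.
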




Notice that because $\hhtw(G)\leq \hhdepth(G)\leq \hhmd(G)$, an {\FPT} algorithm for one problem parameterized by a smaller parameter also implies an {\FPT} algorithm for the problem parameterized by the larger parameter. However, the implications in the other direction are surprising and insightful. 


\subsubsection{Implications of Theorem~\ref{thm:mainEquiv}}

We now describe the various applications and consequences of our first main theorem (Theorem~\ref{thm:mainEquiv}). 

\paragraph{As a classification tool that unifies and extends many known results. } Theorem~\ref{thm:mainEquiv} is a powerful classification tool which states that  
as far as the (non-uniform) fixed-parameter tractability of computing any of the parameters $\hhmd(G)$,  $ \hhdepth(G)$ and $\hhtw(G)$ is concerned,  
they  are essentially the ``same parameter'' for many frequently considered graph classes $\cH$. In other words, to obtain an \FPT\  algorithm for any of the problems mentioned in Theorem~\ref{thm:mainEquiv}, it is sufficient to design an \FPT\ algorithm for the standard vertex-deletion problem, namely, \probVDH.
This implication unifies several known results in the literature.

For example, let $\hh$ be the family of graphs of degree at most $d$ and recall that it was only recently that  Agrawal et al.~\cite{Agrawal020} and Jansen et al.~\cite{corr/abs-2106-04191} showed that \probEDH\  and \probTDH\ are (non-uniformly) \FPT, respectively. However, using  Theorom~\ref{thm:mainEquiv},  the fixed-parameter tractability of these two problems and in fact,  even 
  the fixed-parameter tractability of \probVDH parameterized by  $\hhdepth(G)$ (or $\hhtw(G)$), is implied by the straightforward $d^{k}n^{\cO(1)}$-time branching algorithm for \probVDH (i.e., the problem of deleting at most $k$ vertices to get a graph of degree at most $d$).

Moreover, for various well-studied families of $\hh$, 
we immediately derive \FPT\ algorithms for all combinations of \probVDH, \probEDH, \probTDH parameterized by  
any of $\hhmd(G)$ $\hhdepth(G)$ and $\hhtw(G)$, which are  covered in Theorem~\ref{thm:mainEquiv}. For instance, we can invoke this theorem using well-known \FPT\ algorithms for   \probVDH for 
several families of graphs that are \mso\ definable and closed under disjoint union, such as  
families defined by a finite number of  forbidden connected (a)~minors, or   (b)~topological minors, or (c)~induced subgraphs, or (d)  $\hh$ being bipartite, chordal, proper-interval, interval,  and 
distance-hereditary; to name a few~\cite{BevernKMN10,CaoM15,CaoM16,Cai96,EibenGK18,FominLPSZ20,FominLMS12,JansenLS14,LokshtanovNRRS14,Marx10,ReedSV04,RobertsonS95b,SauST20,DBLP:journals/talg/0002LPRRSS16}.  Thus, Theorem~\ref{thm:mainEquiv}  provides a unified understanding of many recent results and resolves the parameterized complexity of several questions left open in the literature. 

Of particular significance among the new results is the case where,  invoking Theorem~\ref{thm:mainEquiv} and taking $\hh$ to be a class  defined by a finite number of  forbidden connected topological minors gives the {\em first} \FPT\ algorithms for computing $\hhdepth$ and $\hhtw$,  resolving an open problem posed by Jansen et al.~\cite{JansenK021}.


\paragraph{Deletion to Families of Bounded Rankwidth.} 
%
We observe that Theorem~\ref{thm:mainEquiv} can be invoked by taking $\cH$ to be the class of graphs of bounded {\em rankwidth}, extending a result of Eiben et al.~\cite{EibenGHK19}. 

 Rankwidth is a graph parameter introduced by Oum and Seymour~\cite{OumS06} to approximate yet another graph parameter called Cliquewidth. 
 The notion of cliquewidth was defined by  Courcelle and Olariu~\cite{CourcelleO00} as a measure of how ``clique-like''  the input graph is. 
 One of the main motivations was that several {\sf NP}-complete problems
 become tractable on the family of cliques (complete graphs), the assumption was that these algorithmic properties extend to ``clique-like'' graphs~\cite{CourcelleMR00}. 
However, computing cliquewidth and the corresponding cliquewidth decomposition seems to be computationally intractable. 
This then motivated the notion of rankwidth, which is a graph parameter that approximates cliquewidth well while also being algorithmically tractable~\cite{OumS06,Oum08}. For more information on cliquewidth and rankwidth, we refer to the surveys by Hlinen{\'{y}} et al.~\cite{HlinenyOSG08} and Oum~\cite{Oum17}. 

For a graph $G$, we will use $\rw(G)$ to denote the rankwidth of $G$. Let $\eta \geq 1$ be a fixed integer and  let $\hh_\eta$ denote the class of graphs of rankwidth at most $\eta$.  It is known that \probVDHp{$\hh_\eta$} is \FPT~\cite{CourcelleO07}.  The algorithm is based on the fact that for every integer $\eta$, there is a finite set ${\cal C}_\eta$ of graphs such that for every graph $G$, $\rw(G) \leq \eta$ if and only if no vertex-minors of 
$G$ are isomorphic to a graph in ${\cal C}_\eta$~\cite{Oum05,Oum08a}. Further, it is known that vertex-minors can be expressed in \mso, this together with the fact that we can test whether a graph $H$ is a 
vertex-minor of $G$ or not in $f(|H|)n^{\cO(1)}$ time on graphs of bounded rankwidth leads to the desired algorithm~\cite[Theorem 6.11]{CourcelleO07}. It is also important to mention that for \probVDHp{$\hh_1$}, also known as the {\sc Distance-Hereditary Vertex-Deletion} problem, 
 there is a dedicated algorithm running in time $2^{\cO(k)}n^{\cO(1)}$~\cite{EibenGK18}. 
For us,  two properties of $\hh_\eta$ are important: (a) expressibility in \mso and (b) being closed under disjoint union.  These two properties, together with the result in~\cite{CourcelleO07} imply that Theorem~\ref{thm:mainEquiv} is also applicable to  $\hh_\eta$. Thus, we are able to generalize and extend the result of Eiben et al.~\cite{EibenGHK19}, who showed that for every $\eta$,
  computing $\tw_{\hh_\eta}$ is FPT.  
  Since we do not need the notion of rankwidth and vertex-minors in this paper beyond this application, we refer the reader for further details to~\cite{HlinenyOSG08,Oum17}. 

\paragraph{Beyond graphs: Cut problems. }
Notice that in the same spirit as we have seen so far, one could also consider the parameterized complexity of other classical problems such as cut problems (e.g.,  {\sc Multiway Cut}) as long as the parameter is smaller than the standard parameter studied so far. With this view, we obtain the first such results for several cut problems such as {\sc Multiway Cut}, {\sc Subset FVS} and {\sc Subset OCT}. That is, we obtain {\FPT} algorithms for these problems that is parameterized by a parameter whose value is upper bounded by the standard parameter (i.e., solution size) and which can be arbitrarily smaller. For instance, consider the {\sc Multiway Cut} problem, where one is given a graph $G$ and a set of vertices $S$ (called terminals) and an integer $\ell$ and the goal is to decide whether there is a set of at most $\ell$ vertices whose deletion separates every pair of these terminals. The standard parameterization for this problem is the solution size $\ell$. Jansen et al.~\cite{JansenK021} propose to consider annotated graphs (i.e., undirected graphs with a distinguished set of terminal vertices) and study the parameterized complexity of {\sc Multiway Cut} parameterized by the elimination distance to a graph where each component has at most one terminal. Notice that this new parameter is always upper bounded by the size of a minimum solution. 

Thus, an {\FPT} algorithm for {\sc Multiway Cut} with such a new parameter would naturally extend the boundaries of tractability for the problem. We are able to obtain such an algorithm by using Theorem~\ref{thm:mainEquiv}. We then proceed to obtain similar {\FPT}  algorithms for the other cut problems mentioned in this paragraph. Recall that in the {\sc Subset FVS} problem, one is given a graph $G$, a set $S$ of terminals and an integer $\ell$ and the goal is to decide whether there is a set of at most $\ell$ vertices that hits every cycle in $G$ that contains a terminal. Similarly, in the {\sc Subset OÇT} problem, one is given a graph $G$, a set $S$ of terminals and an integer $\ell$ and the goal is to decide whether there is a set of at most $\ell$ vertices that hits every {\em odd} cycle in $G$ that contains a terminal.

Building on our new result for {\sc Multiway Cut},  we obtain an {\FPT} algorithm for  {\sc Subset FVS} parameterized by the elimination distance to a graph where no terminal is part of a cycle, and an {\FPT} algorithm for {\sc Subset OCT} parameterized by the elimination distance to a graph where no terminal is part of an odd cycle. 
We summarize all three results as follows:

\begin{theorem}[Informal version of Theorem~\ref{thm:app}]
The following problems are {\rm \FPT}:
\begin{enumerate}
	\item {\sc Multiway Cut} parameterized by elimination distance to an annotated graph where each component has at most one terminal.
	\item {\sc Subset Feedback Vertex Set} parameterized by elimination distance to an annotated graph where no terminal occurs in a cycle. 
	\item {\sc Subset Odd Cycle Transversal} parameterized by elimination distance to an annotated graph where no terminal occurs in an odd cycle. 
\end{enumerate}
	
\end{theorem}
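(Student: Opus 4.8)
The plan is to obtain all three items as corollaries of Theorem~\ref{thm:mainEquiv}, applied in the setting of \emph{annotated} graphs, i.e.\ graphs carrying an extra unary predicate $T$ marking the terminals. To each of the three problems I attach a class of annotated graphs: let $\cH_{\mathrm{mwc}}$ be the class of annotated graphs in which every connected component contains at most one terminal; let $\cH_{\mathrm{sfvs}}$ be the class in which no terminal lies on a cycle; and let $\cH_{\mathrm{soct}}$ be the class in which no terminal lies on an odd cycle. The first point is essentially a tautology once these classes are written down: a minimum vertex set whose removal separates all terminals (respectively, destroys every cycle through a terminal; every odd cycle through a terminal) is exactly a minimum modulator to the corresponding class. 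Hence \textsc{Multiway Cut}, \textsc{Subset Feedback Vertex Set} and \textsc{Subset Odd Cycle Transversal} parameterized by solution size are the problems \textsc{Vertex Deletion to $\cH_i$} parameterized by $\hhmdp{\cH_i}$ for $i\in\{\mathrm{mwc},\mathrm{sfvs},\mathrm{soct}\}$, and the three parameters named in the statement are, by definition, the corresponding $\hhdepthp{\cH_i}$. So it remains to show that \textsc{Vertex Deletion to $\cH_i$} parameterized by $\hhdepthp{\cH_i}$ is \FPT.

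Second, I would verify that each $\cH_i$ satisfies the hypotheses of Theorem~\ref{thm:mainEquiv}. Hereditariness holds because deleting a vertex can only split components apart and can only destroy cycles, so each of the three defining properties is inherited by induced subgraphs; closure under disjoint union is immediate; and every $\cH_i$ contains the empty graph. \mso-definability over the vocabulary enriched by $T$ is also routine: membership in $\cH_{\mathrm{mwc}}$ says ``no two distinct terminals are joined by a path'', membership in $\cH_{\mathrm{sfvs}}$ says ``there is no cycle through a terminal'', and membership in $\cH_{\mathrm{soct}}$ says ``there is no odd cycle through a terminal'', where the parity constraint in the last case is expressed using the modular-counting atoms available in \mso.

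Third, once these checks are in place, items~1 and~2 of Theorem~\ref{thm:mainEquiv} say that \textsc{Vertex Deletion to $\cH_i$} parameterized by $\hhdepthp{\cH_i}$ is (non-uniformly) \FPT-equivalent to the same problem parameterized by $\hhmdp{\cH_i}$, that is, to the cut problem parameterized by solution size. Each such cut problem is \FPT\ by known results (\textsc{Multiway Cut} by the algorithms of Marx and of Chen--Liu--Lu; \textsc{Subset Feedback Vertex Set} by Cygan et al.\ and Kawarabayashi--Kobayashi; and \textsc{Subset Odd Cycle Transversal} by the known \FPT\ algorithm for it), so feeding these into the equivalence yields the claimed \FPT\ algorithms. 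If one additionally checks that the canonical membership-equivalence of these classes is efficiently computable --- which is easy, since for each fixed boundary size only finitely many ``behaviours'' arise --- the paper's sufficient condition for uniformity upgrades the conclusion to uniform \FPT.

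The real work, and the main obstacle, is that Theorem~\ref{thm:mainEquiv} is stated for ordinary graph classes, so its proof --- together with the recursive-understanding/unbreakability machinery and the \mso-type replacement arguments that underlie it --- has to be carried out for annotated graphs. Morally this is mechanical: the terminal predicate is simply threaded through every definition (torso, elimination forest, the \mso\ sentences, the canonical equivalence relation) and every lemma. Two spots need attention: the torso of a modulator must be given a consistent meaning when interior vertices of the ``shortcut'' paths are themselves terminals, and the \mso-types used in the gluing steps must be computed over the extended vocabulary. A minor modelling point: expressing \textsc{Multiway Cut} as vertex deletion to $\cH_{\mathrm{mwc}}$ allows terminals to be deleted, which is the unrestricted vertex variant; the terminals-undeletable variant reduces to it (for instance by replacing each terminal with a large clique), so no generality is lost.
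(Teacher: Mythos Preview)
Your approach is correct in outline but takes a genuinely different route from the paper.

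The paper does \emph{not} extend Theorem~\ref{thm:mainEquiv} to annotated graphs. Instead it encodes the terminal set by a gadget reduction to an ordinary graph problem: edges of $G$ are subdivided (doubly subdivided for the OCT case), and a pendant $K_3$ is attached at every terminal. The target classes $\cH_{\sf mway},\cH_{\sf fvs},\cH_{\sf oct}$ are then defined purely graph-theoretically via the notion of a \emph{relevant cut vertex} (a cut vertex whose removal leaves a component on exactly two vertices, both adjacent to it); in the reduced graph these are precisely the images of the original terminals. Three easy lemmas show that (i) these classes are \mso-definable and union-closed, (ii) the reduction preserves yes-instances and blows up the $\cH$-treewidth of the structure only polynomially, and (iii) \probVDH\ for each class is \FPT\ via the known solution-size algorithms for the cut problems. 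Theorem~\ref{thm:mainEquiv} then applies verbatim to the pure graph classes.

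Your plan instead threads the terminal predicate through the entire proof of Theorem~\ref{thm:mainEquiv}. This is feasible---the underlying \mso\ meta-theorem of Lokshtanov et al.\ is stated for relational structures, and the remaining lemmas are structural---but it front-loads all the work into checking that every ingredient survives the extension, whereas the paper's gadget trick is a few pages of elementary argument plus one black-box invocation. What your route buys is a reusable annotated version of the main equivalence; what the paper's route buys is brevity and no need to touch the machinery at all.

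One caveat: your closing remark that efficient computability of the canonical equivalence ``upgrades the conclusion to uniform \FPT'' is optimistic. The paper's uniform framework (Theorem~\ref{thm:towardsUniform}) imposes stronger hypotheses than this, and the paper itself does not claim uniform \FPT\ for these three cut problems.
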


In fact, we also strengthen the parameterization to an analogue of $\cH$-treewidth in the natural way and obtain corresponding results. The details can be found in Section~\ref{sec:cutProblems}. 
To achieve these results, we use Theorem~\ref{thm:mainEquiv}. However, note that that Theorem~\ref{thm:mainEquiv} is defined only when $\hh$ is a family of graphs. In order to capture problems such as {\sc Multiway Cut}, {\sc Subset FVS} and {Subset OCT}, we express our problems in terms of appropriate notions of structures and then give a reduction to a pure graph problem on which Theorem~\ref{thm:mainEquiv} can be invoked.  

  These results make concrete advances in the direction proposed by Jansen et al.~\cite{JansenK021} to develop {\FPT} algorithms for \textsc{Multiway cut} parameterized by the elimination distance to a graph where each component has at most one terminal. 

\subsubsection{Modulators to Scattered Families}
Recent years have seen another new direction of research on {\probVDH} -- instead of  studying the computation of a modulator to a single family of graphs $\cH$, one can aim to compute small vertex sets whose deletion leaves a graph where each connected component comes from a particular pre-specified graph class~\cite{GanianRS17b,JacobM020,corr/abs-2105-04660}. As an example of problems in this line of research, consider the following. Given a graph $G$, and a number $k$, find a modulator $S$ of size at most $k$ (or decide whether one exists) such that in $G-S$,  each connected component is either chordal, or bipartite or planar. Let us call such an $S$, {\em a  scattered modulator}. 
  Such scattered modulators (if small) can be used to design new {\FPT} algorithms for certain problems by taking separate {\FPT} algorithms for the problems on each of the pre-specified graph classes and then combining them in a non-trivial way ``through" the scattered modulator. However, the quality of the modulators considered in this line of research so far has been measured in the traditional way, i.e., in terms of the size. 
 In this paper, by using Theorem~\ref{thm:mainEquiv}, we initiate a new line of research where, again, it is not the modulator size that is the measure of structure, but in some sense, the treedepth or treewidth of the torso of the graph induced by the modulator.   That is, we introduce the first extensions of ``scattered modulators" to ``scattered  elimination distance" and  ``scattered  $\hh$-tree decompositions" and obtain results regarding the computation of the corresponding modulators as well as their role in the design of {\FPT} algorithms for other problems (i.e., cross parameterization).


The first study of scattered modulators was undertaken by  Ganian et al.~\cite{GanianRS17b}, who introduced this notion in their work on constraint satisfaction problems. Recently, Jacob et al.~\cite{corr/abs-2105-04660,JacobM020} initiated the study of scattered modulators explicitly for ``scattered" families of graphs.  In particular, let $\hh_{1},\ldots,\hh_{d}$ be families of graphs. Then, the scattered family of graphs $\otimes(\hh_{1},\ldots,\hh_{d})$ is defined as the set of all graphs $G$ such that every connected component of $G$ belongs to $\bigcup_{i=1}^d \hh_i$. That is, each connected component of $G$ belongs to some $\hh_i$.  As their main result, Jacob et al.~\cite{corr/abs-2105-04660} showed that \probVDH is \FPT\ whenever 
\probVDHp{$\hh_i$}, $i\in\{1,\ldots,d\}$, is \FPT, and each of $\hh_i$ is \mso expressible. Here, $\hh$ is the scattered family $\otimes(\hh_{1},\ldots,\hh_{d})$.  Notice that 
if each of  $\hh_i$ is \mso expressible then so is $\hh$. Further, it is easy to observe that if each of $\hh_i$ is closed under disjoint union then so is $\hh$. 
The last two properties together with the result of Jacob et al.~\cite{corr/abs-2105-04660} enable us to  
invoke Theorem~\ref{thm:mainEquiv} even when $\hh$ is a scattered graph family. The effect can be formalized as follows. 


\begin{theorem}
	Let $\hh_{1},\ldots,\hh_{d}$ be hereditary, union-closed, {\mso} expressible families of graphs such that \ProblemName{Vertex Deletion to $\cH_i$} is {\rm \FPT} for every $i\in [d]$. 
	Let  $\cH=\otimes(\hh_1,\dots, \hh_d)$ and  $\cH'=\bigcup_{i=1}^d\cH_i$. Then, \ProblemName{Elimination Distance to $\cH$} and \ProblemName{Treewidth Decomposition to $\cH$} are also {\rm \FPT}. 
\end{theorem}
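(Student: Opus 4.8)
The plan is to deduce this theorem as a direct corollary of Theorem~\ref{thm:mainEquiv} together with the result of Jacob et al.~\cite{corr/abs-2105-04660}. First I would check that the scattered family $\cH = \otimes(\cH_1,\dots,\cH_d)$ satisfies the hypotheses required to invoke Theorem~\ref{thm:mainEquiv}; namely, that $\cH$ is hereditary, closed under disjoint union, and {\mso}-definable. Heredity is immediate: if $G \in \cH$ then every connected component of $G$ lies in some $\cH_i$, and an induced subgraph $G'$ of $G$ has every component contained (as an induced subgraph) in a component of $G$, hence in some $\cH_i$ by heredity of $\cH_i$. Closure under disjoint union is even easier, since the set of connected components of $G_1 \sqcup G_2$ is the union of those of $G_1$ and $G_2$, each already assigned to some $\cH_i$. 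For {\mso}-definability, since each $\cH_i$ is {\mso}-expressible by some sentence $\varphi_i$, a graph belongs to $\cH$ iff every connected component satisfies $\bigvee_{i=1}^d \varphi_i$; this ``relativization to connected components'' is itself expressible in {\mso}, because connectivity and quantification relativized to the component of a vertex are standard {\mso} constructions. (One can write: for all vertices $v$, the substructure induced on the set of vertices reachable from $v$ satisfies $\bigvee_i \varphi_i$.)

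Next I would invoke the result of Jacob et al.~\cite{corr/abs-2105-04660} to establish the ``engine'' hypothesis of Theorem~\ref{thm:mainEquiv}: since \probVDHp{$\cH_i$} is {\rm \FPT} for each $i \in [d]$ and each $\cH_i$ is {\mso}-expressible, that result gives that \probVDHp{$\cH$} (i.e., \ProblemName{Vertex Deletion to $\otimes(\cH_1,\dots,\cH_d)$} parameterized by solution size $\hhmd(G)$) is {\rm \FPT}. Now Theorem~\ref{thm:mainEquiv} applies verbatim with this choice of $\cH$: it asserts that all eight listed problems are (non-uniformly) {\FPT}-equivalent, so in particular, from the {\FPT} algorithm for \probVDH parameterized by $\hhmd(G)$ we obtain {\FPT} algorithms for \probEDH (items 4 and 5) and \probTDH (items 6, 7, 8) parameterized by their respective parameters. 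In particular \ProblemName{Elimination Distance to $\cH$} and \ProblemName{Treewidth Decomposition to $\cH$} (parameterized by $k$) are {\rm \FPT}, which is exactly the conclusion. The statement about $\cH'=\bigcup_{i=1}^d \cH_i$ appears in the theorem presumably because $\cH'$ is the class arising per-component inside the elimination/treewidth decompositions of $\cH$; I would remark that $\cH' $ inherits heredity, union-closedness and {\mso}-definability from the $\cH_i$ in the same way, and that \probVDHp{$\cH'$} being {\rm \FPT} follows from the $d=1,\dots,d$ case by another application of Jacob et al.\ (or a trivial branching on which $\cH_i$ each component targets), so Theorem~\ref{thm:mainEquiv} can also be invoked for $\cH'$ if that is needed downstream.

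The only genuinely non-routine point — and the step I expect to require the most care — is the verification that $\cH$ (and $\cH'$) really fall under the umbrella of Theorem~\ref{thm:mainEquiv}, i.e.\ that ``per-component membership in an {\mso}-class is again {\mso}''. This is folklore but worth stating cleanly, since the {\mso} fragment in use here is ${\mso}_2$ (with edge-set quantification and modular counting), and one must confirm that relativization of a ${\mso}_2$ sentence to the connected component of a vertex, followed by a universal vertex quantifier, stays within ${\mso}_2$; this is true because the reachability relation is ${\mso}_1$-definable and relativizing all quantifiers of $\varphi_i$ to a definable vertex/edge set preserves the fragment. Everything else is bookkeeping: once the hypotheses are checked, the theorem is a one-line consequence of Theorem~\ref{thm:mainEquiv} and the black-box {\FPT} algorithm of Jacob et al.\ for vertex deletion to scattered families.
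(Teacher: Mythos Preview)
Your proposal is correct and follows essentially the same approach as the paper: verify that $\cH=\otimes(\cH_1,\dots,\cH_d)$ is hereditary, union-closed, and {\mso}-definable, invoke Jacob et al.~\cite{corr/abs-2105-04660} to get that \probVDH is {\rm \FPT}, and then apply Theorem~\ref{thm:mainEquiv}. The paper's argument is given in the paragraph preceding the theorem statement and is slightly terser (it asserts the {\mso}-expressibility and union-closure of $\cH$ without the detailed justification you provide), and the role of $\cH'$ is only as a remark that the elimination distance to $\cH$ and to $\cH'$ coincide, not as a separate input to Theorem~\ref{thm:mainEquiv}.
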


Notice in the above statement that if we take $\cH'=\bigcup_{i=1}^d\cH_i$, then the size of a  modulator to $\cH$ is different from that of a smallest modulator to $\cH'$, whereas the elimination distance to $\cH$ and $\cH'$ are the same.

\subsection{New Results on Cross Parameterizations}
Another popular direction of research in Parameterized Complexity is cross parameterizations: that is parameterization of one problem with respect to alternate parameters. For an illustration, consider {\sc Odd Cycle Transversal} (OCT) on chordal graphs. Let $\hh$ denote the family of chordal graphs. 
It is well known that OCT is polynomial-time solvable on chordal graphs.  Further, given a graph $G$ and a modulator to  chordal graphs  of size  $\hhmd(G)$, OCT admits an algorithm with running time $2^{\OO(\hhmd(G)))}n^{\OO(1)}$. It is therefore natural to ask whether OCT admits an algorithm with running time $f(\hhdepth(G))n^{\OO(1)} $ or $f(\hhtw(G))n^{\OO(1)} $, given an $\hh$-elimination forest of $G$ of depth~$\hhdepth(G)$ and an $\hh$-decomposition of $G$ of width~$\hhtw(G)$, respectively. The question  is also relevant, in fact more challenging, when an $\hh$-elimination forest of $G$ of depth~$\hhdepth(G)$ or an $\hh$-decomposition of $G$ of width~$\hhtw(G)$ is {\em not} given. Jansen et al.~\cite{JansenK021} specifically asked to consider this research direction in their paper.   Quoting them: 

\medskip

 \begin{tcolorbox}[colback=gray!5!white,colframe=white!80!black]
``\ldots the elimination distance can also be used as a parameterization away from triviality for solving other parameterized problems~$\Pi$, when using classes~$\hh$ in which~$\Pi$ is polynomial-time solvable. This can lead to interesting challenges of exploiting graph structure. For problems which are FPT parameterized by deletion distance to~$\hh$, does the tractability extend to elimination distance to~$\hh$? For example, is \textsc{Undirected Feedback Vertex Set} \FPT\ when parameterized by the elimination distance to a subcubic graph or to a chordal graph? The problem is known to be FPT parameterized by the deletion distance to a chordal graph~\cite{JansenRV14} or the edge-deletion distance to a subcubic graph~\cite{MajumdarR18}.''
\end{tcolorbox}

A step in this direction can be seen in the work of  Eiben et al.~\cite[{Thm.~4}]{EibenGHK19}. They
 present a meta-theorem that yields non-uniform \FPT\ algorithms when~$\Pi$ satisfies several conditions, which require a technical generalization of an \FPT\ algorithm for $\Pi$ parameterized by deletion distance to $\hh$.  
 Here, we avoid resorting to such requirements and instead,   provide sufficient conditions {\em on the problem itself}, which are usually very easily checked and which enables us to obtain fixed-parameter algorithms for vertex-deletion problems  (or edge-deletion problems) parameterized by $\hhdepth(G)$ ($\hhtw(G)$) when given an $\hh$-elimination forest of $G$ of depth~$\hhdepth(G)$ (respectively, an $\hh$-decomposition of $G$ of width~$\hhtw(G)$).  As a consequence, we resolve the aforementioned open problem of Jansen et al.~\cite{JansenK021} on the parameterized complexity of {\sc Undirected Feedback Vertex Set} parameterized by the elimination distance to a chordal graph.  
 
Let us now state an informal version of our result so that we can convey the main message and highlight some consequences without delving into excessive detail at this point. We refer the reader to Section~\ref{sec:crossParam} for the formal statement and full proof.  

\begin{theorem}[Informal version of Theorem~\ref{thm:fullCrossParameterization}]\label{thm:intro:crossParam}
Let $\cal H$ be a hereditary family of graphs and $\Pi$ be a parameterized graph problem satisfying the following properties. 
\begin{enumerate}
\setlength{\itemsep}{-1pt}
\item  $\Pi$ has the property of finite integer index ({\rm \fii}).
\item $\Pi $ is {\rm \FPT}  parameterized by $\hhmd(G)$. 
\item  Either, a $\hh$-decomposition of $G$ of width~$\hhtw(G)$  (or a $\hh$-elimination forest of $G$ of depth~$\hhdepth(G)$) is given; or $\hh$ is \mso definable, closed under disjoint union and \probVDH is {\rm \FPT}. 
\end{enumerate}
 Then, $\Pi$ is {\rm \FPT}  parameterized by $\hhdepth(G)$ or $\hhtw(G)$.
\end{theorem}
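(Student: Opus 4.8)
The plan is to prove the theorem by a protrusion‑replacement argument whose novelty is that it uses the \FPT\ algorithm of hypothesis~2 (parameterized by $\hhmd$) as a black box to classify \emph{$\cH$-protrusions that need not have bounded treewidth}. I describe the argument for the parameter $\hhdepth$; the case of $\hhtw$ is entirely analogous, with ``treedepth''/``elimination forest'' replaced by ``treewidth''/``tree $\cH$-decomposition'' throughout.

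First I would fix a decomposition to work with. If an $\cH$-elimination forest of $G$ of depth $d:=\hhdepth(G)$ is supplied, use it. Otherwise, by hypothesis~3, $\cH$ is \mso\ definable, union‑closed, and \probVDH\ is \FPT, so Theorem~\ref{thm:mainEquiv} shows \probEDH\ is (non‑uniformly) \FPT; by a standard self‑reduction this also lets us \emph{compute} an optimal $\cH$-elimination forest in \FPT\ time. Let $M$ be the set of its internal vertices. Then every connected component of $G-M$ lies in $\cH$ (heredity), $G[M]$ has treedepth at most $d$ (so treewidth at most $d-1$), and reading the forest gives, for each component $C$ of $G-M$, that $|N_G(C)|\le d$ and that a single bag of the width‑$(d-1)$ tree decomposition of $G[M]$ induced by the forest contains $N_G(C)$.

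The heart of the proof is the classification step. For a component $C$ of $G-M$, the pair $P_C:=(G[C\cup N_G(C)],\,N_G(C))$ is a $t_C$-boundaried graph with $t_C\le d$ and $G[C]\in\cH$; call such an object an \emph{$\cH$-protrusion}. Since $\Pi$ has \fii, for each $t\le d$ there is a finite set $\cR_t$ of $t$-boundaried ``test'' graphs such that the tuple $\bigl(\mathsf{opt}_\Pi(X\oplus R)\bigr)_{R\in\cR_t}$ determines (up to the relevant additive shift) the finite‑index class of any $t$-boundaried graph $X$, and for each class a bounded‑size representative. The key point is that for every $R\in\cR_{t_C}$ the graph $P_C\oplus R$ admits a modulator to $\cH$ of size $t_C+|V(R)\setminus\partial R|=\Oh(d)$: delete $N_G(C)$ together with the private vertices of $R$ and what remains is $G[C]\in\cH$. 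Hence, by hypothesis~2, $\mathsf{opt}_\Pi(P_C\oplus R)$ is computable in time $g(\Oh(d))\cdot n^{\Oh(1)}$; from these values we read off the class of $P_C$, and by the defining property of \fii, replacing $C$ in $G$ by the bounded‑size representative of that class changes $\mathsf{opt}_\Pi$ by an integer offset $\mu_C$ that we likewise extract from these values together with a brute‑force evaluation on the (bounded‑size) glued representative. Doing this for all (at most $n$) components — the replacements touch only $M$ and hence do not interfere — yields in \FPT\ time a graph $G'$ with $\mathsf{opt}_\Pi(G)=\mathsf{opt}_\Pi(G')+\sum_C\mu_C$. Moreover, since each component was replaced by a graph of bounded size attached to a set $N_G(C)$ lying inside a single bag of the width‑$(d-1)$ tree decomposition of $G[M]$, we obtain an explicit tree decomposition of $G'$ of width $\Oh(d)$. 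As a standard consequence of \fii, $\Pi$ is \FPT\ parameterized by treewidth (the finite‑state dynamic program over a tree decomposition), so we compute $\mathsf{opt}_\Pi(G')$ and output $\mathsf{opt}_\Pi(G')+\sum_C\mu_C$; every step runs in time $f(\hhdepth(G))\cdot n^{\Oh(1)}$.

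The main obstacle is exactly the step that departs from textbook protrusion replacement: for classes such as $\cH=\{\mathrm{bipartite}\}$ or $\{\mathrm{chordal}\}$ the protrusions $P_C$ can have arbitrarily large treewidth, so one can neither identify their type by a treewidth dynamic program nor bound the reduced instance in the usual way. The resolution — observing that gluing a protrusion to a \emph{small} test graph keeps the modulator to $\cH$ small, so that the \FPT‑by‑$\hhmd$ algorithm of hypothesis~2 is exactly the oracle that classifies each protrusion — is where all the work is; checking that \fii\ then transports the classification and the offsets correctly through the gluing, and that the offset bookkeeping over all components is consistent, is the remaining routine‑but‑careful part. A secondary point is the constructive computation of the elimination forest (resp.\ tree $\cH$-decomposition) when it is not supplied, which we reduce to Theorem~\ref{thm:mainEquiv}.
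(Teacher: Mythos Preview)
Your proposal is correct and follows essentially the same route as the paper: identify the base components of the (given or computed) $\cH$-decomposition as boundaried graphs with boundary of size at most $d$, classify each one into its \fii\ equivalence class by evaluating $\Pi$ on glueings with a fixed finite set of representatives, replace by bounded-size representatives while tracking the additive offsets, and solve the resulting bounded-treewidth instance. The paper packages your classification step as a direct invocation of Lemma~\ref{lem:red2finiteindex} (with $\Psi=\hhmd$ and the algorithm $\mathcal{A}$ being the \FPT-by-$\hhmd$ solver), but the underlying observation---that $\hhmd(P_C\oplus R)$ is bounded because deleting the boundary plus the private vertices of $R$ leaves a graph in $\cH$---is exactly yours.
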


\fii is a technical property satisfied by numerous graph problems and often easily verified. The term \fii first appeared in the works of~\cite{BodlaenderF01,de1997algorithms}  and is similar to the notion of finite state~\cite{abrahamson1993finite,BoriePT92,Courcelle90}. 
An intuitive way (though, formally, not correct)  to understand \fii is as follows.  Let $\Pi$  be a graph problem. Further, for a graph $G$, let ${\sf opt}_{\Pi}(G)$ denote the optimum (minimum or maximum) value of solution to $\Pi$. For example, if $\Pi$ is {\sc Dominating Set} then ${\sf opt}_{\Pi}(G)$ denotes the size of a minimum dominating set; and if $\Pi$ is 
{\sc Cycle Packing} then ${\sf opt}_{\Pi}(G)$ denotes the cardinality of a set containing maximum number of pairwise vertex disjoint cycles.  In a simplistic way we can say that a graph problem has \fii, if for every graph $G$, and a separation  $(L, R)$ ($L\cup R=V(G)$) we have that:  
$${\sf opt}_{\Pi}(G)={\sf opt}_{\Pi}(G[L\setminus R])+{\sf opt}_{\Pi}(G[R\setminus L])\pm h(|L\cap R|).$$
Here, $h$ is a function of the order of separation  ($|L\cap R|$) only.  This immediately implies that problems such as  {\sc Dominating Set} and {\sc Cycle Packing} have \fii. In the context of this work, it is sufficient for the reader to know that \probVDH has \fii, whenever $\hh$ is hereditary,  \mso definable, and closed under disjoint union. 
  We refer the reader to \cite{BodlaenderF01,de1997algorithms,BodlaenderFLPST16} for more details on \fii. 

Now as a corollary to Theorem~\ref{thm:intro:crossParam} we get that \textsc{Undirected Feedback Vertex Set} is \FPT\  parameterized by $\hhdepth(G)$ or $\hhtw(G)$, where $\hh$ is a family of chordal graphs, answering the problem posed by Jansen et al.~\cite{JansenK021}.   Similarly, we can show that {\sc Dominating Set} is \FPT\  parameterized by $\hhdepth(G)$ or $\hhtw(G)$, where $\hh$ is a family of interval graphs.  
 
%


\subsection{Answering Question 5: Towards Uniform \FPT\ Algorithms}\label{sec:towardsUniform}
The \FPT\ algorithms obtained via Theorems~\ref{thm:mainEquiv} and \ref{thm:intro:crossParam} and the extension to families of structures are non-uniform. In fact, to the best of our knowledge, all of the current known 
\FPT\ algorithms for \probEDH or \probTDH, are non-uniform; except for \probEDH, when $\hh$ is the family of empty graphs (which, as discussed earlier, is simply the problem of computing treedepth). 
However, we note that the \FPT-approximation algorithms in Jansen et al.~\cite{JansenK021} (in fact, all the algorithms obtained in~\cite{JansenK021}) are uniform. 

For the sake of clarity, we formally define the notion of  uniform and non-uniform \FPT\ algorithms~\cite[Definition $2.2.1$]{DowneyF13}. 
\begin{definition}[Uniform and non-uniform \FPT] 
Let $\Pi$ be a parameterized problem.
\begin{enumerate}[(i)]
\item  We say that $\Pi$ is {\em uniformly} \FPT\ if there is an algorithm ${\cal A}$, a constant $c$, and an arbitrary function $f :\mathbb{N} \to \mathbb{N} $ such that: the running time of ${\cal A}(\langle x,k\rangle)$ 
is at most $f(k)|x|^c$ and $\langle x,k\rangle \in \Pi$ if and only if  ${\cal A}(\langle x,k\rangle)=1$.

\item We say that $\Pi$ is {\em non-uniformly} \FPT\ if there is  collection of algorithms  $\{{\cal A}_k : k \in \mathbb{N}\}$, a constant $c$, and an arbitrary function 
$f :\mathbb{N} \to \mathbb{N} $,  such that :
for each $k\in \mathbb{N}$,   the running time of ${\cal A}_k(\langle x,k\rangle)$  is $f(k)|x|^c$ and $\langle x,k\rangle \in \Pi$ if and only if  ${\cal A}_k(\langle x,k\rangle)=1$.
\end{enumerate}
\end{definition}

Towards unification, we first present a general set of demands that, if satisfied, shows that \probEDH\ parameterized by $\hhdepth(G)$ is uniformly \FPT. Like before, we consider a hereditary family $\cal H$ of graphs such that $\hh$ is \mso definable, closed under disjoint union and \probVDH is \FPT.  However, we strengthen the last two demands. To explain the new requirements, we briefly (and informally) discuss a few notions concerning boundaried graphs and equivalence classes. Essentially, a {\em boundaried graph} ({\em a $t$-boundaried graph}) is a graph with an injective labelling of some of its vertices by positive integers (upper bounded by~$t$). When {\em gluing} two boundaried graphs $G$ and $H$, denoted by $G \oplus H$, we just take their disjoint union, and unify vertices having the same label. Concerning some $\cal H$, two boundaried graphs $G_1$ and $G_2$ are equivalent (under the {\em canonical equivalence relation}) if, for any boundaried graph $H$, $G_1\oplus H\in{\cal H}$ if and only if $G_2\oplus H\in{\cal H}$. A {\em refinement of the canonical equivalence relation} (or just a refinement, for short) is an equivalence relation where any two boundaried graphs considered equivalent, are equivalent according to the canonical equivalence relation (but not vice versa).

Now, the new requirements are to define a refinement $\mathsf{R}$ whose number of equivalence classes for $t$-boundaried graphs is a function of $t$ only, here called a {\em finite (per $t$) refinement}, such that:
\begin{itemize}
\item {\bf $\mathsf{R}$ is closed under disjoint union:} That is, if we have a boundaried graph that belongs to some equivalence class $R\in \mathsf{R}$, then the disjoint union of that boundaried graph and a non-boundaried graph from ${\cal H}$ also belongs to $R$. [Strengthens the closeness under disjoint union property of $\cal H$.]
\item {\bf {\sc Vertex Deletion to $\mathsf{R}$} is \FPT:} We have a uniform \FPT-algorithm (with parameters $t$ and $k$) that, given a $t$-boundaried graph $G$, an equivalence class $R\in \mathsf{R}$, and $k\in\mathbb{N}$, decides whether there exists $S\subseteq V(G)$ of size at most $k$ such that $G-S$ belongs to an equivalence class ``at least as good as'' $R$. [Strengthens that \probVDH~is~\FPT.]
\end{itemize}

We prove the following.

\begin{theorem}[Informal]\label{thm:towardsUniform}
Let $\cal H$ be a hereditary family of graphs and a finite (per $t$) refinement $\mathsf{R}$ satisfying the following properties. 
\begin{enumerate}
\setlength{\itemsep}{-1pt}
\item  $\hh$ is \mso definable.
\item $\mathsf{R}$ is closed under disjoint union.
\item  {\sc Vertex Deletion to $\mathsf{R}$} is \FPT\ (parameterized by boundary and solution sizes).
\end{enumerate}
 Then, \probEDH\ is uniformly {\rm \FPT}  parameterized by $\hhdepth(G)$.
\end{theorem}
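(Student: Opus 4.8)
The plan is to turn the hypotheses into a bounded-width dynamic programming algorithm that decides $\hhdepth(G)\le k$ in uniform FPT time. The key conceptual tool is the equivalent "torso" characterization of $\hhdepth$ given in the introduction: since $\cH$ is union-closed, $\hhdepth(G)\le k$ iff there is a modulator $X$ to $\cH$ whose torso in $G$ has treedepth at most $k$. An elimination forest of depth $k$ for the torso gives, in particular, that $G$ has an $\cH$-elimination forest of depth $k$; such a forest is a rooted forest on $V(G)$ of height $\le k$ in which the "leaves" (base components) induce graphs in $\cH$ and each connected component of $G$ is captured by a subtree. From an $\cH$-elimination forest of depth $k$ one obtains a tree decomposition of $G$ of width bounded by (roughly) $k$ plus the sizes of the boundary-interaction of the base components; more usefully, one gets a tree decomposition of width $O(k)$ of the graph obtained from $G$ after replacing each base component by a bounded-size "representative". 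This is where the finite refinement $\mathsf{R}$ enters: because $\mathsf{R}$ has a bounded number of classes per boundary size, each base component — which interacts with the rest of $G$ only through a boundary of size $\le k$ — can be summarized by its $\mathsf{R}$-class, a bounded amount of information.

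**Setting up the DP.** First I would recursively guess the elimination structure. Standard arguments (as for computing treedepth, and as used in \cite{AgrawalKFR21,corr/abs-2106-04191}) show that deciding $\hhdepth \le k$ reduces, via a recursion on connected components and deletion of a single "top" vertex per component, to the following: we must decide, for a connected graph $G$ and a set $B\subseteq V(G)$ of at most $k$ already-deleted (boundary) vertices, whether the remainder admits an $\cH$-elimination forest of depth $k-|B|$ consistent with $B$. To make this finite-state, I replace "admits an $\cH$-elimination forest" with its $\mathsf{R}$-refined version: we track, for each partial structure with boundary $B$, the $\mathsf{R}$-class of the boundaried graph obtained. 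Since $\cH$ is $\mathsf{CMSO}$-definable, membership in $\cH$ (and, after encoding $\mathsf{R}$-classes via representatives, the relevant "at least as good as $R$" predicate) is expressible in $\mathsf{CMSO}$, so by Courcelle's theorem the leaf-level checks and the aggregation along a tree decomposition of bounded width are computable. The crucial point is that we never need to know a tree decomposition of $G$ in advance: the recursion on elimination-depth $k$ itself produces the decomposition, of width $O(k)$, because deleting one vertex per component over $k$ rounds leaves only base components that hang off boundaries of size $\le k$.

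**Using the two strengthened hypotheses.** The hypothesis that \textsc{Vertex Deletion to $\mathsf{R}$} is (uniformly) FPT in boundary size $t$ and solution size is what lets me actually \emph{find} the base components rather than only verify a guessed forest: at the bottom of the recursion, with a boundary $B$ of size $t\le k$ fixed, I need to decide whether the current piece can be turned into a graph whose $\mathsf{R}$-class is "good enough" by deleting at most $k-t$ further vertices — this is precisely an instance of \textsc{Vertex Deletion to $\mathsf{R}$} with parameters $t$ and $k-t$, both bounded by $k$. The hypothesis that $\mathsf{R}$ is closed under disjoint union is what guarantees that this local decision composes correctly: attaching an arbitrary $\cH$-graph with empty boundary interaction does not change the $\mathsf{R}$-class, so the DP value of a node depends only on the DP values of its children and the bounded local data, i.e., the state space is genuinely finite per value of $k$. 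Finiteness of $\mathsf{R}$ per $t$ bounds the number of distinct states by a function of $k$ alone, giving a uniform algorithm (no non-uniform advice about which equivalence classes exist, since $\mathsf{R}$ is a fixed, given refinement).

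**Main obstacle.** The technically delicate step is the interface between the elimination-forest recursion and the finite-state machinery: one must argue that an \emph{optimal} $\cH$-elimination forest can be assumed to have a "nice" shape in which base components interact with their ancestors through a boundary of size exactly the set of deleted ancestors, so that the $\mathsf{R}$-class is a sound and complete summary, and that passing to the $\mathsf{R}$-refinement does not increase the optimal depth. Soundness (depth does not increase) uses that $\mathsf{R}$ refines the canonical relation together with union-closedness; completeness of the DP (it finds a forest if one exists) requires an exchange argument showing we can reorder/reattach subtrees so that each child's contribution is captured by \textsc{Vertex Deletion to $\mathsf{R}$} on the subgraph induced by that child's descendants plus the boundary. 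I expect the bulk of the formal proof to be in making this exchange argument precise and in verifying that the combined object — "elimination forest whose base leaves are witnessed by \textsc{Vertex Deletion to $\mathsf{R}$}" — is $\mathsf{CMSO}$-expressible over a bounded-width decomposition so that Courcelle's theorem yields the uniform running time.
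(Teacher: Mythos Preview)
Your proposal has a genuine gap: the approach you sketch does not yield an FPT algorithm. The top-down elimination recursion (``delete one vertex per component, recurse'') branches on all $n$ choices of the top vertex at each of the $k$ levels, giving running time $n^{O(k)}$, not $f(k)\cdot n^{O(1)}$. You try to circumvent this by appealing to Courcelle's theorem over a ``tree decomposition of width $O(k)$'', but no such decomposition of $G$ exists: the base components of an $\cH$-elimination forest can have arbitrarily large treewidth (if $\cH$ is the class of all graphs, $\hhdepth(G)=0$ for every $G$). You acknowledge this implicitly when you speak of ``replacing each base component by a bounded-size representative'', but that just restates the problem --- identifying which vertices form base components is precisely what we are trying to compute. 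The \textsc{Vertex Deletion to $\mathsf{R}$} subroutine takes a boundaried graph with a \emph{given} boundary and finds a deletion set; it does not help you discover, within a large graph, where the interface between the elimination part and the base components lies. So there is no structure over which your DP can run in FPT time.

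The paper's proof takes a completely different route: the \emph{recursive understanding} technique. Using Proposition~\ref{prop:unbr}, one repeatedly finds a small balanced separator and recurses on one side, replacing it by a bounded-size representative, until the graph becomes $(q,k)$-unbreakable. The heart of the argument is the unbreakable case (Lemma~\ref{lem:comp:sig}): by Lemma~\ref{lem:boundedsep}, any $\cH$-elimination decomposition of a large unbreakable graph has exactly one base component containing all but $O(q+k)$ vertices, so the entire elimination forest except that one large leaf has bounded size and can be guessed; the large leaf is then handled by branching on maximal $(q{+}k,k)$-connected sets plus a single call to the user's algorithm ${\cal A}_u$ (\textsc{Vertex Deletion to $\mathsf{R}$}). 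This is wrapped in a ``state-tuple / marked signature'' formalism (Lemma~\ref{lem:sim:sig}) recording just enough about how the boundary meets the decomposition for representatives to be interchangeable. Uniformity --- despite not knowing a priori the representative-size bound $r(k,|\QQ(\equ)|)$ --- is obtained by trying values $1,2,3,\ldots$ and detecting failure. None of this machinery (unbreakability, the single-large-component structural lemma, the signature/representative framework, the iterative guessing of $r$) appears in your proposal, and it is essential.
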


We also give two conditions that seem easier to implement. Together with $\hh$ being \mso definable and closed under disjoint union, the satisfaction of the two new condition yields  the conditions of Theorem \ref{thm:towardsUniform}. In particular, they allow the user to not deal with equivalence classes at all, and to deal with boundaried graphs only with respect to a condition posed on obstructions. Roughly speaking, the simpler conditions are as follows.
\begin{itemize}
\item  {\bf Finite Boundaried Partial-Obstruction Witness Set:} $\hh$ admits a characterization by a (possibly infinite) set $\mathbb{O}$ of obstructions as induced subgraphs.\footnote{Even if the more natural characterization is by forbidden minors/topological minors/subgraphs, we can translate this characterization to one by induced subgraphs (which can make a finite obstruction set become~infinite).} Let ${\cal O}_t$ be the set of $t$-boundaried graphs that are subgraphs of obstructions from $\mathbb{O}$.
Then, there exists ${\cal O}'_t\subseteq {\cal O}_t$ of finite size (depending only on $t$) such that: for any  boundaried graph $G$, if there exists $O\in{\cal O}_t$ such that $G\oplus O\notin{\cal H}$, then there exists $O'\in{\cal O}'_t$ such that $G\oplus O'\notin{\cal H}$.
\item {\bf A ``Strong'' Irrelevant Vertex Rule:} There exist families of graphs $\cal X$ and $\cal Y$ (possibly ${\cal X}={\cal Y}$),  such that: {\em (i)} Each ``large'' graph in $\cal X$ contains a ``large'' graph from $\cal Y$ as an induced (or not) subgraph.
{\em (ii)} Given a (not boundaried) graph of ``large'' treewidth, it contains a ``large'' graph in $\cal X$ as an induced subgraph. {\em (iii)} Given $k\in\mathbb{N}$ and a (not boundaried) graph that contains a ``large'' graph in $\cal X$ as an induced subgraph, which, in turn, (by condition (i)), contains a ``large'' graph $Y$ in $\cal Y$ as an induced subgraph, ``almost all'' (depending on $k$) of the vertices in $Y$ are {\em $k$-irrelevant}.\footnote{A vertex $v$ in $G$ is $k$-irrelevant if the answers to $(G,k)$ and $(G-v,k)$ are the same (as instances of \probVDH).}
\end{itemize}
The reason why we claim that these conditions are ``simple'' is that known algorithms (for the problems considered here) already implicitly yield them as part of their analysis. So, the satisfaction of these conditions do not seem (in various cases) to require much ``extra'' work compared to the design of an \FPT\ algorithm (or a kernel) to the problem at hand. Using these sufficient conditions, we obtain {\em uniform} {\FPT} algorithms for computing $\hhdepth$, when $\cH$ is defined by excluding a finite number of connected (a)~minors, or   (b)~topological minors, or (c)~induced subgraphs, or when  $\hh$ is any of bipartite, chordal or interval graphs. 
 For most of these problems, the existence of a uniform {\FPT} algorithm has remained open in the literature. In fact, for some of them, even a non-uniform {\FPT} algorithm was not known.

\section{Preliminaries}
\label{sec:prelims}
\subsection{Generic Notations}
We begin by giving all the basic notations we use in this paper. For a graph $G$, we use $V(G)$ and $E(G)$ to denote its vertex set and edge set, respectively. 
For a graph $G$, whenever the context is clear, we use $n$ and $m$ to denote $|V(G)|$ and 
$|E(G)|$, respectively. Consider a graph $G$. 
For $X \subseteq V(G)$, $G[X]$ denotes the graph with vertex set $X$ and the edge set $\{\{x,y\}\in E(G) \mid x,y\in X\}$. By $G-X$,  we denote the graph $G[V(G) \setminus X]$. For a vertex 
$v \in V(G)$,  $N_G(v)$ and $N_G[v]$ denote the set of open neighbors and closed neighbors of $v$ in $G$. That is, $N_G(v)=\{u\in V(G) \mid \{u,v\}\in E(G)\}$ and $N_G[v]=N_G(v) \cup \{v\}$. 
For a set $U\subseteq V(G)$, $N_G(U)=\bigcup_{u\in U}N_G(u)\setminus U$ and $N_G[U]=N_G(U)\cup U$.  For a vertex set $C\subseteq V(G)$, by slightly abusing the notation, 
we use  $N_G(C)$, and $N_G[C]$ to denote  $N(V(C))$, and $N[V(C)]$, respectively. In all the notations above, we drop the subscript $G$ whenever the context is clear.

   A \emph{path} $P=(v_1,v_2,\cdots,v_\ell)$ in $G$ is a subgraph of $G$ where $V(P) = \{ v_1, v_2, \cdots, v_\ell \} \subseteq V(G)$ is a set of distinct vertices and $E(P) = \{ \{v_i,v_{i+1}\} \mid i \in \ns{\ell-1}\}\subseteq E(G)$, where $|V(P)| = \ell$ for some $\ell \in \ns{|V(G)|}$. The above defined path $P$ is called as $v_1-v_{\ell}$ path.  We say that the graph $G$ is \emph{connected} if for every $u,v\in V(G)$, there exists a $u-v$ path in $G$. A \emph{connected component} of $G$ is an inclusion-wise maximal connected induced subgraph of $G$.

A \emph{rooted tree} is a tree with a special vertex  designated to be the \emph{root}. Let $T$ be a rooted tree with root $r\in V(T)$.  We say that a vertex $v \in V(T)\setminus \{r\}$ is a \emph{leaf} of $T$ if $v$ has exactly one neighbor in $T$. Moreover, if $V(T) = \{r\}$, then $r$ is the leaf (as well as the root) of $T$.\footnote{A root is not a leaf in a tree, if the tree has at least two vertices.} A vertex which is not a leaf, is called a \emph{non-leaf} vertex. 
Let $t,t'\in V(T)$ such that $\{t,t'\} \in E(T)$ and $t'$ is not contained in the $t-r$ path in $T$, then we say that $t$ is the \emph{parent} of $t'$ and $t'$ is a \emph{child} of $t$. 
A vertex $t' \in V(T)$  is a \emph{descendant} of $t$ ($t'$ can possibly be the same as $t$), 
if there is a $t-t'$ path in $T-\{{\sf par}_T(t)\}$, where ${\sf par}_T(t)$ is the parent of $t$ in $T$. 
Note that when $t=r$, then $T-\{{\sf par}_T(t)\}=T$, as the parent of $r$ does not exist. That is, every vertex in $T$ is a descendant of $r$. 

A {\em rooted forest} is a forest where each of its connected component is a  rooted tree. For a rooted forest $F$, a vertex $v\in V(F)$ that is not a root of any of its rooted trees is a \emph{leaf} if it is of degree exactly one in $F$. 
The \emph{depth} of a rooted tree $T$ is the maximum number of edges in a root to leaf path in $T$. The \emph{depth} of a rooted forest is the maximum depth among its rooted trees. 

\subsection{Graph Classes and Decompositions}
We always assume that $\hh$ is a hereditary class of graphs, that is, closed under taking induced subgraphs.
A set $X \subseteq V(G)$ is called an~$\hh$-deletion set if $G - X \in \hh$.
The task of finding a smallest $\hh$-deletion set is called the $\hh$-\textsc{deletion} problem (also referred to as $\hh$-\textsc{vertex deletion} but we abbreviate it since we do not consider edge deletion problems).  
Next, we give the notion of \emph{elimination-distance} introduced by Bulian and Dawar~\cite{BulianD16}. We rephrase their definition, and our definition is (almost) in-line with the equivalent definition given by Jansen et al.~\cite{JansenK021}.

\begin{figure}[h]
    \begin{center}
    \includegraphics[scale=0.11]{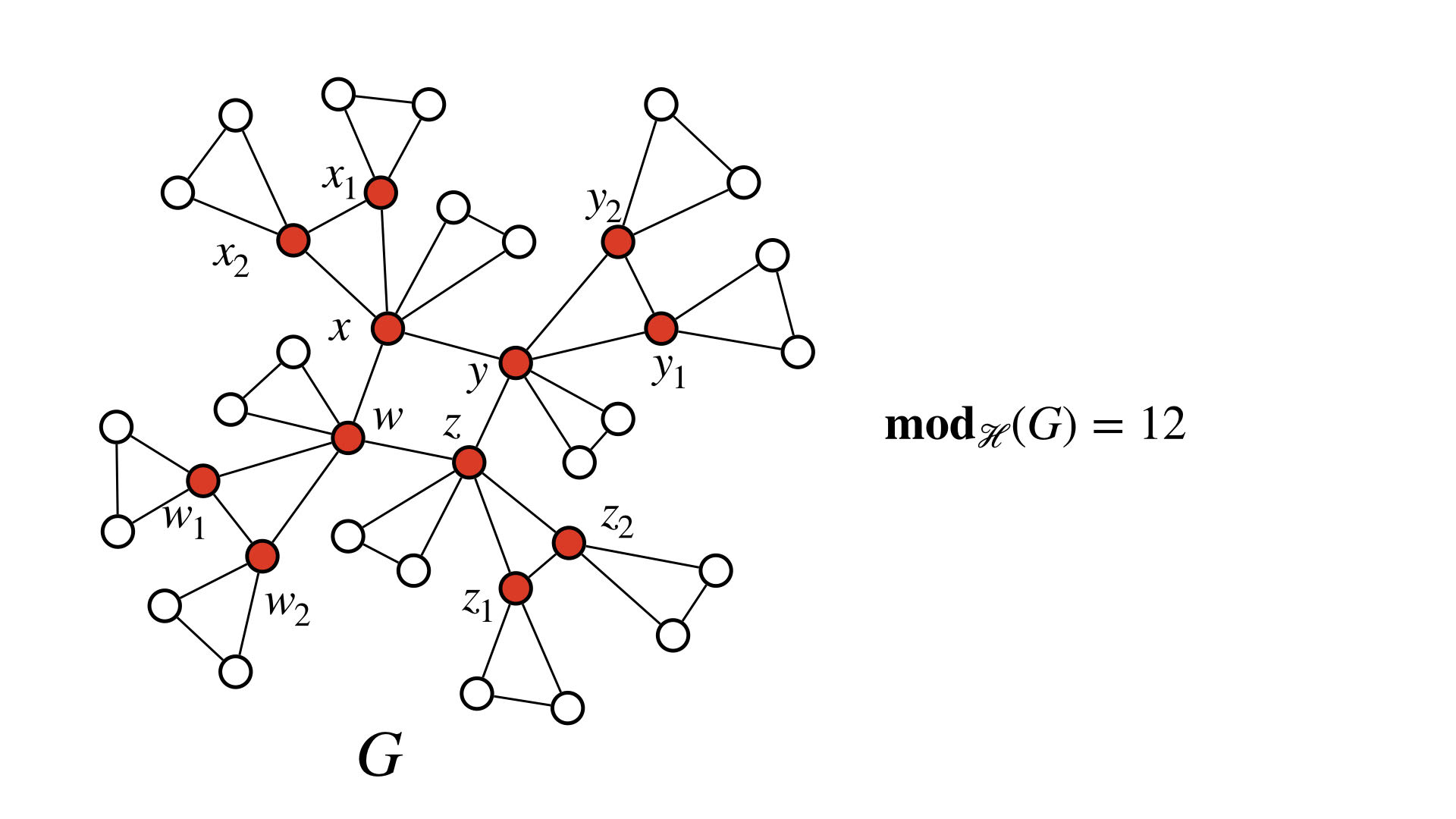}
        \includegraphics[scale=0.11]{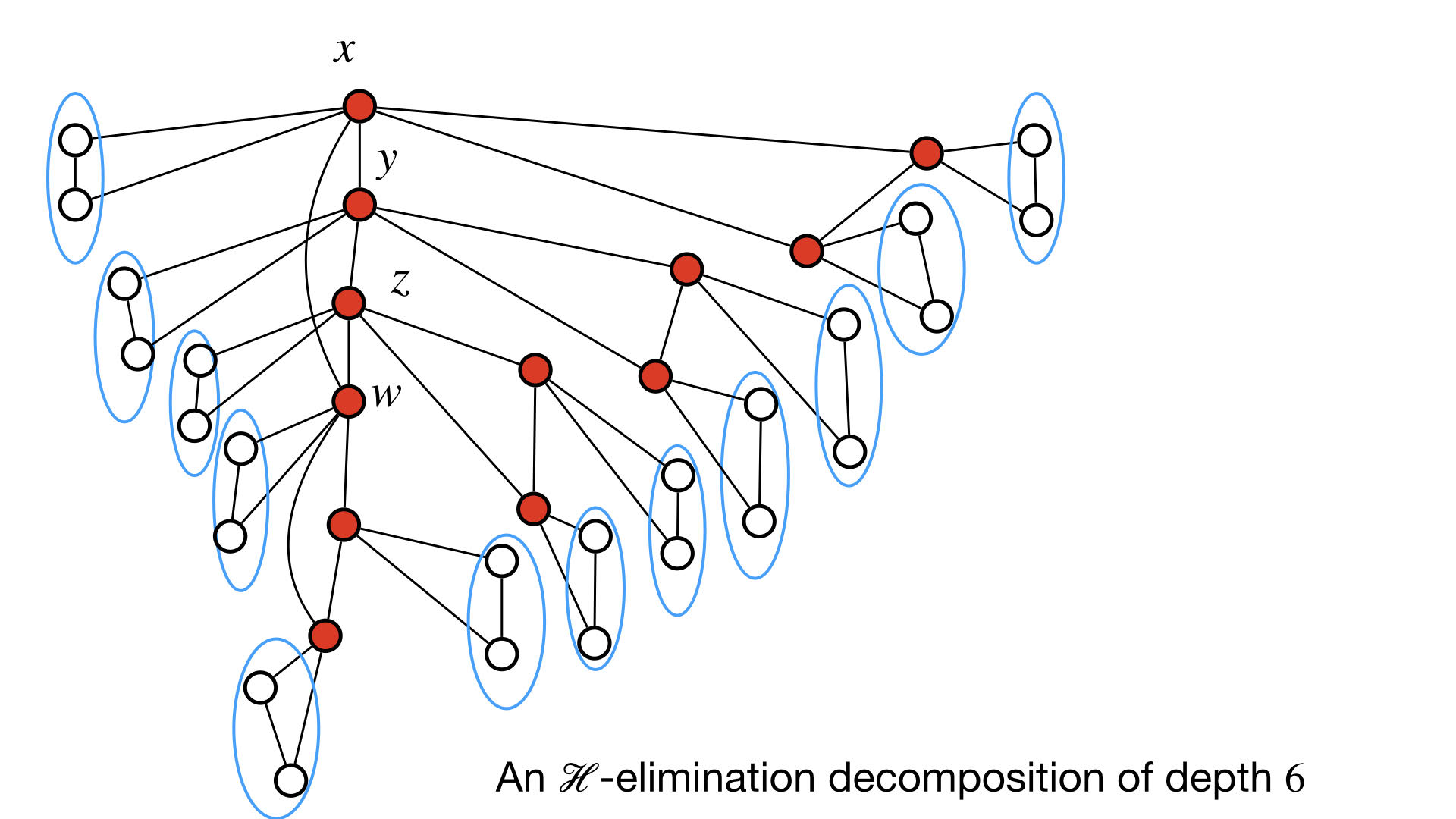}
           \includegraphics[scale=0.11]{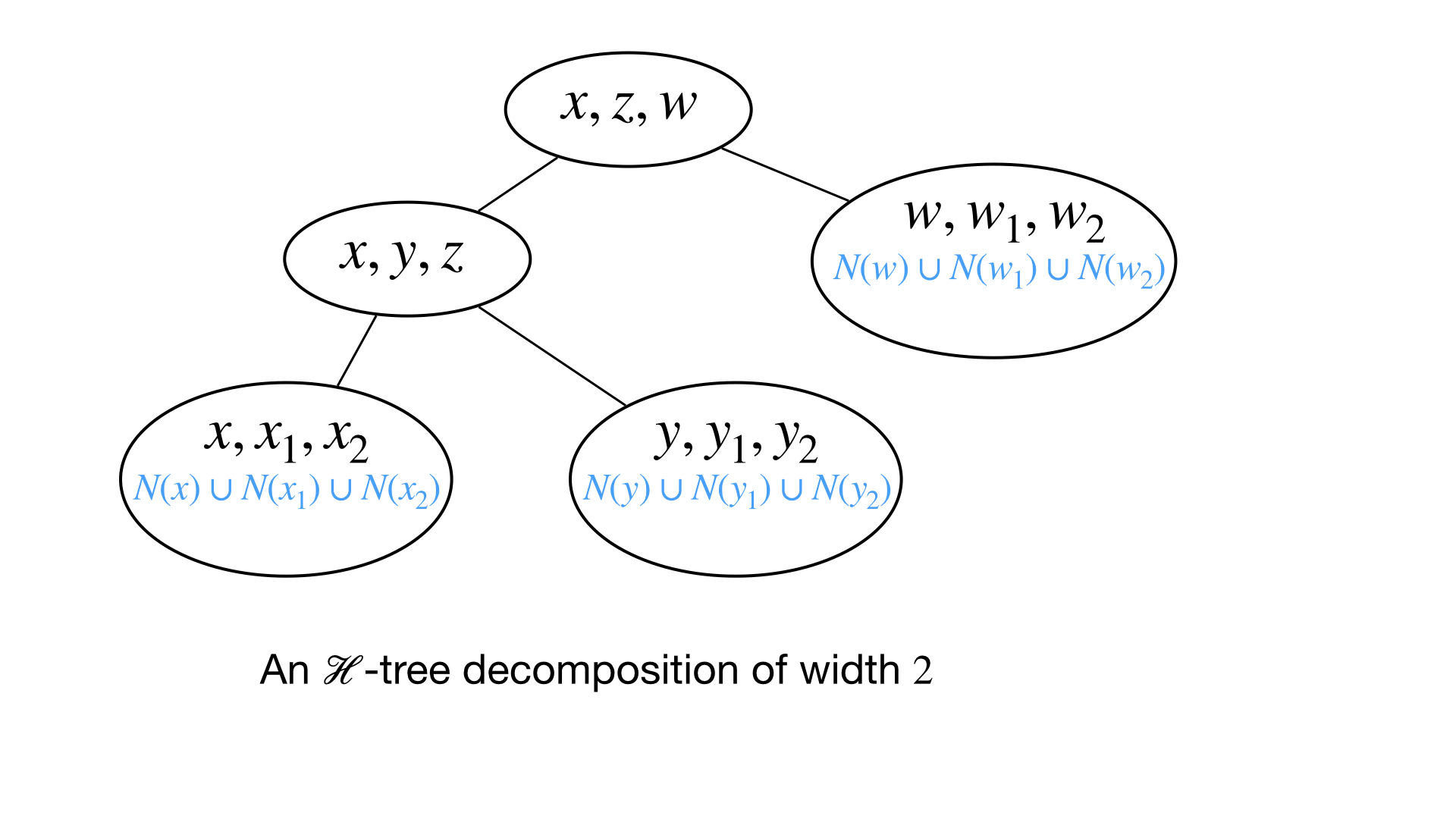}
        \end{center}
    \caption{Let $\cal H$ be the family of triangle-free graphs. The figure shows a graph $G$ with a modulator to $\cal H$ of size $12$ (in red), an $\cal H$-elimination decomposition of depth $6$, an $\cal H$-tree decomposition of width $2$.}
    \label{fig:fig1}
\end{figure}

\begin{definition}\label{def:H-elim}
\label{def:eldeco}
{\rm
For a graph class~$\hh$, an {\em \hed} of graph $G$ is pair~$(T, \chi, L)$ where~$T$ is a rooted forest and~$\chi \colon V(T) \to 2^{V(G)}$ and $L \subseteq V(G)$, such that:
\begin{enumerate}
\setlength{\itemsep}{-1pt}
    \item For each internal node~$t$ of~$T$ we have~$|\chi(t)| \leq 1$ and $\chi(t) \subseteq V(G) \setminus L$. 
    \item The sets~$(\chi(t))_{t \in V(T)}$ form a partition of~$V(G)$.
    \item For each edge~$uv \in E(G)$, if~$u \in \chi(t_1)$ and~$v \in \chi(t_2)$ then~$t_1, t_2$ are in ancestor-descendant relation in~$T$.
    \item For each leaf $t$ of~$T$, we have $\chi(t) \subseteq L$ and the graph~$G[\chi(t)]$, {called a base component}, belongs to~$\hh$. 
\end{enumerate}

The \emph{depth} of~$T$ is the maximum number of {\bf edges} on a root-to-leaf path (see Figure~\ref{fig:fig1}).
{We refer to the union of base components as the set of base vertices.} 
The $\hh$-elimination distance of~$G$, denoted $\hhdepth(G)$, is the minimum depth of an $\hh$-elimination forest for~$G$.
{A pair $(T, \chi)$ is a (standard) elimination forest if \hh{} is the class of empty graphs, i.e., the base components are empty. The treedepth of~$G$, denoted $\td(G)$, is the minimum depth of a standard elimination forest.}
}
\end{definition}

{It is straight-forward to verify that for any~$G$ and~$\hh$, the minimum depth of an $\hh$-elimination forest of~$G$ is equal to the $\hh$-elimination distance as defined recursively in the introduction. (This is the reason we have defined the depth of an $\hh$-elimination forest in terms of the number of edges, while the traditional definition of treedepth counts vertices on root-to-leaf paths.)}
{The following definition captures the relaxed notion of tree decomposition.}

\begin{definition}[\cite{JansenK021}]
\label{def:tree:h:decomp}{\rm
For a graph class $\hh$, an {\em \htd} of graph $G$ is a triple $(T, \chi, L)$ where~$L \subseteq V(G)$,~$T$ is a rooted tree, and~$\chi \colon V(T) \to 2^{V(G)}$, such that:
\begin{enumerate}
\setlength{\itemsep}{-1pt}
    \item For each~$v \in V(G)$ the nodes~$\{t \mid v \in \chi(t)\}$ form a {non-empty} connected subtree of~$T$. \label{item:tree:h:decomp:connected}
    \item For each edge~$\{u,v\} \in E(G)$ there is a node~$t \in V(G)$ with~$\{u,v\} \subseteq \chi(t)$.
    \item For each vertex~$v \in L$, there is a unique~$t \in V(T)$ for which~$v \in \chi(t)$,  with~$t$ being a leaf of~$T$. \label{item:tree:h:decomp:unique}
    \item For each node~$t \in V(T)$, the graph~$G[\chi(t) \cap L]$ belongs to~$\hh$. \label{item:tree:h:decomp:base}
\end{enumerate}
{The \emph{width} of a tree $\hh$-decomposition is defined as~$\max(0, \max_{t \in V(T)} |\chi(t) \setminus L| - 1)$.} The $\hh$-treewidth of a graph~$G$, denoted~$\hhtw(G)$, is the minimum width of a tree $\hh$-decomposition of~$G$.
The connected components of $G[L]$ are called base components
{and the vertices in $L$ are called base vertices.} 

{A pair~$(T, \chi)$ is a (standard) \emph{tree decomposition} if~$(T, \chi, \emptyset)$ satisfies all conditions of an $\hh$-decomposition; the choice of~$\hh$ is irrelevant.}
}
\end{definition}
{In the definition of width, we subtract one from the size of a largest bag to mimic treewidth. The maximum with zero is taken to prevent graphs~$G \in \hh$ from having ~$\hhtw(G) = -1$.}

\medskip
\noindent\textbf{Topological minors:} Roughly speaking, a graph $H$ is a \emph{topological
minor} of $G$ if $G$ contains a subgraph $G'$ on $|V(H)|$ vertices, where the edges in $H$ correspond to (vertex disjoint) paths in $G'$. We now formally define the above. Let ${\sf Paths}(G)$ be the set of all paths in $G$. We say that a graph $H$ is a {\em topological minor} of $G$ if there are injective functions $\phi: V(H)\rightarrow V(G)$ and $\varphi: E(H)\rightarrow {\sf Paths}(G)$ such that $(i)$ for all $e=\{h,h'\}\in E(H)$, $\phi(h)$ and $\phi(h')$ are the endpoints of $\varphi(e)$, $(ii)$ for distinct $e,e'\in E(H)$, the paths $\varphi(e)$ and $\varphi(e')$ are internally vertex-disjoint, and $(iii)$ there does not exist a vertex $v$ such that $v$ is in the image of $\phi$ and there is an edge $e\in E(H)$ where $v$ is an internal vertex in the path $\varphi(e)$.

\medskip
\noindent
{\bf Unbreakable Graphs.} To formally introduce the notion of unbreakability, we rely on the definition of a separation:

\begin{definition}{\rm [\bf Separation]} A pair $(X, Y)$ where $X \cup Y = V (G)$ is a {\em separation} if $E(X \setminus
Y, Y \setminus X) = \emptyset$. The order of $(X, Y)$ is $\vert X\cap Y\vert$.
\end{definition}

Roughly speaking, a graph is breakable if it is possible to ``break'' it into two large parts by removing only a small number of vertices. Formally,

\begin{definition}{\rm [\bf $(s,c)$-Unbreakable graph]} Let $G$ be a graph. If there exists a separation $(X,Y)$ of order at most $c$ such that $\vert X\setminus Y\vert \geq s$ and $\vert Y\setminus X\vert \geq  s$, called an {\em $(s,c)$-witnessing separation}, then $G$ is {\em $(s,c)$-breakable}. Otherwise, $G$ is {\em $(s,c)$-unbreakable}.
\end{definition}

We next state an observation which immediately follows from the above definition.

\begin{observation}\label{obs:boundedsep}
Consider a graph $G$, an integer $k$ and a set $S \subseteq V(G)$ of size at most $k$, where $G$ is $(\ak,k)$-unbreakable graph with $|V(G)| > 2\ak+k$. Then, there is exactly one connected component $C^*$ in $G-S$ that has at least $\ak$ vertices and $|V(G)\setminus V(C^*)| < \ak + k$. 
\end{observation}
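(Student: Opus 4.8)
The plan is to prove Observation~\ref{obs:boundedsep} directly from the definitions of $(\ak,k)$-unbreakability and of a separation. Let $G$ be $(\ak,k)$-unbreakable with $|V(G)| > 2\ak + k$, let $S \subseteq V(G)$ with $|S| \le k$, and let $\mathcal{C}$ denote the set of connected components of $G - S$.

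\textbf{Existence of a big component.} First I would observe that $G - S$ has at least $\ak$ vertices, since $|V(G)| > 2\ak + k \ge \ak + |S|$. Suppose, for contradiction, that every component $C \in \mathcal{C}$ has fewer than $\ak$ vertices. Then I would greedily partition $\mathcal{C}$ into two groups $\mathcal{C}_1, \mathcal{C}_2$ by adding components one at a time to whichever group is currently smaller (in total number of vertices); since each component has $< \ak$ vertices and the total is $> \ak$, a standard balancing argument shows both $A := \bigcup_{C \in \mathcal{C}_1} V(C)$ and $B := \bigcup_{C \in \mathcal{C}_2} V(C)$ are nonempty, and in fact each has at least, say, $\lceil (|V(G-S)| - \ak)/2 \rceil \ge \ak$ vertices — more carefully, adding any component of size $<\ak$ changes the smaller side's deficit below $\ak$, so the final imbalance is $< \ak$, and since the total is $> \ak$ both sides end with $\ge 1$ vertex; to get the stronger bound $\ge \ak$ I would instead use $|V(G)| > 2\ak+k$ so $|V(G-S)| > 2\ak$, whence both sides get $> \ak - \ak/? $. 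Let me restate this cleanly: because each part must be at least $(|V(G-S)| - \ak)/2 > (2\ak - \ak)/2 = \ak/2$... this only gives $\ak/2$. So the right move is: if all components are small ($<\ak$), but we only need the witnessing separation to have both sides $\ge \ak$, we should be slightly more careful — actually we merely need both sides $\ge \ak$ and order $\le k$. Setting $X = A \cup S$, $Y = B \cup S$ gives a separation of order $\le |S| \le k$ (there are no edges between $A$ and $B$ in $G$ since they lie in different components of $G-S$, and all edges incident to $V(G)\setminus S$ that cross go through $S$). Then $|X \setminus Y| = |A|$ and $|Y \setminus X| = |B|$, and by balancing $|A|, |B| \ge \ak$ whenever $|A| + |B| = |V(G-S)| > 2\ak$ and each component is $<\ak$ — this is exactly the standard fact that greedily balancing items of size $< m$ into two bins with total $> 2m$ leaves both bins $> m$... which needs total $\ge 2m$ and still may fail (e.g. one item of size $m-1$, rest tiny). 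The safe claim is both bins get $\ge (|V(G-S)|-\ak)/2 \ge (2\ak+1 - \ak)/2 > \ak/2$; however, since unbreakability only forbids a witnessing separation, and we may freely re-run the argument demanding both sides $\ge \ak$, I will instead take the hypothesis threshold $|V(G)| > 2\ak+k$ and note $|V(G-S)| \ge 2\ak+1$, so the two bins have sizes summing to $\ge 2\ak+1$ with each summand $\ge \lceil (2\ak+1-\ak)/2\rceil$; to genuinely reach $\ak$ I would, if needed, first pull out the single largest component $C^\star$ (size $< \ak$) as one bin and dump everything else in the other bin (size $> \ak$), giving $|X\setminus Y|, |Y\setminus X|$ — one side might still be small. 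The cleanest correct route: if no component has $\ge \ak$ vertices, order components arbitrarily and let $A$ be the shortest prefix whose union has $\ge \ak$ vertices; then $\ak \le |A| < \ak + \ak = 2\ak$ (the last component added has $< \ak$ vertices), so $|B| = |V(G-S)| - |A| > (2\ak+k) - k - 2\ak$ — wait that is $0$. I see the bound in the statement is tight, so I would simply use: $|B| = |V(G-S)| - |A| \ge (|V(G)| - |S|) - |A| > (2\ak + k) - k - 2\ak = 0$, so $|B| \ge 1$; but we need $|B| \ge \ak$. Given the threshold $|V(G)| > 2\ak+k$ this forces $|A| < 2\ak$ and $|B| \ge |V(G)| - k - |A| > 2\ak - |A| > 0$; to obtain $\ge \ak$ we would need $|V(G)| > 3\ak + k$. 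Since the paper asserts the weaker threshold suffices, I would instead argue the contrapositive more carefully using that a component of size in $[\ak, |V(G-S)| - \ak]$ itself directly yields a witnessing separation, and handle the remaining case (one component too big) separately, as in the next paragraph.

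\textbf{Uniqueness and the bound $|V(G)\setminus V(C^\star)| < \ak + k$.} Having established there is a component $C^\star \in \mathcal{C}$ with $|V(C^\star)| \ge \ak$, I would show it is unique and that the complement is small in one stroke. Consider the separation $(X, Y)$ with $X = V(C^\star) \cup S$ and $Y = (V(G) \setminus V(C^\star))$; this is a separation of order $|X \cap Y| = |S \setminus V(C^\star)| \le k$, since no edge joins $V(C^\star)$ to another component of $G - S$. Here $|X \setminus Y| = |V(C^\star)| \ge \ak$. If $|Y \setminus X| = |V(G) \setminus V(C^\star) \setminus S| \ge \ak$, then $(X,Y)$ would be an $(\ak,k)$-witnessing separation, contradicting unbreakability. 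Hence $|V(G) \setminus V(C^\star) \setminus S| < \ak$, so $|V(G) \setminus V(C^\star)| < \ak + |S| \le \ak + k$, which is the claimed bound. Uniqueness follows immediately: any other component $C \ne C^\star$ satisfies $V(C) \subseteq V(G) \setminus V(C^\star) \setminus S$ (as $C$ is disjoint from $S$), so $|V(C)| < \ak$; thus $C^\star$ is the only component of $G - S$ with at least $\ak$ vertices.

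\textbf{Main obstacle.} The genuinely delicate point is the \emph{existence} of the large component $C^\star$ under the stated threshold $|V(G)| > 2\ak + k$. The uniqueness and the complement-size bound are routine once existence is in hand. For existence, the right argument is: if \emph{every} component of $G-S$ has fewer than $\ak$ vertices, order them as $C_1, C_2, \ldots$ and take the shortest prefix $C_1, \ldots, C_j$ whose union $A$ has $|A| \ge \ak$; minimality gives $|A| < 2\ak$ (since $|V(C_j)| < \ak$), and then $B := V(G-S) \setminus A$ has $|B| = |V(G-S)| - |A| > (|V(G)| - k) - 2\ak \ge 1$, i.e. $B \ne \emptyset$, and we would want $|B| \ge \ak$. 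If the threshold really is $|V(G)| > 2\ak+k$ (as printed), I suspect one actually uses unbreakability in its contrapositive form requiring only that \emph{some} witnessing separation exists, and combines it with the earlier \ref{obs:treedepthbound}-style reasoning; I would double-check whether the intended threshold should be $3(\ak+k)$ (which appears via the macro \grsize{} elsewhere in the preamble) — but assuming the statement as given, the proof I will write takes the separation $(V(C^\star)\cup S,\, V(G)\setminus V(C^\star))$ as the workhorse and derives both the bound and uniqueness from the non-existence of a witnessing separation, treating existence of $C^\star$ via the prefix argument above with whatever threshold the lemma truly needs.
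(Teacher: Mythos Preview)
The paper gives no proof at all: it simply says the observation ``immediately follows from the above definition.'' So there is nothing to compare against except your own reasoning.

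Your second paragraph (\textbf{Uniqueness and the bound}) is clean and correct, and is exactly how one would argue it: the separation $(V(C^\star)\cup S,\ V(G)\setminus V(C^\star))$ has order at most $k$ and one side of size $|V(C^\star)|\ge \ak$, so unbreakability forces the other side below $\ak$, giving both the complement bound and uniqueness.

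Your unease about \textbf{existence} under the printed threshold $|V(G)|>2\ak+k$ is not a weakness of your write-up; it is a genuine issue with the statement. The existence clause is in fact \emph{false} at that threshold. Take $k=1$, $\ak=4$, let $s$ be a universal vertex, and let $G-s$ be the disjoint union of three triangles, so $|V(G)|=10>2\cdot 4+1$. Any order-$1$ separation through $s$ splits $\{3,3,3\}$ into two groups, one of which is a single triangle of size $3<4$; any order-$1$ separation through $v\neq s$ is trivial since $s$ keeps $G-v$ connected; and $G$ is connected, so there are no nontrivial order-$0$ separations. Hence $G$ is $(4,1)$-unbreakable, yet $G-\{s\}$ has no component of size $\ge 4$. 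Your prefix argument does go through once $|V(G)|>3\ak+k$ (then $|V(G-S)|>3\ak$, the prefix $A$ satisfies $\ak\le |A|<2\ak$, and $|B|>\ak$), which is consistent with how the paper actually invokes the observation: every substantive use occurs under the stronger hypothesis $|V(G)|>3(\ak+k)$, the macro \texttt{\textbackslash grsize}. So your suspicion about the threshold is exactly right; tidy the existence paragraph by assuming $|V(G)|>3\ak+k$ (or noting the discrepancy), keep your uniqueness/bound argument as is, and drop the exploratory dead ends.
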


\subsection{Counting Monadic Second Order Logic}\label{sec:prelimsCMSO}
The syntax of Monadic Second Order Logic (MSO) of graphs includes the logical connectives $\vee,$ $\land,$ $\neg,$ 
$\Leftrightarrow,$ $\Rightarrow,$ variables for vertices, edges, sets of vertices and sets of edges, the quantifiers $\forall$ and $\exists$, which can be applied to these variables, and five binary relations: 
\begin{enumerate}
\setlength{\itemsep}{-2pt}
\item $u\in U$, where $u$ is a vertex variable and $U$ is a vertex set variable; 
\item $d \in D$, where $d$ is an edge variable and $D$ is an edge set variable;
\item $\mathbf{inc}(d,u),$ where $d$ is an edge variable, $u$ is a vertex variable, and the interpretation  is that the edge $d$ is incident to  $u$; 
\item $\mathbf{adj}(u,v),$ where $u$ and $v$ are vertex variables, and the interpretation is that $u$ and $v$ are adjacent;
\item  equality of variables representing vertices, edges, vertex sets and edge sets.
\end{enumerate}

Counting Monadic Second Order Logic (CMSO) extends MSO by including atomic sentences testing whether the cardinality of a set is equal to $q$ modulo $r,$ where $q$ and $r$ are integers such that $ 0\leq q<r $ and $r\geq 2$. That is, CMSO is MSO with the following atomic sentence: 
$\mathbf{card}_{q,r}(S) = \mathbf{true}$ if and only if $|S| \equiv q \pmod r$, where $S$ is a set.
We refer to~\cite{ArnborgLS91,Courcelle90,Courcelle97} for a detailed introduction to  CMSO.

We will crucially use the following result of Lokshtanov et al.~\cite{LokshtanovR0Z18} that allows one to obtain a (non-uniform) {\FPT} algorithm for {\msotwo}-expressible graph problems by designing an {\FPT} algorithm for the problem on unbreakable graphs. 


\begin{proposition}[Theorem 1, \cite{LokshtanovR0Z18}]\label{prop:CMSOMetaTheorem}
Let $\psi$ be a CMSO sentence and let  $d>4$ be a positive integer. There exists a function $\alpha : \mathbb{N} \rightarrow \mathbb{N}$, such that for every $c\in \mathbb{N}$ there is an $\alpha (c)\in \mathbb{N}$, if there exists an algorithm that solves {\sc CMSO}$[\psi]$ on $(\alpha(c),c)$-unbreakable graphs in time $\bigoh(n^d)$, then there exists an algorithm that solves {\sc CMSO}$[\psi]$ on general graphs in time $\bigoh(n^{d})$.
\end{proposition}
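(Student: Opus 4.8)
The plan is to \emph{bootstrap} from the promised algorithm $A$ (for $\mathsf{CMSO}[\psi]$ on $(\alpha(c),c)$-unbreakable graphs) to an algorithm on arbitrary graphs, by recursing along a tree decomposition all of whose pieces are unbreakable. First I would invoke the structural theorem on \emph{unbreakable tree decompositions} (Cygan et al.): there is a function $s_0$ such that, for every $c$ and every $n$-vertex graph $G$, one can compute in time $g(c)\cdot n^{\Oh(1)}$ --- with the exponent of $n$ an \emph{absolute} constant, which is exactly the slack that the hypothesis $d>4$ provides --- a rooted tree decomposition $(\mathcal{T},\chi)$ of $G$ such that every adhesion has size at most $s_0(c)$, the part of $G$ ``hosted'' at each node stays $(s_0(c),c)$-unbreakable even after bounded-size graphs are glued along its adhesions, the decomposition is \emph{compact} ($\sum_t|\chi(t)|=\Oh_c(n)$ and $|V(\mathcal{T})|=\Oh(n)$), and --- after a harmless binarization adding only $\Oh(n)$ bags of size $\Oh(c)$ --- every node has at most two children. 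If $G$ itself is $(\alpha(c),c)$-unbreakable we simply run $A$; so assume we have a genuine decomposition.

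Next I would recall the Feferman--Vaught/Courcelle machinery for the fixed sentence $\psi$. For each boundary size $q\le s_0(c)$ there are a finite set $\mathcal{R}_q$ of representative $q$-boundaried graphs, a finite set $\mathcal{D}_q$ of ``test'' $q$-boundaried graphs, and a $\psi$-type map $\tau$ with the properties: (a) $\tau(B)$ is determined by the bit-vector $\big(\mathbf{1}[\,B\oplus D\models\psi\,]\big)_{D\in\mathcal{D}_q}$; (b) $\tau(B)=\tau(B')$ implies $B\oplus C\models\psi\iff B'\oplus C\models\psi$ for all $q$-boundaried $C$, and $\tau(B)=\tau(\mathsf{rep}(\tau(B)))$; (c) $\tau(B_1\oplus B_2)$ is a precomputable function of $\tau(B_1),\tau(B_2)$. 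For fixed $\psi$, all of $\mathcal{R}_q,\mathcal{D}_q$ and the lookup/composition tables depend only on $q$.

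Then I would process $\mathcal{T}$ bottom-up, computing at each node $t$ the type $\tau(G_t)$ of the boundaried graph $G_t$ obtained from the union of all bags in the subtree at $t$, with boundary the adhesion to $t$'s parent. Given the types $\tau_1,\tau_2$ of the (at most two) children, build the \emph{reduced} graph $\widehat{G}_t$ by taking the part of $G$ hosted at $t$ and gluing $\mathsf{rep}(\tau_1),\mathsf{rep}(\tau_2)$ along the corresponding adhesions; by (c) and (b), $\tau(\widehat{G}_t)=\tau(G_t)$. The key point is that $\widehat{G}_t$, and likewise each $\widehat{G}_t\oplus D$ with $D\in\mathcal{D}_q$, is obtained from an unbreakable-in-$G$ host by gluing $\Oh_\psi(1)$ bounded-size graphs along adhesions of size $\le s_0(c)$, hence is $(\alpha(c),c)$-unbreakable for a suitable $\alpha(c)$ depending only on $c$ (through $s_0(c)$, $c$, and the sizes of the finitely many gadgets and test graphs) once $\psi$ is fixed. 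So we may invoke $A$ on each of the $\Oh_\psi(1)$ graphs $\widehat{G}_t\oplus D$, read off the bit-vector of (a), look up $\tau(\widehat{G}_t)=\tau(G_t)$, and pass it up; at the root, $\tau(G)$ with empty boundary decides $G\models\psi$. For the running time: $\Oh(n)$ nodes, $\Oh_\psi(1)$ calls to $A$ per node on graphs of size $|\chi(t)|+\Oh_\psi(1)$, so the calls cost at most $\big(\sum_t|\chi(t)|+\Oh_\psi(n)\big)^d=\Oh(n^d)$ in total (using $\sum x_i^d\le(\sum x_i)^d$ and compactness), while the decomposition and bookkeeping cost $g(c)\cdot n^{\Oh(1)}$, absorbed because $d>4$.

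The main obstacle is making the third step airtight: one must guarantee \emph{simultaneously} that the reduced instances fed to $A$ remain unbreakable with parameters that are functions of $c$ alone (which forces the structural theorem to yield unbreakability that is robust under gluing bounded pieces along adhesions, not merely unbreakability of $G[\chi(t)]$ in isolation), and that a \emph{bounded} number of experiments using only the single available predicate ``$\cdot\models\psi$'' recovers the full $\psi$-type, so that no auxiliary model checking --- which we are not permitted to run on breakable graphs --- is ever invoked. Tuning $\alpha(c)$, the adhesion bound, and the compactness/binarization so that the final bound is exactly $\Oh(n^d)$ rather than $\Oh(n^{d+1})$ is the delicate quantitative point, and is presumably why the statement asks for $d>4$ rather than an arbitrary constant.
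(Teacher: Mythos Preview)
The paper does not prove this proposition at all: it is quoted verbatim as Theorem~1 of \cite{LokshtanovR0Z18} and used as a black box throughout (see, e.g., the opening of Section~\ref{sec:one-to-five-nine} and Section~\ref{sec:4/7-to-1}). There is therefore no in-paper proof to compare your attempt against.

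That said, your sketch is a faithful outline of the argument in the cited source: compute an unbreakable tree decomposition with bounded adhesions (Cygan--Lokshtanov--Pilipczuk--Pilipczuk--Saurabh), then do bottom-up dynamic programming over the decomposition, replacing each child's subgraph by a bounded-size representative of its $\psi$-equivalence class and invoking the promised algorithm $A$ on the resulting unbreakable pieces. Your key technical observations are the right ones: (i) the canonical $\psi$-equivalence on $q$-boundaried graphs has finitely many classes and is compositional under gluing, so a finite test set $\mathcal{D}_q$ suffices to recover the class using only the oracle ``$\cdot\models\psi$''; (ii) gluing bounded-size representatives and test graphs along bounded adhesions preserves unbreakability up to an additive constant depending only on $c$, which is what lets you define $\alpha(c)$; and (iii) the $d>4$ hypothesis absorbs the polynomial cost of computing the decomposition. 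The one place to be careful is the non-uniformity: the representatives in $\mathcal{R}_q$ and tests in $\mathcal{D}_q$ exist but are not computed --- they are hard-wired, which is why the resulting algorithm (and hence the applications in this paper built on it) is non-uniform. Your write-up glosses over this, but it is consistent with how the proposition is stated and used here.
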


\subsection{Parameterized graph problems}
A parameterized graph problem  $\Pi$ is usually defined as a subset of $\Sigma^{*}\times \Bbb{Z}^{+}$
where, in each instance $(x,k)$ of $\Pi,$ $x$ encodes a graph and $k$ is the parameter (we denote by $\Bbb{Z}^{+}$ the set of all non-negative integers). In this paper we use an extension of this definition (also used by Bodlaender et al.~\cite{BodlaenderFLPST16} and Fomin et al.~\cite{FominLST18}) that permits the parameter $k$ to be negative 
with the additional constraint that either all pairs with non-positive values of the parameter 
are in $\Pi$ or that no such pair is in $\Pi$. Formally, a parametrized problem $\Pi$
is a subset of $\Sigma^{*}\times \Bbb{Z}$ where for all $(x_{1},k_{1}),(x_{2},k_{2})\in\Sigma^{*}\times \Bbb{Z}$
with $k_{1},k_{2}<0$ it holds that $(x_{1},k_{1})\in\Pi$ if and only if  $(x_{2},k_{2})\in\Pi$.
This extended definition encompasses the traditional one and is needed for technical reasons  
(see Subsection~\ref{subsec:finiinteginde}).
In an instance of a parameterized problem $(x,k),$ the integer $k$ is called the parameter. 
%

\subsection{Boundaried Graphs} 
\label{subsec:boungrap}
Here we define the notion of {\em boundaried graphs} and various operations on them.
\begin{definition}{\rm [\bf Boundaried Graphs]}\label{def:boungraph}
A \term{boundaried graph} is a graph $G$ with a set $B\subseteq V(G)$ 
of  distinguished vertices and an injective labelling $\lambda_G$ 
from $B$  to the set $\Bbb{Z}^{+}$. The set $B$ is called the \term{{\em boundary}} of $G$ and  the vertices in $B$  are called  {\em boundary vertices} or \term{{\em terminals}}. 
Given a boundaried graph $G,$ we denote its boundary by ${\delta(G)},$
we denote its labelling by $\lambda_G$, 
and we define its {\em label set} by $\Lambda(G)=\{\lambda_{G}(v)\mid v\in \delta(G)\}$.
Given a finite set $I\subseteq \Bbb{Z}^{+}$, we define 
${{\cal F}_{I}}$  to denote the class of all boundaried graphs whose label set is $I$. 
We also denote by ${{\cal F}}$ the class of all boundaried graphs.
Finally we say that a boundaried graph is a {\em $t$-boundaried} graph if $\Lambda(G)\subseteq \{1,\ldots,t\}$.
\end{definition}
%

%

\begin{definition}{\rm [\bf Gluing by $\oplus$]} Let $G_1$ and $G_2$ be two  boundaried graphs. We denote by $G_1 {\oplus} G_2$ the  graph 
(not boundaried) obtained by taking the disjoint union of $G_1$ and $G_2$ and identifying equally-labeled vertices of the boundaries of $G_{1}$ and $G_{2}.$ In $G_1 \oplus G_2$ there is an edge between two vertices if there is  an edge between them either in $G_1$ or in $G_2$, or both.  

\end{definition}

We remark that if $G_1$ has a label which is not present in $G_2$, or vice-versa, then in $G_1 \oplus G_2$ we just forget that label. 

\begin{definition} {\rm [\bf Gluing by $\oplus_\delta$]}
The {\em boundaried gluing operation} $\oplus_{\delta}$ is similar to the normal gluing operation, but results in a boundaried graph rather than a graph. Specifically $G_1 \oplus_\delta G_2$ results in a boundaried graph where the graph is $G = G_1 \oplus G_2$ and a vertex is in the boundary of $G$ if it was in the boundary of $G_1$ or  of $G_2$. Vertices in the boundary of $G$ keep their label from $G_1$ or $G_2$. 
\end{definition}


Let ${\cal G}$ be a class of (not boundaried)  graphs.
By slightly abusing notation we say that a boundaried graph {\em belongs to a graph class ${\cal G}$} if the underlying graph belongs to ${\cal G}.$

\begin{definition}{\rm [\bf Replacement]}\label{defn:replacement}
Let $G$ be a $t$-boundaried graph containing a set $X\subseteq V(G)$
such that $\partial_{G}(X)=\delta(G).$ Let $G_1$ be a $t$-boundaried graph. The result of {\em replacing $X$ with $G_1$} is the graph $G^{\star}\oplus G_{1},$
where $G^{\star}=G\setminus (X\setminus \partial (X))$ is treated as a $t$-boundaried graph with  $\delta(G^{\star})=\delta(G).$
\end{definition}

\subsection{Finite Integer Index}
\label{subsec:finiinteginde}
\begin{definition}{\rm [\bf Canonical equivalence on boundaried graphs.]}
Let $\Pi$ be a parameterized graph problem whose instances are pairs of the form $(G,k).$
 Given two boundaried graphs $G_1,G_2~\in {\cal F},$ we say that \term{$G_1\!\equiv _{\Pi}\! G_2$} if 
$\Lambda(G_{1})=\Lambda(G_{2})$
 and there exists a \term{{\em transposition constant}}
$c\in\Bbb{Z}$ such that 
\begin{eqnarray*}
\forall(F,k)\in {\cal F}\times \Bbb{Z} &&  (G_1 \oplus F, k) \in \Pi \Leftrightarrow (G_2 \oplus F, k+c) \in \Pi.\label{eq:fiidef}
\end{eqnarray*}
Here, $c$ is a function of the two graphs $G_1$ and $G_2$. 
\end{definition}
Note that  the relation $\equiv_{\Pi}$  is
an equivalence relation. Observe that $c$ could be negative in the above definition. This is the reason we allow the parameter in parameterized problem instances to take negative values.

Next  we define a notion of ``transposition-minimality'' for the members 
of  each equivalence class of $\equiv_{\Pi}.$

\begin{definition}{\rm [\bf Progressive representatives~\cite{BodlaenderFLPST16}]}
\label{def:progrepr}
Let $\Pi$ be a parameterized graph problem whose instances are pairs of the form $(G,k)$
and let ${\cal C}$ be some equivalence class of $\equiv_{\Pi}$. We say that $J\in{\cal C}$ is a \term{{\em progressive 
representative}}
of ${\cal C}$ if for every $H\in{\cal C}$
there exists $c\in\Bbb{Z}^{-},$ such that 
\begin{eqnarray}
\forall(F,k)\in {\cal F}\times \Bbb{Z} \ \ \  (H \oplus F, k) \in \Pi \Leftrightarrow (J\oplus F, k+c) \in \Pi. \label{eq:progfii}
\end{eqnarray}
\end{definition}

The following lemma guarantees the existence of a progressive representative for each equivalence class of 
$\equiv_{\Pi}$. 

\begin{lemma}[\cite{BodlaenderFLPST16}]
\label{lem:existprog}
Let $\Pi$ be a parameterized graph problem whose instances are pairs of the form $(G,k)$.
 Then each  equivalence class of $\equiv_{\Pi}$ has a progressive representative.
\end{lemma}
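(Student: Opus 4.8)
\textbf{Proof plan for Lemma~\ref{lem:existprog}.}
The plan is to fix a finite label set $I \subseteq \mathbb{Z}^+$ and argue separately for each of the (finitely many, once $I$ is fixed — in fact one may restrict to $I = \Lambda(\mathcal{C})$) equivalence classes $\mathcal{C}$ of $\equiv_\Pi$ that consists of boundaried graphs with label set $I$. For such a class, the definition of $\equiv_\Pi$ already gives us, for every ordered pair $G_1, G_2 \in \mathcal{C}$, a transposition constant $c = c(G_1, G_2) \in \mathbb{Z}$ witnessing the equivalence. The candidate for the progressive representative is the member of $\mathcal{C}$ whose transposition constant relative to \emph{some} fixed reference graph in $\mathcal{C}$ is as large as possible; equivalently, I would look for $J \in \mathcal{C}$ minimizing the "intrinsic offset" so that every other $H \in \mathcal{C}$ sits at a non-positive shift from $J$. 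The main work is to show such a $J$ exists, i.e.\ that the relevant set of integers is bounded below (or that the supremum of a relevant set is attained).

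First I would establish the cocycle-type identity for transposition constants: if $G_1 \equiv_\Pi G_2$ with constant $c_{12}$ and $G_2 \equiv_\Pi G_3$ with constant $c_{23}$, then $G_1 \equiv_\Pi G_3$ with constant $c_{12} + c_{23}$, and moreover $c(G_2, G_1) = -c(G_1, G_2)$. This is immediate by chaining the defining biconditionals $\forall (F,k)\colon (G_1 \oplus F, k) \in \Pi \Leftrightarrow (G_2 \oplus F, k + c_{12}) \in \Pi$ and the analogous one for $G_2, G_3$. (One should also check $c$ is well-defined, i.e.\ unique, whenever $\Pi$ is "nontrivial" in the sense that not all instances with a given underlying graph have the same membership for all parameter values — and handle the degenerate case where $c$ is not unique by simply picking $c = 0$, using the convention on non-positive parameters baked into the extended problem definition; this is exactly why that convention was introduced.) Fix any reference element $R \in \mathcal{C}$ and consider the set $A = \{\, c(R, H) : H \in \mathcal{C} \,\} \subseteq \mathbb{Z}$.

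The crux is: $A$ is bounded below. Suppose not; then there are graphs $H \in \mathcal{C}$ with $c(R, H)$ arbitrarily negative, meaning $(R \oplus F, k) \in \Pi \Leftrightarrow (H \oplus F, k + c(R,H)) \in \Pi$ with $c(R,H) \to -\infty$. Pick one specific instance: choose $F = \emptyset$ (the empty boundaried graph with label set $I$, or rather a graph consisting only of the $|I|$ isolated boundary vertices) and a parameter value $k_0$ with $(R \oplus F, k_0) \in \Pi$ — if no such $k_0$ exists for this $F$, use a different $F$ realized among members of $\mathcal{C}$, and if truly no finite instance built this way lies in $\Pi$ the class is degenerate and any representative works. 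Then $(H \oplus F, k_0 + c(R,H)) \in \Pi$, and since $k_0 + c(R,H) < 0$ for $|c(R,H)|$ large, the extended-problem convention forces $(H' \oplus F, k') \in \Pi$ for \emph{all} non-positive $k'$ and all $H' \in \mathcal{C}$, and in particular this pins down the membership status of all sufficiently-negative-parameter instances uniformly; feeding this back into the biconditional for arbitrary $F$ and comparing with a second instance where membership is "$\notin \Pi$" yields a contradiction with $c(R,H)$ being unbounded below, because it would force $\Pi$ to disagree with itself on the same instance. Once $A$ is bounded below, let $m = \min A$ and pick $J \in \mathcal{C}$ with $c(R, J) = m$. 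Then for any $H \in \mathcal{C}$, by the cocycle identity $c(J, H) = c(J, R) + c(R, H) = -m + c(R,H) \geq 0$; replacing $c$ by its negative gives $c(H, J) \le 0$, which is precisely equation~\eqref{eq:progfii} with a constant in $\mathbb{Z}^-$. Finally, I would note the argument is applied class-by-class, so the conclusion "each equivalence class has a progressive representative" follows; no finiteness of the number of classes is needed here.

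\textbf{Expected main obstacle.} The genuinely delicate point is the boundedness-below argument, and in particular handling the interaction with parameters that go negative and the "all or nothing on non-positive parameters" convention. One must be careful that the constants $c(G_1,G_2)$ are well-defined (unique) exactly when needed, and that in the degenerate cases where uniqueness fails the chosen convention still makes every step go through. I expect the cleanest route is to define $c(G_1, G_2)$ canonically (e.g.\ as the unique witnessing constant if one exists, else $0$) and then verify the cocycle identity and the min-attainment lemma hold with this canonical choice; this bookkeeping, rather than any deep idea, is where the care is required.
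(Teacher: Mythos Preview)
The paper does not prove Lemma~\ref{lem:existprog}; it is quoted verbatim from~\cite{BodlaenderFLPST16} and no argument is given here. Your plan---fix a reference element $R$ of the class, use the cocycle identity $c(G_1,G_3)=c(G_1,G_2)+c(G_2,G_3)$ and antisymmetry, show the set $\{c(R,H):H\in\mathcal{C}\}\subseteq\mathbb{Z}$ is bounded below via the negative-parameter convention, and take $J$ realizing the minimum---is exactly the standard argument from the cited source, and your identification of the delicate point (well-definedness of the transposition constant and the degenerate ``all-in/all-out'' class) is accurate.
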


Notice that two  boundaried graphs with different label sets belong to 
different equivalence classes of $\equiv_{\Pi}.$ Hence for every equivalence 
class ${\cal C}$ of $\equiv_{\Pi}$ there exists some finite set $I\subseteq\Bbb{Z}^{+}$ such that 
${\cal C}\subseteq  {\cal F}_{I}$. We are now in position  to give the following definition.

\begin{definition}{\rm [\bf Finite Integer Index]}
\label{def:deffii}
A parameterized graph problem $\Pi$ whose instances are pairs of the form $(G,k)$
has {\em Finite Integer Index} (or  is \term{{\em FII}}), if and only if for every finite $I\subseteq \Bbb{Z}^+,$
the number of equivalence classes of  $\equiv_{\Pi}$ that are subsets of ${\cal F}_{I}$
is finite. For each $I\subseteq \Bbb{Z}^{+},$ we define ${\cal S}_I$ to be
a set containing exactly one progressive representative of each equivalence class of $\equiv_{\Pi}$
that is a subset of ${\cal F}_{ I}$. We also define ${\cal S}_{\subseteq I}=\bigcup_{I'\subseteq I} {\cal S}_{I'}$. 
\end{definition}

The proof of next lemma is identical to the one given for~\cite[Lemma $8.4$]{BodlaenderFLPST16}
\begin{lemma}[\cite{BodlaenderFLPST16}]
\label{lem:stronglymonotone}
Let $\hh$ family of graphs that is CMSO definable and union closed. 
Then, \probVDH has {\rm \fii}.
\end{lemma}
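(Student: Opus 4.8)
The plan is to reduce the statement to the known result \cite[Lemma~8.4]{BodlaenderFLPST16}, whose proof is said to be ``identical'', so my job is really to explain why that proof goes through verbatim when $\hh$ is \mso-definable and union-closed. First I would recall what \probVDH is as a parameterized problem: an instance is a pair $(G,k)$ (with $k$ allowed to be negative, as set up in Subsection~\ref{subsec:finiinteginde}), and $(G,k)\in\probVDH$ iff there is $X\subseteq V(G)$ with $|X|\le k$ and $G-X\in\hh$; for $k<0$ no instance is in the problem, which is consistent with the negative-parameter convention. I would then unwind \fii: I must show that for every finite $I\subseteq\mathbb{Z}^+$, the canonical equivalence $\equiv_{\probVDH}$ has finitely many classes among boundaried graphs with label set exactly $I$.

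The key technical engine is that, because $\hh$ is \mso-definable, membership ``$G\oplus F\in\hh$'' is expressible by a fixed \mso (indeed \mso$_2$) formula $\varphi_\hh$ over the glued graph, and the gluing operation $\oplus$ is itself \mso-definable on the disjoint union with a specified identification of labelled vertices. Consequently the property ``$(G\oplus F,k)\in\probVDH$'' can be analysed via the standard Myhill--Nerode-style machinery for \mso on graphs of bounded boundary size: the behaviour of a $t$-boundaried graph $G$ with respect to gluing is captured by a finite amount of information, namely the set of \mso-types (up to the quantifier rank of $\varphi_\hh$, suitably augmented to talk about a deletion set of bounded ``local'' size) realizable on $G$ relative to its boundary. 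The crucial extra point needed to handle the unbounded budget $k$ — which is why union-closedness is invoked — is that, since $\hh$ is closed under disjoint union, a near-optimal deletion set can be assumed to interact with the boundary in a bounded way: deleting vertices in a component disjoint from (and non-adjacent to) the boundary only helps if that component is not already in $\hh$, and such ``far'' deletions contribute additively to the count. This is exactly the observation that \probVDH satisfies the additive separation behaviour $\mathrm{opt}_{\probVDH}(G)=\mathrm{opt}_{\probVDH}(G[L\setminus R])+\mathrm{opt}_{\probVDH}(G[R\setminus L])\pm h(|L\cap R|)$ alluded to in the introduction, with $h$ depending only on $t=|I|$. From this additivity one extracts, for each finite $I$, a finite set of ``profiles'' $(\tau, d)$ where $\tau$ ranges over the finitely many relevant \mso-types and $d$ over a bounded offset recording the cost of repairing the boundary-adjacent part; two boundaried graphs with the same profile are $\equiv_{\probVDH}$-equivalent with transposition constant equal to the difference of their ``intrinsic'' deletion costs away from the boundary.

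Concretely, the steps I would carry out are: (1) Fix finite $I$, set $t=|I|$, and let $q$ be the quantifier rank of an \mso$_2$ formula $\varphi_\hh$ defining $\hh$; the number of \mso$_2$-types of rank $q$ over $t$-boundaried graphs is a finite number $N=N(t,q)$. (2) For a $t$-boundaried graph $G$, define its signature to be the pair consisting of (a) the function mapping each ``boundary-deletion pattern'' $Y\subseteq\delta(G)$ to the set of rank-$q$ types of $G-Y$ as a $t$-boundaried graph, and (b) the integer $m(G)$ equal to the minimum number of vertices one must delete from $G$ entirely outside $N_G[\delta(G)]$-reachable structure to make every boundary-free component lie in $\hh$ — more carefully, $m(G):=\mathrm{opt}_{\probVDH}(G)$ minus the cost attributable to the bounded neighbourhood of the boundary, an integer that is well-defined up to an additive $h(t)$ by the union-closed additivity argument. (3) Show via Feferman--Vaught for \mso (composition of types under $\oplus$) that for any $F$, whether $G\oplus F\in\hh$ depends only on the type of $G$ relative to its boundary, and then, combining with additivity of the deletion cost, that whether $(G\oplus F,k)\in\probVDH$ depends only on the signature of $G$ and on $(F,k)$ up to the shift by $m(G)$. (4) Conclude that $G_1\equiv_{\probVDH}G_2$ with transposition constant $m(G_2)-m(G_1)$ whenever $G_1,G_2$ have the same part-(a) signature, and that the number of possible part-(a) signatures over $\mathcal{F}_I$ is at most $2^{2^t\cdot 2^N}$, which is finite; hence $\equiv_{\probVDH}$ has finitely many classes in $\mathcal{F}_I$, i.e. \probVDH has \fii.

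The main obstacle, and the place where care is genuinely required rather than routine bookkeeping, is step (2)--(3): making precise the claim that the optimal deletion cost decomposes as ``bounded boundary-local cost'' plus ``a fixed integer depending only on $G$'' in a way that is compatible with arbitrary gluing. Without union-closedness one could imagine a far-away component that is individually bad but whose ``badness'' could be cured for free by the glued graph $F$ creating new edges — but $F$ only touches the $t$ boundary vertices, so it cannot affect any component of $G$ disjoint from the boundary; union-closedness then guarantees that such components must be repaired independently and their cost is simply additive, yielding the clean $\pm h(t)$ decomposition. Articulating this additivity rigorously, handling the $\pm h(t)$ slack so that the transposition constant $c$ is a genuine integer and not merely an approximate one, and checking that the negative-parameter convention is what makes the bookkeeping consistent, is the crux; everything else (the type-counting, the Feferman--Vaught composition, the final cardinality bound) is the standard argument of \cite[Lemma~8.4]{BodlaenderFLPST16} applied unchanged.
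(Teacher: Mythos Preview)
The paper does not actually give a proof of this lemma: the sentence immediately preceding the statement says ``The proof of next lemma is identical to the one given for~\cite[Lemma~8.4]{BodlaenderFLPST16}'', and nothing further is written. So there is no paper-side argument to compare against beyond the citation itself.

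Your reconstruction is along the right lines and captures the two essential ingredients that make the cited proof go through: (i) \mso-definability of $\hh$ gives, via Feferman--Vaught composition, that the ``membership-after-gluing'' predicate factors through a finite set of types of $t$-boundaried graphs; and (ii) union-closedness of $\hh$ gives the additive splitting of the optimal deletion cost across a bounded separator, which is what pins down the transposition constant. This is precisely the content of the \emph{strong monotonicity} framework in~\cite{BodlaenderFLPST16}, so your plan is essentially the same as the cited proof.

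One remark on presentation: your step~(2)(b), where you define $m(G)$ as ``$\mathrm{opt}_{\probVDH}(G)$ minus the cost attributable to the bounded neighbourhood of the boundary'', is the one place that is genuinely vague, and you rightly flag it as the crux. The cleaner formulation (and the one used in~\cite{BodlaenderFLPST16}) is not to try to isolate a boundary-local cost at all, but rather to take as the signature of $G$ the \emph{vector} of optimal values $\big(\mathrm{opt}_{\probVDH}(G\oplus R)\big)_R$ as $R$ ranges over a finite set of representatives, and then observe that two graphs whose vectors differ by a constant shift are $\equiv_{\probVDH}$-equivalent with that shift as transposition constant. Union-closedness is exactly what guarantees that such vectors can only differ by a uniform shift (rather than entrywise in an uncontrolled way), because components disjoint from the boundary contribute the same additive cost regardless of what is glued on. This avoids having to make ``cost attributable to the boundary'' precise and yields the exact integer $c$ directly.
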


\begin{lemma}{\rm (\cite[Lemma $2$]{corr/abs-2106-04191}).}
\label{lem:cmsoDefinable}
Let $\hh$ family of graphs that is CMSO definable and union closed. 
Then, \probEDH and \probTDH is CMSO definable. 
\end{lemma}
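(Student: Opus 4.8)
The statement to prove is Lemma~\ref{lem:cmsoDefinable}: if $\hh$ is a CMSO-definable, union-closed family of graphs, then \probEDH{} and \probTDH{} are CMSO definable (i.e., for each fixed $k$, the class of graphs $G$ with $\hhdepth(G)\le k$, respectively $\hhtw(G)\le k$, is CMSO definable). The plan is to write down, for each fixed $k$, an explicit CMSO sentence. Since $k$ is a constant, we are allowed to nest a bounded number of quantifiers depending on $k$, and to hard-code structures of size bounded in terms of $k$.

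**The sentence for \probEDH.** First I would recall that, since $\hh$ is union-closed, $\hhdepth(G)\le k$ is equivalent to saying: there is an ordered partition argument, or more usefully the recursive/torso characterization. The cleanest route is to guess the ``modulator + elimination structure'' directly as vertex sets. Concretely: $\hhdepth(G)\le k$ holds iff there exist pairwise disjoint vertex sets $M_1,\dots,M_k$ (the vertices removed at levels $1,\dots,k$ — note each connected component contributes at most one vertex per level, but we can fold all level-$i$ vertices across components into one set $M_i$) such that, writing $M = M_1\cup\dots\cup M_k$, (i) $G - M \in \hh$, and (ii) a ``leveling'' consistency condition holds: for every connected component $C$ of $G - M$... actually the precise condition is that the elimination forest on $M$ has depth $\le k$. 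The standard way to express bounded treedepth of $G[M]$'s torso in CMSO: use that $\td(H) \le k$ is a minor-closed (indeed CMSO-definable) property, and the torso is CMSO-interpretable. More directly, I would use the equivalent formulation from Definition~\ref{def:H-elim} phrased via a guessed rooted-forest structure of depth $k$: guess the sets $\chi(t)$ for the $O(2^k)$-many internal nodes $t$ of a model depth-$k$ forest as free set variables $X_1,\dots,X_{2^k}$, guess $L$ as a set variable, and assert: the $X_i$'s and the components of $G[L]$ partition $V(G)$; each $X_i$ has $\le 1$ vertex; every edge goes between sets in ancestor-descendant position (a finite disjunction over pairs); each component of $G[L]$ lies in a single leaf-slot and $G[L]$ has each component in $\hh$ — the last uses that $\hh$ is CMSO-definable and relativizes to induced subgraphs (``for every connected component $C$ of $G[L]$, $\phi_\hh$ holds in $G[C]$''), which is expressible since CMSO can relativize a sentence to a definable induced subgraph. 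Because the number of nodes in a depth-$k$ forest model is bounded (we may assume the forest is a path-like structure per component, or just take the full rooted forest where each node has a bounded branching — but actually branching is unbounded; the fix is to note we only need one set variable $M_i$ per level, not per node, and then re-encode the ancestor-descendant edge condition level-wise), the whole thing is a single CMSO sentence depending only on $k$ and on $\phi_\hh$.

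**The sentence for \probTDH.** For $\hhtw(G)\le k$ I would use Definition~\ref{def:tree:h:decomp} together with union-closedness, giving the torso characterization: $\hhtw(G)\le k$ iff there is $L\subseteq V(G)$ with each component of $G[L]$ in $\hh$, and the torso of $V(G)\setminus L$ (adding an edge $uv$ whenever there is a $u$--$v$ path through $L$) has treewidth $\le k$. Now ``treewidth $\le k$'' of a graph is CMSO-definable (it is characterized by a finite set of forbidden minors, or directly expressible), and the torso graph is CMSO-interpretable inside $G$ (the new adjacency relation $\mathbf{adj}'(u,v)$ is ``$u,v\notin L$ and either $uv\in E(G)$ or there is a connected set $W\subseteq L$ with $u,v\in N(W)$'' — expressible). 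Since CMSO-interpretations compose with CMSO sentences, pulling back the treewidth-$\le k$ sentence along this interpretation, conjoined with ``every component of $G[L]$ satisfies $\phi_\hh$'', yields the desired sentence. I would also need to guess $L$ as an existential set variable at the front, and handle the degenerate $\hhtw = 0$ clause (max with $0$ in the definition) by a trivial extra disjunct.

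**Main obstacle.** The genuinely delicate point is expressing, in pure CMSO, that \emph{each connected component} of the base graph $G[L]$ belongs to $\hh$ — not that $G[L]$ as a whole does (those differ, and $\hh$ is only union-closed, not necessarily ``component-hereditary'' in the naive sense, though hereditary gives us that a component of a graph in $\hh$... wait, we actually want the converse direction). What we need is: a guessed $L$ is ``good'' iff every component of $G[L]$ is in $\hh$; equivalently, by union-closure, $G[L]\in\hh$ implies each component in... no. The right statement: $G[L]\in\hh \iff$ each component of $G[L]\in\hh$ is FALSE in general (union-closure gives $\Leftarrow$; $\Rightarrow$ needs hereditariness plus that $\hh$ is closed under components, which holds if $\hh$ hereditary since a component is an induced subgraph). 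So in fact, with $\hh$ \emph{hereditary and union-closed}, ``$G[L]\in\hh$'' is equivalent to ``every component of $G[L]\in\hh$'', and I can just write $\phi_\hh$ relativized to $G[L]$. This observation (which uses both hypotheses) removes the obstacle; the rest is bookkeeping with interpretations. So the plan is: (1) reduce both parameters to their torso/forest formulations using union-closure; (2) observe hereditary+union-closed lets us replace ``each component in $\hh$'' by ``$G[L]\in\hh$'', directly a CMSO formula; (3) encode the depth-$k$ forest (resp. treewidth-$\le k$ torso) by guessing $O_k(1)$ set variables (resp. by a CMSO-interpretation of the torso followed by pulling back the finite-obstruction-set sentence for treewidth); (4) conjoin. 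I expect step (3) for \probEDH{} — getting the ancestor-descendant edge condition right with only one set variable per level rather than per node — to need the most care.
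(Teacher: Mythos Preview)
The paper does not prove this lemma; it is cited as \cite[Lemma~2]{corr/abs-2106-04191} with no argument supplied, so there is no in-paper proof to compare your attempt against. Your sketch is the standard argument and is correct in the paper's setting (recall that hereditariness of $\hh$ is a standing hypothesis at the start of the graph-classes subsection in Section~\ref{sec:prelims}, even though the lemma statement does not repeat it).

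For \probEDH{}, the correction you arrive at---one set variable $M_i$ per \emph{level} $i\in[k]$ rather than one per forest node---is exactly what is needed, since an elimination forest has unbounded branching. The level condition ``every connected component of $G-(M_1\cup\cdots\cup M_{i-1})$ contains at most one vertex of $M_i$'' is CMSO-expressible and already forces every edge of $G$ to lie between nodes in ancestor--descendant relation, so that constraint need not be encoded separately. For \probTDH{}, the torso route is right: the torso adjacency is definable without edge-set quantification, and treewidth~$\le k$ is definable in the vertex-set fragment via its finite forbidden-minor obstruction set, so the treewidth sentence pulls back along the interpretation even though the torso contains edges absent from $E(G)$.

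Your observation that, under hereditariness plus union-closure, ``each component of $G[L]$ lies in $\hh$'' is equivalent to ``$G[L]\in\hh$'' is correct and lets the base condition become a single relativization of the sentence defining $\hh$; both of your proposed formulations of the base condition are fine once hereditariness is in play.
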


\subsection{Replacement lemma}
%
This subsection is verbatim taken from Fomin et al.~\cite[Section $3.3$]{FominLST18} and is provided here only for completion. We only need to make few simple modifications to suit our need.

\begin{definition}
Let $\cal G$ denote the set of all graphs. A graph {\em parameter}  is a function  $\Psi \colon \cal G \to \mathbb{Z}^{+}$.  That is, $\Psi$ associates a non-negative integer to a graph $G \in \cal G$. The parameter $\psi$ is called {\em monotone}, if for every $G \in \cal G$, and for every $V_1\subseteq V_2$, 
$\Psi(G[V_2])\geq \Psi(G[V_1])$.  
\end{definition}

We can use $\Psi$ to define several graph parameters such as {\em treewidth}, or given a family $\cal F$ of graphs, a minimum sized vertex subset $S$ of $G$, called modulator, such that $G-S\in {\cal F}$.  
Next we define a notion of monotonicity for parameterized problems. 

\begin{definition}{\rm (\cite[Definition $3.9$]{FominLST18}).}
We say that a parameterized graph problem $\Pi$ is {\em positive monotone} if for every graph $G$ 
there exists a unique $\ell \in \Bbb{N}$ such that for all $\ell'\in \mathbb{N}$ and $\ell' \geq \ell$, $(G,\ell')\in \Pi$ and for all 
$\ell'\in \mathbb{N}$ and $\ell' < \ell$, $(G,\ell')\notin \Pi$.  A parameterized graph problem $\Pi$ is {\em negative monotone} if for every graph $G$ 
there exists a unique $\ell \in \Bbb{N}$ such that for all $\ell'\in \mathbb{N}$ and $\ell' \geq \ell$, $(G,\ell')\notin \Pi$ and for all 
$\ell'\in \mathbb{N}$ and $\ell' < \ell$, $(G,\ell')\in \Pi$. $\Pi$ is monotone if it is either positive monotone or negative monotone.  
We denote the integer $\ell$ by {\sc Threshold($G,\Pi$)} (in short  {\sc Thr($G,\Pi$)}). 
\end{definition}

We first give an intuition for the next definition.  We are considering monotone functions and thus for every graph $G$ 
there is an integer $k$ where the answer flips. However, for our purpose we need a corresponding notion for 
boundaried graphs.   If we think of the representatives as some ``small perturbation'', then it is the max threshold over all small perturbations (``adding a representative = small perturbation''). This leads to the following definition. 

\begin{definition}{\rm (\cite[Definition $3.10$]{FominLST18}).}\label{def:equiv-boundaried-graph}
Let $\Pi$ be a monotone parameterized graph problem that has {\fii} and $\Psi$ be a graph parameter. Let  ${\cal S}_t$  be
a set containing exactly one progressive representative of each equivalence class of $\equiv_{\Pi}$ that is a subset of 
${\cal F}_{I}$, where $I=\{1,\ldots,t\}$.  
For a $t$-boundaried graph $G$, we define   
\begin{eqnarray*}
\iota(G) & = &  \max_{G'\in {\cal S}_t}  \mbox{{\sc Thr($G\oplus G',\Pi$)}},\\
\mu(G) & = &  \max_{G'\in {\cal S}_t}  \Psi(G\oplus G') 
\end{eqnarray*}
\end{definition}


The next lemma says the following. Suppose we are dealing with some {\fii} problem and we are given a boundaried graph with boundary size $t$.  We know it has a representative of size $h(t)$ and we want to find this representative. In general finding a representative for a boundaried graph is more difficult than solving the corresponding problem. 
The next lemma says basically that if  we can find the  ``OPT'' of a boundaried graph efficiently 
then we can efficiently find its representative. Here by ``OPT''  we mean $\iota(G)$, which is a robust version of the threshold function (under adding a representative). 
And by efficiently we mean as efficiently as solving the problem on normal (unboundaried) graphs.

\begin{lemma}{\rm (\cite[Lemma $3.11$]{FominLST18}).}
\label{lem:red2finiteindex}
Let $\Pi$ be a monotone parameterized graph problem that has {\fii} and $\Psi$ be a graph parameter.  Furthermore, let $\cal A$ be an algorithm for $\Pi$ that, given a pair $(G,k)$, decides whether it is in $\Pi$ in time $f(|V(G)|,\Psi(G))$. 
Then for every $t\in\Bbb{N},$ there exists a $ \xi_t \in\Bbb{Z}^{+}$ (depending on $\Pi$ and $t$), and 
an algorithm that, given a $t$-boundaried graph $G$  with $|V(G)|>\xi_t,$ outputs, in  $\cO(\iota(G)(f(|V(G)|+\xi_t,\mu(G)))$ steps,
a $t$-boundaried graph $G^\star$  such that $G\equiv_{\Pi}G^\star$ and  $|V(G^\star)| < \xi_t$. Moreover we can compute the translation 
constant  $c$ from $G$ to $G^\star$ in the same time.
\end{lemma}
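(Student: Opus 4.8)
The plan is to reduce the task of \emph{identifying} the $\equiv_\Pi$-class of a large $t$-boundaried graph (together with the relevant transposition constant) to a bounded number of calls to $\cA$, exploiting three facts: \fii{} makes the set of relevant equivalence classes finite; the gluing operation is compatible with $\equiv_\Pi$ (via re-using the definition of $\equiv_\Pi$ with the given graph itself as the ``outer'' graph); and monotonicity lets us replace ``membership in $\Pi$'' by a comparison with a threshold. Concretely, I would first fix, using \fii{} (Definition~\ref{def:deffii}), a finite set $\cS_t$ containing exactly one progressive representative of every equivalence class of $\equiv_\Pi$ whose members have label set contained in $\{1,\dots,t\}$, and set $\xi_t := 1 + \max\{\,|V(G')| : G'\in\cS_t\,\}$; both $\cS_t$ and $\xi_t$ depend only on $\Pi$ and $t$ and are used purely as (non-uniform) advice, so their non-computability is irrelevant. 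Write $\mathsf{Thr}(\cdot,\Pi)$ for the threshold from the definition of monotone problems.

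The heart of the argument is the following characterization: for $t$-boundaried graphs $G_1,G_2$ with $\Lambda(G_1)=\Lambda(G_2)$ and an integer $c$, one has $G_1\equiv_\Pi G_2$ with transposition constant $c$ if and only if
$$\mathsf{Thr}(G_2\oplus H,\Pi)=\mathsf{Thr}(G_1\oplus H,\Pi)+c \quad\text{for every } H\in\cS_t \text{ with } \Lambda(H)\subseteq\Lambda(G_1).$$
The forward direction follows by instantiating the definition of $G_1\equiv_\Pi G_2$ with $F:=H$ and observing that, by monotonicity, $(G_i\oplus H,k')\in\Pi$ is equivalent to a fixed inequality between $k'$ and $\mathsf{Thr}(G_i\oplus H,\Pi)$, which forces the displayed equality. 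For the converse, take an arbitrary $(F,k)$; we may assume $\Lambda(F)\subseteq\Lambda(G_1)$ since $\oplus$ discards unmatched labels. By \fii{} there is $H\in\cS_t$ with $F\equiv_\Pi H$ (necessarily $\Lambda(H)=\Lambda(F)$), say with transposition constant $c_F$. Applying the definition of $F\equiv_\Pi H$ with the outer graph $G_i$ gives $(G_i\oplus F,k')\in\Pi\Leftrightarrow(G_i\oplus H,k'+c_F)\in\Pi$ for all $k'$; chaining this with the threshold description of membership and the hypothesis $\mathsf{Thr}(G_2\oplus H,\Pi)=\mathsf{Thr}(G_1\oplus H,\Pi)+c$ yields $(G_1\oplus F,k)\in\Pi\Leftrightarrow(G_2\oplus F,k+c)\in\Pi$ (the negative-monotone case is symmetric). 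Hence $G_1\equiv_\Pi G_2$ with constant $c$.

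Given this characterization, the algorithm is as follows. On input a $t$-boundaried graph $G$ with $|V(G)|>\xi_t$: for every candidate $G'\in\cS_t$ with $\Lambda(G')=\Lambda(G)$ and every $H\in\cS_t$ with $\Lambda(H)\subseteq\Lambda(G)$, compute $\mathsf{Thr}(G\oplus H,\Pi)$ and $\mathsf{Thr}(G'\oplus H,\Pi)$ by a search over $\cO(\iota(G))$ candidate parameter values, invoking $\cA$ each time (cf.~\cite{FominLST18}); then output the first $G'$ (as $G^\star$) together with a constant $c$ for which $\mathsf{Thr}(G'\oplus H,\Pi)=\mathsf{Thr}(G\oplus H,\Pi)+c$ holds for all such $H$. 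Correctness is immediate from the characterization in both directions: the progressive representative $G''\in\cS_t$ of the class of $G$, paired with its transposition constant, satisfies the displayed equalities, so the search succeeds; and any pair $(G',c)$ it returns satisfies $G\equiv_\Pi G^\star$ and $|V(G^\star)|<\xi_t<|V(G)|$, as required. For the running time, each of the constantly many (in $t$) threshold evaluations uses $\cO(\iota(G))$ calls to $\cA$, each on a graph of the form $G\oplus H$ with $H\in\cS_t$, hence with at most $|V(G)|+\xi_t$ vertices and with $\Psi(G\oplus H)\le\mu(G)$ by the definitions of $\iota$ and $\mu$; thus each call costs $f(|V(G)|+\xi_t,\mu(G))$, giving the claimed total of $\cO(\iota(G)\cdot f(|V(G)|+\xi_t,\mu(G)))$, and the translation constant is obtained along the way.

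The main obstacle is the characterization lemma: one must argue that the behavior of $G$ against the \emph{finite} advice set $\cS_t$ already pins down its whole equivalence class \emph{and} the single transposition constant that has to work uniformly over \emph{all} boundaried graphs $F$. This is exactly where \fii{} (to reduce an arbitrary $F$ to a representative), the reuse of the definition of $\equiv_\Pi$ with $G_i$ in the ``outer'' slot, and monotonicity (to convert Boolean membership into a numeric threshold) each play an indispensable role. A secondary technical nuisance is the bookkeeping with label sets under $\oplus$ and with the extended, possibly-negative parameter values, which is precisely why $\Pi$ is required to be monotone in the extended sense.
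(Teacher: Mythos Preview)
Your proof is correct and follows essentially the same route as the paper: hard-wire a finite set $\cS_t$ of progressive representatives, compute the vector of thresholds $\mathsf{Thr}(G\oplus H,\Pi)$ for $H\in\cS_t$ via at most $\iota(G)$ calls to $\cA$ each, and match this vector (up to a uniform shift) against the precomputed matrix $[\mathsf{Thr}(Y_i\oplus Y_j,\Pi)]$. The paper simply asserts that a translated row must exist and that the resulting $Y_\ell$ satisfies $G\equiv_\Pi Y_\ell$; your characterization lemma spells out both directions of this equivalence explicitly, using symmetry of $\oplus$ to plug $G_i$ into the outer slot of $F\equiv_\Pi H$, which is exactly the step the paper hides behind ``From here it easily follows.''
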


\begin{proof}
We give prove the claim for positive monotone problems $\Pi$; the proof for negative monotone problems is identical. 
 Let  ${\cal S}_t$  be
a set containing exactly one progressive representative of each equivalence class of $\equiv_{\Pi}$ that is a subset of 
${\cal F}_{I}$, where $I=\{1,\ldots,t\}$, and  let  $\xi_t=\max_{Y\in {\cal S}_{t}}|V(Y)|.$ The set  ${\cal S}_{t}$ is hardwired in the description of the algorithm. 
 Let $Y_1,\ldots,Y_\rho$ be the set of progressive representatives in ${\cal S}_{t}$. Let ${\cal F}_{t}={\cal F}_{I}$. Our objective is to find 
  a representative $Y_\ell$  for  $G$ such that 
 \begin{eqnarray}
\forall (F,k)\in {\cal F}_{t}
\times \Bbb{Z} & &   (G \oplus F, k) \in \Pi  \Leftrightarrow    (Y_\ell \oplus F, k-\vartheta(X,Y_\ell)) \in \Pi. 
\label{eq:progresivereplacement}
\end{eqnarray}
Here, $\vartheta(X,Y_\ell)$ is a constant  that depends on $G$ and $Y_\ell$.  Towards this   
we define the following matrix for the set of representatives. Let 
$$A[Y_i, Y_j]=  \mbox{{\sc Thr($Y_i\oplus Y_j,\Pi$)}}$$
The size of the matrix $A$ only depends on $\Pi$ and $t$ and is also hardwired in the description of the algorithm.  Now given $G$ we find its representative as follows. 
\begin{itemize}
\item Compute the following row vector ${\cal X}=[ \mbox{{\sc Thr($G\oplus Y_1,\Pi$)}}, \ldots ,  
\mbox{{\sc Thr($G\oplus Y_\rho,\Pi$)}})]$. For each $Y_i$ we decide whether $(G\oplus Y_i,k)\in \Pi$ using the assumed algorithm for deciding 
$\Pi$,  letting $k$ increase from $1$ until the first time $(G\oplus Y_i,k)\in \Pi$. Since $\Pi$ is positive monotone this will happen for some 
$k\leq \iota(G)$. Thus the total time to compute the vector ${\cal X}$ is $\cO(\iota(G)(f(|V(G)|+\xi_t,\mu(G)))$. 

\item Find a translate row in the matrix $A(\Pi)$. That is, find an integer $n_o$ and a representative 
$Y_\ell$ such that  
\begin{eqnarray*}
[ \mbox{{\sc Thr($G\oplus Y_1,\Pi$)}},  \mbox{{\sc Thr($G\oplus Y_2,\Pi$)}}, \ldots ,  
\mbox{{\sc Thr($G\oplus Y_\rho,\Pi$)}}] \\
=[ \mbox{{\sc Thr($Y_\ell\oplus Y_1,\Pi$)}}+n_0,  \mbox{{\sc Thr($Y_\ell\oplus Y_2,\Pi$)}}+n_0, \ldots ,  
\mbox{{\sc Thr($Y_\ell\oplus Y_\rho,\Pi$)}}+n_0]
\end{eqnarray*}
Such a row must exist since ${\cal S}_t$ is  a set of representatives for $\Pi$; the representative $Y_\ell$ for the equivalence class to which $G$ belongs, satisfies the condition.  
\item Set $Y_\ell$ to be $G^\star$ and the translation constant to be $-n_0$.
\end{itemize}
From here it easily follows that $G\equiv_{\Pi}G^\star$. This completes the proof.  
\end{proof}
 We remark that the algorithm whose existence is guaranteed by the Lemma~\ref{lem:red2finiteindex} assumes that the set  ${\cal S}_{t}$ of representatives  are hardwired in the algorithm.  In its full generality we currently donot known of a procedure that for problems having {\fii} outputs such a representative set. Thus, the algorithms using Lemma~\ref{lem:red2finiteindex}  are not uniform.

 Next we illustrate a situation in which one can  can apply  Lemma~\ref{lem:red2finiteindex} to reduce a portion of a graph. Let $\cal F$ be a family of interval graphs. Further, let $\Pi$ be the {\sc Dominating Set} problem and $\Psi$ denote the modulator to $\cal F$. That is, given a graph $G$, 
 $$\Psi(G)=\min_{ S\subseteq V(G), G-S \in {\cal F}} |S|. $$
It is possible to show that {\sc Dominating Set} parameterized by $\Psi(G)$ is \FPT. That is, we can design an algorithm that can decide whether an instance $(G,k)$ of   {\sc Dominating Set} is an \yes-instance in time 
  $f(\Psi(G))\cdot n^{\cO(1)}$. In fact, in time  $2^{\cO(\Psi(G))}n^{\cO(1)}$. This implies that if we have a 
  $t$-boundaried graph $G$, then we can find a representative of it with respect to {\sc Dominating Set} in time   $2^{\cO(\mu(G))}n^{\cO(1)}$. We will see its uses in this way in Section~\ref{sec:crossParam}. 
 



\section{Structural Results} \label{sec:struct}
		\subsection{Bounded Modulators on Unbreakable Graphs} \label{sec:proof-of-sep-lemma}
In this section we show that for any $(\ak,k)$-unbreakable graph $G$ that has more than $3(\ak+k)$ vertices and its \hed\ (resp. \htd), $(T,\chi,L)$ of depth at most $k$, we have $|V(G) \setminus L| \leq \ak+k$ and there is a large connected component in $G[L]$, by proving the following two lemmas. 



\begin{lemma}\label{lem:boundedsep}
Consider a graph $G$, an integer $k$, and any \hed\ (resp. \htd) $(T,\chi,L)$ of depth (resp. width) at most $k$ (resp. $k-1$), where $G$ is $(\ak,k)$-unbreakable graph. Then, one of the following holds:
\begin{enumerate}
\item $|V(G)|\leq 3(\ak+k)$, or
\item there is exactly one connected component $C^*$ in $G[L]$ that has at least $\ak$ vertices, and $|V(G) \setminus V(C^*)| \leq \ak+k$.
\end{enumerate}
\end{lemma}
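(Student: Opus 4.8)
The plan is to assume $|V(G)| > 3(\ak+k)$ and then locate the claimed component $C^{*}$, arguing throughout by applying $(\ak,k)$-unbreakability of $G$ to separations read off from the decomposition $(T,\chi,L)$. Fix notation: for a node $t$ of $T$ let $V_{t} := \bigcup_{s\text{ descendant of }t}\chi(s)$, and let $M := V(G)\setminus L$ (the ``non-base'' vertices). The crucial preliminary step is to establish, in \emph{both} the elimination-decomposition and the tree-decomposition case, that $|N_{G}(V_{t})|\le k$ for every node $t$ (equivalently, $(V_{t}\cup N_{G}(V_{t}),\, V(G)\setminus V_{t})$ is a separation of order at most $k$). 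For an \hed\ of depth $\le k$ this is immediate from condition~3 of Definition~\ref{def:H-elim}: a neighbour of $V_{t}$ lying outside $V_{t}$ must lie in $\chi(q)$ for a proper ancestor $q$ of $t$, every proper ancestor is an internal node (so $|\chi(q)|\le 1$), and there are at most $\mathrm{depth}(t)\le k$ of them. For an \htd\ of width $\le k-1$ one uses the standard tree-decomposition fact that $N_{G}(V_{t})\subseteq \partial_{t}:=\chi(t)\cap\chi(\mathrm{parent}(t))$, together with the key observation that $\partial_{t}$ contains no base vertex (a base vertex lies in a unique, leaf, bag), so $|\partial_{t}|\le|\chi(\mathrm{parent}(t))\setminus L|\le k$ by the width bound. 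The same reasoning gives a ``clean separator'' statement more generally: any union $A$ of some of the $V_{c}$'s over children $c$ of a fixed node, and any union of some connected components of $G[\chi(\ell)\cap L]$ over a leaf $\ell$, also satisfies $|N_{G}(A)|\le k$ (after noting, in the tree-decomposition case, that the neighbourhood lands in the at most $k$ non-base vertices of the parent bag, resp.\ of the leaf bag).

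Next, call a node $t$ \emph{big} if $|V_{t}|\ge\ak$. Combining the separation fact above with unbreakability yields the workhorse dichotomy: if $t$ is big then $|V(G)\setminus V_{t}|<\ak+k$, hence (using $|V(G)|>3(\ak+k)$) $|V_{t}|>2(\ak+k)$. From this I would derive three things. (i) At least one node is big: otherwise, in the elimination case the root blocks $V_{r_{i}}$ are unions of connected components of $G$ (their external neighbourhood is empty), and since each has size $<\ak$ they can be greedily grouped into two parts each of size $\ge\ak$ separated by an order-$0$ separator, contradicting unbreakability; in the tree-decomposition case the root is trivially big. (ii) The big nodes are pairwise in ancestor--descendant relation, since two incomparable big nodes would have disjoint $V_{t}$'s, forcing one of them to have size both $>2(\ak+k)$ and (being inside the complement of the other) $<\ak+k$ — impossible. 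Hence the big nodes lie on a single root-to-leaf path. (iii) Letting $b$ be the \emph{deepest} big node, every child $c$ of $b$ is non-big, i.e.\ $|V_{c}|<\ak$.

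The core of the proof is then to show $b$ is a leaf and to extract $C^{*}$ inside $\chi(b)$. If $b$ were internal, then the sets $V_{c}$ (resp.\ $V_{c}\setminus\chi(b)$) over its children are pairwise (essentially) disjoint, each of size $<\ak$, with total size $>2(\ak+k)-|\chi(b)|\ge 2\ak$; greedily union some of them into $A$ with $\ak\le|A|<2\ak$. Since $|N_{G}(A)|\le k$, the separation $(A\cup N_{G}(A),\,V(G)\setminus A)$ has order $\le k$, $|A|\ge\ak$, and $|V(G)\setminus(A\cup N_{G}(A))|\ge|V(G)|-2\ak-k>\ak$, contradicting unbreakability — and this is precisely where the hypothesis $|V(G)|>3(\ak+k)$ is needed, since a greedily built part can be almost $2\ak$ and the separator contributes up to $k$. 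So $b$ is a leaf, $V_{b}=\chi(b)$, and $L_{b}:=\chi(b)\cap L$ satisfies $|L_{b}|>2(\ak+k)-k>2\ak$. Running the same greedy-splitting argument on the connected components of $G[L_{b}]$ shows that \emph{at least one} of them, call it $C^{*}$, has at least $\ak$ vertices; the ``two disjoint big sets'' argument shows \emph{at most one} does; and applying the dichotomy to $C^{*}$ itself gives $|V(G)\setminus V(C^{*})|<\ak+k$. Since $G[L]$ has no edges between distinct leaf bags, every base component of $G[L]$ lies inside a single leaf bag, so any component of $G[L]$ with $\ge\ak$ vertices is big, hence larger than $2(\ak+k)$ and disjoint from $C^{*}$ — impossible unless it equals $C^{*}$. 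Therefore $C^{*}$ is the unique connected component of $G[L]$ with at least $\ak$ vertices and $|V(G)\setminus V(C^{*})|\le\ak+k$, which is case~2 of the lemma.

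I expect the main obstacle to be the bookkeeping in the tree-decomposition case: although bags can be arbitrarily large because of base vertices, all separators actually used — $\partial_{t}$, the boundary of a union of child-subtrees, and $N_{G}$ of a union of base components — must be shown to avoid $L$ and therefore to have size at most $k$; getting these clean-separator statements right, and phrasing them uniformly with the elimination case so that the rest of the argument (a repeated application of the greedy-splitting/unbreakability template) can be shared, is the only genuinely delicate part.
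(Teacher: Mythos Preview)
Your argument is correct and proves the lemma, but it takes a different route from the paper. The paper first establishes separately (Lemmas~\ref{lem:uniquelargecc} and~\ref{lem:uniquelargecc-td}) that $G[L]$ has a component of size at least $\ak$, by invoking a balanced-separator lemma for weighted trees (Proposition~\ref{pro:balancedsep}): after normalising the decomposition so that each node carries positive weight less than $\ak+k$, it finds a single node $t^{*}$ whose removal splits the total weight into two parts each at most $\tfrac{2}{3}|V(G)|$, and reads off an $(\ak,k)$-witnessing separation from the two parts---contradicting unbreakability under the assumption that no base component is large. The remaining conclusions (uniqueness of $C^{*}$ and the bound on $|V(G)\setminus V(C^{*})|$) then follow in a few lines. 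Your approach replaces that single balanced split by a ``deepest big node'' descent followed by two applications of a greedy-splitting/unbreakability template (first on the children of $b$, then on the base components inside the leaf bag); it is more hands-on but avoids the auxiliary tree lemma entirely. One slip to correct in the tree-decomposition case: $\partial_{t}=\chi(t)\cap\chi(\mathrm{parent}(t))$ lies \emph{inside} $V_{t}$, so the right statement is that $(V_{t},\,(V(G)\setminus V_{t})\cup\partial_{t})$ is a separation of order $|\partial_{t}|\le k$, not that $N_{G}(V_{t})\subseteq\partial_{t}$; correspondingly your dichotomy should trigger at $|V_{t}\setminus\partial_{t}|\ge\ak$ rather than $|V_{t}|\ge\ak$, after which everything else you wrote goes through verbatim.
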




Towards proving the lemma, we begin by stating  a folklore result regarding weighted trees (an explicit proof can be found, for instance, in the full version of~\cite{Agrawal020}).  


\begin{proposition}\label{pro:balancedsep}
Consider a tree $T$ and a weight function $w:V(T)\rightarrow \mathbb{N}$, $\tau=\sum_{t\in V(T)}w(t)$ such that $1\leq w(t)\leq 2\tau/5$, for each $t\in V(T)$. Then, there exists a non-leaf vertex $\what{t}$ in $T$ such that the connected components of $T-\{\what{t}\}$ can be partitioned into two sets $\C{C}_1$ and $\C{C}_2$, with $\sum_{C\in \C{C}_1}w(C)\leq 2\tau /3$ and $\sum_{C\in \C{C}_2}w(C)\leq 2\tau /3$, where $w(C)=\sum_{t\in V(C)}w(t)$. 
\end{proposition}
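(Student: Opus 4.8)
The plan is to take $\what{t}$ to be a weighted centroid of $T$ and then verify the three things the statement demands: that $\what{t}$ is a non-leaf, that every component of $T-\{\what{t}\}$ is light (weight $\le 2\tau/3$, in fact $\le \tau/2$), and that these components can be two-coloured so that each colour class has total weight at most $2\tau/3$.

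First I would establish the centroid by an orientation argument. Orient every edge $e=uv$ of $T$ towards whichever of the two components of $T-e$ is heavier (breaking ties arbitrarily). A maximal directed path in the finite tree $T$ must end at a vertex $\what{t}$ of out-degree $0$: any out-edge at its last vertex would extend the path, since in an acyclic graph the path cannot return to an earlier vertex and it cannot traverse the reverse of the edge it arrived on. For such a sink $\what{t}$ every incident edge points inward, which means that for each neighbour $c$ of $\what{t}$ the ``far side'' of the edge $\what{t}c$ — equivalently, the component of $T-\{\what{t}\}$ attached at $c$ — has weight at most $\tau/2$. Since the components of $T-\{\what{t}\}$ are in bijection with the neighbours of $\what{t}$, every such component has weight at most $\tau/2\le 2\tau/3$, which settles the second requirement.

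Next I would argue that $\what{t}$ is a non-leaf; this is the only place the hypothesis $w(t)\le 2\tau/5$ is really used. First, $|V(T)|\ge 2$, for otherwise $\tau=w(\what{t})\le 2\tau/5$ would force $\tau\le 0$, contradicting $\tau\ge 1$. If $\what{t}$ were a leaf, its unique incident edge would have far side $V(T)\setminus\{\what{t}\}$, of weight $\tau-w(\what{t})$, which we just bounded by $\tau/2$; this gives $w(\what{t})\ge\tau/2 > 2\tau/5$, a contradiction. Hence $\deg_T(\what{t})=m\ge 2$, and $T-\{\what{t}\}$ splits into components $D_1,\dots,D_m$, each nonempty (so of weight $\ge 1$) and of weight $\le\tau/2$, with $\sum_{i} w(D_i)=\tau-w(\what{t})$.

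Finally, for the balancing step I would sort the components so that $w(D_1)\ge\dots\ge w(D_m)$, let $\C{C}_1=\{D_1,\dots,D_j\}$ be the longest prefix of total weight at most $2\tau/3$ (it is nonempty because $w(D_1)\le\tau/2\le 2\tau/3$), and let $\C{C}_2$ be the remaining components. By construction $\sum_{C\in\C{C}_1}w(C)\le 2\tau/3$, and if $j=m$ we are done. Otherwise adding $D_{j+1}$ overshoots $2\tau/3$, and since $D_{j+1}$ is no heavier than any element of $\C{C}_1$ it is at most the average element of $\C{C}_1$; a one-line inequality (treating $j=1$ and $j\ge 2$ separately) then forces $\sum_{C\in\C{C}_1}w(C)>\tau/3$, whence $\sum_{C\in\C{C}_2}w(C)=(\tau-w(\what{t}))-\sum_{C\in\C{C}_1}w(C)<\tau-\tau/3=2\tau/3$. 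I expect the only delicate points to be (i) the orientation/sink argument together with the identification of the components of $T-\{\what{t}\}$ with the neighbours of $\what{t}$, and (ii) recognising that the role of the bound $w(t)\le 2\tau/5$ is solely to keep the centroid off the leaves — any bound strictly below $\tau/2$ would suffice, and without such a bound the statement fails (consider a single edge, or a heavy leaf attached to a long light path). Everything else is the standard ``sorted-greedy into two bins'' estimate.
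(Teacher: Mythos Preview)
Your proof is correct. Note, however, that the paper does not supply its own proof of this proposition: it is stated as a folklore result, with a pointer to the full version of Agrawal et al.\ (2020) for an explicit argument, so there is nothing in the paper to compare against. Your approach---find a weighted centroid via the standard edge-orientation/sink argument, observe that the hypothesis $w(t)\le 2\tau/5$ is used only to rule out the centroid being a leaf, and then pack the (individually $\le\tau/2$) components into two bins by a sorted-greedy sweep---is the natural way to prove such a balanced-separator statement and is almost certainly close in spirit to what any standard reference does.
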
 


Next we prove useful lemma(s) about unbreakable graphs.

\begin{lemma}\label{lem:uniquelargecc}
Consider a family of hereditary graphs $\C{H}$. Furthermore, consider an integer $k$, an \akku graph $G$ with more than $3(\ak+k)$ vertices, and an \hed\ $(T, \chi, L)$ of depth at most $k$ for $G$. 
Then, $G[L]$ has a connected component with at least $\ak$ vertices. 
\end{lemma}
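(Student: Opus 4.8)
The plan is to argue by contradiction: assume every connected component of $G[L]$ has fewer than $\ak$ vertices, and use this to produce an $(\ak,k)$-witnessing separation of $G$, contradicting unbreakability. The key tool is the balanced-separator result for weighted trees (Proposition~\ref{pro:balancedsep}), applied to the rooted forest $T$ underlying the \hed\ after a suitable weighting. First I would set up the weighting: for each node $t\in V(T)$, define $w(t)=|\chi(t)|$, so that $\tau=\sum_{t}w(t)=|V(G)|>3(\ak+k)$. The internal nodes have $|\chi(t)|\le 1$, and a leaf $t$ satisfies $\chi(t)\subseteq L$ with $G[\chi(t)]$ a base component, hence (being an induced subgraph of a component of $G[L]$) has at most $\ak-1$ vertices. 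So $1\le w(t)\le \ak-1$ for the leaves — but we need the bound $w(t)\le 2\tau/5$ for \emph{all} $t$, and internal nodes with $w(t)=0$ are a nuisance. I would handle the zero-weight internal nodes by a standard contraction/cleanup: suppress degree-$\le 2$ internal nodes of weight $0$, or equivalently add each empty bag's ``slot'' to a neighbour, so that afterwards $w(t)\ge 1$ everywhere while the tree structure relevant to separations is preserved; and since $\ak$ is chosen much larger than $k$ (it is the function $\alpha(\cdot)$ from Proposition~\ref{prop:CMSOMetaTheorem}, so in particular $\ak-1\le 2\tau/5$ given $\tau>3(\ak+k)$), the weight bound $w(t)\le 2\tau/5$ holds.

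Next, apply Proposition~\ref{pro:balancedsep} to obtain a non-leaf node $\what t$ whose removal splits $T$ into component-sets $\C C_1,\C C_2$ with $\sum_{C\in\C C_i}w(C)\le 2\tau/3$. Let $Y_i=\bigcup_{C\in\C C_i}\bigcup_{t\in V(C)}\chi(t)$, and let $Z=\chi(\what t)$ together with the vertices on the root-to-$\what t$ path that could have edges crossing; more precisely, because property (3) of Definition~\ref{def:H-elim} forces every edge of $G$ to run between ancestor/descendant bags, the only way an edge can join a vertex of $Y_1$ to a vertex of $Y_2$ is through a bag that is an ancestor of $\what t$ (or $\what t$ itself). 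Since the \hed\ has depth at most $k$, the root-to-$\what t$ path has at most $k$ nodes, each internal and hence of bag-size at most $1$, so the set $A$ of vertices lying in ancestor bags of $\what t$ (inclusive) has $|A|\le k$. Then $(Y_1\cup A,\ Y_2\cup A)$ is a separation of $G$ of order $|A|\le k$: indeed any edge with one endpoint in $Y_1\setminus A$ and the other in $Y_2\setminus A$ would contradict the ancestor-descendant edge condition. This is the step I expect to require the most care — pinning down exactly which vertices must go into the separator and verifying the edge-condition gives no crossing edges outside $A$.

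Finally, derive the contradiction. We have $|Y_1|+|Y_2|+|A|\ge \tau$ with $|A|\le k$, and each side has weight at most $2\tau/3$, so $|Y_i\setminus A|\ge \tau - 2\tau/3 - k = \tau/3 - k > (\ak+k) - k = \ak$ using $\tau>3(\ak+k)$. Hence both $|(Y_1\cup A)\setminus(Y_2\cup A)|\ge \ak$ and $|(Y_2\cup A)\setminus(Y_1\cup A)|\ge \ak$, so $(Y_1\cup A, Y_2\cup A)$ is an $(\ak,k)$-witnessing separation, contradicting that $G$ is $(\ak,k)$-unbreakable. Therefore $G[L]$ must have a connected component with at least $\ak$ vertices. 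For the \htd\ case the same argument works verbatim once one notes that width at most $k-1$ likewise forces every bag to contain at most $k$ non-base vertices and that the standard tree-decomposition separator property replaces the ancestor-descendant edge condition; but since the lemma as stated is only about \hed, I would only spell out that case here and defer the decomposition analogue to Lemma~\ref{lem:boundedsep}.
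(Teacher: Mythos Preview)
Your overall plan is correct and matches the paper's approach: argue by contradiction, weight the tree by bag sizes, apply Proposition~\ref{pro:balancedsep}, and use the vertices on the root-to-$\what t$ path as the separator to produce an $(\ak,k)$-witnessing separation. But there is a real gap in the step where you bound the leaf weights. You write that a leaf bag $\chi(t)$ is ``an induced subgraph of a component of $G[L]$,'' but Definition~\ref{def:eldeco} only requires $\chi(t)\subseteq L$ and $G[\chi(t)]\in\C H$; nothing forces $G[\chi(t)]$ to be connected, so it may span many connected components of $G[L]$. Under the contradiction hypothesis (every component of $G[L]$ has fewer than $\ak$ vertices) a single leaf bag can therefore still have arbitrarily many vertices, and your bound $w(t)\le\ak-1$ fails. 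The paper repairs this up front: using that $\C H$ is hereditary, it first passes to an \hed\ $(T',\chi',L)$ of no greater depth in which every bag is nonempty and every leaf bag induces a \emph{connected} subgraph (split each leaf into one leaf per connected component of its bag, discard empty bags). This preprocessing is also where the zero-weight nuisance disappears cleanly, without the ad hoc contraction you propose.

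There is a second, smaller omission: Proposition~\ref{pro:balancedsep} is stated for a tree, while $T$ is a rooted \emph{forest}. The paper handles this by picking one root $r_1$ and adding edges $\{r_1,r_i\}$ to the other roots to obtain a tree $T^*$; with the preprocessing above, the weights on $T^*$ then satisfy $1\le w(t)<\ak\le 2\tau/5$ as needed. Once you insert these two preprocessing steps, your separator argument (take $A$ to be the vertices in ancestor bags of $\what t$, inclusive, so $|A|\le k$) and the size estimates go through exactly as you sketched.
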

\begin{proof}
As $\C{H}$ is hereditary, note that $G$ must also admit an \hed\ $(T',\chi',L)$, such that: i) for each $t \in V(T')$ we have $|\chi(t)| \geq 1$, ii) for each $t \in V(T')$, if $\chi(t) \cap L \neq \emptyset$, then $G[\chi(t) \cap L]$ is connected, and iii) the depth of $T'$ is at most the depth of $T$. Hereafter, we will consider an \hed\ $(T',\chi',L)$ for $G$ that satisfies the above conditions. Towards a contradiction we suppose that the size of each connected component in $G[L]$ is strictly less than $\ak$. With the above assumption, we will exhibit an $(\ak,k)$-witnessing separation, which will contradict that $G$ is \akku.




Let $T^*$ be the tree obtained from $T'$, by arbitrarily connecting one of the roots in $T'$ to all the other roots. Formally, let $\{r_1,r_2,\cdots, r_p\}$ be the set of roots in $T'$ (note that $p$ must be the number of connected components in $T'$). We let $V(T^*) = V(T')$ and $E(T^*) = E(T') \cup \{\{r_1,r_i\} \mid i \in [p]\setminus \{1\}\}$. Also, let ${\sf wht}: V(T^*) \rightarrow \mathbb{N}$, such that for $t \in V(T^*) = V(T')$, we have ${\sf wht}(t) = \chi(t)$.

By the construction of $T^*$ and ${\sf wht}$, we can obtain that for each $t \in V(T^*)$, $1 \leq {\sf wht}(t) < \ak$ and $\tau = \sum_{t \in V(T^*)} {\sf wht}(t) = n$, where $n = |V(G)|$. As $n \geq 3(\ak+k)$, we have $2\tau /5>(\ak+k)$, and hence for each $t\in V(T^*)$, $1 \leq {\sf wht}(t) \leq 2\tau/5$. The tree $T^*$ and the weight function ${\sf wht}$, satisfies the premises of Proposition \ref{pro:balancedsep}. Thus, using the proposition, there is a non-leaf node $t^*$ in $T^*$ such that connected components in $T^*-\{t^*\}$ can be partitioned into two sets $\C{C}_1$ and $\C{C}_2$, such that for each $j\in [2]$, ${\sf wht}(\C{C}_j)= \sum_{C \in \C{C}_j} {\sf wht}(C)\leq 2\tau /3 \leq 2n/3 \leq 2(\ak+k)$, where ${\sf wht}(C)=\sum_{t\in V(C)}{\sf wht}(t)$. 

For $i\in [2]$, let $Z_i=\bigcup_{C \in \C{C}_i}V(C)$ and $\what{V}_i=\{v \in V(G)\setminus L \mid \mbox{ for some } t \in Z_i, \chi(t) = \{v\}\}$. For $i \in [2]$, let $\what{U}_i = \cup_{t \in Z_i} (\chi(t) \cap L)$. Let $V_i = \what{V}_i \cup \what{U}_i$, for each $i \in [2]$. From construction of $T^*$ and ${\sf wht}$, we obtain that $|V_1|\leq 2(\ak+k)$. As $n>3(\ak+k)$, we have $|V_2|>\ak+k$. By similar arguments, we obtain $|V_1|>\ak+k$. Observe that $V_1 \cup V_2=V(G)\setminus \chi(t^*)$. Let $S$ be the set containing each vertex $v \in V(G) \setminus L$, such that $t_v \in V(T^*)$, where $\chi(t_v) = \{v\}$, $t_v$ is an ancestor of $t^*$ (possibly $t_v = t^*$). Note that $|S| \leq k$. 


Let $Y_1=V_1\cup S$ and $Y_2=V_2\cup S$. 
Note that $S=Y_1\cap Y_2$, $|Y_1 \setminus Y_2|=|Y_1 \setminus S| \geq \ak$, $|Y_2 \setminus Y_1|= |Y_2 \setminus S| \geq \ak$, $|S| \leq k$, and $Y_1 \cup Y_2 = V(G)$. Moreover, by the construction of $Y_1$ and $Y_2$, we have that there is no edge $\{u,v\} \in E(G)$, such that $u \in Y_1 \setminus S$ and $v \in Y_2 \setminus S$. This implies that $(Y_1,Y_2)$ is a separation of order $k$ in $G$ 
such that $|Y_1 \setminus Y_2| \geq \ak$, $|Y_2 \setminus Y_2| \geq \ak$
This contradicts the assumption that $G$ is $(\ak,k)$-unbreakable. 
\end{proof}

Analogous to the above, we can prove a result regarding \htd s. 
\begin{lemma}\label{lem:uniquelargecc-td}
Consider a family of hereditary graphs $\C{H}$. Furthermore, consider an integer $k \geq 2$, an \akku graph $G$ with more than $3(\ak+k)$ vertices, and an \htd\ $(T, \chi, L)$ of width at most $k-1$ for $G$. Then, $G[L]$ has a connected component with at least $\ak$ vertices.
\end{lemma}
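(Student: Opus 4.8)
The plan is to mimic, essentially line for line, the argument already used for Lemma~\ref{lem:uniquelargecc}, replacing the elimination-forest structure by the tree-decomposition structure. First I would observe that, since $\C{H}$ is hereditary, we may assume without loss of generality that the given \htd\ $(T,\chi,L)$ is ``clean'': every bag is non-empty, and whenever $\chi(t)\cap L\ne\emptyset$ we may assume $G[\chi(t)\cap L]$ is connected (if not, split the leaf $t$ into several leaves, one per connected component of $G[\chi(t)\cap L]$, attached to the same parent; this preserves width and conditions \ref{item:tree:h:decomp:connected}--\ref{item:tree:h:decomp:base}). For contradiction, assume every connected component of $G[L]$ has fewer than $\ak$ vertices; I will then build an $(\ak,k)$-witnessing separation, contradicting unbreakability.

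Next I would set up the weighted tree. Define a weight function ${\sf wht}:V(T)\to\mathbb{N}$ by assigning to each node $t$ the number of vertices that appear in $\chi(t)$ but in no bag of any strict ancestor of $t$ — i.e., the ``newly introduced'' vertices of $t$. By condition~\ref{item:tree:h:decomp:connected} (connectivity of the bags containing a fixed vertex), each vertex of $G$ is newly introduced at exactly one node, so $\tau:=\sum_{t}{\sf wht}(t)=n$. The non-base vertices contribute at most $|\chi(t)\setminus L|\le k$ to each bag's newly-introduced set; the base vertices newly introduced at a leaf $t$ all lie in the single component $G[\chi(t)\cap L]$, which has size $<\ak$ by assumption. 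Hence ${\sf wht}(t)<\ak+k\le\ak$ (using the convention $\ak\ge k$, or more carefully bounding by $\ak+k$ and using $n>3(\ak+k)$ so that $2\tau/5>\ak+k$), so $1\le{\sf wht}(t)\le 2\tau/5$ for all $t$ (after discarding weight-zero nodes by contracting them, which keeps $T$ a tree). Now apply Proposition~\ref{pro:balancedsep} to get a non-leaf $t^*$ splitting $T-\{t^*\}$ into $\C{C}_1,\C{C}_2$ each of total weight at most $2\tau/3\le 2n/3$.

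For $i\in[2]$ let $V_i$ be the set of vertices newly introduced in the components of $\C{C}_i$, and let $S$ be $\chi(t^*)$ together with every vertex that appears both in some bag of $\C{C}_1$ and in some bag of $\C{C}_2$; by the connectivity condition every such shared vertex must have $t^*$ on the path between its occurrences, hence lies in $\chi(t^*)$, so in fact $S=\chi(t^*)$ and $|S|\le k$ (width $\le k-1$, so $|\chi(t^*)|\le k$; here I use $k\ge 2$ only to keep the arithmetic $2n/3\le 2(\ak+k)$ and the counting clean — this is exactly why the hypothesis $k\ge 2$ appears). Set $Y_i=V_i\cup S$. By the ``no edge between different subtrees'' property of tree decompositions (condition~\ref{item:tree:h:decomp:base}'s companion, edge condition), there is no edge of $G$ between $Y_1\setminus S$ and $Y_2\setminus S$, so $(Y_1,Y_2)$ is a separation of order $\le k$; from $|V_1|,|V_2|\le 2n/3$ and $n>3(\ak+k)$ one gets $|Y_1\setminus Y_2|,|Y_2\setminus Y_1|\ge\ak$, the desired $(\ak,k)$-witnessing separation, a contradiction. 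Finally, the second statement of Lemma~\ref{lem:boundedsep} for \htd s follows by combining this lemma with Observation~\ref{obs:boundedsep}: apply Observation~\ref{obs:boundedsep} to $G$ and $S=V(G)\setminus C$ where $C$ is a largest component of $G[L]$ — wait, more directly, once $G[L]$ has a component $C^*$ with $\ge\ak$ vertices, uniqueness and $|V(G)\setminus V(C^*)|\le\ak+k$ follow from $(\ak,k)$-unbreakability applied to the separation induced by $N[C^*]$, exactly as in the elimination-decomposition case.

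\textbf{Main obstacle.} I expect the delicate point to be the bookkeeping of the weight function under tree decompositions: unlike elimination forests where $\chi$ is a partition, here a vertex sits in a whole subtree of bags, so one must carefully use the connectivity condition to make ``newly introduced vertex'' well-defined and to argue that the only vertices shared between the two sides of $t^*$ lie in $\chi(t^*)$ itself. Getting the constants right (why $<\ak$ rather than $<\ak+k$ suffices, and where precisely the hypotheses $k\ge 2$ and $n>3(\ak+k)$ are used) is the part that needs care; everything else is a transcription of the proof of Lemma~\ref{lem:uniquelargecc}.
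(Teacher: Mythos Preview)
Your proposal is correct and matches the paper's proof essentially line for line: both normalize the decomposition so that leaf bags have connected $L$-parts, weight each node by the vertices first introduced there (the paper calls this partition $\rho$, defined as $\chi'(t)\cap L$ at leaves and $\chi'(t)\setminus\chi'(t_{\sf par})$ otherwise), bound each weight by $\ak+k$ using $n>3(\ak+k)\Rightarrow 2\tau/5>\ak+k$, apply Proposition~\ref{pro:balancedsep}, and take $S=\chi'(t^*)$ (with $|S|\le k$ since $t^*$ is non-leaf and hence $\chi'(t^*)\cap L=\emptyset$) to produce the $(\ak,k)$-witnessing separation. Your stray parenthetical ``$\ak+k\le\ak$'' is a slip---just keep the correct argument you give immediately afterward; and the paper front-loads a slightly more explicit normalization (no internal bag contained in another, leaf non-$L$ parts pushed into the parent) so that every $\rho(t)\ne\emptyset$ automatically, rather than contracting zero-weight nodes as you do.
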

\begin{proof}
If $L = V(G)$, then note $G$ must have a connected component that have exactly one connected component with at least $\ak$ vertices, as $G$ is \akku. Hereafter we assume that $V(G) \setminus L \neq \emptyset$. As $\C{H}$ is hereditary, $G$ must also admit an \htd\ $(T',\chi',L)$, such that: 
\begin{enumerate}
\item for every distinct vertices $t,t' \in V(T')$ with $\chi(t)\cap L = \emptyset$ and $\chi(t') \cap L = \emptyset$, we have $\chi(t) \not \subseteq \chi(t')$,
\item for each $t \in V(T')$ with $\chi(t) \cap L \neq \emptyset$, $\chi(t) \setminus L \subseteq \chi(t_{\sf par})$, where $t_{\sf par}$ is the parent of $t$ in $T'$. 
\item for each $t \in V(T')$, if $\chi(t) \cap L \neq \emptyset$, then $G[\chi(t) \cap L]$ is connected,
\item the width of $T'$ is at most the width of $T$, and
\item $T'$ is connected. 
\end{enumerate}
Hereafter, we will consider an \htd\ $(T',\chi',L)$ for $G$ that satisfies the above conditions. Towards a contradiction we suppose that the size of each connected component in $G[L]$ is strictly less than $\ak$ vertices. We define a (partition) function $\rho: V(T') \rightarrow 2^{V(G)}$ using $\chi$ and the properties of $(T',\chi',L)$ as follows. For each $t \in V(T')$, where: i) $\chi'(t) \cap L \neq \emptyset$, we set $\rho(t) = \chi'(t) \cap L$, and ii) otherwise, we set $\rho(t) = \chi'(t) \setminus \chi'(t_{\sf par})$, where $t_{\sf par}$ is the parent (if it exists) of $t$ in $T'$.\footnote{If $t$ is a root, then $\rho(t) = \chi'(t)$.} Notice that for each $t\in V(T')$, $\rho(t) \neq \emptyset$ (recall $V(G) \setminus L \neq \emptyset$) and $\{\rho(t) \mid t \in V(T')\}$ is a partition of $V(G)$. 
We define the weight function ${\sf wht}: V(T') \rightarrow \mathbb{N}$, by setting, for each $t \in V(T')$, ${\sf wht}(t) = |\rho(t)|$. By the construction of ${\sf wht}$ we can obtain that, for each $t \in V(T)$, $1 \leq {\sf wht}(t) \leq \max\{\ak-1 + k\} \leq \ak + k$ (recall our assumption that each connected component in $G[L]$ has less than $\ak$ vertices). Furthermore, we have $\tau = \sum_{t\in V(T')} {\sf wht}(t) = n = |V(G)|$. 

As $n \geq 3(\ak+k)$, we have $2\tau /5>(\ak+k)$, and hence for each $t\in V(T')$, $1 \leq {\sf wht}(t) \leq 2\tau/5$. The tree $T'$ and the weight function ${\sf wht}$, satisfies the premises of Proposition \ref{pro:balancedsep}. Thus, there is a non-leaf node $t^*$ in $T'$ such that connected components in $T'-\{t^*\}$ can be partitioned into two sets $\C{C}_1$ and $\C{C}_2$, such that for each $j\in [2]$, ${\sf wht}(\C{C}_j)= \sum_{C \in \C{C}_j} {\sf wht}(C)\leq 2\tau /3 = 2n/3$, where ${\sf wht}(C)=\sum_{t\in V(C)}{\sf wht}(t)$. Moreover, ${\sf wht}(\C{C}_1),{\sf wht}(\C{C}_2)  \geq n/3 \geq \ak +k$. Let $S = \chi'(t^*)$, and note that as $t^*$ is a non-leaf node, we have $|S| \leq k$. For $i \in [2]$, $V_i = \cup_{t \in \C{C}_i} \rho(t)$. Notice that $(S \cup V_1, S \cup V_2)$ is a separation of order $k$, where $|V_1\setminus S|, |V_2\setminus S|\geq \ak$, which contradicts that $G$ is \akku. 
\end{proof}

Using the above two lemmas we obtain the desired result (Lemma~\ref{lem:boundedsep}). 
\begin{proof}[Proof of Lemma~\ref{lem:boundedsep}] 

Consider an integer $k$, an \akku\ 
graph $G$, and an \hed\ (resp. \htd) $(T,\chi,L)$ for $G$, of depth (resp. width) $k$ (resp. $k-1$). If $|V(G)| \leq 3(\ak +k)$, then the condition required by the lemma is trivially satisfied. Now consider the case when $|V(G)| > 3(\ak +k)$. Note that $G$ must also admit an \hed\ (resp. \htd), say, $(T',\chi',L)$ such that for each $t\in V(T')$, $G[\chi'(t) \cap L]$ is connected. From Lemma \ref{lem:uniquelargecc}, $G[L]$ has a connected component of size at least $\ak$, and let $D$ be such a connected component, and $t^* \in V(T')$ be a vertex such that $V(D) \subseteq \chi'(t^*)$. Let $S = \{v \in V(G) \setminus L \mid \mbox{ for some } t \in V(T') \setminus \{t^*\}, \chi'(t) = \{v\}, \mbox{ where $t$ is an ancestor of } t^*\}$ (resp., let $S = \chi'(t^*) \setminus L$). Note that $|S| \leq k$, and $N_G(D) \subseteq S$. The above, together with the assumption that $G$ is \akku\ implies that $V(G) \setminus (S \cup V(D)) < \ak$. From the above we can obtain that $|V(G) \setminus L| \leq \ak + k$ and there is exactly one connected component in $G[L]$ that has more than $\ak$ vertices. This concludes the proof. 
%
\end{proof}

		\subsection{Computing \hED\ Using its Decision Oracle}
The objective of this section is to prove the following lemma. 

\begin{lemma}\label{lem:self-reduce-h-ed}
Consider an algorithm $\SC{A}$, for \probEDH, that runs in time $f(\ell')\cdot g(|V(G')|)$, for an instance $(G',\ell')$ of the problem.\footnote{We will use the standard assumption from Parameterized Complexity that the functions $f$ and $g$ are non-decreasing. For more details on this, please see Chapter 1 of the book~\cite{CyganFKLMPPS15}.} Then, for any given graph $G$ on $n$ vertices, we can compute an \hed\ for $G$ of depth $\ell = \hhdepth(G)$, in time bounded by $f(\ell) \cdot g(n) \cdot n^{\C{O}(1)}$. 
\end{lemma}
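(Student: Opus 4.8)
The plan is a standard search‑to‑decision (self‑reduction) argument based on the recursive characterization of $\hhdepth$ that underlies Definition~\ref{def:H-elim}: $\hhdepth(G)\le 0$ iff $G\in\hh$; for $k\ge 1$, $\hhdepth(G)\le k$ iff every connected component $C$ of $G$ has a vertex $v$ with $\hhdepth(C-v)\le k-1$; and $\hhdepth(G)=\max_C\hhdepth(C)$ over the connected components $C$ of $G$. The key point is that the assumed oracle $\SC{A}$ already suffices both to test membership in $\hh$ (query it on $(C,0)$) and to implement the recursive step (query it on $(C-v,k-1)$).

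First I would compute $\ell=\hhdepth(G)$ by running $\SC{A}$ on $(G,0),(G,1),(G,2),\dots$ and letting $\ell$ be the least value for which $\SC{A}$ answers affirmatively; since $\hhdepth(G)\le\td(G)\le n$, this makes at most $n+1$ calls, each on an $n$-vertex graph with parameter at most $\ell$, and hence (as $f,g$ are non‑decreasing) runs in time $O(n\cdot f(\ell)\,g(n))$.

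Next I would build the decomposition with a recursive routine $\textsc{Build}(C,d)$ which, given a \emph{connected} graph $C$ with the promise $\hhdepth(C)\le d$, returns an \hed{} of $C$ of depth at most $d$. It first queries $\SC{A}(C,0)$; on a positive answer it returns the single‑node tree whose $\chi$-value is $V(C)$ (this node will be a leaf, i.e., a base component). On a negative answer we have $\hhdepth(C)\ge 1$, hence $d\ge 1$; then, scanning the vertices of $C$ in a fixed order, it queries $\SC{A}(C-v,d-1)$ until the first positive answer, which must occur by the recursive characterization. It creates a new root $r$ with $\chi(r)=\{v\}$, calls $\textsc{Build}(C',d-1)$ on each connected component $C'$ of $C-v$ (legitimate since $\hhdepth(C')\le\hhdepth(C-v)\le d-1$), and makes the returned trees the subtrees hanging from $r$. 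Correctness and the bound ``depth $\le d$'' follow by induction on $|V(C)|$; that the output meets conditions (1)--(4) of Definition~\ref{def:H-elim} is the usual elimination‑forest verification --- the eliminated vertices receive singleton $\chi$-values and never reappear (so they stay out of $L$), the $\chi$-values form a partition of $V(C)$, each base component lies in $\hh$, and for every edge $\{u,w\}$ of $C$ either one endpoint is the eliminated vertex $v$ (an ancestor of everything below it) or both endpoints lie in the same component of $C-v$ (handled recursively). The overall algorithm runs $\textsc{Build}(C_i,\ell)$ on each connected component $C_i$ of $G$, takes the disjoint union of the returned trees, and sets $L$ to be the union of their base components. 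By construction this $\hh$-elimination forest has depth at most $\ell$, and since $\hhdepth(G)$ is by definition the \emph{minimum} depth of such a forest, its depth is exactly $\ell=\hhdepth(G)$.

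For the running time of the construction phase, each invocation of $\textsc{Build}$ creates exactly one node of the output forest $T$; since the $\chi$-values partition $V(G)$, with every internal node holding a single vertex and every leaf a non‑empty base component, $T$ has at most $2n$ nodes, so there are at most $2n$ invocations. Each invocation issues at most $1+|V(C)|\le n+1$ oracle calls, each on a graph with at most $n$ vertices and parameter at most $\ell$ --- hence costing at most $f(\ell)\,g(n)$ --- plus $n^{O(1)}$ additional bookkeeping (splitting into components, maintaining $\chi$ and $L$). Thus the construction phase runs in time $O(n^2)\cdot f(\ell)\,g(n)+n^{O(1)}$, which combined with the first phase gives the claimed bound $f(\ell)\cdot g(n)\cdot n^{O(1)}$. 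I do not anticipate a genuine obstacle here; the only points needing care are confirming that a single decision oracle suffices both for the membership test and for the recursive step, and bounding the number of oracle calls by a polynomial --- which follows because each recursive call contributes exactly one node to a forest of size $O(n)$.
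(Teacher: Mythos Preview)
Your proposal is correct and follows essentially the same self-reduction as the paper: compute $\ell$ by iterated oracle calls, then recursively peel off one vertex per connected component using the oracle to certify that the remainder still has elimination distance at most $d-1$, and glue the results into an $\hh$-elimination forest. The only cosmetic differences are that the paper phrases the vertex-search as a separate base case rather than folding it into the recursive step, and bounds the running time via the recursion depth (at most $k+1\le n+1$) rather than via your (equally valid) node-count argument for the output forest.
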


 Consider a family of graphs $\C{H}$, for which \probTDH\ admits an algorithm, say, $\SC{M}_{\sf elm}$, which given a graph $G$ on $n$ vertices and an integer $k$, runs in time $f(k) \cdot g(n)$, and output $1$ if $\hhed(G) \leq k$ and $0$, otherwise. We design a recurive algorithm that given a graph $G$ and an integer $k$, and returns an \hed\ of depth at most $k$, or returns that no such decomposition exists.

Consider a given graph $G$ and an integer $k$. We assume that the graph $G$ is connected, as otherwise, we can apply our algorithm for each of its connected components. We will explicitly ensure that, while making recursive calls, we maintain the connectivity requirement. We now state the base cases of our recursive algorithm.

\vspace{1.5mm}
\noindent{\bf Base Case 1.} If $G \in \C{H}$ and $k \geq 0$, then the algorithm returns $(T = (\{r\}, \emptyset), \chi: \{r\} \rightarrow 2^{V(G)}, V(G))$, where $\chi(r) = V(G)$, as the \hed\ of $G$. 

\vspace{1.5mm}
\noindent{\bf Base Case 2.} If Base Case 1 is not applicable and $k \leq 0$, then return that $\hhed(G) > k$.  

For each $v \in V(G)$, let $\C{C}_v$ be the set of connected components in $G-\{v\}$

\vspace{1.5mm}
\noindent{\bf Base Case 3.} If Base Case 1 and 2 are not applicable, and there is no $v \in V(G)$, such that for every $C \in \C{C}_{v}$, $(C,k-1)$ is a yes-instance of \probEDH, then return that $\hhed(G) > k$. 

The correctnesses of Base Case 1 and 2 are immediate from their descriptions. If the first two base cases are not applicable, then $k \geq 1$ and $\hhed(G) \geq 1$ must hold. Thus, for any \hed, say, $(T,\chi,L)$ for $G$, $T$ must have at least one vertex which is not a leaf. The third base case precisely returns that $\hhed(G) > k$, when the above condition cannot be satisfied, thus its correctness follows. Using $\SC{M}_{\sf elm}$, we can test if $(G,0)$ is a yes-instance of \probEDH\ in time bounded by $f(k) \cdot g(n)$. Thus, we can test/apply Base Case 1, 2 and 3 in time bounded by $f(k) \cdot g(n) \cdot n^{\C{O}(1)}$. Hereafter we assume that the base cases are not applicable. 

\vspace{1.5mm} 
\noindent{\bf Recursive Step.} Find a vertex $v^* \in V(G)$, such that for every $C \in \C{C}_{v^*}$, $(C,k-1)$ is a yes-instance of \probEDH, using $\SC{M}_{\sf elm}$. Such a $v^*$ exists as Base Case 3 is not applicable. 

Recursively obtain an \hed\ $(T_C,\chi_C, L_C)$ for the instance $(C, k-1)$, for each $C \in \C{C}_{v^*}$. Let $L^* = \bigcup_{C\in \C{C}_{v^*}} L_C$, and let $T^*$ be the forest defined as follows. We have $V(T^*) = \{r^*\} \cup (\bigcup_{C\in \C{C}_{v^*}} V(T_C))$, where $r^*$ is a new vertex, and $E(T^*)$ contains all edges in $E(T_C)$, for each $C \in \C{C}_{v^*}$, and for each root $r$ in some forest in $T_C$, for some $\C{C}_{v^*}$, the edge $\{r^*,r\}$ belongs to $E(T^*)$. Finally, let $\chi^*: V(T^*) \rightarrow 2^{V(G)}$ be the function such that $\chi^*(r^*) = \{v^*\}$, and for each $C \in \C{C}_{v^*}$ and $a \in V(T_C)$, we have $\chi^*(a) = \chi_C(a)$. Return $(T^*,\chi^*,L^*)$ as the \hed\ for $G$. 

The correctness of the above recursive step follows from its description. Moreover, it can be execute in time bounded by $f(k) \cdot g(n) \cdot n^{\C{O}(1)}$, using $\SC{E}_{\sf mod}$. 

The overall correctness of the algorithm follows from the correctness of each of its base cases and recursive step. Moreover, as we can always assume that $k \leq n$ and the depth of the recursion tree can be bounded by $k+1$, we can obtain that our algorithm runs in time $f(k) \cdot g(n) \cdot n^{\C{O}(1)}$, using $\SC{E}_{\sf mod}$. By exhibiting the above algorithm, we have obtained a proof of Lemma~\ref{lem:self-reduce-h-ed}.  


		\subsection{Computing \hTD\ Using its Decision Oracle} 
The objective of this section is to prove the following lemma.

\begin{lemma}\label{lem:self-reduce-h-td}
Consider an algorithm $\SC{A}$, for \probTDH, that runs in time $f(\ell')\cdot g(|V(G')|)$, for an instance $(G',\ell')$ of the problem.\footnote{Again, we assume that the functions $f$ and $g$ are non-decreasing.} Then, for any given graph $G$ on $n$ vertices, we can compute an \htd\ for $G$ of width $\ell = \hhtw$, in time bounded by $\big (f(\ell) \cdot g(n) + \ell^{\C{O}(\ell^2)} \big ) \cdot n^{\C{O}(1)} + h(\C{F})$, where $h(\C{F})$ depends only on the family $\C{H}$. 
\end{lemma}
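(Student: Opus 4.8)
The plan is to follow the decision-to-construction scheme used in the proof of Lemma~\ref{lem:self-reduce-h-ed}, but with one change: instead of peeling off a single vertex per recursive call (which is what an elimination forest needs), we \emph{grow a bag} of the tree decomposition, invoking the black box $\SC{A}$ through a clique-ification trick, and we keep the recursive pieces small using the finite-integer-index machinery. First, by running $\SC{A}(G,0),\SC{A}(G,1),\dots$ until the first accepting call, we learn $\ell=\hhtw(G)$; since $f$ is non-decreasing this costs $\mathcal{O}(\ell\cdot f(\ell)\cdot g(n))$, and from now on $\ell$ is fixed and every oracle call we make is with parameter $\ell$ on a graph with at most $n$ vertices, hence of cost at most $f(\ell)\cdot g(n)$.

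For the construction I would run a recursion on pairs $(H,W)$, where $H$ is an induced subgraph of a graph obtained from $G$ by adding edges, $W\subseteq V(H)$ is a ``boundary'' of size at most $\ell+1$ that has been turned into a clique, and the maintained invariant is that $H$ admits a tree $\hh$-decomposition of width at most $\ell$ placing $W$ inside a single bag. The key observation is that, because $W$ is a clique, a tree $\hh$-decomposition of $H$ is still valid after deleting the added edges (and conversely a valid decomposition of the original subgraph that keeps $W$ in one bag stays valid after re-adding them); hence, provided the base vertices are disjoint from $W$, the invariant holds if and only if $\hhtw(H)\le\ell$, which $\SC{A}$ decides in one call. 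At a node $(H,W)$ with $|W|\le\ell$, for each $v\in V(H)\setminus W$ I would test whether the graph obtained from $H$ by adding all edges inside $W\cup\{v\}$ has $\hhtw\le\ell$; a positive answer exhibits a tree $\hh$-decomposition with $W\cup\{v\}$ in one bag, so we add $v$ to $W$ and repeat, and the invariant argument shows that whenever $W$ is not already a bag of some optimal decomposition such a $v$ exists. When no vertex can be added (or $|W|=\ell+1$) we declare $W$ a bag and recurse on each connected component $C$ of $H-W$ with boundary $N_H(C)$ after turning it into a clique, the invariant passing to the children by the standard restriction argument for tree decompositions. The base cases are $H-W=\emptyset$ (output the single bag $W$, which carries no base vertex) and $H-W$ a single component $C$ that may be declared a base component, i.e.\ $G[C]\in\hh$, in which case we hang a leaf bag equal to $C$ below $W$.

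Two issues break this naive version: the depth of the recursion and the sizes of the pieces $H$ are not a priori bounded, so scanning all $v\in V(H)$ at every node is too slow; and the test ``$G[C]\in\hh$'' is not literally an oracle call. Both are handled by the finite-integer-index toolbox on which this paper is built: \probTDH\ is (positive) monotone and, being {\sf CMSO} definable (Lemma~\ref{lem:cmsoDefinable}), has finite integer index by the argument of Lemma~\ref{lem:stronglymonotone}; hence by (the technique of) Lemma~\ref{lem:red2finiteindex} we may fix, once and for all, sets of progressive representatives $\mathcal{S}_t$ for all boundary sizes $t\le\ell+1$ — this one-time, $\hh$-dependent step contributes the additive $h(\C{F})$ — and at each recursive node we replace the pieces of $H$ hanging off $W$ by their bounded-size representatives, using $\SC{A}$, without changing any answer. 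After this reduction $H$ has only $\mathcal{O}(\ell)$ non-base vertices, so the bag-growing search together with the membership test ``$G[C]\in\hh$'' (which on a bounded-size/bounded-treewidth instance is a finite check) is carried out by brute force over all tree $\hh$-decompositions of width at most $\ell$ of the reduced instance, at cost $\ell^{\mathcal{O}(\ell^2)}$; the representatives are finally unfolded back into the decomposition. Since there are $n^{\mathcal{O}(1)}$ recursion nodes, each making $n^{\mathcal{O}(1)}$ oracle calls of cost at most $f(\ell)g(n)$ and one brute-force step, the total is $\big(f(\ell)g(n)+\ell^{\mathcal{O}(\ell^2)}\big)n^{\mathcal{O}(1)}+h(\C{F})$, as claimed.

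I expect the main obstacle to be the interaction between base components and the clique-ification trick. Forcing a boundary $W$ to be a clique is sound only when, in the optimal decomposition, the vertices of $W$ never need to be base vertices, and the greedy bag-growing must not ``miss'' an optimal decomposition that routes a base component through a different bag. To make the recursion robust I would first establish a normal form: every optimal tree $\hh$-decomposition can be assumed to place each base component $C$ below a \emph{full} bag equal to $N_G(C)$ that contains no base vertex, so that all boundaries created during the recursion are base-vertex-free; combined with a careful invariant that simultaneously tracks the ``torso part'' still being decomposed and the base components still to be emitted, this is where essentially all of the work lies — the oracle bookkeeping and the representative replacement being routine given Lemma~\ref{lem:self-reduce-h-ed} and the Replacement Lemma.
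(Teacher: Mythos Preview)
Your route is genuinely different from the paper's, and the obstacle you flag in your final paragraph is precisely what the paper sidesteps rather than resolves.

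The paper never tries to grow the tree decomposition recursively. Instead it first identifies the set $S=V(G)\setminus L$ of non-base vertices in one pass, and only then computes the tree. The identification trick is this: fix a smallest graph $H^\star\notin\hh$ (finding it is the step that costs the additive $h(\C{F})$); then, processing the vertices $v_1,\dots,v_n$ in order, attach $\ell{+}2$ fresh copies of $H^\star$ at $v_i$ (each copy sharing exactly one vertex with $v_i$) and ask $\SC{A}$ whether the augmented graph still has $\hh$-treewidth $\le\ell$. A yes answer witnesses an optimal decomposition in which $v_i$ is non-base (the $\ell{+}2$ pendant obstructions force $v_i$ out of any base component, and the width budget cannot absorb all copies); keep the attachments and move on, otherwise discard them. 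After $n$ queries one has exactly the set $S$. Now the torso $\widetilde G_S$ has ordinary treewidth $\ell$, so Bodlaender's exact algorithm gives a width-$\ell$ tree decomposition in time $\ell^{\mathcal O(\ell^2)}\cdot n$ (this is where that term comes from); hanging each component of $G-S$ as a leaf below a bag containing its neighbourhood finishes. No clique-ification, no FII, no recursion.

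Your bag-growing scheme has a real gap where you say it does: the oracle cannot be told that $W$ must avoid base vertices, and cliqueifying $W$ can push $\hhtw$ above $\ell$ even when a decomposition with $W$ inside one (mixed) bag exists, so the greedy test is unsound in exactly the direction you need. The normal form you propose as a fix is essentially the torso observation above --- but once you have it, the recursion is superfluous, since the problem collapses to ``identify $S$, then tree-decompose the torso'', which is what the paper does. Two smaller issues: your invocation of FII for \probTDH\ is not covered by Lemma~\ref{lem:stronglymonotone} (that lemma is about \probVDH), so it needs its own argument; and hardwiring the representative sets $\mathcal S_t$ is a non-uniformity, not a one-time computation, so it does not account for the additive $h(\C F)$ in the stated bound.
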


 Consider a family of graphs $\C{H}$, for which \probTDH\ admits an algorithm, say, $\SC{T}_{\sf tw}$, which given a graph $G$ on $n$ vertices and an integer $k$, runs in time $f(k) \cdot g(n)$, and output $1$ if $\hhtw(G) \leq k$ and $0$, otherwise. We will assume that $\C{H}$ is not the family of all graphs, otherwise, the problem is trivial, i.e., we can return $(T=(\{t\},\emptyset), \chi, V(G))$, where $\chi(t) = V(G)$, as the $\C{H}$-tree decomposition of width $0$. 

We will design an algorithm $\SC{D}_{\sf tw}$ which, for given a graph $G$ on $n$ vertices, will construct an \htd\ for $G$ of $\C{H}$-treewidth $\ell = \hhtw$, in time bounded by $\ell^{\C{O}(\ell^3)} \cdot f(\ell) \cdot |V(G)|^{\C{O}(1)} + h(\C{F})$, where $h(\C{F})$ is a number depending on the family $\C{F}$. Intuitively speaking, we will attach a flower of obstructions on each vertex and check if the resulting graph has its $\C{H}$-treewidth exactly the same as $\hhtw(G)$. If the $\C{H}$-treewidth does not increase, then we will be able to obtain that this vertex can be part of the modulator. We repeat this procedure to identify the vertices $S \subseteq V(G)$ that go the the modulator. After this, we take the torso of $G[S]$ in $G$, to obtain the graph (to be denoted by) $\wtilde{G}_S$. Then using the known algorithm of Bodlaender~\cite{Bodlaender96}, we compute a tree decomposition for $\wtilde{G}_S$, using which we construct an $\C{H}$-tree decomposition for $G$.  
 
We next state an observation that will be useful in constructing an obstruction, i.e., a graph outside $\C{H}$.  

\begin{observation}\label{obs:obstruction}
There exists a number $h = h(\C{H})$,\footnote{That is, $h$ depends on the family $\C{H}$.} such that we can find a graph $H \notin \C{H}$ in $h$ many steps, where each step can be execute in constant time.  
\end{observation}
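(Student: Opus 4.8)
The plan is to read off the obstruction directly from the standing assumption of this subsection, namely that $\C{H}$ is not the family of all graphs. First I would note that this assumption is exactly the statement that the set $\{H : H \notin \C{H}\}$ is nonempty. Fix, once and for all, a particular element $H$ of this set --- for concreteness, the graph that comes first in the canonical ordering of all graphs by number of vertices, breaking ties lexicographically on the adjacency matrix. The key point is that $H$ is a fixed object determined solely by $\C{H}$; it does not depend on the input graph $G$, and in particular its size $|V(H)| + |E(H)|$ is a constant $h_0 = h_0(\C{H})$.

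Having fixed $H$, I would then argue that it can be produced within a number of steps depending only on $\C{H}$. The most direct way is to hardwire $H$ into the description of the algorithm $\SC{D}_{\sf tw}$: this is harmless because the algorithms built on top of this observation are already non-uniform (they are allowed to depend on $\C{H}$), and simply writing $H$ down costs $O(h_0) = h(\C{H})$ steps, each of which is a constant-time operation. If a self-contained construction is preferred over hardwiring, one can instead enumerate graphs in the canonical order above and, for each enumerated graph $J$, test whether $J \in \C{H}$ (using, e.g., the {\mso} sentence defining $\C{H}$ in the relevant applications, or brute force, since $|V(J)|$ is bounded by $|V(H)|$ throughout the enumeration); the enumeration halts at $H$, so only a number of graphs bounded in terms of $\C{H}$ is ever touched, and each membership test runs in a number of steps bounded in terms of $\C{H}$, giving again a total of $h(\C{H})$ steps.

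There is essentially no obstacle here: the content of the observation is just the bookkeeping remark that an obstruction of $\C{H}$-dependent-but-$G$-independent size exists and can be obtained in $\C{H}$-dependent-but-$n$-independent time, which is precisely what is needed later when attaching a flower of obstructions to the vertices of $G$. The only point to be careful about is to pin down $H$ as a function of $\C{H}$ alone (rather than taking an arbitrary obstruction, whose size a priori could be unbounded), which is why I would insist on the canonical-ordering choice above; with that choice, $h(\C{H})$ is well defined and independent of the input.
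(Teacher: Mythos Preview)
Your proposal is correct, and your enumeration variant is essentially the paper's approach, but there is one noteworthy difference in the membership test. The paper tests $J\notin\C{H}$ by calling the decision oracle $\SC{T}_{\sf tw}$ on the instance $(J,0)$ of \probTDH: since $J\in\C{H}$ implies $\hhtw(J)=0$, a {\sf No} answer certifies $J\notin\C{H}$. This keeps the construction entirely within the hypotheses of Lemma~\ref{lem:self-reduce-h-td}, where the only tool assumed is $\SC{T}_{\sf tw}$; no hardwiring and no appeal to a {\mso} sentence is needed. Your hardwiring route is also valid (and you correctly flag that the downstream algorithms are non-uniform anyway), and your {\mso} route works in the applications, but the oracle-based test is the cleanest way to stay uniform relative to the single assumption actually in force.
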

\begin{proof}
We initialize $i=1$ and do following steps:
\begin{enumerate}
\item Construct the set, $\C{G}_i$, that contains all graph on exactly $i$ vertices.
\item For each $H \in \C{G}_i$, check if $(H',0)$ is a no-instance of \probTDH, using the algorithm $\SC{T}_{\sf tw}$, and if it is a no-instance, then return the graph $H$ (and exit). Otherwise, increment $i$ by $1$ and go to Step $1$.   
\end{enumerate}  

Let $q \geq 1$, such that $\C{H}$ does not contain some graph on (exactly) $q$ vertices. Note that $q$ is well-defined, as $\C{H}$ is not the family of all graphs by our assumption. Notice that at the iteration where $i=q$, we will be able to output a graph that is not in $\C{H}$. Also the number of steps executed by the procedure we described depends only on $q$ (which in turn depends only on $\C{H}$). This concludes the proof.   
\end{proof}

We next state an easy observation using which we can compute $\ell = \hhtw(G)$ with the help of the algorithm $\SC{T}_{\sf tw}$. 

\begin{observation}\label{obs:compute-H-tw}
For a given graph $G$ on $n$ vertices, we can compute (using $\SC{T}_{\sf tw}$) $\ell = \hhtw(G)$ in time bounded by $\C{O}(n) \cdot f(\hhtw(G)) \cdot g(n)$.  
\end{observation}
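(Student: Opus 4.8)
The plan is a one-line linear search over the width parameter using the assumed decision oracle $\SC{T}_{\sf tw}$, together with a crude a priori bound on $\hhtw(G)$. The key structural fact is that $\hh$-treewidth is monotone in the width parameter: if $G$ admits a tree $\hh$-decomposition of width at most $k$, then (padding bags with arbitrary non-base vertices if necessary) it admits one of width at most $k'$ for every $k' \ge k$. Hence $\{k \in \mathbb{Z}_{\ge 0} : \hhtw(G) \le k\}$ is an up-set whose minimum is exactly $\hhtw(G)$, and a search from below will detect it.

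The algorithm: for $k = 0, 1, 2, \dots$ (in this order) call $\SC{T}_{\sf tw}(G,k)$, and return the first $k$ for which the oracle outputs $1$. Correctness is immediate from the specification of $\SC{T}_{\sf tw}$ together with the monotonicity above: the first success occurs precisely at $k = \hhtw(G)$. For termination and the time bound I would first observe $\hhtw(G) \le n-1$: the tree $\hh$-decomposition with a single node $t$, $\chi(t) = V(G)$, and $L = \emptyset$ satisfies all conditions of Definition~\ref{def:tree:h:decomp} (using that $\hh$ contains the empty graph, so $G[\chi(t) \cap L] = G[\emptyset] \in \hh$) and has width $\max(0,n-1) = n-1$ for $n \ge 1$. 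Thus the loop performs exactly $\hhtw(G)+1 \le n$ iterations, the $k$-th costing $f(k)\cdot g(n)$. Since $f$ and $g$ are non-decreasing, each call is bounded by $f(\hhtw(G))\cdot g(n)$, so the total is at most $(\hhtw(G)+1)\cdot f(\hhtw(G))\cdot g(n) = \C{O}(n)\cdot f(\hhtw(G))\cdot g(n)$, as claimed.

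There is no genuine obstacle here; the statement is a routine self-reduction. The only two points worth stating explicitly are (i) monotonicity of $\hhtw$ in the width parameter, which is what licenses the linear search, and (ii) that one must search \emph{upward} from $0$ rather than, say, binary-searching in the interval $[0,n]$: the latter would replace the factor $f(\hhtw(G))$ by $f(n)$, which is only known to be no smaller since $f$ is merely non-decreasing, and would thus fail to meet the claimed bound.
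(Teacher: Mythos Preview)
Your proposal is correct and follows essentially the same approach as the paper: a linear search over $k=0,1,2,\dots$ using the decision oracle, stopping at the first success and bounding each call by $f(\hhtw(G))\cdot g(n)$ via monotonicity of $f$. Your write-up is in fact more careful than the paper's (you justify $\hhtw(G)\le n-1$ explicitly and explain why binary search would not give the stated bound), though your padding justification for monotonicity is unnecessary: a decomposition of width $\le k$ is already, as-is, a decomposition of width $\le k'$ for every $k'\ge k$.
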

\begin{proof}
We iterate over $i \in \mathbb{N}$ (starting from $0$) and check whether $(G,i)$ is a yes-instance of \probTDH\ using $\SC{T}_{\sf tw}$, and stop at the iteration where the instance is a yes-instance. Note that, the iteration $i$ at which we stop, it must hold that $i=\hhtw(G)$. As $\hhtw(G) \leq n$ and $f$ is a non-decreasing function by our assumption, our procedure achieves the claimed running time bound. 
\end{proof}

We now move to formal description of our algorithm. We fix an arbitrary ordering of vertices in $G$, and let $V(G) = \{v_1,v_2,\cdots, v_n\}$. We compute $\ell = \hhtw(G)$, using Observation~\ref{obs:compute-H-tw}. Let $H^*$ be the graph returned by Observation~\ref{obs:obstruction}, and let $V(H^*) =\{u^*_1,u^*_2, \cdots, u^*_q\}$. We will construct a graph $G'_i$ and a set $S_i \subseteq V(G)$, for each $i \in [n]$, where we add a flower of obstruction at $v_i$ (and add $v_i$ to $S_i$) if and only if after adding such obstructions, the $\C{H}$-treewidth of the resulting graph doesn't change. Formally, we do the following. 
\begin{enumerate}
\item Set $G'_0 = G$ and $S_0 = \emptyset$. 
\item For each $i \in [n]$ (in increasing order), we do the following: 
\begin{enumerate}
\item Initialize $G'_1 = G'_{i-1}$ and $S_i = S_{i-1}$. 
\item We obtain the graph $\what{G}_i$ obtained from $G'_{i-1}$ by adding $k+2$ copies of $H$ at $v$ as follows. For $j \in [k+2]$, let $H^*_j$ be the graph such that $E(H^*_j) = \{u^*_{1,j}, u^*_{2,j},\cdots, u^*_{q,j}\}$ and $E(H^*_j) =\{\{u^*_{p,j}, u^*_{r,j}\} \mid p,r \in [q], \{u^*_p,u^*_r\} \in E(H)\}$. Furthermore, let $H'_j = H^*_j - \{u^*_{1j}\}$. We let $\what{G}_i$ be the graph with $V(\what{G}_i) = V(G'_{i-1}) \bigcup \big(\cup_{j \in [k+2]} V(H'_j)\big)$ and $E(\what{G}_i) = E(G'_{i-1}) \bigcup \big (\cup_{j \in [k+2]} E(H'_j) \big ) \bigcup \big\{\{v_i, u^*_{p,j}\} \mid j \in [k+2], p \in [q] \mbox{ and } \{u^*_1, u^*_p\} \in E(H) \big\}$.
\item Check if $(\what{G}_i, \ell)$ is a yes-instance of \probTDH\ using $\SC{T}_{\sf tw}$. If the above is true, then set $G'_i = \what{G}_i$ and $S_i =S_{i-1} \cup \{v_i\}$, and otherwise, set $G'_i = G'_{i-1}$ and $S_i =S_{i-1}$. 
\end{enumerate}
\end{enumerate}

Next we show that, there is an $\C{H}$-tree decomposition of optimal width for $G$ which puts exactly the vertices in $S_n$ in the modulator.

\begin{lemma}\label{lem:equate-modulator-htd}
There is an $\C{H}$-tree decomposition, $(T,\chi,V(G) \setminus S_n)$, for $G$ of width $\ell = \hhtw(G)$. 
\end{lemma}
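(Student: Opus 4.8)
The plan is to establish two inequalities: that the construction produces a valid $\C{H}$-tree decomposition of $G$ of width at most $\ell = \hhtw(G)$ in which $S_n$ is precisely the set of base vertices' complement, and that any such decomposition cannot do better (this last part is immediate since $\ell = \hhtw(G)$ is optimal by definition). The heart of the matter is therefore the first direction: producing a tree $\C{H}$-decomposition with base vertex set $V(G) \setminus S_n$ of width $\leq \ell$. First I would record the key invariant of the loop: for each $i \in [n]$, we have $\hhtw(G'_i) = \ell$, and moreover $G$ is an induced subgraph of $G'_i$ obtained by deleting the attached obstruction copies. The invariant is maintained because whenever we set $G'_i = \what{G}_i$ we have explicitly checked $(\what{G}_i, \ell)$ is a \textsc{yes}-instance, and $\hhtw$ is monotone under induced subgraphs so $\hhtw(\what{G}_i) \geq \hhtw(G'_{i-1}) = \ell$; when we do not update, $G'_i = G'_{i-1}$ trivially keeps the invariant.

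Next I would argue the crucial structural claim: in \emph{any} tree $\C{H}$-decomposition $(T,\chi,L)$ of $G'_n$ of width $\ell$, every vertex of $S_n$ must be a non-base vertex, i.e.\ $S_n \cap L = \emptyset$. This is where the flower of $k+2$ obstruction copies attached at $v_i$ does its work (here one takes $k = \ell$ when building the flowers). Indeed, suppose $v_i \in S_n$, so $G'_n$ contains $\ell+2$ copies $H'_1,\dots,H'_{\ell+2}$ of $H^* - \{u_1^*\}$, each made into a full copy of the obstruction $H^* \notin \C{H}$ by adding $v_i$ and the appropriate edges. If $v_i \in L$ then $v_i$ appears in a unique leaf bag; each copy $H^*_j$ (together with $v_i$) induces a graph not in $\C{H}$, so no bag can contain all of $V(H^*_j) \cap L$; standard treewidth-style arguments (the bags containing a connected subgraph form a subtree, plus the fact that a copy sharing only $v_i$ with the rest must be ``absorbed'' near $v_i$) force each copy to contribute at least one non-base vertex to a bag adjacent to $v_i$'s bag, and a counting argument over the $\ell+2$ copies — which pairwise intersect only in $v_i$ — shows some bag would need more than $\ell+1$ non-base vertices, contradicting width $\ell$. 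Hence $v_i \notin L$. Conversely, and more simply, I would show $V(G) \setminus S_n$ \emph{can} all be placed in $L$: since we did \emph{not} add the flower at $v_i \notin S_n$, we know $(\what{G}_i,\ell)$ was a \textsc{no}-instance, which (by the same obstruction argument applied in reverse) certifies that in the optimal decomposition of $G$, $v_i$ lies in a base component.

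Given these two facts, I would finish by taking an optimal tree $\C{H}$-decomposition $(T,\chi,L)$ of $G'_n$ of width $\ell$, restricting it to $G$ by deleting all attached-obstruction vertices from every bag (this only shrinks bags, so width stays $\leq \ell$, and it remains a valid decomposition of $G$ since $G = G'_n$ minus those vertices), and then observing $L \cap V(G) = V(G) \setminus S_n$: the inclusion $S_n \cap L = \emptyset$ gives $L \cap V(G) \subseteq V(G)\setminus S_n$, and the ``conversely'' direction gives the reverse inclusion, possibly after moving a vertex of $V(G)\setminus S_n$ currently outside $L$ into a base leaf — which is safe because $G[(V(G)\setminus S_n)]$ restricted to each component lies in $\C{H}$ by heredity. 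Thus $(T,\chi,V(G)\setminus S_n)$ is a tree $\C{H}$-decomposition of $G$ of width $\ell = \hhtw(G)$, as desired.

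The main obstacle I anticipate is the flower/counting argument in the second paragraph: one has to be careful that attaching $\ell+2$ obstruction copies at a single vertex genuinely forces that vertex out of every size-$(\ell+1)$ family of base-bag-complements — this requires pinning down exactly how a gadget attached at a cut vertex must be laid out in a tree $\C{H}$-decomposition (it must sit in bags forming a subtree hanging off $v_i$'s subtree), and then a clean pigeonhole over the copies. The heredity-based ``repair'' step (moving stray non-modulator vertices into $L$) is routine but needs the observation that each connected component of $G[V(G)\setminus S_n]$ is in $\C{H}$, which itself follows from the \textsc{no}-instance certifications combined with $\C{H}$ being hereditary and union-closed.
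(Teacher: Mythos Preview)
Your overall plan matches the paper's (very terse) argument: take an optimal $\C{H}$-tree decomposition of $G'_n$, argue that the non-base vertices among $V(G)$ are exactly $S_n$, and restrict to $G$. Your flower argument for $S_n\subseteq V(G'_n)\setminus L$ is essentially right; one sharpening is that the pigeonhole happens in $v_i$'s own leaf bag $\chi(t_0)$, not merely in an adjacent bag. Since $v_i\in L$ forces $v_i\in\chi(t_0)$ uniquely, connectivity of $H^*_j$ together with $H^*_j\notin\C{H}$ yields a non-base vertex of $V(H'_j)$ inside $\chi(t_0)$, and the $\ell{+}2$ copies give $|\chi(t_0)\setminus L|\geq \ell{+}2$.

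The genuine gap is the converse direction. What you need is that in \emph{every} width-$\ell$ decomposition of $G'_n$ (not of $G$), each $v_j\notin S_n$ is a base vertex. The clean argument is: if such a $v_j$ were non-base in some width-$\ell$ decomposition of $G'_n$, then hang $\ell{+}2$ fresh copies of $H'_r\in\C{H}$ as new base leaves off a bag containing $v_j$; this yields a width-$\ell$ decomposition of $G'_n$ with a flower at $v_j$. But $G'_{j-1}$ is an induced subgraph of $G'_n$, so $\what{G}_j$ is an induced subgraph of this flowered graph, whence $\hhtw(\what{G}_j)\leq\ell$, contradicting that $(\what{G}_j,\ell)$ was a \textsc{no}-instance. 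Your sketch instead speaks of ``the optimal decomposition of $G$'' and invokes ``the obstruction argument in reverse'' without this monotonicity/extension step. You then hedge with a repair step (moving stray $v_j\notin S_n$ into $L$) justified by the claim that every component of $G[V(G)\setminus S_n]$ lies in $\C{H}$; but that claim is precisely a consequence of the lemma you are proving, so using it here is circular. With the correct converse argument for $G'_n$, no repair is needed: restriction to $G$ directly gives base set $V(G)\setminus S_n$.
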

\begin{proof}
For each $i \in [n]_0$, the construction of $G'_i$ implies that $G'_i$ must admit an $\C{H}$-tree decomposition $(T'_i,\chi'_i, L'_i)$ of width $\hhtw(G)$, such that $S_i = (V(G'_i)\setminus L'_i) \cap \{v_1,v_2, \cdots, v_i\}$.\footnote{Here we use the fact that $H^*$ is a graph that has smallest number of vertices, which does not belong to $\C{H}$. Thus, deletion of $v_i$ from $G'_i$ would imply that each of the newly attached obstructions at $v_i$ (if any) are intersected.} As $\C{H}$ is a hereditary family of graphs, we can obtain that for each $i \in [n]$ and $\chi_i: V(T'_i) \rightarrow 2^{V(G)}$, where for $t\in V(T'_i)$, $\chi_i(t) = \chi'_i(t) \cap V(G)$, $(T'_i,\chi_i, L'_i \cap V(G))$ is an $\C{H}$-tree decomposition for $G$ of width $\hhtw(G)$, such that $S_i = V(G) \setminus L'_i$. The above in particular implies that, $(T'_n,\chi_n, V(G) \setminus S_n)$ is an $\C{H}$-tree decomposition for $G$ of width $\hhtw(G)$. This concludes the proof. 
\end{proof}

Let $S = S_n$ and $L = V(G) \setminus S$. Let $\wtilde{G}$ be the graph obtained from $G$ with vertex set $V(G)$, by taking a torso with respect to the connected components in $G[L]$. That is, $V(\wtilde{G}) = V(G)$ and for $u,v \in V(\wtilde{G})$, $\{u,v\}$ is an edge in $\wtilde{G}$ if and only if one of the following holds: i) $\{u,v\} \in E(G)$, or ii) there is a connected component $C$ in $G-S$ ($=G[L]$), such that $u,v \in N_G(C)$ and $u \neq v$. We let $\wtilde{G}_S = \wtilde{G}[S]$

We have the following observation regarding $\wtilde{G}$, which follows from its construction and Lemma~\ref{lem:equate-modulator-htd}. 

\begin{observation}\label{obs:compute-H-td} The following properties hold: 
\begin{enumerate}
\item A tuple $(T,\chi,V(G) \setminus S)$ is either an $\C{H}$-tree decomposition for both $G$ and $\wtilde{G}$, or none. Moreover, there is at least one such $\C{H}$-tree decomposition of width $\hhtw(G)$ for $\wtilde{G}$.
\item Treewidth of $\wtilde{G}_S$ is $\hhtw(G)$.
\end{enumerate}  
\end{observation}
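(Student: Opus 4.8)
The plan is to verify the two claimed properties using the construction of $\wtilde{G}$ and Lemma~\ref{lem:equate-modulator-htd}. For the first property, the key point is that the torso operation changes $G$ only by adding edges \emph{inside} the bags that already ``see'' a common component of $G[L]$, and never inside a single base component. First I would argue that if $(T,\chi,V(G)\setminus S)$ is an $\hh$-tree decomposition of $G$, then it is also one of $\wtilde{G}$: conditions \ref{item:tree:h:decomp:connected}, \ref{item:tree:h:decomp:unique}, \ref{item:tree:h:decomp:base} of Definition~\ref{def:tree:h:decomp} depend only on the vertex set and on $L=V(G)\setminus S$, which are unchanged; for condition~\ref{item:tree:h:decomp:base} note that the new edges of $\wtilde{G}$ have both endpoints in $N_G(C)\subseteq S$ for some component $C$ of $G[L]$, hence are disjoint from $L$, so $\wtilde{G}[\chi(t)\cap L]=G[\chi(t)\cap L]\in\hh$; and for condition~2 (every new edge $\{u,v\}$ is covered by a bag), one uses the standard fact that the bags of a tree decomposition containing $N_G(C)$ for a connected component $C$ of $G-S$ form a connected subtree whose union of neighbourhoods in $T$ must contain a bag covering all of $N_G(C)$ — more simply, since $C$ is connected and avoids all bags except in the subtree where its vertices live, $N_G(C)$ lies in a single bag (the bag at the ``top'' of that subtree), so any pair inside $N_G(C)$ is covered. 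Conversely, if $(T,\chi,V(G)\setminus S)$ is an $\hh$-tree decomposition of $\wtilde{G}$, it is one of $G$ because $G$ is a subgraph of $\wtilde{G}$ on the same vertex set, and deleting edges only helps. The ``moreover'' part is then immediate from Lemma~\ref{lem:equate-modulator-htd}, which hands us an $\hh$-tree decomposition $(T'_n,\chi_n,V(G)\setminus S_n)$ of $G$ of width $\hhtw(G)$; applying the equivalence just proved gives such a decomposition of $\wtilde{G}$ of the same width.

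For the second property, I would use the first property together with the observation that $\wtilde{G}_S=\wtilde{G}[S]$ is exactly the torso of $S$ in $G$ (equivalently in $\wtilde G$), and invoke the known identity that for union-closed $\hh$ the $\hh$-treewidth of a graph equals the minimum over modulators $S'$ of the treewidth of the torso of $S'$; here, however, the self-contained route is cleaner. On one hand, given the width-$\hhtw(G)$ $\hh$-tree decomposition $(T,\chi,V(\wtilde{G})\setminus S)$ of $\wtilde{G}$ from property~1, restrict each bag to $S$: this is a tree decomposition of $\wtilde{G}_S$ because every edge of $\wtilde{G}_S$ is an edge of $\wtilde{G}$ (hence covered), and it has width at most $\hhtw(G)$ since $|\chi(t)\cap S|=|\chi(t)\setminus L|\le \hhtw(G)+1$. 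On the other hand, given any tree decomposition of $\wtilde{G}_S$ of width $w$, I would extend it to an $\hh$-tree decomposition of $\wtilde{G}$ (and hence of $G$) of width $w$ by attaching, for each connected component $C$ of $G[L]$, a fresh leaf bag $V(C)$ below a bag of the $\wtilde{G}_S$-decomposition that contains $N_G(C)$ — such a bag exists because $N_G(C)$ induces a clique in $\wtilde{G}_S$ and cliques sit in a single bag. This shows $\mathbf{tw}(\wtilde{G}_S)\le \hhtw(G)$ and $\hhtw(G)\le \mathbf{tw}(\wtilde{G}_S)$, giving equality.

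The main obstacle I anticipate is the careful bookkeeping in the ``single bag contains $N_G(C)$'' claims in both directions: one must be precise that $S_n$ is genuinely a modulator (so that $G[L]$ really is the relevant part, which is where we lean on $H^*$ being a \emph{smallest} obstruction and on Lemma~\ref{lem:equate-modulator-htd}), and that a connected subgraph disjoint from the base vertices forces its neighbourhood into a common bag. Once those two lemmas about tree decompositions are in hand, both items follow by direct, if slightly tedious, verification against Definition~\ref{def:tree:h:decomp}.
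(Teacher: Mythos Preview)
Your approach is correct and is exactly the detailed verification the paper leaves implicit (the paper merely says the observation ``follows from its construction and Lemma~\ref{lem:equate-modulator-htd}'').

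Two small points. First, in the forward direction of property~1 your ``top of the subtree'' argument is more complicated than necessary: in an $\hh$-tree decomposition with base set $L=V(G)\setminus S$, every vertex of $L$ lies in a \emph{unique} leaf bag, so a connected component $C$ of $G[L]$ is forced to sit entirely in one leaf $t_C$ (adjacent $L$-vertices must share a bag, and each has only one), whence $N_G(C)\subseteq\chi(t_C)$ immediately. Second, in your construction for the bound $\hhtw(G)\le\tw(\wtilde G_S)$, the fresh leaf bag must be $N_G(C)\cup V(C)$, not just $V(C)$; otherwise the edges between $C$ and $N_G(C)$ are not covered by any bag. With this fix the width computation still goes through, since $\chi(t_C)\setminus L=N_G(C)$ is a clique in $\wtilde G_S$ and hence fits in a single bag of size at most $w+1$. (This is precisely the construction the paper later carries out in the proof of Lemma~\ref{lem:self-reduce-h-td}.)
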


Due to the above observation, it is now enough to compute a tree decomposition of $\wtilde{G}$ of width $\hhtw(G)$, to obtain an $\C{H}$-tree decomposition for $G$ of width $\hhtw(G)$. We next use the following result, which immediately follows as a corollary from the result of Bodlaender et al.~\cite{Bodlaender96}. 

\begin{proposition}[see,~\cite{Bodlaender96} or Theorem 7.17~\cite{CyganFKLMPPS15}]\label{prop:chordal-tw}
There is an algorithm, which given a graph $G$ on $n$ vertices, in time bounded by ${\sf tw}(G)^{\C{O}({\sf tw}(G)^2)} \cdot n$, computes a tree decomposition of $G$ of width ${\sf tw}(G)$.   
\end{proposition}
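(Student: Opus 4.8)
The plan is to prove the classical linear-time treewidth algorithm of Bodlaender. First I would reduce to the following parameterized task: for each fixed integer $k$, design an algorithm $\SC{A}_k$ that, given $G$, either outputs a tree decomposition of $G$ of width at most $k$ or correctly reports that ${\sf tw}(G)>k$, and runs in time $k^{\C{O}(k^2)}\cdot n$. Running $\SC{A}_0,\SC{A}_1,\SC{A}_2,\dots$ in turn until one succeeds then produces a tree decomposition of optimal width ${\sf tw}(G)$, and since $k^{\C{O}(k^2)}$ grows super-exponentially in $k$ the total time is dominated by the last call, giving the claimed bound ${\sf tw}(G)^{\C{O}({\sf tw}(G)^2)}\cdot n$.

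The core of $\SC{A}_k$ is a size-reducing recursion. I would first dispose of easy cases: reject if $|E(G)|>k|V(G)|$ (graphs of treewidth at most $k$ are $k$-degenerate), and solve by brute force if $|V(G)|$ is below a threshold $N(k)$. For the general case the key combinatorial ingredient is a dichotomy: if ${\sf tw}(G)\le k$ and $|V(G)|>N(k)$, then $G$ contains either (i) an independent set $I$ of $\Omega(|V(G)|/k^{\C{O}(1)})$ simplicial vertices, each of degree at most $k$, or (ii) a matching $M$ with $|M|=\Omega(|V(G)|/k^{\C{O}(1)})$; moreover one of $I,M$ can be found in linear time, and failure of the search certifies ${\sf tw}(G)>k$. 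In case (i) I would recurse on $G-I$, and, given a width-$\le k$ tree decomposition of $G-I$, re-insert each $v\in I$ one at a time by attaching a fresh bag $N_G[v]$ to a bag containing the clique $N_G(v)$, keeping the width at most $k$. In case (ii) I would recurse on the edge-contraction $G/M$ (contracting an edge cannot increase treewidth, and $|V(G/M)|\le(1-1/k^{\C{O}(1)})|V(G)|$) and, given a width-$\le k$ tree decomposition of $G/M$, \emph{uncontract} it by replacing each contracted vertex by the two vertices it came from in every bag, obtaining a tree decomposition of $G$ of width at most $2k+1$.

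What remains after case (ii) is the \emph{improvement step}: upgrading a tree decomposition of width $w=\C{O}(k)$ to one of width ${\sf tw}(G)$, or detecting ${\sf tw}(G)>k$. Here I would invoke the Bodlaender--Kloks dynamic program: make the given decomposition nice (so it has $\C{O}(n)$ nodes), then process it bottom-up, maintaining at each node the set of its \emph{characteristics} --- the equivalence classes, under the typical-sequence equivalence, of partial tree decompositions of width at most $k$ of the processed subgraph that are consistent with the current bag. One shows the number of characteristics at each node is bounded by $k^{\C{O}(k^2)}$ and that the characteristics of a node are computable from those of its children in $k^{\C{O}(k^2)}$ time, so the whole pass costs $k^{\C{O}(k^2)}\cdot n$; an empty characteristic set at the root means ${\sf tw}(G)>k$, and otherwise a trace-back through the characteristics yields a tree decomposition of width exactly $\min({\sf tw}(G),k)\le k$. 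For the overall running time, each recursive level does $\C{O}(n')$ work for the dichotomy and the lifting plus $k^{\C{O}(k^2)}\cdot n'$ for the improvement step, where $n'$ is the current number of vertices; since $n'$ decreases geometrically, the vertex counts across all levels sum to $\C{O}(n\cdot k^{\C{O}(1)})$, so $\SC{A}_k$ runs in $k^{\C{O}(k^2)}\cdot n$ time.

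I expect the improvement step to be the main obstacle: bounding the number of characteristics by a function of $k$ alone --- and hence carrying out the passage from an approximate to an exact tree decomposition within $k^{\C{O}(k^2)}\cdot n$ --- requires the full machinery of typical sequences together with the merging and interchange lemmas of Bodlaender and Kloks, which is genuinely delicate. The second nontrivial point is the dichotomy of the second paragraph, which is proved by a counting argument carried out inside a hypothetical optimal-width tree decomposition of $G$ that the algorithm never actually constructs.
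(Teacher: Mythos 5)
The paper does not actually prove this proposition: it is imported as a black box, with a pointer to Bodlaender's linear-time treewidth algorithm (equivalently, Theorem 7.17 of Cygan et al.), and is used only inside the proof of Lemma~\ref{lem:self-reduce-h-td} to decompose the torso graph $\wtilde{G}_S$. So there is no in-paper argument to compare against; what you have written is a reconstruction of the standard proof of the cited theorem. Your architecture is the right one: the iteration over $k$ with the self-reduction to the ``width at most $k$ or report ${\sf tw}(G)>k$'' task, the edge-count/degeneracy test, the dichotomy between many low-degree simplicial vertices and a large matching, the recursion on $G-I$ (with direct re-insertion) or on $G/M$ (with uncontraction to width $\mathcal{O}(k)$ followed by the Bodlaender--Kloks improvement step), and the geometric-shrinkage running-time analysis are exactly the components of Bodlaender's algorithm, and summing over $k\le {\sf tw}(G)$ does yield a decomposition of width exactly ${\sf tw}(G)$ in the claimed form of running time.

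Three caveats. First, the two ingredients you yourself flag --- the counting dichotomy proved inside a hypothetical optimal decomposition, and the bound on the number of typical-sequence characteristics together with the merge operations --- are precisely where all the difficulty of the theorem lives, and your proposal does not prove either; for the purposes of this paper that is acceptable (the proposition is cited, not proved), but as a standalone proof your text has the same logical status as the citation. Second, a small technical point: degree-$\le k$ simplicial vertices need not form an independent set (two adjacent simplicial vertices are true twins), and independence is needed so that $N_G(v)$ survives as a clique of $G-I$ for the Helly-based re-insertion; you should either extract an independent subset greedily (losing only a factor $k+1$ since each such vertex has degree at most $k$) or handle twin classes explicitly. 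Third, the commonly quoted form of the cited theorem has running time $k^{\mathcal{O}(k^3)}\cdot n$ rather than $k^{\mathcal{O}(k^2)}\cdot n$; establishing the quadratic exponent you (and the proposition as stated here) claim requires bookkeeping inside the characteristic computation that you have not carried out, although for the way the proposition is used in this paper the exact dependence on ${\sf tw}(G)$ is immaterial.
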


We are now ready to prove Lemma~\ref{lem:self-reduce-h-td}. 

\begin{proof}[Proof of Lemma~\ref{lem:self-reduce-h-td}]
Consider a graph $G$ on $n$ vertices. We construct the graph $\wtilde{G}$ (and $\wtilde{G}_S$) as described previously. From Observation~\ref{obs:obstruction} and the constructions of $G'_{i}$ and $S_i$, for $i\in [n]$, implies that $\wtilde{G}$ (and $\wtilde{G}_S$) can be constructed in time bounded by $f(\ell) \cdot g(n) \cdot n^{\C{O}(1)} + h(\C{F})$, where $\ell = \hhtw(G)$. Then using Proposition~\ref{prop:chordal-tw} we can compute a tree decomposition, $(\wtilde{T}_S,\wtilde{\chi}_S: V(\wtilde{T}_S) \rightarrow 2^{V(\wtilde{G}_S)})$ of width $\ell$, for $\wtilde{G}_S$ in time bounded by $\ell^{\C{O}(\ell^2)} \cdot n^{\C{O}(1)}$ (see item 2 of Observation~\ref{obs:compute-H-td}). Recall that $S = V(\wtilde{G}_S)$. 
For each connected component $C$ in $G-S$, the construction of $\wtilde{G}_S$ implies that $N_G(C)$ ($\subseteq S$) induces a clique in $\wtilde{G}_S$. Thus there must exist $t \in V(\wtilde{T}_S)$ such that $N_G(C) \subseteq \wtilde{\chi}_S(t)$. We construct a tree from $\wtilde{T}_S$ and a function $\chi: V(T) \rightarrow 2^{V(G)}$ as follows. Initialize $T = \wtilde{T}_S$ and $\chi = \wtilde{\chi}_S$. For each connected component $C$ in $G-S$, we add a new node $t_C$ and add the edge $\{t_C,t^*_C\}$ to $E(T)$, where $t^*_C$ is an arbitrary selected node (if it exists) in $\wtilde{T}_S$, such that $N_G(C) \subseteq \wtilde{\chi}_S(t^*_C)$.\footnote{If $t^*_C$ does not exist, in particular, when $S = \emptyset$, then we just add the node $t_C$.} Furthermore, we set $\chi(t_C) = N_G(C) \cup V(C)$. The above construction together with Observation~\ref{obs:compute-H-td} implies that $(T,\chi, V(G) \setminus S)$ is an $\C{H}$-tree decomposition for $G$. Note that we can construct $(T,\chi, V(G) \setminus S)$ in time bounded by $\big (f(\ell) \cdot g(n) + \ell^{\C{O}(\ell^2)} \big ) \cdot n^{\C{O}(1)} + h(\C{F})$. This concludes the proof.
\end{proof}

\section{Equivalences Among Deletion, Decomposition and Elimination}\label{sec:equivalencesAmongDeletionDecompositionAndElimination}
The overall schema of proof of the theorem is presented in Figure~\ref{fig:Thm1}. Notice that ones the implications depicted in the figure are obtained, we can conclude the proof of Theorem~\ref{thm:mainEquiv}. We will next discuss the results that are used to obtain the proof, and we begin with a simple observation which directly follows from the fact that $\hhtw(G) \leq \hhdepth(G) \leq \hhmd(G)$.

\begin{figure}[t]
\center
\includegraphics[scale=0.55]{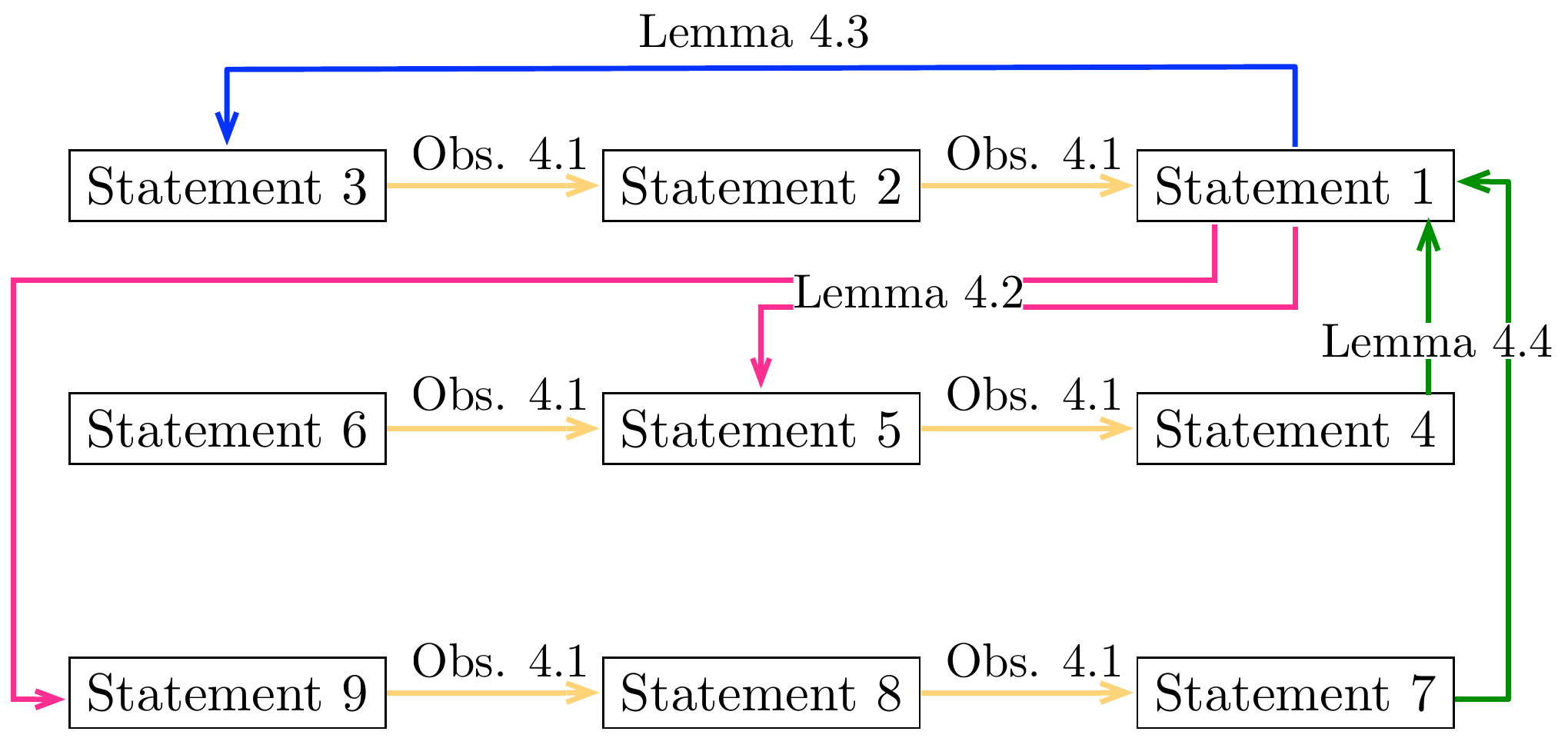}
\caption{Implications used to obtain proof of Theorem~\ref{thm:mainEquiv}.}\label{fig:Thm1}
\end{figure}

\begin{observation}\label{obs:simple-value-implications}
The following implications hold:
\begin{enumerate}
\item Statement 3 $\Rightarrow$ Statement 2 $\Rightarrow$ Statement 1.

\item Statement 6 $\Rightarrow$ Statement 5 $\Rightarrow$ Statement 4.

\item Statement 9 $\Rightarrow$ Statement 8 $\Rightarrow$ Statement 7.
\end{enumerate}
\end{observation}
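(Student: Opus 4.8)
The plan is to derive all three chains of implications from one elementary re\-parameterization fact together with the pointwise inequality $\hhtw(G) \le \hhdepth(G) \le \hhmd(G)$, which holds for every graph $G$ (recall that $\cH$ contains every edgeless graph, so $V(G)$ is always an $\cH$-deletion set; this chain of inequalities is already recorded in the introduction and in Section~\ref{sec:prelims}). The fact I would invoke is standard: if a parameterized graph problem $\Pi$ admits an algorithm running in time $f(p(G)) \cdot |V(G)|^{\cO(1)}$ for a graph parameter $p$ and a function $f$ which, by the convention adopted throughout the paper, may be assumed non-decreasing, and if $p'$ is a graph parameter with $p(G) \le p'(G)$ for every $G$, then the \emph{same} algorithm runs in time $f(p'(G)) \cdot |V(G)|^{\cO(1)}$, so that $\Pi$ is {\FPT} parameterized by $p'$ as well. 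The point is that nothing is transformed or reduced: within each of the three chains the underlying problem is fixed — \probVDH for the first chain (Statements $1$--$3$), \probEDH for the second (Statements $4$--$6$), and \probTDH for the third (Statements $7$--$9$) — and only the parameter is replaced by a pointwise larger one, moving along $\hhtw(G) \le \hhdepth(G) \le \hhmd(G)$.

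Before applying the fact, I would make the inequality $\hhtw(G) \le \hhdepth(G) \le \hhmd(G)$ explicit, or simply cite it. For $\hhdepth(G) \le \hhmd(G)$: given an $\cH$-deletion set $S$ with $|S| = \hhmd(G)$, arrange the vertices of $S$ along a path $t_1 - t_2 - \cdots - t_{|S|}$ with $\chi(t_i) = \{s_i\}$, and attach below $t_{|S|}$ one leaf per connected component $C$ of $G - S$ with $\chi$-value $V(C)$; since $\cH$ is hereditary and $G - S \in \cH$, every $G[V(C)]$ lies in $\cH$, so this is an \hed\ of $G$ in the sense of Definition~\ref{def:H-elim} of depth $|S|$. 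For $\hhtw(G) \le \hhdepth(G)$: from an \hed\ of depth $d$, take the tree decomposition on the same forest (with the roots joined into a single tree) whose bag at a node is the union of the $\chi$-values of that node and all of its ancestors; every such bag has at most $d$ non-base vertices, so the resulting \htd\ has width at most $d - 1 \le d$.

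Then I would apply the re\-parameterization fact three times, with $(p,p') \in \{(\hhtw,\hhdepth),(\hhdepth,\hhmd)\}$: for $\Pi = \probVDH$ this yields Statement~3 $\Rightarrow$ Statement~2 $\Rightarrow$ Statement~1; for $\Pi = \probEDH$ it yields Statement~6 $\Rightarrow$ Statement~5 $\Rightarrow$ Statement~4; and for $\Pi = \probTDH$ it yields Statement~9 $\Rightarrow$ Statement~8 $\Rightarrow$ Statement~7.

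I do not expect any real obstacle: the statement is genuinely a triviality once the three parameters are known to be pointwise ordered, and the only point worth a sentence of care is the convention that running-time functions may be assumed non-decreasing — the same convention already used in the footnotes to Lemmas~\ref{lem:self-reduce-h-ed} and~\ref{lem:self-reduce-h-td}.
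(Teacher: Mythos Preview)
Your proposal is correct and follows exactly the paper's approach: the paper's entire justification for this observation is the single sentence ``directly follows from the fact that $\hhtw(G) \leq \hhdepth(G) \leq \hhmd(G)$,'' and you have spelled this out carefully (including the non-decreasing convention on running-time functions) and even supplied the easy constructions witnessing the two inequalities. Your reading of the numbering --- Statements $1$--$3$ for \probVDH, $4$--$6$ for \probEDH, $7$--$9$ for \probTDH, each triple ordered by $\hhmd,\hhdepth,\hhtw$ --- is the intended one, as confirmed by the later references to ``Statement~1 implies Statement~5 and~9'' (Lemma~\ref{lem:one-implies-5-9}) and ``Statement~4 (resp.\ Statement~7) implies Statement~1'' (Lemma~\ref{lem:4/7-to-1}); the apparent mismatch with the eight items listed in Theorem~\ref{thm:mainEquiv} is because Statement~6 (\probEDH\ parameterized by $\hhtw$) is the one problem omitted from the equivalence, though the implication $6\Rightarrow 5$ is still trivially valid.
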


We prove that Statement 1 implies Statement 5 and 9 in Section~\ref{sec:one-to-five-nine}, by proving the following lemma. 

\begin{lemma}\label{lem:one-implies-5-9}
Consider a family $\C{H}$ of graphs that is CMSO definable and is closed under disjoint union and induced subgraphs. If \probVDH\ parameterized by $\hhmd(G)$ is {\rm \FPT}, then i) \probEDH parameterized by $\hhdepth(G)$ is {\rm \FPT}, and ii) \probTDH parameterized by $\hhtw(G)$ is {\rm \FPT}. 
\end{lemma}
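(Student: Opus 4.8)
The plan is to prove (i) and (ii) in parallel; I describe (i), that \probEDH\ parameterized by $\hhdepth(G)$ is \FPT, in detail, since (ii) follows by the same four steps with the \htd{}-variants of our structural lemmas (and using that, for union-closed $\cH$, $\hhtw(G)$ equals the minimum treewidth of the torso of an $\cH$-deletion set). First I would invoke the \mso machinery to reduce to unbreakable inputs. By Lemma~\ref{lem:cmsoDefinable}, for each fixed target value $k$ the property ``$\hhdepth(G)\le k$'' is expressible by a single \mso sentence $\psi_k$, so Proposition~\ref{prop:CMSOMetaTheorem}, applied to $\psi_k$ with the cut-size parameter chosen to match Lemma~\ref{lem:boundedsep}, reduces the task of deciding ``$\hhdepth(G)\le k$'' on arbitrary $n$-vertex graphs in time $O(n^{d})$ (for a fixed $d>4$) to deciding it, within the same time bound, on $(\ak,k)$-unbreakable graphs only. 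The fact that a separate sentence, and hence a separate algorithm, is used for each $k$ is one source of non-uniformity in the conclusion.

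So assume $G$ is $(\ak,k)$-unbreakable. If $|V(G)|\le 3(\ak+k)$, I would simply enumerate all pairs $(T,\chi)$ as in Definition~\ref{def:H-elim} on the constantly many vertices of $G$ and check the conditions directly, deciding $\cH$-membership of each candidate base component via the assumed algorithm for \probVDH\ (testing whether $\hhmd(\cdot)\le 0$); this costs $f(k)$ time. Otherwise $|V(G)|>3(\ak+k)$, and Lemma~\ref{lem:boundedsep} tells us that \emph{if} $\hhdepth(G)\le k$ then some optimal \hed\ has a base component $C^{*}$ with $|V(G)\setminus V(C^{*})|\le \ak+k$; since distinct leaf bags of an \hed\ are pairwise non-adjacent, $C^{*}$ is an induced subgraph of a single leaf bag, hence $G[V(C^{*})]\in\cH$ by heredity, so $W^{*}:=V(G)\setminus V(C^{*})$ is an $\cH$-deletion set with $|W^{*}|\le \ak+k$. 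Thus it suffices to solve the problem under the promise $\hhmd(G)\le \ak+k$, outputting ``no'' if the promise fails. Running the assumed \FPT\ algorithm for \probVDH\ parameterized by $\hhmd(G)$ for a budget of $h(\ak+k)\cdot n^{O(1)}$ steps and declaring $\hhmd(G)>\ak+k$ if it does not halt, and then applying a standard self-reduction (in the style of Lemma~\ref{lem:self-reduce-h-ed}: repeatedly test whether removing a vertex keeps the optimum within budget), I obtain, in \FPT\ time, either the answer ``$\hhdepth(G)>k$'' or an explicit $\cH$-deletion set $W$ with $|W|\le \ak+k$.

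The last step --- deciding ``$\hhdepth(G)\le k$'' given such a bounded $W$ --- is the crux, and the one I expect to be the main obstacle. Here I would partition the components of $G-W$ by their neighbourhood in $W$: for each $Z\subseteq W$ let $D_Z$ be the disjoint union of all components of $G-W$ whose neighbourhood equals $Z$, viewed as a $|Z|$-boundaried graph under a fixed labelling of $W$; since each such component lies in $\cH$ and $\cH$ is union-closed, $D_Z\in\cH$, and $G$ is recovered by gluing the $D_Z$'s onto $G[W]$. The key claim is: \emph{among $t$-boundaried graphs that lie in $\cH$, the canonical equivalence class for \probEDH\ is a function of the canonical equivalence class for \probVDH}. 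Granting this, \probVDH\ has \fii\ (Lemma~\ref{lem:stronglymonotone}), so there are only finitely many \probEDH-relevant types of boundaried $\cH$-graphs of each boundary size; and since $D_Z\in\cH$ implies $\hhmd(D_Z\oplus G')$ is bounded by a function of $t=|W|$ for every representative $G'$, Lemma~\ref{lem:red2finiteindex} (instantiated for \probVDH\ with $\Psi=\hhmd$ and the assumed algorithm) lets me replace each $D_Z$ by a \probVDH-equivalent --- hence, by the claim, \probEDH-equivalent --- $t$-boundaried graph of size bounded in $t$, all in \FPT\ time, at the cost of a known additive shift of the target value. After all replacements $|V(G)|$ is bounded by a function of $k$ and the answer is found by brute force. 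To establish the claim I would argue that the only information about a boundaried graph $D\in\cH$ with boundary $Z$ that can influence $\hhdepth(D\oplus F)$ is the data $Z'\mapsto\min_{M}\td(\mathrm{torso}_D(M)[Z'])$ over $Z'\subseteq Z$ and over $\cH$-deletion sets $M$ of $D$ --- a bounded amount of information, since each value is at most $|Z'|\le t$ --- and that this data is recoverable from the \probVDH-class of $D$ by probing with suitable gadget graphs $F$; this compatibility argument, which leans on the union-closedness of $\cH$, is the technically substantial part of the proof.

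Finally, for (ii) the same template applies: \probTDH\ is \mso definable (Lemma~\ref{lem:cmsoDefinable}), Lemma~\ref{lem:boundedsep} in its \htd{} form (via Lemma~\ref{lem:uniquelargecc-td}) again produces an $\cH$-deletion set of size bounded in $k$ whenever $\hhtw(G)\le k$, the bounded-modulator instance is shrunk exactly as above with ``treewidth'' in place of ``treedepth'' in the key claim, and brute force finishes; Lemma~\ref{lem:self-reduce-h-td} (resp.\ Lemma~\ref{lem:self-reduce-h-ed}) may be invoked afterwards if one additionally wants to output a decomposition rather than merely decide.
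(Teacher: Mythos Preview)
Your initial steps—reducing to $(\alpha(k),k)$-unbreakable inputs via Proposition~\ref{prop:CMSOMetaTheorem} (using Lemma~\ref{lem:cmsoDefinable}), handling small graphs by brute force, and observing via Lemma~\ref{lem:boundedsep} that a large unbreakable yes-instance must have a bounded $\cH$-deletion set—are correct and mirror the paper's setup. The divergence comes at the crux: having obtained a modulator $W$ of size bounded in $k$, you try to finish by an FII-based replacement of the pieces $D_Z$, while the paper takes a completely different route.

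The paper does not use FII here at all. Instead it formulates an annotated extension problem (\auedtf) that carries, in addition, a set $A^{*}\subseteq V(G)$ of vertices already committed to lie on the root-to-$t^{*}$ path of the sought \hed, where $t^{*}$ is the leaf hosting the unique large base component. It then runs a branching algorithm: with $C^{*}$ the large component of $G-A^{*}$, either (Base Case~2) an arbitrary minimum $\cH$-deletion set $S^{*}$ of $C^{*}$ produced by the assumed \probVDH{} algorithm can be substituted for the unknown path-vertices (union-closedness of $\cH$ makes this swap safe), or (Branching Rule~1) there exists a maximal $(\alpha(k){+}k,k)$-connected set $C\subseteq V(C^{*})$ with $G[N_{C^{*}}[C]]\notin\cH$, and Lemma~\ref{lem:brorbasecase3} shows every such $C$ meets the true root-to-$t^{*}$ path, so one branches over $N_{C^{*}}[C]$ to grow $A^{*}$. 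Depth and fan-out are bounded in $k$, and at the leaves a brute force over the at most $\alpha(k)+k$ vertices outside $C^{*}$ finishes. The \probTDH{} case follows the same template with $A^{*}$ playing the role of the non-base part of the bag containing the large base component.

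Your alternative hinges on the claim that, among $t$-boundaried graphs lying in $\cH$, the canonical $\equiv_\Pi$-class for $\Pi=\probEDH$ is determined by the $\equiv_\Pi$-class for $\Pi=\probVDH$. You yourself flag this as the ``technically substantial part'' and only sketch it; I do not see the sketch going through. First, your proposed summary data ``$Z'\mapsto\min_{M}\td(\mathrm{torso}_D(M)[Z'])$ over $\cH$-deletion sets $M$'' is degenerate: since $D\in\cH$, the choice $M=\emptyset$ is always admissible and the minimum is identically $0$, so this data distinguishes nothing. Second, the information actually needed for \probEDH-equivalence of a boundaried $D\in\cH$ is of the flavour of the state-tuples in Section~7 (which boundary labels are deleted, how the remaining labels are partitioned into base leaves, and for each block the $\equiv_{\cH}$-type of the induced piece); there is no argument that this is recoverable from the \probVDH-class ``by probing with gadgets''—\probVDH{} equivalence fixes a single transposition constant for modulator size, not the connectivity/partition structure among boundary vertices that governs the treedepth of a torso. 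Without this claim, Lemma~\ref{lem:red2finiteindex} (which you correctly invoke for $\Pi=\probVDH$) yields a \probVDH-equivalent replacement, not an \probEDH-equivalent one, and your final brute force on the shrunk graph need not answer the original question.

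In short, there is a genuine gap at your key claim, and the paper avoids it entirely by a direct branching argument rather than any replacement machinery.
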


Intuitively speaking, we obtain the proof of the above lemma as follows. Suppose that \probVDH, parameterized by $\hhmd(G)$, admits an {\FPT} algorithm, say, $\SC{M}_{\sf mod}$.\footnote{For ease in readability, as much as possible, we will use the letters $\SC{M}, \SC{E}$ and $\SC{T}$ for algorithms for the problems \probVDH, \probEDH and \probTDH, respectively. Moreover, the subscripts ${\sf mod}$, ${\sf elm}$ and ${\sf tw}$ will denote the parameterizations $\hhmd(G)$, $\hhdepth(G)$ and $\hhtw(G)$, respectively. We note that we aren't fixing such algorithms, but whenever a need to assume/obtain such algorithms arises, we will be using the above letters/subscripts.} 
We will intuitively explain how we obtain an {\FPT} algorithm for \probEDH parameterized by  $\hhdepth(G)$, using $\SC{M}_{\sf mod}$. Consider an \hed\ $(T,\chi,L)$ of depth at most $k$ for $G$ (if it exists). From Lemma~\ref{lem:boundedsep}, either the number of vertices in $G$ is bounded by $3(\ak+k)$, in which we can resolve the instance by a brute-force procedure, or $G[L]$ has exactly one large connected component, denoted by $D^*$, and $V(G) \setminus V(D^*)$ has size bounded by $\ak+k$. Let $t$ be the parent (if it exists) of $t^*$, where $t^*$ is the leaf containing $V(D^*)$. Roughly speaking, we will try to determine the large component $D^*$ completely, and then resolve the remaining instance. To this end, we will maintain a subset, $A^* \subseteq V(G)$, which will also be the subset of vertices from $G$ that are associated with the root-to-$t$ path in $T$, and thus, we will always have $|S| \leq k$. We will look at the unique large connected component $C^*$ in $G-A^*$ (see Observation~\ref{obs:boundedsep}), and try to fix it as much as possible in the following sense. We will (roughly speaking) show that either an arbitrary solution for $(C^*,k)$ as an instance for \probVDH, obtained using the assumed algorithm $\SC{M}_{\sf mod}$, is enough for us to completely determine $D^*$, or we will be able to find a small connected set contained in $C^*$, with small neighborhood containing an obstruction to $\C{H}$. In the latter case, we will further be able to show that any such maximal connected set (with bounded neighborhood) must have a non-empty intersection with the set of vertices in $G$ associated with the root-to-$t$ path in $T$. Thus, we will either be able to ``grow'' our set $A^*$ (upto size at most $k$), or resolve the instance by brute-force. The above will give us an algorithm as required by the lemma. We note that the algorithm for the case of \probTDH parameterized by $\hhtw(G)$ can be obtained in a very similar fashion, but for this case, we will maintain that the set $S$ is the subset of vertices present in the bag of \htd, that contains all the vertices in the large component in $G[L]$. 

In Section~\ref{sec:one-implies-three} we show that Statement 1 implies Statement 3, assuming that Lemma~\ref{lem:one-implies-5-9} holds, by proving the following result.  

\begin{lemma}\label{lem:three-implies-one}
Consider a family $\C{H}$ of graphs that is CMSO definable and is closed under disjoint union and induced subgraphs. If \probVDH\ parameterized by $\hhmd(G)$ is {\rm \FPT}, then the problem is also {\rm \FPT} when parameterized by $\hhtw(G)$. 
\end{lemma}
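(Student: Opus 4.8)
The plan is to first compute an optimal tree $\hh$-decomposition of $G$, then use the finite integer index of \probVDH\ to contract every base component to bounded size, thereby reducing the computation of $\hhmd(G)$ to an instance of bounded treewidth that is handled by the optimization version of Courcelle's theorem; comparing the resulting value with the target $\ell$ then decides the instance.

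First I would invoke Lemma~\ref{lem:one-implies-5-9}(ii): since \probVDH\ parameterized by $\hhmd(G)$ is \FPT, so is \probTDH\ parameterized by $\hhtw(G)$, i.e.\ we have a decision oracle for ``$\hhtw(G)\le k$'' running in time $f(k)\cdot g(n)$. Feeding this oracle into Lemma~\ref{lem:self-reduce-h-td} produces, for the input graph $G$, an actual \htd\ $(T,\chi,L)$ of optimal width $k=\hhtw(G)$, in \FPT\ time. Two standard facts about this decomposition are used repeatedly: (i) each base component $C$ (a connected component of $G[L]$) is contained in a single leaf bag $\chi(t)$, satisfies $C\in\hh$ (because $G[\chi(t)\cap L]\in\hh$ and $\hh$ is hereditary), and has $N_G(C)\subseteq \chi(t)\setminus L$, so $|N_G(C)|\le k+1$; and (ii) replacing each bag $\chi(t)$ by $\chi(t)\setminus L$ yields a tree decomposition of $G-L$ of width $\le k$, and $N_G(C)$ lies inside one of its bags.

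Next I would shrink each base component using \fii. View \probVDH\ as the positive monotone parameterized problem $\Pi$ with $\mathrm{Thr}(G',\Pi)=\hhmd(G')$; by Lemma~\ref{lem:stronglymonotone} it has \fii. Regard each base component $C$ as a $(k+1)$-boundaried graph with boundary $N_G(C)$, and apply Lemma~\ref{lem:red2finiteindex} with $\Psi=\hhmd$ and the assumed \FPT-in-$\hhmd$ algorithm $\SC{M}_{\sf mod}$ as the subroutine. This requires running $\SC{M}_{\sf mod}$ only on graphs of the form $C\oplus C'$ with $C'$ ranging over the fixed, bounded-size representative set $\cS_{k+1}$, and evaluating $\iota(C)=\max_{C'\in\cS_{k+1}}\mathrm{Thr}(C\oplus C',\Pi)$. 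The crucial observation is that $\hhmd(C\oplus C')\le |V(C')|=O_k(1)$: deleting from $C\oplus C'$ all vertices originating in $C'$ leaves an induced subgraph of $C\in\hh$, which lies in $\hh$ by heredity. Hence every call to $\SC{M}_{\sf mod}$ is on an instance whose parameter $\hhmd$ is bounded by a function of $k$, so each call, and $\iota(C)$, costs only \FPT\ time; Lemma~\ref{lem:red2finiteindex} then returns in \FPT\ time a $(k+1)$-boundaried graph $C^\star$ with $C\equiv_{\Pi}C^\star$, $|V(C^\star)|=O_k(1)$, and its translation constant $c_C$. Replacing every base component $C$ by $C^\star$ (order irrelevant, since $\equiv_\Pi$ is a congruence on boundaried graphs) gives a graph $\tilde G$ with $\hhmd(G)=\hhmd(\tilde G)+C^\dagger$ for a constant $C^\dagger=\sum_C c_C$ that the procedure computes along the way.

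Finally, $\tilde G$ has treewidth $O_k(1)$ and an explicit tree decomposition of that width: starting from the width-$\le k$ tree decomposition of $G-L$ from fact (ii), for each base component $C$ one adds $V(C^\star)$ to a bag already containing $N_G(C)$ and hangs below it a tree decomposition of $C^\star$ of width $|V(C^\star)|=O_k(1)$. Since $\hh$ is \mso-definable, the property ``$\tilde G - S\in\hh$'' is expressible in \mso, so by the optimization version of Courcelle's theorem $\hhmd(\tilde G)$ can be computed in time $h(k)\cdot n^{O(1)}$; then $\hhmd(G)=\hhmd(\tilde G)+C^\dagger$, and we output ``yes'' iff this is at most $\ell$. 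The overall running time is \FPT\ in $k=\hhtw(G)$; as in the rest of the paper the algorithm is non-uniform because Lemma~\ref{lem:red2finiteindex} hardwires the representative sets. The main obstacle is the bookkeeping of the replacement step: one must verify that every boundaried graph handed to $\SC{M}_{\sf mod}$ has $\hhmd$ bounded purely in terms of $k$ (precisely where membership of base components in $\hh$ is used), that all boundaries remain of size $\le k+1$, and that the translation constants compose additively so that $\hhmd(G)$ is recovered exactly from $\hhmd(\tilde G)$.
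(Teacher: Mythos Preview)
Your proposal is correct and follows essentially the same approach as the paper: compute an optimal \htd\ via Lemmas~\ref{lem:one-implies-5-9} and~\ref{lem:self-reduce-h-td}, replace each large base piece by a bounded-size representative using Lemma~\ref{lem:red2finiteindex} (relying on \fii\ from Lemma~\ref{lem:stronglymonotone}), and finish on the resulting bounded-treewidth graph via Courcelle's theorem. You actually make explicit a point the paper glosses over, namely that $\hhmd(C\oplus C')\le |V(C')|=O_k(1)$ because $C\in\hh$ and $\hh$ is hereditary, which is exactly what is needed to bound $\mu(C)$ and hence the running time of each invocation of $\SC{M}_{\sf mod}$ inside Lemma~\ref{lem:red2finiteindex}.
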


Intuitively speaking, we obtain a proof of the above lemma as follows. Suppose that \probVDH\ parameterized by $\hhmd(G)$ admits an {\FPT} algorithm, say, $\SC{M}_{\sf mod}$. Consider an instance $(G,k)$ of the problem \probVDH. From Lemma~\ref{lem:one-implies-5-9}, we can obtain that \probTDH parameterized by $\hhtw(G)$ has an {\FPT} algorithm, say, $\SC{T}_{\sf tw}$. Using the algorithm $\SC{T}_{\sf tw}$ and Lemma~\ref{lem:self-reduce-h-td}, we compute an \htd\ for $G$. For each leaf $t$ in $T$, where the graph $G[\chi(t)\cap L]$ has large number of vertices, we replace $G[\chi(t)\cap L]$ by another graph, using Lemma~\ref{lem:red2finiteindex}, still maintaining equivalence without increasing the parameter. After this, we are obtain to bound the (standard) treewidth of the graph, and resolve the instance using Courcelle's Theorem~\cite{Courcelle90}. The above gives us an {\FPT} algorithm for \probVDH, when parameterized by $\hhtw$. 

In Section~\ref{sec:4/7-to-1} we prove that Statement 4 (resp.~Statement 7) implies Statement 1, by proving the following lemma. 

\begin{lemma}\label{lem:4/7-to-1}
Consider a family $\C{H}$ of graphs that is {\msotwo} definable and is closed under disjoint union and induced subgraphs. If 
\probEDH (resp. \probTDH) parameterized by $\hhmd(G)$ is {\rm \FPT}, then \probVDH\ parameterized by $\hhmd(G)$ is also {\rm \FPT}.
\end{lemma}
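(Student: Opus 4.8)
The plan is to show that an {\FPT} algorithm for \probEDH{} (or \probTDH) parameterized by $\hhmd(G)$ can be bootstrapped into an {\FPT} algorithm for \probVDH{} parameterized by $\hhmd(G)$, by exploiting the fact that these three quantities agree up to additive/multiplicative constants on \emph{unbreakable} graphs, and then invoking the {\msotwo}-to-unbreakable meta-theorem (Proposition~\ref{prop:CMSOMetaTheorem}). First I would observe that all three problems in question --- \probVDH, \probEDH, and \probTDH, each parameterized by $\hhmd(G)$ --- are {\msotwo}-definable as parameterized problems: for \probVDH{} this is standard (guess the modulator, check membership in $\hh$ by the given {\msotwo} sentence), and for \probEDH{}/\probTDH{} this is exactly Lemma~\ref{lem:cmsoDefinable}. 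Hence by Proposition~\ref{prop:CMSOMetaTheorem} it suffices to produce an algorithm for \probVDH{} (parameterized by $\hhmd$) that runs in polynomial time \emph{on $(\alpha(k),k)$-unbreakable graphs}, given the hypothesized {\FPT} algorithm $\SC{E}_{\sf mod}$ (resp.\ $\SC{T}_{\sf mod}$) for \probEDH{} (resp.\ \probTDH) on general graphs.

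The heart of the argument is then the following: on an $(\ak,k)$-unbreakable graph $G$ with more than $3(\ak+k)$ vertices, having a modulator of size at most $k$ is ``almost the same'' as having small $\hh$-elimination distance / $\hh$-treewidth. Concretely, I would run the assumed algorithm to compute $\hhdepth(G)$ (resp.\ $\hhtw(G)$) --- which is at most $k$ whenever $\hhmd(G)\le k$ --- and fix a corresponding \hed{} (resp.\ \htd) $(T,\chi,L)$ of depth (resp.\ width) at most $k$ (using the self-reduction Lemmas~\ref{lem:self-reduce-h-ed} and \ref{lem:self-reduce-h-td} to obtain the decomposition, not merely the value). By Lemma~\ref{lem:boundedsep}, either $|V(G)|\le 3(\ak+k)$ --- in which case we brute-force over all vertex subsets of size $\le k$ and test membership in $\hh$, which is polynomial since $k$ and $\ak$ are constants relative to $n$ --- or there is a unique connected component $C^\star$ of $G[L]$ with $|V(C^\star)|\ge\ak$ and $|V(G)\setminus V(C^\star)|\le \ak+k$. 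In the latter case, the complement $W:=V(G)\setminus V(C^\star)$ is a vertex set of \emph{constant} size that contains $N_G(C^\star)$ and all the non-base vertices. Now a minimum $\hh$-deletion set of $G$ of size $\le k$, if one exists, can be found by: guessing its intersection $X_W$ with $W$ (constantly many choices), and then finding a minimum $\hh$-deletion set of $G - X_W$ that avoids $W\setminus X_W$; since $C^\star = G[V(C^\star)]\in\hh$ and $\hh$ is hereditary, and $W\setminus X_W$ already has no neighbours into $C^\star$ except through $X_W$'s former position, the remaining graph lies in $\hh$ iff a bounded local condition around $W$ holds --- so the whole test is polynomial. (One must be slightly careful: deleting $X_W$ may disconnect $C^\star$, but each piece is still an induced subgraph of $C^\star\in\hh$, hence in $\hh$, and union-closure of $\hh$ gives that $G-(X_W\cup(W\setminus X_W))=G-W \cup \ldots$ handled correctly; the key point is that the ``hard'' part of the graph has constant size.)

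The main obstacle I anticipate is \textbf{getting the decomposition, not just the decision}, from the hypothesized oracle, and making the reduction genuinely polynomial (not {\FPT}) in the unbreakable case --- the self-reduction lemmas give running times of the form $f(\ell)\cdot g(n)\cdot n^{\Oh(1)}$, which is polynomial in $n$ precisely because on unbreakable graphs $\ell=\hhdepth(G)\le k$ and $k$ is treated as a constant by the meta-theorem (the function $\alpha$ in Proposition~\ref{prop:CMSOMetaTheorem} is what supplies the unbreakability parameter, and $d>4$ must be chosen to absorb the $n^{\Oh(1)}$ overhead). A secondary subtlety is ensuring that the {\msotwo}-definability hypothesis for \probVDH{} parameterized by $\hhmd$ is legitimately of the form required by Proposition~\ref{prop:CMSOMetaTheorem} (a fixed sentence $\psi$ with the parameter being the solution size $k$) --- this is routine since ``$\exists S,\ |S|\le k\ \wedge\ \psi_{\hh}(G-S)$'' is expressible once $\psi_{\hh}$ is, and $|S|\le k$ is handled by the parameterization. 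Finally, the \htd{} case needs the observation that Lemma~\ref{lem:uniquelargecc-td} requires $k\ge 2$; the cases $k\in\{0,1\}$ are trivial (they force $\hh$ or near-trivial structure) and handled separately. Modulo these points, combining the unbreakable-case polynomial algorithm with Proposition~\ref{prop:CMSOMetaTheorem} yields the general {\FPT} algorithm for \probVDH, completing the proof.
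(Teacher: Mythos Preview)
Your high-level strategy is the same as the paper's: invoke Proposition~\ref{prop:CMSOMetaTheorem} to reduce to $(\alpha(k),k)$-unbreakable graphs, compute an \hed{} (resp.\ \htd) via Lemma~\ref{lem:self-reduce-h-ed} (resp.\ Lemma~\ref{lem:self-reduce-h-td}), apply Lemma~\ref{lem:boundedsep} to locate the unique large base component $C^\star$, and exploit that $W=V(G)\setminus V(C^\star)$ has size at most $\ak+k$. The divergence is in the final step, and there your argument has a genuine gap.

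You propose to guess $X_W=S\cap W$ and then claim that the residual problem---finding a deletion set inside $C^\star$ for $G-X_W$---is a ``bounded local condition around $W$''. This is not justified and, in fact, false in general. First, your assertion that ``$W\setminus X_W$ already has no neighbours into $C^\star$ except through $X_W$'s former position'' is simply wrong: $W\setminus X_W$ may still contain vertices of $S^\star=N_G(C^\star)$. Second, and more seriously, a minimum modulator may genuinely need vertices from $C^\star$, and there is no reason the remaining search is local. Concretely, take $\hh=\text{bipartite}$, $k=2$, a large $2$-connected bipartite $C^\star$, two vertices $s_1,s_2$ each adjacent to $C^\star$ so that the two induced odd cycles $O_1,O_2$ share a single vertex $c\in V(C^\star)$, and a triangle $z_1z_2z_3$ with no edges to $C^\star$. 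Every size-$2$ solution must contain $c$ (to hit both $O_1$ and $O_2$ simultaneously) plus one $z_i$. Now the self-reduction of Lemma~\ref{lem:self-reduce-h-ed} may legitimately pick $s_1$ as the first elimination vertex and $s_2$ as the second (both are valid choices), yielding a decomposition whose large base component is exactly $C^\star$ and hence $W=\{s_1,s_2,z_1,z_2,z_3\}$. For this $W$ there is \emph{no} $X_W\subseteq W$ of size $\le 2$ with $G-X_W$ bipartite, yet $(G,2)$ is a yes-instance. So enumerate-and-check over $W$ fails, and you still owe an algorithm for the residual deletion problem inside the large $C^\star$---which is precisely \probVDH{} again.

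The paper closes this gap with a different mechanism. It does \emph{not} enumerate subsets of $W$; instead it proves the key structural fact (Observation~\ref{obs:branch-Z-star}) that for any solution $S$, either $S\cap Z^\star\neq\emptyset$ or $G-S^\star\in\hh$ (because $G-S^\star=C^\star\uplus G[Z^\star]$, the first summand is in $\hh$, and the second is an induced subgraph of $G-S$ when $S\cap Z^\star=\emptyset$). This yields a two-way split: either $S^\star$ itself (of size $\le k$) is already a solution, or one may \emph{branch} on a vertex of $Z^\star$ and recurse with budget $k-1$. Depth $k$ and branching factor $|Z^\star|\le\ak+k$ give the desired bound. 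The branching-on-$Z^\star$ insight is exactly the missing ingredient in your proposal.
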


Roughly speaking, the above lemma is proved as follows. Consider an instance $(G,k)$ of \probVDH. From Proposition~\ref{prop:CMSOMetaTheorem}, it is enough for us to focus in $(\ak,k)$-unbreakable graphs, and thus we assume that $G$ is such a graph. If $G$ has at most $2\ak + k$ vertices, we resolve the instance by trying all possible subsets. Otherwise, using an assumed {\FPT} algorithm $\SC{X}_{\sf mod}$ for \probEDH (resp. \probTDH) parameterized by $\hhmd(G)$, we compute an \hed\ (resp. \htd), say, $(T,\chi,L)$. Now using Observation~\ref{obs:boundedsep} we will to able to conclude that: i) there is exactly one connected component $C^*$ in $G[L]$ which has more than $\ak$ vertices, ii) $S^* = N_G(C^*)$ has size at most $k$, and iii) $Z^* = V(G) \setminus V(C^*)$ has size at most $\ak+k$. Using the above, we are either able branch on vertices of $Z^*$, or conclude that $S^*$ is already a solution for the given \probVDH\ instance. The above gives us an algorithm for \probVDH, when parameterized by $\hhmd(G)$, using the assumed {\FPT} algorithm for \probEDH (resp. \probTDH) parameterized by $\hhmd(G)$. 


	\subsection{Proof of Lemma~\ref{lem:one-implies-5-9}}\label{sec:one-to-five-nine} 
The objective of this section is to prove Lemma \ref{lem:one-implies-5-9}. We will present the result for \probEDH, and later comment how we can adapt exactly the same idea for \probTDH. Fix any family $\C{H}$ of graphs that is CMSO definable and is hereditary, such that \probVDH\ admits an {\FPT} algorithm, say, $\SC{M}_{\sf mod}$, which given an instance $(G',\ell)$, where $G'$ is an $n$ vertex graph, correctly resolves the instance in time bounded by $f_{\sf mod}(\ell)\cdot n^{\C{O}(1)}$.

Let $(G,k)$ is an instance of the problem \probEDH. From Proposition \ref{prop:CMSOMetaTheorem}, it is enough for us to design an algorithm for  $(\ak,k)$-unbreakable graphs, and thus, we assume that $G$ is  $(\ak,k)$-unbreakable. If $G$ has at most $3(ak+k)$ vertices, then we can resolve the instance in FPT time, by brute force. Thus, the interesting case is when $G$ has more than $3(ak+k)$ vertices and it is \akku. We will begin by defining an extension version of the problem for (large) unbreakable graphs, called \auedtffull\ (\auedtf, for short), which will lie at the heart of our {\FPT} algorithm for \probEDH\ (using $\SC{M}_{\sf mod}$ as a subroutine). Roughly speaking, the problem \auedtf\ will (recursively) try to compute (some of) the vertices that will be mapped to the root-to-leaf path leading the the large connected component (see Lemma~\ref{lem:boundedsep}), which will be enough for us to identify the large connected component in a decomposition. Once we have the above set, we will be able to determine the unique large connected component in the final decomposition, and then solve the remainder of the problem using brute force as the number of vertices outside the large connected component can be bounded by a function of $k$. We would like to remark that, an intuitive level the above illustrates how the deletion and the elimination problems coincide.

\smallskip
\noindent{\bf (Problem Definition)} {\auedtffull\ (\auedtf)}\\ 
\noindent{\em Input: } A graph $G$, an integer $k$, a set $A^* \subseteq V(G)$ of size at most $k$, such that $G$ is an $\scbrksize$-unbreakable graph and $|V(G)| > \grsize$.\\
\noindent{\em Question: } Test if there is an \hed\ $(T,\chi,L)$, where $D^*$ is the unique connected component in $G[L]$ of size at least $\ak$ (see Lemma~\ref{lem:boundedsep}) and $t^*\in V(T)$ is the vertex with $V(D^*)\subseteq \chi(t^*)$, such that the following holds: 
\begin{enumerate}
\item $L \cap A^* = \emptyset$.
\item For each $v \in N_G(D^*) \cup A^*$, there is (unique non-leaf vertex) $t \in V(T) \setminus \{t^*\}$, such that $\chi(t) =\{v\}$ and $t$ is an ancestor of $t^*$ in $T$.  
\end{enumerate} 

In the above, we say that $(T,\chi,L)$ is a {\em solution} to the \auedtf\ instance $(G,k,A^*)$.

The objective of the remainder of this section is to prove the following lemma. 

\begin{lemma}\label{lem:an_un}
Equipped with the algorithm $\SC{M}_{\sf mod}$, we can design an algorithm for 
\auedtf, that given an instance $(G,k,A^*)$, where $G$ is a graph on $n$ vertices, correctly decides whether or not it is a yes-instance of the problem in time bounded by $(f_{\sf mod}(k) + {(\ak+k)}^{\C{O}(\ak+k)}) \cdot n^{\C{O}(1)}$. 
\end{lemma}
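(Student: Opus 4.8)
The plan is to build a recursive algorithm $\SC{B}$ for \auedtf\ that, on input $(G,k,A^*)$, either reduces to a bounded-size brute-force computation or recurses on a number of instances $(G,k,A^*\cup\{v\})$ that is bounded by a function of $k$; since $|A^*|\le k$ always holds, the recursion depth is at most $k$. First I would apply Observation~\ref{obs:boundedsep}: as $G$ is $\scbrksize$-unbreakable, $|V(G)|>\grsize$, and $|A^*|\le k$, there is a unique connected component $C^*$ of $G-A^*$ with at least $\ak$ vertices, and moreover $N_G(C^*)\subseteq A^*$ and $|V(G)\setminus V(C^*)|<\ak+k$. In any solution $(T,\chi,L)$ the large base component $D^*$ is connected, has at least $\ak$ vertices, and meets $A^*$ in nothing (condition~1), hence $V(D^*)\subseteq V(C^*)$; and because the solution has depth at most $k$, Lemma~\ref{lem:boundedsep} gives $|V(G)\setminus V(D^*)|\le\ak+k$. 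This last point is the key leverage: once $V(D^*)$ is guessed correctly, everything outside $V(D^*)$ lives on at most $\ak+k$ vertices, and the rest of the decomposition --- a rooted forest on $V(G)\setminus V(D^*)$ with an extra leaf $t^*$ satisfying $\chi(t^*)=V(D^*)$ hung below a guessed node --- can be found and verified against Definition~\ref{def:H-elim} together with conditions 1--2 of \auedtf\ by brute force in time $(\ak+k)^{\C{O}(\ak+k)}\cdot n^{\C{O}(1)}$. Thus the entire problem reduces to pinning down $V(D^*)$.

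The core of the recursion is then a dichotomy obtained by calling $\SC{M}_{\sf mod}$ on $C^*$. On the one hand, I would show that an arbitrary modulator of $C^*$ returned by $\SC{M}_{\sf mod}$ already constrains $V(D^*)$ to a set of candidates whose number is bounded by a function of $k$: informally, $D^*$ is itself a connected induced subgraph of $C^*$ lying in $\hh$, and an exchange argument exploiting that $\hh$ is hereditary and closed under disjoint union lets us replace the solution's $D^*$ by the large connected component of $G$ minus $A^*$ and a suitable subset of that modulator; we then run the brute-force completion above on each candidate. On the other hand, when $\SC{M}_{\sf mod}$ (or its analysis) instead exposes that $C^*$ is ``far'' from $\hh$, I would extract a connected set $W\subseteq V(C^*)$ of bounded size whose bounded-size neighborhood carries an obstruction to $\hh$, chosen maximal among connected sets with bounded neighborhood. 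The structural claim to establish here --- that any such maximal $W$ must contain a vertex of the root-to-$t^*$ path of the (unknown) solution --- rests on the facts that, outside $D^*$, all of $G$ is small (Lemma~\ref{lem:boundedsep}) and that the rigid singleton/ancestor structure forced by conditions 1--2 prevents a bounded-neighborhood obstruction from hiding entirely inside $D^*$ or inside a single off-path subtree. Granting the claim, we recurse on $(G,k,A^*\cup\{v\})$ for every $v\in W$.

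Correctness would follow by induction on $k-|A^*|$: if $(G,k,A^*)$ is a yes-instance, then in the first horn of the dichotomy one of the boundedly many candidate sets is a valid $V(D^*)$ and the completion succeeds, while in the second horn some $v\in W$ lies on the solution's path, so $(G,k,A^*\cup\{v\})$ is again a yes-instance with a strictly larger annotation; conversely the algorithm answers yes only after explicitly exhibiting a valid decomposition. For the running time, the recursion tree has depth at most $k$ and branching bounded by a function of $k$, hence $(\ak+k)^{\C{O}(\ak+k)}$ leaves; each node performs one $\SC{M}_{\sf mod}$-call (cost $f_{\sf mod}(k)\cdot n^{\C{O}(1)}$), computes $C^*$ in polynomial time, and runs boundedly many brute-force completions (cost $(\ak+k)^{\C{O}(\ak+k)}\cdot n^{\C{O}(1)}$ each), so the total is $\bigl(f_{\sf mod}(k)+(\ak+k)^{\C{O}(\ak+k)}\bigr)\cdot n^{\C{O}(1)}$.

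The step I expect to be the real obstacle is precisely this dichotomy and, within it, the two exchange/structural arguments. The modulator returned by $\SC{M}_{\sf mod}$ is produced by a black box and a priori has nothing to do with the hidden solution, so showing that it nevertheless confines $V(D^*)$ to few options requires carefully uncrossing it against the separator $N_G(D^*)$ while preserving that the large base component stays in $\hh$ and respects conditions 1--2. Symmetrically, in the ``far from $\hh$'' case one must argue that a small, bounded-neighborhood obstruction cannot be ``absorbed'' into $D^*$ or into an off-path subtree without contradicting the bounds of Lemma~\ref{lem:boundedsep} and the forced ancestor structure; this is where the unbreakability of $G$ and the $\ak+k$ bound on $V(G)\setminus V(D^*)$ do the essential work, and keeping the branching a function of $k$ rather than of $n$ is the most delicate part of the bookkeeping.
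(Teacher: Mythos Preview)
Your plan is essentially the paper's: recurse on $k-|A^*|$, isolate the unique large component $C^*$ of $G-A^*$, try to certify a yes via an arbitrary minimum modulator of $C^*$, and otherwise branch on a small connected obstruction set. The paper implements the ``first horn'' exactly as you sketch---call $\SC{M}_{\sf mod}$ on $(C^*,k-|A^*|)$ to obtain a minimum $S^*$, then for every $Z\subseteq V(G)\setminus(V(C^*)\cup A^*)$ brute-force a forest on $A^*\cup S^*\cup Z$ with one extra leaf for $V(C^*)\setminus S^*$ (this is the $(\ak+k)^{\OO(\ak+k)}$ step). The exchange argument you allude to is precisely the $S_{\sf obs}=\emptyset$ case of the paper's structural lemma: if no small obstruction-carrying piece exists inside $C^*$, one can swap the solution's internal deletions in $C^*$ for $S^*$ and keep a valid \hed.

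There is one concrete gap. Your structural claim is that the maximal connected set $W$ itself must meet the root-to-$t^*$ path; in fact only its \emph{closed neighbourhood} is guaranteed to. Take $W\subseteq V(D^*)$ with $|W|=\ak+k$ and a vertex $s\in S\cap N_{C^*}(W)$ such that $G[N_{C^*}[W]]\notin\hh$ only because of $s$: then $W$ is a maximal $(\ak+k,k)$-connected set with the obstruction property, yet $W\cap S=\emptyset$. The paper proves $S\cap N_{C^*}[W]\neq\emptyset$ (not $S\cap W\neq\emptyset$) by the dichotomy: if $W\subseteq V(D^*)$ and $N_{C^*}[W]\subseteq V(D^*)$ then $G[N_{C^*}[W]]\in\hh$ by heredity, contradiction; otherwise $W$ lies in some small component of $C^*-(S\cap V(C^*))$, and maximality together with connectivity of $C^*$ forces a neighbour in $S$. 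So you must branch on $v\in N_{C^*}[W]$ (still $\le \ak+2k$ vertices), not just $v\in W$. With that correction your argument goes through. For the running time you should also note that the sets $W$ are found by the Fomin--Villanger enumeration of maximal $(\ak+k,k)$-connected sets in $C^*$ in $2^{\OO(\ak+k)}\cdot n^{\OO(1)}$ time.
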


We prove the above lemma by exhibiting such an algorithm for \auedtf. Let $(G,k, A^*)$ be an instance of \auedtf. As $|V(G)| \geq 3(\ak+k)$, from Observation~\ref{obs:boundedsep}, $G-A^*$ has a unique connected component of size at least $\ak$, we denote that connected component by $C^*$. Note that from the observation we also have $|V(G) \setminus V(C^*)| < \ak +k$. 

We design a branching algorithm for the problem, and we will use $k - |A^*|$ as the measure to analyse the running time of our algorithm. We start with some simple cases, when we can directly resolve the given instance. (The time required for the execution of our base cases and branching rules will be provided in the runtime analysis of the algorithm.)


%




Notice that \probEDH\ is a special case of \aedh, namely, when $A^* =\emptyset$.  Thus to prove Lemma~\ref{lem:one-implies-5-9}, it is sufficient to prove the more general Lemma~\ref{lem:an_un}. 
Next, we give description of our branching algorithm for the \aedh problem. 

\smallskip
\noindent{\bf Base Case 1:} If $k-|A|= 0$ and $C^* \notin \C{H}$, then return that the instance has no solution.
\smallskip

Notice that if the instance admits a solution, say, $(T,\chi,L)$, where $D^*$ is the unique large connected component in $G[L]$, then $V(D^*) \subseteq V(C^*)$ (see Observation~\ref{obs:boundedsep}). All the vertices in $N_G(D^*) \cup A^*$ must be mapped to non-leaf vertices in a single root-to-leaf path. In Base Case 1, as $C^* \notin \C{H}$, some neighbor of $D^*$ in $G$ outside $A^*$ must also be mapped to (a non-leaf in) the same root-to-leaf path. Thus the correctness of Base Case 1 follows.

Intuitively speaking, our next base case deals with the case when an arbitrary deletion set for $C^*$ of size at most $k$ can be added to $A^*$ to make the resulting $C^*$ belong to $\C{H}$. We will later prove that if our branching rules (to be described later) and Base Case 1 is not applicable, then indeed we can obtain a solution to our instance using this base case. Before formally describing our next base case, we state a simple observation and introduce some notations.

The following observation uses the well-known self-reducibility property of NP-complete problem, using which we can find a solution with the help of a decision oracle. 

\begin{observation}\label{obs:basecase3}
Given an instance $(G^*,\ell)$ of \probVDH, using  the {\FPT} algorithm $\SC{M}_{\sf mod}$ of \probVDH, in time $f_{\sf mod}(\ell)\cdot n^{\C{O}(1)}$ we can obtain a minimum sized set $\what{S}\subseteq V(G^*)$ of size at most $\ell$ such that $G^*-\what{S}\in \C{H}$  or return that no such set exists.
\end{observation}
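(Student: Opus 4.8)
The statement is a routine application of the standard self-reducibility technique for NP-type (more precisely, minimization) problems, using the decision oracle $\SC{M}_{\sf mod}$ for \probVDH. First I would use $\SC{M}_{\sf mod}$ to determine the optimal value: run $\SC{M}_{\sf mod}$ on $(G^*, j)$ for $j = 0, 1, \ldots, \ell$ in increasing order, and let $s$ be the smallest value for which the answer is ``yes''; if no such $j \le \ell$ exists, return that no deletion set of size at most $\ell$ exists. Since $\hhmd$ is monotone under adding a vertex to the allowed budget, this correctly identifies $s = \hhmd(G^*)$ whenever $s \le \ell$, at a cost of $O(\ell)$ oracle calls, i.e. $f_{\sf mod}(\ell)\cdot n^{\C{O}(1)}$ time in total.

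Next I would recover an actual optimal deletion set $\what S$ by the usual iterative self-reduction. Maintain a partial solution $\what S$ (initially empty) and a ``remaining budget'' $s - |\what S|$. For each vertex $v \in V(G^*)$ not yet decided, test with $\SC{M}_{\sf mod}$ whether $(G^* - (\what S \cup \{v\}), s - |\what S| - 1)$ is a ``yes''-instance; equivalently, whether $\hhmd(G^* - \what S - v) \le s - |\what S| - 1$. If so, $v$ lies in \emph{some} optimal extension of $\what S$, so add $v$ to $\what S$; otherwise $v$ lies outside some optimal extension, so we may permanently discard $v$ (delete it from the graph we query, i.e. keep it in $G^* - \what S$ and never add it). The key invariant, proved by a one-line induction, is that after processing each vertex there is still an optimal-size $\hh$-deletion set of the current graph that contains $\what S$ and avoids all discarded vertices; when all vertices are processed, $\what S$ itself must be such a set, so $G^* - \what S \in \hh$ and $|\what S| = s \le \ell$. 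Here I rely on $\hh$ being hereditary only implicitly through the fact that $\probVDH$ is well-defined; the argument is purely about the monotone parameter $\hhmd$.

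For the running time: the self-reduction makes $O(n)$ oracle calls, each on an induced subgraph of $G^*$ with parameter at most $\ell$, so each call costs $f_{\sf mod}(\ell)\cdot n^{\C{O}(1)}$ by the assumed \FPT{} bound (using that $f_{\sf mod}$ is non-decreasing), giving a total of $f_{\sf mod}(\ell)\cdot n^{\C{O}(1)}$ as claimed. I do not anticipate any real obstacle here — the only mild point to state carefully is the correctness invariant of the greedy self-reduction and the monotonicity of $\hhmd$ that justifies both the value-finding phase and the per-vertex decision; everything else is bookkeeping, and the bound absorbs the polynomial overhead of the $O(n)$ calls.
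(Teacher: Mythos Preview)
Your proposal is correct and matches the paper's approach: the paper does not give a proof but merely remarks that the observation ``uses the well-known self-reducibility property of NP-complete problem[s], using which we can find a solution with the help of a decision oracle.'' Your write-up is precisely a careful spelling-out of that standard self-reduction, and the running-time accounting (at most $O(n+\ell)$ oracle calls, each with parameter at most $\ell$, absorbed into $f_{\sf mod}(\ell)\cdot n^{\C{O}(1)}$) is exactly what is needed.
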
 

We define a boolean variable ${\sf bool}$ as follows. If $(C^*,k-|A|)$ is a no-instance of \probVDH, then we set ${\sf bool} = 0$. Otherwise, let $S^*$ be a minimum sized set computed using Observation~\ref{obs:basecase3}. If for any subset $Z \subseteq V(G) \setminus (V(C^*) \cup A^*)$, there is a solution $(T,\chi,V(G) \setminus (A^* \cup S^* \cup Z))$, for the instance $(G,k,A^*\cup S^*)$, then set ${\sf bool} = 1$, and otherwise, set ${\sf bool} = 0$. By Cayley's Theorem \cite{cayley1889theorem}, the number of distinct labelled forest on $q$ vertices is bounded by $q^{\C{O}(q)}$. Using the above together with Observation~\ref{obs:basecase3} and the fact that $|V(G) \setminus (V(C^*) \cup A^*)| < \ak + k$ (see Observation~\ref{obs:boundedsep}), allows us to construct ${\sf bool}$ in time bounded by $(f_{\sf mod}(k) + {(\ak+k)}^{\C{O}(\ak+k)}) \cdot n^{\C{O}(1)}$. We are now ready to state our next base case.

\smallskip
\noindent{\bf Base Case 2:} If ${\sf bool} = 1$, then return that $(G,k,A^*)$ is a yes-instance of the problem.  

The correctness of the above base case follows from the fact that whenever ${\sf bool}$ is set to $1$, then we do have a solution for the instance $(G,k,A^*)$. 


Roughly speaking, we will be able to say that, if we have a yes-instance, then for any solution $(T,\chi,L)$ for the instance, when our base cases are not applicable, there is a small connected subset $X^* \subseteq V(C^*)$, with bounded neighborhood containing an obstruction, that must be separated from the large connected component in $G[L]$ (see Lemma~\ref{lem:boundedsep}). Moreover, as $X^*$ is contained in the large connected component $C^*$, we will be able to guarantee that at least one vertex in the closed neighborhood of $X^*$ must be mapped to some node in the root-to-leaf path of the node to which vertices in $D^*$ are mapped. Before discussing further, we introduce some notations. 

For a graph $\what{G}$ and integers $p,q\in \mathbb{N}$, a set $B \subseteq V(\what{G})$ is a {\em $(p,q)$-connected set} in $\what{G}$, if $\what{G}[B]$ is connected, $|B| \leq p$ and $|N_{\what{G}}(B)|\leq q$. We say that a $(p,q)$-connected set $B$ in $\what{G}$ is {\em maximal} if there does not exist another $(p,q)$-connected set $B^*$ in $\what{G}$, such that $B \subset B^*$.  Below we state a result regarding computation of maximal $(p,q)$-connected sets in a graph, which directly follows from Lemma 3.1 of Fomin and Villanger~\cite{DBLP:journals/combinatorica/FominV12}.  

\begin{proposition}\label{lem:sep}
For a graph $\what{G}$ on $n$ vertices and integers $p,q \in [n]$, the number of $(p,q)$-connected sets in $\what{G}$ is bounded by $2^{p+q} \cdot n$. Moreover, the set of all maximal $(p,q)$-connected sets in $\what{G}$ can be computed in time bounded by $2^{p+q}\cdot n^{\C{O}(1)}$. 
\end{proposition}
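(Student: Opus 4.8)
The plan is to derive both parts from the standard Bollob\'as-type bound on connected vertex sets with a bounded neighbourhood, which is exactly Lemma~3.1 of Fomin and Villanger~\cite{DBLP:journals/combinatorica/FominV12}. The combinatorial heart is the following fact: for any graph $\what{G}$, any vertex $v$, and any integers $a,b \ge 0$, the number of sets $B \subseteq V(\what{G})$ with $\what{G}[B]$ connected, $v \in B$, $|B| = a+1$, and $|N_{\what{G}}(B)| = b$ is at most $\binom{a+b}{a}$. I would recall its short proof: process the vertices of $\what{G}$ in the order in which they are first encountered by a BFS/DFS exploration started at $v$; any eligible $B$ is then encoded by a binary string recording, among the $a+b$ relevant steps, the $a$ decisions of the form ``this vertex joins $B$'' and the $b$ decisions ``this vertex joins $N_{\what{G}}(B)$'', and there are $\binom{a+b}{a}$ such strings.

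Granting this, the counting claim follows by summing over the choice of a root $v \in B$ (each unrooted set is counted at least once, so this is an over-count), over $a \le p-1$, and over $b \le q$:
\[
\#\{(p,q)\text{-connected sets in }\what{G}\}\ \le\ \sum_{v\in V(\what{G})}\ \sum_{a\le p-1}\ \sum_{b\le q}\binom{a+b}{a}\ \le\ n\cdot\sum_{m=0}^{p+q-1}\ \sum_{a=0}^{m}\binom{m}{a}\ =\ n\cdot\sum_{m=0}^{p+q-1}2^m\ <\ 2^{p+q}\cdot n,
\]
where the middle inequality re-indexes by $m=a+b$ and extends the range of $a$, and the last equality uses $\sum_{a=0}^m\binom{m}{a}=2^m$.

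For the enumeration I would use a branching procedure. Maintain a pair $(B,D)$ where $\what{G}[B]$ is connected and $D\subseteq N_{\what{G}}(B)$ is a set of vertices ``committed to the boundary'', starting from $(\{v\},\emptyset)$ for every root $v\in V(\what{G})$. If $|B|>p$ or $|N_{\what{G}}(B)|>q$, abandon this branch. Otherwise, if $N_{\what{G}}(B)\setminus D=\emptyset$, output $B$; else pick any $u\in N_{\what{G}}(B)\setminus D$ and branch into (i) $(B\cup\{u\},D)$ and (ii) $(B,D\cup\{u\})$. The potential $(p-|B|)+(q-|D|)$ is non-negative on every surviving node and strictly decreases in both branches, so the search tree rooted at a fixed $v$ has at most $2^{p+q}$ leaves; with $n$ roots and $n^{\mathcal{O}(1)}$ work per node, this enumerates every $(p,q)$-connected set (possibly with repetitions) in time $2^{p+q}\cdot n^{\mathcal{O}(1)}$. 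The maximal $(p,q)$-connected sets form a subset of this output, so they can be extracted by deduplicating and then discarding every enumerated set that is properly contained in another; this post-processing is polynomial in the number of enumerated sets, hence $2^{\mathcal{O}(p+q)}\cdot n^{\mathcal{O}(1)}$, which suffices for all uses of the proposition and matches the cited bound up to the constant in the exponent (Fomin and Villanger obtain the tighter constant by pruning the branching directly).

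The main obstacle is really the combinatorial lemma itself — both the $\binom{a+b}{a}$ inequality and its exact conversion into a $2^{p+q}$ rather than $2^{\mathcal{O}(p+q)}$ bound; once it is available, the branching enumeration and the maximality filter are routine. Since the paper only needs this statement as a black box and cites~\cite[Lemma~3.1]{DBLP:journals/combinatorica/FominV12} for it, I would present the summation and the branching/filter arguments above and defer the combinatorial core to that reference.
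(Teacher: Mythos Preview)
The paper does not give its own proof of this proposition: it is stated as a black-box result that ``directly follows from Lemma~3.1 of Fomin and Villanger~\cite{DBLP:journals/combinatorica/FominV12}'', with no further argument. Your proposal goes further than the paper does, sketching both the $\binom{a+b}{a}$ encoding argument and the branching enumeration, which is exactly the content of the cited lemma; so in that sense your approach and the paper's are identical (both defer to Fomin--Villanger), only you spell out what the reference contains.

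One small point: your post-processing step for extracting maximal sets by pairwise containment yields $2^{\mathcal{O}(p+q)}\cdot n^{\mathcal{O}(1)}$ rather than the stated $2^{p+q}\cdot n^{\mathcal{O}(1)}$, as you yourself note. This is easily tightened: instead of comparing all pairs, for each enumerated set $B$ simply check, for every $u\in N_{\what{G}}(B)$, whether $B\cup\{u\}$ is still a $(p,q)$-connected set; $B$ is maximal iff no such $u$ works. This costs $n^{\mathcal{O}(1)}$ per enumerated set and recovers the $2^{p+q}\cdot n^{\mathcal{O}(1)}$ bound exactly.
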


We enumerate the set of all maximal $(\ak,k)$-connected sets in $C^*$, and let $\C{C}_{\sf conn}$ be the set containing all maximal $(\ak+k,k)$-connected sets $Q \subseteq V(C^*)$, such that $G[N_{C^*}[Q]]$ does not belong to $\C{H}$. Note that using $\SC{M}_{\sf mod}$ and Proposition~\ref{lem:sep}, we can construct $\C{C}_{\sf conn}$ in time bounded by $f_{\sf mod}(0) \cdot 2^{\C{O}(\ak + k)} \cdot n^{\C{O}(1)}$. We are now ready to state our branching rule. 

\smallskip
\noindent{\bf Branching Rule 1:} If there is some $C \in \C{C}_{\sf con}$, then for each $v \in N_G[C]$ ($=N_G[C]\setminus A^*$), solve (recursively) the instance $(G,k, A^* \cup \{v\})$. Moreover, return yes, if and only if one such instance is a yes-instance of the problem.

The next lemma lies at the crux of our algorithm, which will help us establish that one of the base cases or branching rule must be applicable. For the following lemma, we suppose that $(G,A,k)$ has a solution and Base Case 1 and 2 are not applicable, and we consider a solution $(T,\chi,L)$ for the instance, which maximizes $|V(G) \setminus L|$. As $G$ is $(\ak,k)$-unbreakable, from Lemma~\ref{lem:boundedsep} we can obtain that there is exactly one connected component, say, $D^*$, in $G[L]$, that has at least $\ak$ vertices, and we let $t^*\in V(T)$ such that $\chi(t^*)\subseteq V(D^*)$. We define let the set $S \subseteq V(G)\setminus L$ contain all the vertices mapped to the root-to-$t^*$ path in $T$, that is $S = \{v \in V(G) \mid \chi(t) = \{v\}, t\neq t^*,t \text{ is an ancestor of }t^* \mbox{ in } T\}$.

\begin{lemma}\label{lem:brorbasecase3}
We have $\C{C}_{\sf conn} \neq \emptyset$, and for each $C \in \C{C}_{\sf conn}$, we have $S \cap N_{C^*}[C] \neq \emptyset$.   
\end{lemma}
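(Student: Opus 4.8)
The plan is to prove both assertions by contradiction, working throughout with the chosen solution $(T,\chi,L)$, its unique large base component $D^*$ (which exists by Lemma~\ref{lem:boundedsep}), the node $t^*$ with $V(D^*)\subseteq\chi(t^*)$, and the root-to-$t^*$ ``spine'' $S$. First I would record the structural facts that drive everything. From the definition of a solution to \auedtf, $N_G(D^*)\cup A^*\subseteq S$ and $|S|\le k$; since $D^*$ is a large connected subgraph of $G-A^*$, Observation~\ref{obs:boundedsep} gives $D^*\subseteq C^*$; since $N_G(D^*)\subseteq S$, the set $D^*$ is a connected component of $C^*-S$; and applying Lemma~\ref{lem:boundedsep} to $(T,\chi,L)$ yields $|V(C^*)\setminus V(D^*)|\le \ak+k$, whence $|S\cap V(C^*)|\le|S|-|A^*|\le k-|A^*|$ (as $A^*\subseteq S$ is disjoint from $V(C^*)$). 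I would also record two elementary facts about $(\ak+k,k)$-connected sets: every such set extends to a maximal one, and $Q\subseteq Q'$ implies $N_{C^*}[Q]\subseteq N_{C^*}[Q']$; consequently $\C{C}_{\sf conn}=\emptyset$ is equivalent to ``$G[N_{C^*}[Q]]\in\C{H}$ for every $(\ak+k,k)$-connected $Q\subseteq V(C^*)$''.

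For the second assertion, suppose $S\cap N_{C^*}[C]=\emptyset$ for some $C\in\C{C}_{\sf conn}$; then $C\cap S=\emptyset$ and $N_{C^*}(C)\cap S=\emptyset$, so letting $M$ be the connected component of $C^*-S$ containing $C$ we get $N_{C^*}[C]\subseteq M$, hence $G[M]\notin\C{H}$ and in particular $M\ne D^*$ (because $G[D^*]\in\C{H}$). Then $M\subseteq V(C^*)\setminus V(D^*)$ has at most $\ak+k$ vertices, $N_{C^*}(M)\subseteq S$ has at most $k$ vertices, and $M$ is connected, so $M$ is a $(\ak+k,k)$-connected set containing $C$; maximality of $C$ forces $C=M$. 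Then $N_{C^*}(C)=N_{C^*}(M)\subseteq S$, which together with $N_{C^*}(C)\cap S=\emptyset$ gives $N_{C^*}(C)=\emptyset$, so $C=C^*$, so $C^*-S=C^*$, so $N_{C^*}(D^*)\subseteq S\cap V(C^*)=\emptyset$, so $D^*=C^*=C=M$ — contradicting $M\ne D^*$.

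For the first assertion, suppose $\C{C}_{\sf conn}=\emptyset$. Then every connected component of $C^*-(S\cap V(C^*))=C^*-S$ lies in $\C{H}$: the large one is $D^*$, a base component; any other component $M'$ is, by the facts above, a $(\ak+k,k)$-connected set, so $G[N_{C^*}[M']]\in\C{H}$ and hence $G[M']\in\C{H}$. Union-closedness of $\C{H}$ then gives $G[C^*-S]\in\C{H}$, so $S\cap V(C^*)$ is an $\C{H}$-deletion set of $C^*$ of size at most $k-|A^*|$; thus $(C^*,k-|A^*|)$ is a yes-instance, so the minimum deletion set $S^*$ used in the definition of ${\sf bool}$ satisfies $|A^*\cup S^*|\le k$. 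Applying Observation~\ref{obs:boundedsep} to $A^*\cup S^*$ identifies the unique large component $\tilde C$ of $G-(A^*\cup S^*)$; it lies inside $C^*$, hence inside $C^*-S^*$, so $G[\tilde C]\in\C{H}$. It then remains to exhibit $Z\subseteq V(G)\setminus(V(C^*)\cup A^*)$ and a solution for $(G,k,A^*\cup S^*)$ with base vertices $V(G)\setminus(A^*\cup S^*\cup Z)$: take $\tilde C$ as the designated large base component with $A^*\cup S^*$ on its spine (legal since $N_G(\tilde C)\subseteq A^*\cup S^*$ and $|A^*\cup S^*|\le k$), attach each remaining small component of $C^*-S^*$ as a base leaf, and, to handle $V(G)\setminus(V(C^*)\cup A^*)$, graft the part of the original forest $(T,\chi)$ that is disjoint from $C^*$ onto the spine. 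Since that part interacts with $C^*$ only through $A^*$, which can be kept on the spine at depths no larger than its depths in $(T,\chi)$, depth $\le k$ is preserved, and Observation~\ref{obs:boundedsep} confirms $\tilde C$ is still the unique large component of the new base graph. This makes ${\sf bool}=1$, so Base Case~2 would apply — contradiction.

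The delicate step is this last one: converting $S^*$ (an arbitrary minimum deletion set, possibly different from $S\cap V(C^*)$, possibly even meeting $D^*$) into a bona fide depth-$k$ solution of the required ``$A^*\cup S^*\cup Z$'' shape. The main care is in re-laying the spine so that the $A^*$-vertices keep depths at most their original ones — fill the spine slots vacated by $C^*$-vertices with the vertices of $S^*$ (there are enough, since $|S^*|\le|S\setminus A^*|$) and contract any leftover empty slots — since it is this that lets the grafting of the outside-$C^*$ part respect the depth bound; the rest is bookkeeping, using $G$'s unbreakability once more to rule out a second large base component.
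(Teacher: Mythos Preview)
Your argument is correct and follows essentially the same route as the paper. Both proofs reduce the first assertion to showing that if every component of $C^*-S$ lies in $\C{H}$ then Base Case~2 applies, and both verify the second assertion by observing that a maximal $(\ak+k,k)$-connected set $C$ disjoint from $S$ would be trapped inside a single component of $C^*-S$, contradicting either maximality or $G[N_{C^*}[C]]\notin\C{H}$. The paper organizes this via an auxiliary case split on whether $S_{\sf obs}=\emptyset$, whereas you argue each assertion directly; your presentation is arguably cleaner. One small imprecision: in the spine-refilling step you cite $|S^*|\le |S\setminus A^*|$, but what you actually need (and what holds, since $S\cap V(C^*)$ is itself a deletion set for $C^*$) is the sharper bound $|S^*|\le |S\cap V(C^*)|$, so that the vacated $C^*$-slots alone suffice.
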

\begin{proof}
As $C^*$ is the (unique) connected component in $G-A^*$ that has at least $\ak$ vertices, thus, we can obtain that $V(D^*) \subseteq V(C^*)$. Let $Y$ be the set of vertices in $C^*$ that do not belong to the set $S \cup V(D^*)$, i.e., $Y = V(C^*) \setminus (V(D^*) \cup S)$. Let $\C{Y}_{\sf obs}$ be the set of connected components in $G[Y]$ that do not belong to $\C{H}$. We let $S_{\sf obs} \subseteq S \cap V(C^*)$ be the set of vertices in $S \cap V(C^*)$ that are neighbors of some connected components in $\C{Y}_{\sf obs}$, i.e., $S_{\sf obs} = \{v \in S \cap V(C^*) \mid v \in N_G(C) \mbox{ for some } C \in \C{Y}_{\sf obs}\}$. Furthermore, let $S_{\sf rem} = (S \cap V(C^*)) \setminus S_{\sf obs}$. Notice that $V(D^*), S_{\sf rem}, S_{\sf obs}, Y$ is a partition of $V(C^*)$ (possibly with empty-sets).



Firstly consider the case when $S_{\sf obs} =\emptyset$, and we will argue that Base Case 2 must be applicable, which will lead us to a contradiction. Note that for the above case, $S_{\sf rem} = S \cap V(C^*)$, where $|S| \leq k$, and each connected component in $C^* - S_{\sf rem} = C^* - S$ must belong to $\C{H}$. As $(T,\chi,L)$ is a solution for the instance, we must have $A^* \subseteq S$. From the above two statements we can obtain that $(C^*, k-|A^*|)$ is a yes-instance of \probVDH. Let $S^* \subseteq V(C^*)$ be the (same) minimum sized set computed using Observation~\ref{obs:basecase3}, while computation of ${\sf bool}$, such that each connected component in $C^*-S^*$ is in $\C{H}$. Note that $|S^*| \leq |S_{\sf rem}| = |S \cap V(C^*)|$ (recall that $S_{\sf obs} = \emptyset$). Let $Z = V(G) \setminus (L \cup A^*) \subseteq V(G) \setminus (V(C^*) \cup A^*)$. We will argue that $(G,k, A^*)$ admits a solution $(T,\chi', V(G) \setminus (A^* \cup S^* \cup Z))$, for an appropriately constructed $\chi'$, in which case ${\sf bool} =1$, contradicting that Base Case 2 is not applicable. Set $\chi(t^*) = \chi(t^*) \cup (V(C^*) \setminus S^*)$ and for each $t \in V(T)\setminus \{t^*\}$, such that $t \cap L \neq \emptyset$, we set $\chi'(t) = \chi(t)\setminus V(C^*)$. Let $S_{\sf rem} = \{u_1,u_2,\cdots, u_p\}$ and $S^* = \{w_1,w_2,\cdots, w_q\}$, where note that $q \leq p$. For each $i \in [q]$, we set $\chi'(t_i) = \{w_i\}$, where $t_i$ is the vertex in $T$, such that $\chi(t_i) = \{u_i\}$. Notice that, as $C^*-S$ is in $\C{H}$ and $\C{H}$ is closed under disjoint union, we can obtain that for each $t \in V(T)$, $G[\chi'(t)]$ is in $\C{H}$. By the construction of $\chi'$, note that $(T,\chi', V(G) \setminus (A^* \cup S^* \cup Z))$ satisfies item 1 to 3 of Definition~\ref{def:H-elim}. Thus we can conclude that $(T,\chi', V(G) \setminus (A^* \cup S^* \cup Z))$ is a solution for $(G,k,A^*)$. 

We will next consider the case when $S_{\sf obs} \neq \emptyset$. As $|V(G) \setminus V(D^*)| < \ak + k$ (see Observation~\ref{lem:boundedsep}), we can obtain that $|Y|$, and thus, the number of vertices in each connected component in $\C{Y}_{\sf obs}$ is less than $\ak +k$. Notice that for each $C \in \C{Y}_{\sf obs}$, $|N_G(C)| \leq |S|\leq k$, and thus, $C$ is an $(\ak+k,k)$-connected set in $G$. Thus, for each $C \in \C{Y}_{\sf obs}$, there must exists some $C' \in \C{C}_{\sf conn}$, such that $V(C) \subseteq V(C')$. The above together with the assumption $S_{\sf obs} \neq \emptyset$ (and thus, $\C{Y}_{\sf obs} \neq \emptyset$) we can obtain that $\C{C}_{\sf conn} \neq \emptyset$. By construction, for each $C \in \C{C}_{\sf conn}$, $V(C) \subseteq V(C^*)$ and $N_{C^*}[C]$ is not in $\C{H}$. Moreover, as $C^*$ is connected and $C$ is a maximal $(\ak+k,k)$-connected set in $C^*$, we can obtain that $N_{C^*}[C] \cap S \neq \emptyset$. This concludes the proof. 
%
%
%
%
\end{proof}

We are now ready to prove Lemma~\ref{lem:an_un}. 

\begin{proof}[Proof of Lemma \ref{lem:an_un}.]
From Proposition \ref{prop:CMSOMetaTheorem}, it is enough for us to design an algorithm for  $(\ak,k)$-unbreakable graphs, therefore we design algorithm for case when the input graph is  $(\ak,k)$-unbreakable.
Consider an instance $(G,A^*,k)$ of \auedtf, where $G$ is an $(\ak,k)$-unbreakable graph. If the instance can be resolved using Base Case 1 or 2, then the algorithm resolve it. Otherwise, from Lemma~\ref{lem:brorbasecase3} we know that the branching rule must be applicable. By our previous discussion, the we can test/apply our base cases in time bounded by $(f_{\sf mod}(k) + {(\ak+k)}^{\C{O}(\ak+k)}) \cdot n^{\C{O}(1)}$. Recall that for each set in $\C{C}_{\sf conn}$, we have $|N_{C^*}[C]|\leq \ak+ 2k$, and $\C{C}_{\sf conn}$ can be constructed in time bounded by $f_{\sf mod}(0) \cdot 2^{\C{O}(\ak + k)} \cdot n^{\C{O}(1)}$. Also, the depth of the recursion tree can be bounded by $k+1$ (see Base Case 1). Thus, the number of nodes in the recursion tree can be bounded by ${(\ak+2k)}^{\C{O}(\ak+2k)})$. Thus we can bound the running time of our algorithm for \auedtf\ by $(f_{\sf mod}(k) + f_{\sf mod}(0)) \cdot (\ak+k)^{\C{O}(\ak+k)} \cdot n^{\C{O}(1)}$. This concludes the proof.
\end{proof}

We will end this section with a remark regarding how we can obtain an {\FPT} algorithm for \probTDH parameterized by $\hhtw(G)$, using $\SC{M}_{\sf mod}$. Consider an instance $(G,k)$ of \probTDH, and an \htd, $(T,\chi,L)$ for it, if it exists. For the above case, we will define our extension version with respect to $A^*\subseteq V(G)$ of size at most $k$, where we would want the unique large connected component in $G[L]$ to have all its neighbors in $A^*$. Notice that all our argument will work exactly with the above minor modification. Thus, using exactly the same ideas as we presented for the case of \probEDH, we will be able to obtain an {\FPT} algorithm for \probTDH parameterized by $\hhtw(G)$, as required by the lemma.

	\subsection{Prove of Lemma~\ref{lem:three-implies-one}}\label{sec:one-implies-three}
Fix any family $\C{H}$ of graphs that is \mso\ definable and is closed under disjoint union and taking induced subgraphs, such that \probVDH\ admits an FPT algorithm, say, $\SC{M}_{\sf mod}$, when parameterized by $\hhmd(G)$, running in time $f_{\sf mod}(k) \cdot n^{\C{O}(1)}$. From Lemma~\ref{lem:one-implies-5-9} (see Section~\ref{sec:one-to-five-nine}), we can obtain that \probTDH\ admits an FPT algorithm, say, $\SC{T}_{\sf tw}$, when parameterized by $\hhtw(G)$, running in time $f_{\sf tw}(k) \cdot n^{\C{O}(1)}$. We will design an FPT algorithm, $\SC{M}_{\sf tw}$, for \probVDH, parameterized by $\hhtw(G)$, using $\SC{M}_{\sf mod}$ and $\SC{T}_{\sf tw}$ as subroutines. 



%

Let $\ell = \hhtw(G)$. By instantiating Lemma~\ref{lem:self-reduce-h-td} with the algorithm $\SC{T}_{\sf tw}$, we can compute an \htd, $(T, \chi, L)$ of $\C{H}$-treewidth exactly $\ell$, for $G$ in time bounded by $\big (f_{\sf tw}(\ell) + \ell^{\C{O}(\ell^2)} \big ) n^{\C{O}(1)}+ h(\C{F})$. We let $S = V(G) \setminus L$. We will assume that the constants $\xi_i$, for each $i \in [\ell+1]$ is hardcoded in the algorithm, and due to this and Lemma~\ref{lem:red2finiteindex} our algorithm will be non-uniform. If we are able to bound the size of each connected component in $G[L]$ by $\sum_{i \in [\ell+1]} \xi_i$, where $\xi_i$ is the number from Lemma~\ref{lem:red2finiteindex}, then notice that the treewidth of $G$ can be bounded by $\ell + \sum_{i \in [\ell+1]} \xi_i$. As \probVDH\ is \mso\ expressible, for the case when $G$ has bounded treewidth, we can check whether $(G,k)$ is a yes-instance of \probVDH\ using Courcelle's Theorem~\cite{Courcelle90} in time bounded by $f(\ell)\cdot n^{\C{O}(1)}$, where $f$ is some computable function. Thus (roughly speaking) our next objective will be to bound the size of $\chi^{-1}(t)$, for each $t \in V(T)$ by replacements using Lemma~\ref{lem:red2finiteindex}. Let $\wtilde{A}$ be the set of nodes in $T$ whose bag contain at least $\sum_{i \in [\ell+1]} \xi_i$ vertices from $L$, i.e., $\wtilde{A} = \{t \in V(T) \mid |\chi(t) \cap L| \geq \sum_{i \in [\ell+1]} \xi_i\}$. Furthermore, let $\wtilde{\C{G}}$ be the set of graphs induced by vertices in $L$, in each of the bags of nodes in $\wtilde{A}$, i.e., $\wtilde{\C{G}} = \{G[\chi(t) \cap L] \mid t \in \wtilde{A}\}$. We let $\wtilde{\C{G}} = \{\wtilde{G}_1, \wtilde{G}_2, \cdots, \wtilde{G}_{q}\}$. 

We create a sequence of $G_0, G_1, \cdots, G_{q}$ graphs and a sequence of constant $c_0,c_1,\cdots, c_{q}$ as follows. Intuitively speaking, we will obtain the above sequence of graph by replacing $\wtilde{G}_i$, for $i \in [q]$, by some graph obtained using Lemma~\ref{lem:red2finiteindex}. Set $G_0 = G$ and $c_0 = 0$. We iteratively compute $G_i$, for each $i \in [q]$ (in increasing order) as follows. For the graph $\wtilde{G}_i$, let $\wtilde{t}_i$ be the unique leaf in $T$, such that $V(\wtilde{G}_i) \subseteq \chi(\wtilde{t}_i)$, and let $\wtilde{B}_i = \chi(\wtilde{t}_i) \setminus V(\wtilde{G}_i)$ and $\wtilde{b}_i = |\wtilde{B}_i|$. Note that $|\wtilde{B}_i| \leq \ell +1$, as $(T,\chi, L)$ is an $\C{H}$-tree decomposition of $G$. 
Fix an arbitrary injective function $\lambda_{\wtilde{G}_i} : \wtilde{B} \rightarrow \{1,2,\cdots, \wtilde{b}_i\}$, and then $\wtilde{G}_i$ is the boundaried graph with boundary $\wtilde{B}_i$.\footnote{We have slightly abused the notation, and used $\wtilde{G}_i$ to denote both a graph and a boundaried graph.} Note that $V(\wtilde{G}_i) \subseteq V(G_{i-1})$. Also, let $G'_i$ be the boundaried graph $G_{i-1}-V(\wtilde{G}_i)$, with boundary $\wtilde{B}_i$. Using Lemma~\ref{lem:red2finiteindex} and the algorithm $\SC{M}_{\sf mod}$, we find the graph $\wtilde{G}^*_i$ and the translation constant $c_i$, such that $\wtilde{G}_i \equiv_{\Pi} \wtilde{G}^*_i$ and $|V(\wtilde{G}^*_i)| \leq \xi_{\wtilde{b}_i}$ in time bounded by $\cO(f_{\sf mod}(\ell) \cdot n^{\C{O}(1)} )$.\footnote{Note that for any graph $\what{G}$, $\iota(\what{G}) \leq |V(\what{G})|$ (see Definition~\ref{def:equiv-boundaried-graph}).}

We let $G_i$ be the graph $\wtilde{G}^*_i \oplus G'_i$, which can be computed in time bounded by $\cO(f_{\sf mod}(\ell) \cdot n^{\C{O}(1)} )$. Let $c^* = \sum_{i \in [q]} c_i$. With the constructions described above, we are now in a position to prove Lemma~\ref{lem:three-implies-one}.

\begin{proof}[Proof of Lemma~\ref{lem:three-implies-one}]
As $\C{H}$ is \mso\ definable to prove the lemma, it is enough to establish the following statements (together with Courcelle's Theorem~\cite{Courcelle90}).   
\begin{enumerate}
\item The instance $(G_q,k+c^*)$ can be constructed in time bounded by $\cO(f(\ell) \cdot n^{\C{O}(1)})$, for some function $f$, 
\item $(G_q, k+c^*)$ and $(G,k)$ are equivalent instances of \probVDH, and 
\item The treewidth of $G_q$ is at most $\ell + \max_{i \in [q]}{c_i} + \sum_{i \in [\ell]} \chi_i$ and the $\C{H}$-treewidth of $G_q$ is at most $\ell + \max_{i \in [q]}{c_i}$.\footnote{We remark that although $k+c^*$ can possibly be much larger than $k$, both the treewidth and the $\C{H}$-treewidth of $G_q$ are at most some additive constants (depending on $\C{H}$) away from $k$.}  
\end{enumerate} 
As stated perviously, we will assume that the constants $\xi_i$, for $i \in [\ell+1]$ are hardcoded in the algorithm. Thus, we can construct the set $\wtilde{\C{G}}$ in polynomial time. Also, note that for any $i \in [q]$, $\iota(\wtilde{G}^*_i \oplus G'_i) \leq \iota(G) \leq |V(G)|$ (see Definition~\ref{def:equiv-boundaried-graph}). Thus, for some function $f$, we can construct the instance $(G_q, k+c^*)$ in time bounded by $f(\ell) \cdot n^{\C{O}(1)}$. 

We will inductively argue that for each $i \in [q]_0$, $(G_i, k + \sum_{j \in [i]_0} c_j)$ and $(G,k)$ are equivalent instances of \probVDH. As $G_0 = G$ and $k+ c_ 0 =k$, the claim trivially follows for the case when $i = 0$. Next we assume that for some $q' \in [q-1]_0$, for each $i' \in [q']_0$, $(G_{i'}, k + \sum_{j \in [i']_0} c_j)$ and $(G,k)$ are equivalent instances of the problem. We will next prove the statement for $i = i'+1$. It is enough to argue that $(G_{i-1}, k + \sum_{j \in [i-1]_0} c_j)$ and $(G_{i}, k + \sum_{j \in [i]_0} c_j)$ are equivalent instances. Recall that, by construction, $G[V(\wtilde{G}_i)] = G_{i-1}[V(\wtilde{G}_i)] = \wtilde{G}_i$ and $G'_i = G_{i-1} - V(\wtilde{G}_i)$ are boundaried graphs with boundary $\wtilde{B}_i$, and $G_i = \wtilde{G}_i \oplus G'_i$. From Lemma~\ref{lem:red2finiteindex}, $\wtilde{G}^*_i \equiv_{\Pi} \wtilde{G}_i$. Thus by definition, we have that $(G_{i-1}, k + \sum_{j \in [i-1]_0} c_j)$ and $(G_{i}, k + \sum_{j \in [i]_0} c_j)$ are equivalent instances of \probVDH. 

To prove the third statement, note that it is enough to construct an $\C{H}$-tree decomposition, $(T_q, \chi_q, L_q)$, of $G_q$, where for each $t \in V(T_q)$, we have $|\chi_q^{-1}(t)| \leq \ell + \max_{i \in [q]}{c_i} + \sum_{i \in [\ell]} \chi_i$ and $|\chi_q^{-1}(t) \setminus L_q| \leq \ell + \max_{i \in [q]}{c_i}$. Let $X = \cup_{i \in [q]} V(\wtilde{C}_i)$, and $L_q = (L \setminus X ) \cup (V(G_q) \setminus V(G))$ and $T_q = T$. For each $t \in V(T) \setminus \wtilde{A}$, we set $\chi_q(t) =\chi(t)$, and for each $i \in [q]$, we set $\chi_q(\wtilde{t}_i) = (\chi(\wtilde{t}_i) \setminus V(\wtilde{G}_i)) \cup V(\wtilde{G}^*_i)$. For each $i \in [q]$, note that $|V(\wtilde{G}^*_i)| \leq \xi_{\wtilde{b}_i} \leq \sum_{j \in [\ell]} \xi_{j}$. Thus we can obtain that $(T_q, \chi_q, L_q)$ is an $\C{H}$-tree decomposition of $G_q$ that satisfies all the required properties. This concludes the proof. 
\end{proof}

	\subsection{Proof of Lemma~\ref{lem:4/7-to-1}}\label{sec:4/7-to-1}
Fix any family $\C{H}$ of graphs that is CMSO definable and is closed under disjoint union and taking induced subgraphs, such that \probEDH (resp. \probTDH) admits an {\FPT} algorithm, say, $\SC{X}_{\sf mod}$ running in time $f(\ell) \cdot n^{\C{O}(1)}$, where $n$ is the number of vertices in the given graph $G$ and $\ell = \hhmd(G)$.

Let $(G,k)$ be an instance of the problem \probVDH. From Proposition~\ref{prop:CMSOMetaTheorem} it is enough for us to design an algorithm for $(\alpha(k),k)$-unbreakable graphs, and thus, we assume that $G$ is $(\alpha(k),k)$-unbreakable. We begin with the following simple sanity checks.

\vspace{1.5mm}
\noindent{\bf Base Case 1.} If $G \in \C{H}$ and $k \geq 0$, then return that $(G,k)$ is a yes-instance of the problem. Moreover, if $k < 0$, then return that the instance is a no-instance.

\vspace{1.5mm}
\noindent{\bf Base Case 2.} If $|V(G)| \leq 2\ak +k$, then for each $S \subseteq V(G)$, check if $G-S \in \C{H}$, by calling $\SC{X}_{\sf mod}$ for the instance $(G-S,0)$. If for any such $S$ we obtain that $G-S \in \C{H}$, return that $(G,k)$ is a yes-instance, and otherwise return that it is a no instance.

\vspace{1.5mm}
\noindent{\bf Base Case 3.} If $G$ does not admit an \hed\ (resp. \htd) of depth (resp.~width) at most $k$, then return that $(G,k)$ is a no-instance of the problem. 

The correctness of the Base Case 1 and 2 is immediate from their descriptions. Note that $\hhtw(G) \leq \hhdepth(G) \leq \hhmd(G)$. Thus, if $(G,k)$ is a yes-instance of \probVDH, then it must admit an \hed\ (resp. \htd) of depth (resp. width) at most $k$. The above implies the correctness of Base Case 3. Note that using $\SC{X}_{\sf mod}$, we can test/apply all the base cases in time bounded by $\max\{2^{\ak +k}, f(k)\} \cdot n^{\C{O}(1)}$.

Hereafter we assume that the base cases are not applicable. We compute an \hed\ (resp. \htd), say, $(T,\chi,L)$, by instantiating Lemma~\ref{lem:self-reduce-h-ed} (resp. Lemma~\ref{lem:self-reduce-h-td}) with the algorithm $\SC{X}_{\sf mod}$, of depth (resp. width) at most $k$. Note that the above decomposition can be computed as Base Case 3 is not applicable. 

Let $C^*$ be a connected component in $G[L]$ with maximum number of vertices, and let $S^* = N_G(C^*)$ and $Z^* = V(G) \setminus N_G[C^*]$. Note that as $(T,\chi,L)$ is an \hed\ (resp. \htd) of depth (resp. width) at most $k$, we can obtain that $|S^*| \leq k$. As $G$ is $(\ak,k)$-unbreakable, the above together with Observation~\ref{obs:boundedsep} implies that $|Z^* \cup S^*| \leq \ak + k$. We have the following observation which immediately follows from the fact that $(T,\chi,L)$ is an \hed\ (resp. \htd) for $G$ of depth (resp. width) at most $k$.  

\begin{observation}\label{obs:branch-Z-star}
For any $S \subseteq V(G)$, where $|S| \leq k$, either $Z^* \cap S \neq \emptyset$, or $G-S^* \in \C{H}$.  
\end{observation}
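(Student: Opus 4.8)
The plan is to read the statement in the way it is actually used further down: $S$ ranges over \probVDH-solutions, so the real content is that whenever $S\subseteq V(G)$ satisfies $|S|\le k$, $G-S\in\C{H}$ and $Z^*\cap S=\emptyset$, already $G-S^*\in\C{H}$. This is precisely what justifies the branching that follows: either the sought solution meets one of the at most $\ak+k$ vertices of $Z^*$ (and we recurse on $(G-v,k-1)$ for each such $v$), or we may simply return $S^*$ as a solution.

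First I would record two structural facts. Since $(T,\chi,L)$ is an \hed\ (resp.\ \htd) and $C^*$ is a connected component of $G[L]$, Property~3 of Definition~\ref{def:H-elim} (resp.\ the analogous uniqueness and base-membership properties of Definition~\ref{def:tree:h:decomp}) forces all of $V(C^*)$ into a single leaf bag $\chi(t)$: each $L$-vertex lies in a unique leaf bag, adjacent $L$-vertices must share that bag, and connectivity of $G[V(C^*)]$ then propagates membership of this one bag to every vertex of $V(C^*)$. As $G[\chi(t)]\in\C{H}$ (resp.\ $G[\chi(t)\cap L]\in\C{H}$) and $\C{H}$ is hereditary, we obtain $G[V(C^*)]\in\C{H}$. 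Second, by definition $N_G[C^*]=V(C^*)\cup S^*$, so $V(C^*)$, $S^*$, $Z^*$ partition $V(G)$ and there is no edge of $G$ between $V(C^*)$ and $Z^*$; hence $G-S^*=G[V(C^*)\cup Z^*]$ is the disjoint union of $G[V(C^*)]$ and $G[Z^*]$.

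To finish, take $S$ with $|S|\le k$, $G-S\in\C{H}$ and $Z^*\cap S=\emptyset$. Then $Z^*\subseteq V(G)\setminus S$, so $G[Z^*]$ is an induced subgraph of $G-S\in\C{H}$ and therefore $G[Z^*]\in\C{H}$ by heredity. Since $\C{H}$ is closed under disjoint union and $G-S^*$ is the disjoint union of $G[V(C^*)]\in\C{H}$ and $G[Z^*]\in\C{H}$, we conclude $G-S^*\in\C{H}$, as required. I do not anticipate a genuine obstacle here; the only points worth flagging are that $S$ is implicitly a deletion set, that this argument does not actually invoke the $(\ak,k)$-unbreakability of $G$ (that hypothesis, via Observation~\ref{obs:boundedsep}, is only needed to bound $|Z^*|$ and $|S^*|$ so that the ensuing branching runs in \FPT\ time), and that the \htd\ case goes through verbatim because both the non-adjacency of $V(C^*)$ and $Z^*$ and the membership $G[V(C^*)]\in\C{H}$ hold for tree $\C{H}$-decompositions as well.
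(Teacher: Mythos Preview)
Your proof is correct and is precisely the natural unpacking of the paper's one-line justification (``immediately follows from the fact that $(T,\chi,L)$ is an \hed\ (resp.\ \htd) for $G$ of depth (resp.\ width) at most $k$''); in particular, you are right that the observation only makes sense when $S$ is additionally assumed to satisfy $G-S\in\C{H}$, which is exactly how it is used in the subsequent branching rule. The decomposition gives $G[V(C^*)]\in\C{H}$, the partition $V(G)=V(C^*)\cup S^*\cup Z^*$ with no edges between $V(C^*)$ and $Z^*$ gives $G-S^*=G[V(C^*)]\uplus G[Z^*]$, and heredity plus closure under disjoint union finish it---this is the intended argument.
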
  

The above observation leads us to the following base case and our branching rule. 

\vspace{1.5mm}
\noindent{\bf Base Case 4.} $G-S^* \in \C{H}$, then return that $(G,k)$ is a yes-instance. 

As $|S^*| \leq k$, the correctness of the above base case immediately follows. Moreover, we can apply Base Case 3 in time bounded by $f(k)\cdot n^{\C{O}(1)}$ using $\SC{X}_{\sf mod}$.  

\vspace{1.5mm}
\noindent{\bf Branching Rule.} For each $z \in Z^*$, (recursively) solve the instance $(G-\{z\},k-1)$. Return that $(G,k)$ is a yes-instance if and only if for some $z \in Z$, $(G-\{z\},k-1)$ is a yes-instance.

The correctness of the branching rule follows from Observation~\ref{obs:branch-Z-star} and non-applicability of Base Case 4. Moreover, we can create instances in the branching rule in polynomial time, given the decomposition $(T,\chi,L)$. 

Note that the depth of the recursion tree is bounded by $k+1$. Also, each fo the steps can be applied in time bounded by $\max\{2^{\ak + k}, f(k)\}\cdot n^{\C{O}(1)}$. Thus we can bound the running time of our algorithm by $k^{\C{O}(k)} \cdot \max\{2^{\ak + k}, f(k)\} \cdot n^{\C{O}(1)}$. The correctness of the algorithm is immediate from the description and Observation~\ref{obs:branch-Z-star}. The above implies proof of Lemma~\ref{lem:4/7-to-1}. 
	

\section{Applications of Theorem~\ref{thm:mainEquiv} -- Beyond Graphs}\label{sec:cutProblems}
In this section we see applications of our main theorem (Theorem~\ref{thm:mainEquiv}) for problems that are not precisely captured by the families of graphs mentioned in the premise of Theorem~\ref{thm:mainEquiv}. This allows us to obtain the first  analogous results for problems such as {\sc Multiway Cut}, {\sc Subset Feedback Vertex Set} ({\sc Subset FVS}, for short) and {\sc Subset Odd Cycle Transversal} ({\sc Subset OCT}, for short). That is, we obtain {\FPT}  algorithms for these problems that is parameterized by a parameter whose value is upper bounded by the standard parameter (i.e., solution size) and which can be arbitrarily smaller. For instance, consider the {\sc Multiway Cut} problem, where one is given a graph $G$ and a set of vertices $S$ (called terminals) and an integer $\ell$ and the goal is to decide whether there is a set of at most $\ell$ vertices whose deletion separates every pair of these terminals. The standard parameterization for this problem is the solution size. Jansen et al.~\cite{JansenK021} propose to consider undirected graphs with a distinguished set of terminal vertices and study the parameterized complexity of {\sc Multiway Cut} parameterized by the elimination distance to a graph where each component has at most one terminal. Notice that this new parameter is always upper bounded by the size of a minimum solution. 
Thus, an {\FPT} algorithm for {\sc Multiway Cut} with this new parameter would naturally extend the boundaries of tractability for the problem and we obtain such an algorithm by using Theorem~\ref{thm:mainEquiv} in an appropriate way. We then proceed to obtain similar {\FPT} algorithms for the other cut problems mentioned in this paragraph.

That is, we obtain an {\FPT} algorithm for  {\sc Subset FVS} parameterized by the elimination distance to a graph where no terminal is part of a cycle, and an {\FPT} algorithm for {\sc Subset OCT} parameterized by the elimination distance to a graph where no terminal is part of an odd cycle.

 To this end, we begin by formally defining  structures in the context of this paper and an extension of \hed\ and \htd\ to families of structures. 

\begin{definition}{\rm [{\bf Structure}]}
{\rm A {\em structure} $\alpha$ is a tuple whose first element is a graph, say, $G$, and each of the remaining elements is a subset of $V(G)$, a subset of $E(G)$, a vertex in $V(G)$ or an edge in $E(G)$. The number of elements in the tuple is the {\em arity} of the structure. 
}\end{definition}

We will be concerned with structures of arity $2$ where we have the second element as a vertex subset, and we denote the family of such structures by $\C{S}$. For a family $\C{S}' \subseteq \C{S}$, we will next define the notion of \sedst s and \stdst s that 
is tailored to our purpose.

\begin{definition}\label{def:H-elim-struct}{\rm 
For a family of structures, $\C{S}' \subseteq \C{S}$, an {\em \hedp{$\C{S}'$}} of a structure $(G,S)$, where $S \subseteq V(G)$, is a triplet $(T, \chi, L)$, where $L \subseteq V(G)$,~$T$ is a rooted forest, and~$\chi \colon V(T) \to 2^{V(G)}$, such that:
\begin{enumerate}
\setlength{\itemsep}{-1pt}
    \item For each internal node~$t$ of~$T$ we have~$|\chi(t)| \leq 1$ and $\chi(t) \subseteq V(G) \setminus L$. 
    \item The sets~$(\chi(t))_{t \in V(T)}$ form a partition of~$V(G)$.
    \item For each edge~$\{u,v\} \in E(G)$, if~$u \in \chi(t_1)$ and~$v \in \chi(t_2)$, then~$t_1, t_2$ are in ancestor-descendant relation in~$T$.
    \item For each leaf $t$ of~$T$, we have $\chi(t) \subseteq L$ and the structure~$(G[\chi(t)], S \cap \chi(t))$ belongs to~$\C{S}'$.  
\end{enumerate}
The \emph{depth} of~$(T,\chi,L)$ is same as the depth of $T$. The {\em $\C{S}'$-elimination distance} of the structure $(G,S)$, denoted $\hhdepthp{\C{S}'}(G,S)$, is the minimum depth of an \hedp{$\C{S}'$} of $(G,S)$. 
}\end{definition}

\begin{definition}
\label{def:tree:h:decomp-struct}{\rm
For a family of structures, $\C{S}' \subseteq \C{S}$, an {\em \hedp{$\C{S}'$}} of a structure $(G,S)$, where $S \subseteq V(G)$, is a triplet $(T, \chi, L)$, where $L \subseteq V(G)$,~$T$ is a rooted tree, and~$\chi \colon V(T) \to 2^{V(G)}$, such that:
\begin{enumerate}\setlength{\itemsep}{-1pt}
    \item For each~$v \in V(G)$ the nodes~$\{t \mid v \in \chi(t)\}$ form a {non-empty} connected subtree of~$T$. 
    \item For each edge~$\{u,v\} \in E(G)$ there is a node~$t \in V(G)$ with~$\{u,v\} \subseteq \chi(t)$.
    \item For each vertex~$v \in L$, there is a unique~$t \in V(T)$ for which~$v \in \chi(t)$. Moreover, $t$ must a leaf of~$T$. 
    \item For each node~$t \in V(T)$, the structure~$(G[\chi(t) \cap L], \chi(t) \cap L \cap S)$ belongs to~$\C{S}'$. 
\end{enumerate}
{The \emph{width} of an \htdp{$\C{S}'$} is $\max(0, \max_{t \in V(T)} |\chi(t) \setminus L| - 1)$.} The $\C{S}'$-treewidth of a structure $(G,S)$, denoted~$\hhtwp{\C{S}'}(G,S)$, is the minimum width of an \htdp{$\C{S}'$} of $(G,S)$.
}
\end{definition}

We remark that though the above two definitions are extendable to more general notions of  structures, we choose to give it for this restricted version as it is enough for our purpose and it is more insightful for the reader. We now will be able to capture problems such as {\sc Multiway Cut}, {\sc Subset FVS} and {\sc Subset OCT}. To this end, we begin by defining the following families of structures. 

$\C{S}_{\sf mway}=\{(G,S)~|~\mbox{every connected component of $G$ has at most one vertex from $S\subseteq V(G)$} \}$

$\C{S}_{\sf fvs}=\{(G,S)~|~\mbox{$G$ has no cycle containing a vertex from $S\subseteq V(G)$}\}$

$\C{S}_{\sf oct}=\{(G,S)~|~\mbox{$G$ has no odd length cycle containing a vertex from $S\subseteq V(G)$} \}$ \vspace{1mm}

We are now ready to state the main result of this section.

\begin{theorem}\label{thm:app}
Each of the following parameterized problems admits an {\rm \FPT} algorithm:
\begin{enumerate}
\item {\sc Multiway Cut} parameterized by $\sstwst{mway}(G,S)$ (and thus, $\ssdepthst{mway}(G,S)$). 
\item {\sc Subset FVS} parameterized by $\sstwst{fvs}(G,S)$ (and thus, $\ssdepthst{fvs}(G,S)$), and 
\item {\sc Subset OCT} parameterized by $\sstwst{oct}(G,S)$ (and thus, $\ssdepthst{oct}(G,S)$). 
\end{enumerate}
\end{theorem}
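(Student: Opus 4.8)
The strategy is to reduce each of the three structure problems to an ordinary graph problem (\probVDH{} for a suitable hereditary, union-closed, CMSO-definable graph family) and then invoke Theorem~\ref{thm:mainEquiv}. The key idea, as flagged in the introduction, is to encode the ``terminal'' information of a structure $(G,S)$ directly into the graph by a local gadget replacement so that membership in $\C{S}_{\sf mway}$ (resp.\ $\C{S}_{\sf fvs}$, $\C{S}_{\sf oct}$) translates to membership in a purely graph-theoretic hereditary class, and so that the torso operation on the modulator commutes with the gadget replacement. Concretely, for {\sc Multiway Cut} one attaches to each terminal $v\in S$ a large private clique (or a private pendant structure) whose size forces any small separator to ``behave'' as a multiway cut; the resulting family $\hh_{\sf mway}$ consists of graphs in which every connected component contains at most one ``marked'' vertex, which is hereditary, union-closed (a disjoint union of one-terminal components still has $\le 1$ terminal per component) and CMSO-definable. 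For {\sc Subset FVS} one takes $\hh_{\sf fvs}$ to be the class of graphs with no cycle through a marked vertex, and for {\sc Subset OCT} the class with no odd cycle through a marked vertex; both are clearly hereditary, union-closed, and CMSO-definable (``there is no marked vertex lying on a cycle / odd cycle'' is expressible in CMSO). One first checks that \ProblemName{Vertex Deletion to $\hh_{\bullet}$} is \FPT{} in each case — this is exactly {\sc Multiway Cut}, {\sc Subset FVS}, {\sc Subset OCT} parameterized by solution size, all of which are known to be \FPT{} — and then Theorem~\ref{thm:mainEquiv} hands us \FPT{} algorithms for $\ed_{\hh_\bullet}$ and $\tw_{\hh_\bullet}$.

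The steps, in order, would be: \textbf{(1)} Fix the encoding of a structure $(G,S)$ as a graph $\widehat G$: replace each $v\in S$ by $v$ together with a pendant gadget (the precise gadget differs per problem, but in each case it is a constant-size or solution-size-bounded block whose only purpose is to ``label'' $v$ without creating new cycles/odd cycles among non-marked vertices or new inter-component connections). \textbf{(2)} Show the membership equivalence: $(G,S)\in \C{S}_\bullet$ iff $\widehat G\in\hh_\bullet$, and more importantly show that the encoding is compatible with deletion and with the torso operation, so that $\ed_{\C{S}_\bullet}(G,S)$ and $\tw_{\C{S}_\bullet}(G,S)$ are equal (up to an additive constant, or exactly, depending on gadget choice) to $\ed_{\hh_\bullet}(\widehat G)$ and $\tw_{\hh_\bullet}(\widehat G)$ — this requires checking that in an optimal \hedp{$\hh_\bullet$} or \htdp{$\hh_\bullet$} of $\widehat G$ one never needs to put a gadget vertex into a bag/into the non-base part, which follows from the gadgets being ``irrelevant to separators'' (a standard pushing argument). \textbf{(3)} Verify $\hh_\bullet$ is hereditary, union-closed, CMSO-definable. \textbf{(4)} Quote the known \FPT{} algorithms for {\sc Multiway Cut}, {\sc Subset FVS}, {\sc Subset OCT} by solution size to conclude \ProblemName{Vertex Deletion to $\hh_\bullet$} is \FPT{}. \textbf{(5)} Apply Theorem~\ref{thm:mainEquiv} to get that \probEDHp{$\hh_\bullet$} parameterized by $\ed_{\hh_\bullet}$ and \probTDHp{$\hh_\bullet$} parameterized by $\tw_{\hh_\bullet}$ are \FPT, hence by the self-reduction Lemmas~\ref{lem:self-reduce-h-ed} and~\ref{lem:self-reduce-h-td} we can also \emph{compute} the corresponding decompositions; finally translate back to the structure, and observe that an \FPT{} algorithm parameterized by $\tw_{\C{S}_\bullet}$ subsumes one parameterized by $\ed_{\C{S}_\bullet}$ since $\tw_{\C{S}_\bullet}(G,S)\le \ed_{\C{S}_\bullet}(G,S)$.

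The main obstacle is Step~(2): designing the gadgets so that they are simultaneously (i) faithful (membership in the graph class genuinely captures the terminal constraint), (ii) ``torso-transparent'' so that the hybrid parameters of the structure and of the encoded graph coincide up to an additive constant, and (iii) do not themselves blow up the parameter. For {\sc Multiway Cut} the subtlety is that a naive clique gadget of unbounded size could interact with a tree $\hh_{\sf mway}$-decomposition; one must argue any large gadget-clique is forced entirely into a single base component (it is connected and must end up in some $\hh_{\sf mway}$-graph, and a clique has $\le 1$ marked vertex by construction), and that the ``torso'' created by removing non-base vertices does not merge two terminals' gadgets. For {\sc Subset FVS}/{\sc Subset OCT} the subtlety is slightly different: marked vertices are not removed, so the gadget should be acyclic (a single pendant edge to a fresh degree-one vertex that is itself marked, perhaps, or marking $v$ by attaching a triangle in the OCT case is wrong since that creates an odd cycle — one must instead encode ``marked'' via a structural predicate rather than an injected subgraph, which may force us to work with a two-sorted/colored graph formalism and observe Theorem~\ref{thm:mainEquiv} still applies because colored hereditary CMSO-definable union-closed classes are covered, vertex-coloring being expressible by unary predicates). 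Once the right encoding is pinned down, everything else is bookkeeping plus citation.
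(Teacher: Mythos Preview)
Your high-level strategy is exactly the paper's: encode the structure $(G,S)$ as a plain graph, define a hereditary, union-closed, CMSO-definable class $\hh_\bullet$ so that \probVDHp{$\hh_\bullet$} is the original problem, cite the known solution-size \FPT{} algorithms for {\sc Multiway Cut}/{\sc Subset FVS}/{\sc Subset OCT}, and invoke Theorem~\ref{thm:mainEquiv}. The difference is entirely in how the gadget is realised, and the paper's choice neatly dissolves the obstacles you flag in Step~(2).

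The paper first takes the \emph{edge subdivision} $G_{\sf sd}$ (for {\sc Multiway Cut} and {\sc Subset FVS}) or the \emph{double subdivision} $G_{\sf dsd}$ (for {\sc Subset OCT}), and then attaches a pendant $K_3$ at each terminal. Subdivision kills every triangle coming from $G$, so in the encoded graph a vertex is ``marked'' precisely when it is a cut vertex with a pendant $K_2$-component both of whose vertices are adjacent to it; the paper calls this a \emph{relevant cut vertex}, a local CMSO predicate on plain graphs. The classes are then ``each component has at most one relevant cut vertex'', ``no cycle of length $\ge 4$ through a relevant cut vertex'', and ``no \emph{odd} cycle of length $\ge 5$ through a relevant cut vertex''. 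The clause ``$\ge 5$'' is what rescues the OCT case that worried you: the gadget triangle is an odd cycle through the terminal but is explicitly exempted, while double subdivision preserves the parity of every original cycle. No colours or unary predicates are needed, so Theorem~\ref{thm:mainEquiv} applies as stated; your fallback to coloured graphs would require re-proving it for structures, which the paper deliberately avoids.

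Two further deviations. First, your large-clique gadget of size depending on $k$ is problematic because then $\hh_\bullet$ itself depends on the parameter, breaking the hypotheses of Theorem~\ref{thm:mainEquiv}; the paper's triangle is constant-size, and a minimality argument (gadget leaves lie on no cycle of length $\ge 4$ after subdivision) shows solutions never touch them. Second, you aim for equality of $\tw_{\hh_\bullet}(\widehat G)$ and $\sstwst{\bullet}(G,S)$ up to an additive constant; the paper only shows a polynomial bound $\tw_{\hh_\bullet}(\widehat G)=O(\ell^2)$ where $\ell=\sstwst{\bullet}(G,S)$, the blow-up coming from the subdivision vertices on edges between non-base bag vertices. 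That is enough for an \FPT{} conclusion and spares the ``torso-transparent'' pushing argument you sketch.
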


We will begin by explaining the intuitive idea behind the proof of the above theorem. For simplicity, let us fix the problem {\sc Multiway Cut}. 
Our first goal is to construct a \msotwo definable family of graph $\C{H}_{\sf mway}$ that is closed under disjoint union. Then, for a given instance $(G,S,k)$ of {\sc Multiway Cut}, by appropriate ``gadgeteering", we will construct an (equivalent) instance $(G',k)$ of \probVDHp{$\C{H}_{\sf mway}$}, ensuring that $\hhtwp{\C{H}_{\sf mway}}(G')$ is at most a polynomial factor away from $\sstwst{\C{H}_{\sf mway}}(G)$. Using a known {\FPT} algorithm for {\sc Multiway Cut} parameterized by the solution, we will be able to obtain an {\FPT} algorithm for \probVDHp{$\C{H}_{\sf mway}$}, parameterized by $\hhmdp{\C{H_{\sf mway}}}(G')$. The above statement together with Theorem~\ref{thm:mainEquiv} will imply that \probVDHp{$\C{H}_{\sf mway}$} admits an {\FPT} algorithm, when parameterized by $\hhtwp{\C{H}_{\sf mway}}(G')$. The above, together with the equivalence of the instance $(G,S,k)$ of {\sc Multiway Cut} and the instance $(G',k)$ of \probVDHp{$\C{H}_{\sf mway}$}, and the property that $\hhtwp{\C{H}_{\sf mway}}(G')$ is at most a polynomial factor away from $\sstwst{\C{H}_{\sf mway}}(G)$, will imply an {\FPT} algorithm for \probVDHp{$\C{H}_{\sf mway}$} parameterized by $\sstwst{\C{H}_{\sf mway}}(G)$. 

Before moving to the formal description, we briefly discuss the construction of $\C{H}_{\sf mway}$. For an instance $(G,S,k)$ of {\sc Multiway Cut}, we will subdivide the edges of $G$ and then attach a $K_3$ at each vertex in $S$ to obtain the graph $G'$ in the \probVDHp{$\C{H}_{\sf mway}$} instance $(G',k)$. Roughly speaking, the family $\C{H}_{\sf mway}$ will trivially contain all ``ill-formed'' graphs, i.e., the graph that cannot be obtained via a reduction that we discussed above. Apart from the above graphs, the family $\C{H}_{\sf mway}$ will contain all ``solved graphs" (very roughly speaking). The above will be categorized based on cut vertices whose removal results in connected components that are $K_3$s. We will now move towards the formal discussion of our proof of Theorem~\ref{thm:app}.


\paragraph{Some useful graphs and graph classes.} 
For a graph $G$, the {\em subdivision} of $G$, denoted by $G_{\sf sd}$ is the graph obtained by sub-dividing the edges of $G$ exactly once, i.e., we have $V(G_{\sf sd}) = V(G) \cup \{w_e \mid e \in E(G)\}$ and $E(G_{\sf sd}) = \{\{u, w_e\},\{w_e,v\} \mid e = \{u,v\}\in E(G)\}$. A {\em double subdivision} of a graph $G$ is the graph $G_{\sf dsd}$, obtained by sub-dividing each of the edges in $G$ with two vertices, i.e., we have $V(G_{\sf dsd}) = V(G) \cup \{w_e, w'_e \mid e \in E(G)\}$ and $E(G_{\sf dsd}) = \{\{u, w_e\},\{w_e,w'_e\}, \{w'_e,v\} \mid e = \{u,v\}\in E(G)\}$. 

For a graph $G$ and a set $S \subseteq V(G)$, by $G^{S, \Delta}$, we denote the graph obtained from $G$ by attaching a $K_3$ on vertices in $S$, i.e., $V(G^{S, \Delta}) = V(G) \cup \{a_s, a'_s \mid s \in S\}$ and $E(G^{S, \Delta}) = E(G) \cup \{\{a_s, a'_s\}, \{a_s,s\}, \{a'_s,s\} \mid s \in S\}$. 


Consider a graph $G$. A {\em cut} vertex in $G$ is a vertex $v \in V(G)$, such that the number of connected components in $G-\{v\}$ is strictly more than the number of connected components in $G$. We say that a cut vertex $v \in V(G)$ is {\em relevant} if $G-\{v\}$ has a connected component $C$ with {\bf exactly} two vertices and for each $u \in V(C)$, we have $\{u,v\} \in V(G)$. We now define the following hereditary families of graphs. 

$\C{H}_{\sf mway}=\{G \in \C{G} \mid \mbox{ each connected component of } G \mbox{ has at most one relevant cut vertex}\}$ 

$\C{H}_{\sf fvs}=\{G \in \C{G} \mid G \mbox{ has no cycle of length at least } 4 \mbox{ containing a relevant cut vertex}\}$

$\C{H}_{\sf oct}=\{G \in \C{G} \mid G \mbox{ has no odd cycle of length at least $5$ containing a relevant cut vertex}\}$ \vspace{1mm} 

We have the following observation regarding the above defined families of graphs.

\begin{observation}\label{obs:multicut-family-cmso}
Each $\C{H} \in \{\C{H}_{\sf mway},\C{H}_{\sf fvs},\C{H}_{\sf oct}\}$ is \msotwo definable and closed under disjoint union. 
\end{observation}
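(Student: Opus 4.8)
The plan is to verify, for each of the three families $\C{H}\in\{\C{H}_{\sf mway},\C{H}_{\sf fvs},\C{H}_{\sf oct}\}$, the two assertions separately: \emph{(a)} closure under disjoint union, and \emph{(b)} \msotwo-definability. For \emph{(a)}, the key point is that whether a vertex $v$ is a relevant cut vertex of $G$ depends only on the connected component of $G$ containing $v$: being a cut vertex is a property of the component, and the extra condition (some component of $G-\{v\}$ is exactly an edge $\{u,v\}$ with $u$ adjacent to $v$) is likewise local to that component. Hence if $G$ is the disjoint union of $G_1$ and $G_2$, the set of relevant cut vertices of $G$ is the disjoint union of those of $G_1$ and of $G_2$, and the defining condition of each family---being stated componentwise (``each connected component has at most one relevant cut vertex'', respectively ``$G$ has no cycle of length $\geq 4$ (resp.\ no odd cycle of length $\geq 5$) through a relevant cut vertex'')---is preserved. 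So each family is closed under disjoint union; I would just spell out this one-line argument for each.

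For \emph{(b)}, I would first write an \msotwo formula $\mathsf{relcut}(v)$ expressing ``$v$ is a relevant cut vertex''. The sub-formula ``$v$ is a cut vertex'' is standard \msotwo: $v$ is a cut vertex iff there exist two vertices $x,y\in N(v)$, $x\neq y$, such that every path from $x$ to $y$ passes through $v$; and ``$x$ and $y$ lie in the same component of $G-\{v\}$'' is expressible by saying there is no vertex set $Z$ with $x\in Z$, $y\notin Z$, $v\notin Z$, and no edge leaving $Z$ except possibly at $v$. The ``relevant'' part adds: there exist distinct $u_1,u_2\in N(v)$ with $\{u_1,u_2\}\in E(G)$ such that $\{u_1,u_2\}$ is a whole component of $G-\{v\}$, i.e.\ $N(u_1)\subseteq\{v,u_2\}$ and $N(u_2)\subseteq\{v,u_1\}$; this is a first-order condition on $v,u_1,u_2$. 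Given $\mathsf{relcut}(v)$, the three families are expressed as follows. For $\C{H}_{\sf mway}$: there do \emph{not} exist two distinct vertices $v_1,v_2$ with $\mathsf{relcut}(v_1)\wedge\mathsf{relcut}(v_2)$ that lie in the same connected component (same-component being \msotwo-expressible by the edge-set-partition trick above, or by reachability via an edge subset forming a path). For $\C{H}_{\sf fvs}$: there is no relevant cut vertex $v$ and no cycle through $v$ of length $\geq 4$; a cycle is expressible as an edge set $D$ in which every vertex incident to an edge of $D$ has exactly two incident edges in $D$ and the set of such vertices is connected in $D$, and ``length $\geq 4$'' means at least four vertices are incident to $D$, and ``through $v$'' means $v$ is incident to an edge of $D$. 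For $\C{H}_{\sf oct}$: additionally require the cycle $D$ to be odd, using the \msotwo cardinality modulo-$2$ predicate $\mathbf{card}_{1,2}$ on the edge set $D$, and length $\geq 5$ by requiring at least five incident vertices. All of these are finite Boolean combinations of \msotwo formulas (using quantification over vertices, edges, vertex sets and edge sets, and the modular-counting atom for oddness), hence \msotwo-definable.

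The main obstacle---really the only nontrivial bookkeeping---is writing the cycle predicate and the ``same connected component'' predicate carefully in \msotwo, and making sure the length constraints ($\geq 4$, $\geq 5$) and the oddness constraint are phrased over the right object (edge set vs.\ vertex set) so that the modular-counting atom applies. Once these standard gadgets are in place, assembling $\mathsf{relcut}(v)$ and then the three top-level sentences is mechanical. Since the paper only needs the \emph{existence} of such formulas (to invoke Lemma~\ref{lem:cmsoDefinable} and Theorem~\ref{thm:mainEquiv}), I would present the formulas at the level of detail above rather than fully formalized, noting that each stated property is a routine \msotwo encoding.
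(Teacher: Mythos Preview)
Your proposal is correct and follows essentially the same approach as the paper: closure under disjoint union is argued from the componentwise nature of the defining conditions, and \msotwo-definability is obtained by first writing a formula for ``relevant cut vertex'' and then combining it with standard \msotwo encodings of connectivity, cycles (via edge sets), length lower bounds, and the modular-counting atom for oddness. Your treatment is, if anything, slightly more careful than the paper's (you explicitly include the cut-vertex condition in $\mathsf{relcut}(v)$, whereas the paper's formula only encodes the pendant-triangle part), but the overall structure is the same.
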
 
\begin{proof}
From the description of the families, it clearly follows that they are closed under disjoint union. We will next show that they are also \msotwo\ definable. We note that checking whether a vertex subset $X \subseteq V(G)$ induces a connected subgraph of $G$, is \msotwo\ definable, and we let this predicate be ${\sf conn}(X)$ (see, for example, Section 7.4.1 in the book of Cygan et al.~\cite{CyganFKLMPPS15}). Next we give a predicate which can check if a vertex is a relevant cut vertex in the given graph.  

${\sf rel\mbox{-}cut}(v) = \exists u,u' \in V(G) \setminus \{v\} \big[u \neq u' \wedge {\sf adj}(u,u') = 1 \wedge {\sf adj}(u,v) = 1 \wedge {\sf adj}(u',v) = 1 \wedge (\forall w \in V(G) \setminus \{u,u',v\}~\forall x \in \{u,u'\}~{\sf adj}(w,u) = 0) \big]$

Now we define a predicate which can test if a connected component of a graph contains more that contains two relevant cut vertices.  

${\sf in}\mbox{-}{\C{H}_{\sf mway}} = \neg \exists u,v \in V(G) \big[ u \neq v~\wedge~{\sf rel\mbox{-}cut}(u) =1~\wedge~{\sf rel\mbox{-}cut}(v) =1~\wedge~\big(\exists X \subseteq V(G) ({\sf conn}(X) = 1 \wedge u,v \in X) \big]$

Notice that using ${\sf in}\mbox{-}{\C{H}_{\sf mway}}$ (and ${\sf rel\mbox{-}cut}(v)$) we can obtain that $\C{H}_{\sf mway}$ is \msotwo\ definable.

It is well known that checking if a graph contains a cycle of length at least $5$ containing a particular vertex  is \msotwo\ definable. We denote such a predicate by ${\sf cycle}_{\geq 5}(v)$. The above together with the definition of ${\sf rel\mbox{-}cut}(v)$ implies that $\C{H}_{\sf fvs}$ is \msotwo\ definable. Since we are allowed to have condition on cardinality of a set size modulo a number $q$, using ${\sf cycle}_{\geq 5}(v)$ and ${\sf rel\mbox{-}cut}(v)$ we can obtain that $\C{H}_{\sf oct}$ is \msotwo\ definable. 
\end{proof}

\paragraph{Reduction.} In the following lemma we show how we can obtain our reduction.

\begin{lemma}\label{lem:multi-cut-to-H}Each of the following holds:
\begin{enumerate}\setlength{\itemsep}{-1pt}
\item An instance $(G,S,k)$ of {\sc Multiway Cut} is a yes-instance if and only if $(G^{S,\Delta}_{\sf sd},k)$ is a yes-instance of \probVDHp{$\C{H}_{\sf mway}$}. 
Moreover, $\hhtwp{\C{H}_{\sf mway}}(G^{S,\Delta}_{\sf sd}) < 3 \cdot (\ell +1) + {{\ell +1} \choose{2}}$, where $\ell = \hhtwp{\C{S}_{\sf mway}}(G,S)$.   

\item An instance $(G,S,k)$ of {\sc Subset FVS} is a yes-instance if and only if $(G^{S,\Delta}_{\sf sd},k)$ is a yes-instance of \probVDHp{$\C{H}_{\sf fvs}$}. 
Moreover, $\hhtwp{\C{H}_{\sf fvs}}(G^{S,\Delta}_{\sf sd}) < 3 \cdot (\ell +1) + {{\ell +1} \choose{2}}$, where $\ell = \hhtwp{\C{S}_{\sf fvs}}(G,S)$.

\item An instance $(G,S,k)$ of {\sc Subset OCT} is a yes-instance if and only if $(G^{S,\Delta}_{\sf dsd},k)$ is a yes-instance of \probVDHp{$\C{H}_{\sf oct}$}. 
Moreover, $\hhtwp{\C{H}_{\sf oct}}(G^{S,\Delta}_{\sf dsd}) < 3 \cdot (\ell +1) + 2 \cdot {{\ell +1} \choose{2}}$, where $\ell = \hhtwp{\C{S}_{\sf oct}}(G,S)$.
\end{enumerate} 
\end{lemma}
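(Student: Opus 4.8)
The plan is to prove the three items of Lemma~\ref{lem:multi-cut-to-H} in parallel, since they all follow the same template: first establish the equivalence of yes-instances, then bound the relevant width parameter. The key design principle is that the gadget construction (subdivision plus attaching a $K_3$ on each terminal) is tailored so that (i) the attached triangles ``mark'' the terminals so that a relevant cut vertex in the reduced graph corresponds exactly to a terminal that has been separated from the rest of its component, and (ii) subdivision does not change the cut/cycle structure except lengthening cycles, which is why the forbidden cycles in $\C{H}_{\sf fvs}$ and $\C{H}_{\sf oct}$ have length thresholds $4$ and $5$ respectively (and why \textsc{Subset OCT} needs a \emph{double} subdivision, to preserve parity of cycle lengths).

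\textbf{Equivalence of yes-instances.} For item~1, I would argue as follows. Let $G' = G^{S,\Delta}_{\sf sd}$. Given a multiway cut $X$ of size at most $k$ in $G$ separating all terminals, I claim $X$ is also an $\C{H}_{\sf mway}$-deletion set in $G'$: after deleting $X$ (note $X \subseteq V(G) \subseteq V(G')$, so the subdivision vertices and triangle vertices survive), each terminal $s$ lies in its own connected component of $G-X$ restricted to $V(G)$, hence in $G'-X$ the only structure attached to $s$'s component is the triangle on $\{s,a_s,a'_s\}$; one checks that the unique relevant cut vertex of such a component is $s$ itself, so every component has at most one relevant cut vertex and $G'-X \in \C{H}_{\sf mway}$. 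Conversely, if $X'$ of size $\le k$ satisfies $G'-X' \in \C{H}_{\sf mway}$, I would first observe we may assume $X' \subseteq V(G)$ (a subdivision vertex $w_e$ can be swapped for an endpoint of $e$, and a triangle vertex $a_s$ can be swapped for $s$, without harming the property — this requires a short argument that these swaps only ``merge'' or ``clean up'' components), and then show that in $G'-X'$ no two terminals can lie in the same component, because two terminals $s,t$ in the same component would each be (or force) a relevant cut vertex in that component (their attached triangles are intact), violating membership in $\C{H}_{\sf mway}$; pulling this back through the subdivision shows $X'$ separates all terminals in $G$. Items~2 and~3 are analogous: for \textsc{Subset FVS} a cycle through a terminal $s$ in $G$ becomes, after single subdivision, a cycle of length $\ge 4$ through the relevant cut vertex $s$ in $G'$ (and conversely every relevant-cut-vertex cycle of length $\ge 4$ projects to a terminal cycle), and for \textsc{Subset OCT} the double subdivision multiplies every cycle length by $3$ — wait, more carefully, replacing each edge by a path of length $3$ multiplies cycle length by $3$, preserving parity — so odd terminal cycles correspond to odd cycles of length $\ge 5$ through relevant cut vertices.

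\textbf{The width bound.} This is the more delicate part and, I expect, the main obstacle. Given a tree $\C{S}_{\sf mway}$-decomposition $(T,\chi,L)$ of $(G,S)$ of width $\ell$, I want to build a tree $\C{H}_{\sf mway}$-decomposition of $G' = G^{S,\Delta}_{\sf sd}$ of the claimed width. The natural approach: keep the same tree shape, and for each subdivision vertex $w_e$ with $e=\{u,v\}$, place $w_e$ in a bag containing both $u$ and $v$ (such a bag exists in $T$ since $\{u,v\}$ is an edge). The problem is that a single bag of the original decomposition might contain up to $\ell+1$ base vertices, hence up to $\binom{\ell+1}{2}$ edges among them, and all the corresponding subdivision vertices would need to be accommodated; the clean solution is to \emph{not} keep these new vertices as base vertices but make them part of the ``bag minus $L$'' portion — or better, to refine the tree by introducing, below each leaf, a path of new nodes, one per subdivided edge, each getting a small bag. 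I would handle the triangle vertices $a_s, a'_s$ similarly: they attach only to $s$, so they can live in a pendant leaf bag whose base part is the triangle $G'[\{s,a_s,a'_s\}]$ — but one must check $(G'[\{s,a_s,a'_s\}], \emptyset) \in \C{H}_{\sf mway}$ (it is: $K_3$ has no relevant cut vertex), and that $s$ itself is no longer required to be a base vertex. Actually the cleanest accounting: take the base set $L'$ of the new decomposition to be the old base set $L$ together with all subdivision vertices whose \emph{both} endpoints are base vertices, plus all triangle vertices; the non-base part of any bag then consists of old non-base vertices ($\le \ell+1$ of them) plus, in the worst refined node, a bounded number of subdivision/triangle vertices. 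Carefully choosing where to route each new vertex, the maximum number of non-base vertices in any bag becomes at most $3(\ell+1) + \binom{\ell+1}{2}$ (the $3(\ell+1)$ coming from the triangle apparatus on the $\le \ell+1$ base vertices in a bag that we cannot yet declare base, and the $\binom{\ell+1}{2}$ from subdivision vertices of edges among them), giving the stated bound after subtracting $1$. For \textsc{Subset OCT} the double subdivision contributes $2\binom{\ell+1}{2}$ instead of $\binom{\ell+1}{2}$, explaining the different constant. The main obstacle is carrying out this bag-routing bookkeeping precisely enough that every condition of Definition~\ref{def:tree:h:decomp} (connectivity of the subtrees, coverage of edges, leaf-uniqueness of base vertices, $\C{H}$-membership of induced base subgraphs) is verified without the width blowing up beyond the claimed polynomial; in particular one must be careful that a subdivision vertex $w_e$ whose endpoints land in different bags is never forced to be a base vertex (it isn't — it's a non-base vertex appearing in a single bag that contains one endpoint, with the other endpoint's adjacency handled along the connecting path), and that after all routing, the subtrees of individual vertices remain connected in the refined tree.
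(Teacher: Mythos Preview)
Your equivalence arguments are essentially the paper's (it proves only item~3 explicitly and says the rest are analogous): forbidden structures in $G$ correspond to long forbidden structures through relevant cut vertices in $G'$, minimal solutions avoid triangle vertices, and subdivision vertices can be swapped for an original endpoint. That part is fine.

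The width-bound construction, however, has a genuine gap. You declare a subdivision vertex $w_e$ to be in $L'$ only when \emph{both} endpoints of $e$ lie in $L$. That is the wrong direction. For an edge $e=\{u,v\}$ with $u\notin L$ and $v\in L$, your $w_e$ is non-base; but a single non-base vertex $u$ in a leaf bag $t$ can have unboundedly many base neighbours $v\in\chi(t)\cap L$, and each resulting $w_e$ must sit in a bag containing both $u$ and $v$. Since $v$ is a base vertex it lives in exactly one leaf bag, so all these $w_e$'s pile into that bag and the non-base part is unbounded. Your parenthetical (``a non-base vertex appearing in a single bag that contains one endpoint, with the other endpoint's adjacency handled along the connecting path'') does not repair this: the edge $\{w_e,v\}$ must be covered by a bag containing both, so $w_e$ cannot dodge $v$'s unique bag.

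The paper takes the opposite choice: $w_e$ is non-base only when \emph{both} endpoints are non-base (hence at most $\binom{\ell+1}{2}$ such vertices per bag). Concretely, to every node $t$ of $T$ it attaches a fresh leaf child $t_{\sf cpy}$; the original node $t$ keeps only $\chi(t)\setminus L$ together with the subdivision vertices for edges inside $\chi(t)\setminus L$ and the two triangle vertices $a_s,a'_s$ for each $s\in (S\cap\chi(t))\setminus L$; the new leaf $t_{\sf cpy}$ receives $\chi(t)$ plus all remaining new vertices (subdivision vertices with at least one endpoint in $L$, triangle vertices for terminals in $L$), all of which are declared base. This gives $|\chi'(t)\setminus L'|\le (\ell{+}1)+2(\ell{+}1)+\binom{\ell+1}{2}$ for single subdivision and $(\ell{+}1)+2(\ell{+}1)+2\binom{\ell+1}{2}$ for double subdivision --- so the $3(\ell+1)$ is one copy of the old non-base vertices plus two triangle vertices per non-base terminal, not anything about ``the $\le\ell+1$ base vertices'' as you wrote. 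Finally one checks that the base part of each $t_{\sf cpy}$ lies in $\C{H}_{\sf mway}$ (resp.\ $\C{H}_{\sf fvs}$, $\C{H}_{\sf oct}$), which follows because the original leaf satisfied the corresponding structure condition and subdividing edges, adding pendants, and attaching one triangle at the unique terminal cannot create a second relevant cut vertex (resp.\ a new forbidden cycle).
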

\begin{proof}
We will prove the third statement as it is the most involved. The proof of other two statements can be obtained by following similar arguments. Let $G' = G^{S,\Delta}_{\sf dsd}$. Note that a cycle $C$ in $G$ has even (resp. odd) number of vertices, if and only if the cycle $C'$ with vertex set $V(C') = \{u,w_e,w'_e,v \mid e = \{u,v\} \in E(C)\}$ and edge set $E(C') = \{\{u,w_e\}, \{w_e,w'_e\}, \{w'_e,v\} \mid e = \{u,v\} \in E(C)\}$ in $G'$, has even (resp. odd) number of vertices. Moreover, no (simple) cycle in $G'$ of length at least $4$ can contain a vertex from $\{a_s,a'_s \mid s \in S\}$. Also, for any cycle $C'$ in $G'$ of odd (resp. even) length at least $4$, we can obtain a cycle $C$ in $G$ of odd (resp. even) length at least $3$, by contracting the edges incident to vertices in $W = \{w_e,w'_e \mid e \in E(G)\}$. Notice that any minimal set $S \subseteq V(G')$, such that $G'-S$ is in $\C{H}_{\sf odd}$ does not contain a vertex from $S' = \{a_s,a'_s \mid s \in S\}$, as there is no odd cycle in $G'$ of length at least $5$ that contains a vertex from $S'$. Also, each vertex in $W$ has degree exactly $2$ in $G$, and thus any cycle in $G'$ containing them, must also contain their neighbors. From the above we can obtain that if $(G',k)$ is a yes-instance of \probVDHp{$\C{H}_{\sf oct}$}, then there is $B \subseteq V(G) \subseteq V(G')$ of size at most $k$, such that $G'-B \in \C{H}_{\sf oct}$. From the above discussions we can obtain that $(G,S,k)$ is a yes-instance of {\sc Subset FVS} if and only if $(G',k)$ is a yes-instance of \probVDHp{$\C{H}_{\sf oct}$}, and in particular, we can obtain that for $B \subseteq V(G)$, $G-B$ has no odd cycle containing a vertex from $S$ if and only if $G'-B$ has no odd length cycle with at least $5$ vertices, containing a vertex from $S$. This concludes the first part of the proof.  


We will now argue that $\hhtwp{\C{H}_{\sf oct}}(G') \leq \hhtwp{\C{S}_{\sf oct}}(G,S)$. To this end, consider an \htdp{$\C{S}_{\sf oct}$}, $(T,\chi,L)$ of width $\ell$, for $(G,S)$. 
Let $T'$ be the tree obtained from $T$ on $2|V(T)|$ vertices as follows. Initialize $T' = T$. For each $t \in V(T)$, we add a vertex $t_{\sf cpy}$ to $V(T')$ and the edge $\{t,t_{\sf cpy}\}$ to $E(T')$. We define the sets $X'$ and $L'$ as follows. Initialize $X' = X$. For each edge $e=\{u,v\} \in E(G)$, such that $u,v \in X$, we add $w_e,w'_e$ to $X'$. Furthermore, for each $s \in S \cap X$, we add $a_s,a'_s$ to $X'$. This completes the construction of $X'$. We set $L' = V(G') \setminus X'$. We next define a function $\chi': V(T') \rightarrow 2^{V(G')}$, where we initialize $\chi'(t') =\emptyset$, for each $t' \in V(T')$. Roughly speaking, for each non-leaf vertex $t$ in $T'$, $\chi'(t)$ will contain the vertices corresponding to the subdivision of the edges that are contained in $\chi(t) \setminus L$. Furthermore, $\chi'(t)$ will also contain the $K_3$, if any, that are attached to the vertices in $\chi(t) \setminus L$. For $t \in V(T) \subseteq V(T')$, set $\chi'(t) = (\chi(t) \setminus L) \cup \{w_e,w'_e \mid e=\{u,v\} \in E(G) \mbox{ and } u,v \in \chi(t) \setminus L\}\cup \{a_s, a'_s \mid s \in (S \cap \chi(t)) \setminus L\}$. Note that for each $t \in V(T)$, we have $\chi'(t) \subseteq X'$ and $|\chi'(t)| \leq 3 \cdot |\chi(t) \cap L| + 2 \cdot {{|\chi(t) \cap L|}\choose{2}} \leq 3 (\ell+1) + 2 \cdot {{\ell + 1}\choose {2}}$. Next (roughly speaking) for leaf-nodes $t_{\sf cpy}$, we will assign the remaining vertices in accordance with $\chi(t)$ as follows. For each $t \in V(T)$, where $\chi(t) \cap L \neq \emptyset$, we let $Z^t_{\sf cpy} = \{w_e,w'_e \mid e=\{u,v\} \in E(G), u \in \chi(t) \cap L \mbox{ and } v \in \chi(t)\} \cup \{a_s,a'_s \mid s \in S \cap L \cap \chi(t)\}$. Furthermore, we set $\chi(t_{\sf cpy}) = \chi(t) \cup Z^t_{\sf cpy}$ (note that by the construction of $L'$, we have $L \cup Z^t_{\sf cpy} \subseteq L'$.) 
  
Recall that for any $B \subseteq V(G)$, $G-B$ has no odd cycle containing a vertex from $S$ if and only if $G'-B$ has no odd length cycle with at least $5$ vertices, containing a vertex from $S$. The above together with the construction of $G'$ and $(T',\chi',L')$, and the assumption that $(T,\chi,L)$ is an \htdp{$\C{S}_{\sf oct}$} of $(G,S)$, implies that $(T',\chi',L')$ is an \htdp{$\C{H}_{\sf oct}$} of $G'$ of width less than $3 \cdot (\ell + 1) + 2 \cdot {{\ell + 1}\choose {2}}$. This concludes the proof. 
\end{proof}

We will now show that for each $\C{H} \in \{\C{H}_{\sf mway},\C{H}_{\sf fvs},\C{H}_{\sf oct}\}$, \probVDH\ admits an {\FPT} algorithm, using the known {\FPT} algorithms for {\sc Multiway Cut}, {\sc Subset FVS} and {\sc Subset OCT} {\em parameterized by the solution size}.   

\begin{lemma}\label{lem:multi-cut-fpt-algo}
For each $\C{H} \in \{\C{H}_{\sf mway},\C{H}_{\sf fvs},\C{H}_{\sf oct}\}$, \probVDH\ admits an {\rm \FPT} algorithm, when parameterized by $\hhmd(G)$. 
\end{lemma}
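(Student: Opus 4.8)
The plan is to reduce, for each $\C{H}\in\{\C{H}_{\sf mway},\C{H}_{\sf fvs},\C{H}_{\sf oct}\}$, the problem \probVDH\ parameterized by the solution size to the corresponding classical cut problem --- {\sc Multiway Cut}, {\sc Subset FVS}, and {\sc Subset OCT}, respectively --- parameterized by the solution size, and then invoke the known {\FPT} algorithms for those problems. I will describe the reduction for $\C{H}_{\sf mway}$ and indicate the modifications for the other two. The structural fact driving everything is that a vertex $v$ is a relevant cut vertex of a graph $H$ exactly when it is the apex of a \emph{pendant triangle}: there exist $a,b\in N_H(v)$ with $ab\in E(H)$, $N_H(a)=\{b,v\}$ and $N_H(b)=\{a,v\}$. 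Consequently $H-X\notin\C{H}_{\sf mway}$ iff some connected component of $H-X$ contains two such apices, i.e.\ iff there is a path in $H-X$ between two relevant cut vertices; for $\C{H}_{\sf fvs}$ (resp.\ $\C{H}_{\sf oct}$) the corresponding obstruction is a cycle (resp.\ odd cycle) of length $\ge 4$ (resp.\ $\ge 5$) through a relevant cut vertex. In each case, \emph{if the set of relevant cut vertices of $G-X$ were fixed in advance}, finding $X$ would be precisely an instance of (Node) {\sc Multiway Cut}, {\sc Subset FVS}, or {\sc Subset OCT} on that terminal set, in which terminals are additionally allowed to be deleted.

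The reduction has four steps. First, a bounded collection of reduction/branching rules disposes of constant-size ``self-obstructing'' induced subgraphs --- such as two pendant triangles joined by a short path --- by guessing which of their constantly many vertices enter the solution. Second, only \emph{candidate} triangles matter: a triangle $\{v,a,b\}$ of $G$ with designated apex $v$ can have $v$ become a relevant cut vertex of $G-X$ for some $|X|\le k$ only if $|(N_G(a)\cup N_G(b))\setminus\{a,b,v\}|\le k$, and deleting $a$ or $b$ is never more useful than deleting $v$; hence we may assume the solution deletes no triangle-partner. Third --- the technical heart --- an iterative-compression layer (process vertices one by one, compress a size-$(k+1)$ solution to a size-$k$ one) is used to pin down exactly which candidate apices survive in $G-X$, reducing to the case where the relevant cut vertices of $G-X$ form a \emph{fixed} set $R\subseteq V(G)$, namely the apices of the robust pendant triangles (those with both partners of degree $2$ in $G$). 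Fourth, with $R$ fixed the instance $(G,k)$ becomes exactly: delete $\le k$ vertices so that no two non-deleted vertices of $R$ lie in a common component --- a (Node) {\sc Multiway Cut} instance with deletable terminals $R$ --- solved in $f(k)\cdot n^{\C{O}(1)}$ time by the known {\FPT} algorithm for {\sc Multiway Cut} parameterized by the cut size. For $\C{H}_{\sf fvs}$ and $\C{H}_{\sf oct}$ the analogous pinning-down produces, respectively, a {\sc Subset FVS} and a {\sc Subset OCT} instance with terminal set $R$: the ``length $\ge 4$'' and ``odd length $\ge 5$'' clauses in the definitions of $\C{H}_{\sf fvs}$ and $\C{H}_{\sf oct}$ are exactly what is needed so that the pendant triangles (which are $3$-cycles through an apex) do not count as obstructions, making an obstruction coincide with a cycle (resp.\ odd cycle) through a terminal in the usual sense. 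Invoking the known {\FPT} algorithms for {\sc Subset FVS} and {\sc Subset OCT} parameterized by solution size then completes the proof.

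The step I expect to be the main obstacle is the third one: deleting a vertex can turn a non-pendant triangle into a pendant one and thereby \emph{create} a new relevant cut vertex, so the ``terminal set'' of the would-be cut instance depends on the solution $X$ that we are searching for. The whole argument rests on breaking this circularity --- via the candidate-triangle observation together with iterative compression --- so that only a fixed, $X$-independent terminal set $R$ must be considered, after which a black-box call to the known cut-problem algorithm finishes. Everything else (the preprocessing rules, the gadget bookkeeping, and calibrating the length thresholds for the cycle variants) is routine once this point is settled.
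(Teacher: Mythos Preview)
Your proposal takes a substantially more elaborate route than the paper. The paper's proof is a direct one-step reduction with no branching and no iterative compression at all: it computes the set $Q$ of relevant cut vertices of $G$, lets $Z$ be the set of all pendant-triangle partners of vertices in $Q$, sets $G'=G-Z$, and then simply asserts (without further argument) that $(G,k)$ is a yes-instance of \probVDHp{$\C{H}_{\sf mway}$} (resp.\ $\C{H}_{\sf fvs}$, $\C{H}_{\sf oct}$) if and only if $(G',Q,k)$ is a yes-instance of {\sc Multiway Cut} (resp.\ {\sc Subset FVS}, {\sc Subset OCT}). It then invokes the known {\FPT} algorithms for those three problems. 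There is no preprocessing of ``self-obstructing'' subgraphs, no candidate-triangle bookkeeping, and no compression layer.

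The difference between the two approaches is precisely the circularity you flag in your third step. The paper treats the set of relevant cut vertices as if it were $X$-independent from the start --- it fixes the terminal set to be the relevant cut vertices of $G$ itself and declares the equivalence with the corresponding cut instance, without discussing the possibility that deleting vertices could create new pendant triangles and hence new relevant cut vertices. Your proposal spends most of its effort (steps~1--3) trying to justify why one may assume this; the paper simply skips that justification.

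Two remarks on your write-up. First, your ``technical heart'' step~3 is underspecified: you assert that iterative compression lets you ``pin down exactly which candidate apices survive in $G-X$'' so that the surviving relevant cut vertices are precisely the robust ones, but you do not explain how knowing a size-$(k{+}1)$ solution $X_0$ constrains which triangles become pendant under the \emph{unknown} size-$k$ solution $X$. That implication is not automatic, and as written the step reads as a hope rather than an argument. Second, if the paper's asserted equivalence is indeed correct (and in the examples one naturally tries it always seems to hold, even though the ``same $S$'' argument can fail), then your entire steps~1--3 are unnecessary and the reduction is the one-liner the paper gives.
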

\begin{proof}
Consider $\C{H} \in \{\C{H}_{\sf mway},\C{H}_{\sf fvs},\C{H}_{\sf oct}\}$, and an instance $(G,k)$ of \probVDH, where $k \geq 0$ (as otherwise, the problem is trivial). Let $Q$ be the set of relevant cut vertices in $G$. Note that $Q$ can be computed in polynomial time. If $Q = \emptyset$, then clearly, $G \in \C{H}$, and thus we can return that $(G,k)$ is a yes-instance of the problem. We will next consider the case when $Q \neq \emptyset$. For each $v \in Q$, let $\C{D}_v$ be the set of connected components $D$ in $G-\{v\}$ which is an edge whose both endpoints are adjacent to $v$, i.e., $D$ that has exactly two vertices and for each $u \in V(D)$, we have $\{u,v\} \in E(G)$. Note that for any $v \in Q$, we have $\C{D}_v \neq \emptyset$, and for each $D \in \C{D}_v$, $V(D) \cap Q = \emptyset$. Let $Z = \cup_{v \in Q, D \in \C{D}_v} V(D)$, and $G' = G - Z$. Notice that all of the following hold:
\begin{enumerate}
\setlength{\itemsep}{-1pt}
\item $(G,k)$ is yes-instance of \probVDHp{$\C{H}_{\sf mway}$} if and only if $(G',Q,k)$ is a yes-instance of {\sc Multiway Cut}. 
\item $(G,k)$ is yes-instance of \probVDHp{$\C{H}_{\sf fvs}$} if and only if $(G',Q,k)$ is a yes-instance of {\sc Subset FVS}.
\item $(G,k)$ is yes-instance of \probVDHp{$\C{H}_{\sf oct}$} if and only if $(G',Q,k)$ is a yes-instance of {\sc Subset OCT}.    
\end{enumerate}
Thus, to resolve the instance $(G,k)$ of \probVDHp{$\C{H}$}, when $Q \neq \emptyset$, we can now invoke the known {\FPT} algorithms (see~\cite{Marx06,DBLP:journals/siamdm/CyganPPW13,DBLP:journals/siamdm/LokshtanovMRS17,DBLP:conf/soda/KakimuraKK12}) for each  problem, as discussed above. This concludes the proof.  
\end{proof}

We are now ready to prove Theorem~\ref{thm:app} by arguing the applicability of  Theorem~\ref{thm:mainEquiv}. 

\begin{proof}[Proof of Theorem~\ref{thm:app}]
For each $\C{H} \in \{\C{H}_{\sf mway},\C{H}_{\sf fvs},\C{H}_{\sf oct}\}$, we have the following properties: i) $\C{H}$ is \msotwo definable and closed under disjoint union (see Observation~\ref{obs:multicut-family-cmso}) and ii) \probVDH\ admits an {\FPT} algorithm, when parameterized by $\hhmd(G)$. Thus, from Theorem~\ref{thm:mainEquiv}, for each $\C{H} \in \{\C{H}_{\sf mway},\C{H}_{\sf fvs},\C{H}_{\sf oct}\}$, there is an {\FPT} algorithm for \probVDH, parameterized by $\hhtw(G)$. The above combined with Lemma~\ref{lem:multi-cut-fpt-algo} implies the proof of the theorem. 
\end{proof}

\section{Cross Parameterizations}\label{sec:crossParam}
In this section we extend our results from previous section to problems where the parameterizations is with respect to other problems. For an illustration consider {\sc Odd Cycle Transversal} on chordal graphs. Let $\hh$ denotes the family of chordal graphs. 
It is well known that  {\sc Odd Cycle Transversal} is polynomial time solvable on chordal graphs.  Further, given a graph $G$ and a modulator to  chordal graphs  of size  $\hhmd(G)$, {\sc Odd Cycle Transversal} admits an algorithm with running time $2^{\OO(\hhmd(G)))}n^{\OO(1)}$. It is natural to ask whether {\sc Odd Cycle Transversal} admits an algorithm with running time $f(\hhdepth(G))n^{\OO(1)} $ or $f(\hhtw(G))n^{\OO(1)} $, given a $\hh$-elimination forest of $G$ of depth~$\hhdepth(G)$ and $\hh$-decomposition of $G$ of width~$\hhtw(G)$, respectively. The question  is also relevant, in fact more challenging, when $\hh$-elimination forest of $G$ of depth~$\hhdepth(G)$ or $\hh$-decomposition of $G$ of width~$\hhtw(G)$ are not given. We provide sufficient conditions which allows us to have an algorithm for vertex deletion problems  (or edge deletion problems)  when given a $\hh$-elimination forest of $G$ of depth~$\hhdepth(G)$ or $\hh$-decomposition of $G$ of width~$\hhtw(G)$. Here, $\hh$ is a family of graphs.

\begin{theorem}\label{thm:fullCrossParameterization}
Let $\cal H$ be a family of graphs and $\Pi$ be a monotone parameterized graph problem and $(G,k)$ be an instance of $\Pi$. 
Further assume that we have following. 
\begin{enumerate}
\setlength{\itemsep}{-1pt}
\item $\Pi$ has {\rm \fii}.
\item $\Pi $ is {\rm \FPT}  parameterized by $\hhmd(G)$. 
\item An \htd\ (resp. \hed) of $G$ of width~$\hhtw(G)$  (resp. depth $\hhdepth(G)$) is given.
\end{enumerate}
Then, there is an algorithm that, given an $n$-vertex graph $G$ and an integer $k$, decides whether $(G,k)\in \Pi$ in time~$f(\hhtw(G)) \cdot n^{\Oh(1)}$ (resp. $f(\hhdepth(G)) \cdot n^{\Oh(1)}$). That is,  $\Pi$ is {\rm \FPT}  parameterized by $\hhtw(G)$ (resp. $\hhdepth(G)$).
\end{theorem}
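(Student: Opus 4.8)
The plan is to reduce the problem to the case where the graph~$G$ has bounded treewidth, and then invoke the fact that every \fii{} problem is solvable in linear time on bounded-treewidth graphs when a tree decomposition is supplied (via the standard finite-state dynamic programming, see~\cite{BodlaenderFLPST16}). I will treat the \htd{} case; the \hed{} case is entirely analogous, since a rooted forest of depth~$d$ yields a (standard) tree decomposition of width at most~$d$ of the graph it decomposes, and we can then glue the pieces as below. So assume we are given an \htd{} $(T,\chi,L)$ of~$G$ of width~$\ell=\hhtw(G)$. Every node~$t\in V(T)$ has $|\chi(t)\setminus L|\le \ell+1$, and each base component (a connected component of~$G[L]$) sits entirely inside the bag of a single leaf; the only obstruction to bounded treewidth is that these base components may be arbitrarily large.

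The key step is to shrink each base component using the replacement machinery of Lemma~\ref{lem:red2finiteindex}. First, since $\Pi$ is monotone and has \fii, and since by hypothesis~(2) we have an \FPT{} algorithm for~$\Pi$ parameterized by~$\hhmd(G)$, we can apply Lemma~\ref{lem:red2finiteindex} with the parameter~$\Psi:=\hhmd$. Concretely: for each leaf~$t$ of~$T$, let~$B_t=\chi(t)\setminus L$ (the boundary, of size at most~$\ell+1$) and consider the base component(s) in~$\chi(t)\cap L$ as a $(\ell{+}1)$-boundaried graph~$H_t$ attached along~$B_t$. Note $\hhmd(H_t\oplus \text{anything of bounded size})$ is bounded in terms of~$\ell$, so when we run the algorithm of Lemma~\ref{lem:red2finiteindex} on~$H_t$, the relevant quantities~$\iota(H_t),\mu(H_t)$ are controlled: $\mu(H_t)$ is at most~$\ell+|V(Y)|$ over representatives~$Y\in\mathcal{S}_{\ell+1}$, hence bounded by a function of~$\ell$, and $\iota(H_t)\le |V(H_t)|\le n$. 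Thus in time~$f(\ell)\cdot n^{\Oh(1)}$ we replace~$H_t$ by an equivalent $(\ell{+}1)$-boundaried graph~$H_t^\star$ with $|V(H_t^\star)|<\xi_{\ell+1}$ and record the translation constant~$c_t\in\mathbb{Z}$. Doing this over all leaves (there are at most~$n$ of them, and the replacements are on vertex-disjoint parts attached along disjoint boundaries, so they commute — as in the proof of Lemma~\ref{lem:three-implies-one}), we obtain a graph~$G^\star$ and a total constant~$c^\star=\sum_t c_t$ such that $(G^\star,k+c^\star)\in\Pi\iff (G,k)\in\Pi$, where we use the definition of~$\equiv_\Pi$ and the fact that gluing commutes with these local replacements.

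It remains to observe two things about~$G^\star$. First, $(T,\chi,L)$ induces an \htd{} of~$G^\star$ of the same width~$\ell$ (replace $\chi(t)\cap L$ by~$V(H_t^\star)$ in each leaf), and now every base component has size less than~$\xi_{\ell+1}$; hence the (standard) treewidth of~$G^\star$ is at most~$\ell+\xi_{\ell+1}$, and from the \htd{} we can explicitly construct a tree decomposition of~$G^\star$ of this width in polynomial time (add the small base components into their leaf's bag). Second, with this tree decomposition in hand, we solve the instance~$(G^\star,k+c^\star)$ of the \fii{} (hence finite-state) problem~$\Pi$ by the standard bounded-treewidth dynamic programming in time~$f'(\ell+\xi_{\ell+1})\cdot n^{\Oh(1)}=f''(\ell)\cdot n^{\Oh(1)}$; note that even if~$k+c^\star$ is large, the DP runs in time depending only on the treewidth (the \fii{} equivalence classes for a given boundary size are finitely many), so this is fine. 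Chaining the equivalences gives the answer for~$(G,k)$, and the total running time is~$f(\hhtw(G))\cdot n^{\Oh(1)}$ as claimed.

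\textbf{Main obstacle.} The delicate point is making the replacement step fully rigorous: one must verify that the local replacements in different leaves genuinely commute (this is where one uses that the boundaries~$B_t$ are pairwise disjoint subsets of~$V(G)\setminus L$ and that each base component lives in exactly one leaf, so the "rest of the graph'' seen by leaf~$t$ already includes the replaced versions of the other leaves, and the $\equiv_\Pi$ relation — being a congruence under~$\oplus$ — is preserved under such substitution), and that the translation constants add up correctly; this is handled exactly as in the proof of Lemma~\ref{lem:three-implies-one}. The non-uniformity (the representative sets~$\mathcal{S}_{\ell+1}$ and the bounds~$\xi_{\ell+1}$ are hardwired, not computed) is inherited from Lemma~\ref{lem:red2finiteindex} and is expected, matching the statement of Theorem~\ref{thm:mainEquiv}. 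A minor point is that if instead of a decomposition we only have hypothesis~(3)'s alternative (that~$\hh$ is \mso-definable, union-closed, and \probVDH{} is \FPT), we first invoke Theorem~\ref{thm:mainEquiv} together with Lemma~\ref{lem:self-reduce-h-td} / Lemma~\ref{lem:self-reduce-h-ed} to compute the required decomposition, reducing to the case above.
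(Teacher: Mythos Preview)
Your proposal is correct and follows essentially the same route as the paper's proof: replace each large base component by a bounded-size representative via Lemma~\ref{lem:red2finiteindex} (using the \FPT{} algorithm for $\Pi$ parameterized by $\hhmd$), obtain an equivalent instance of bounded treewidth, and solve there. One small slip: the boundaries $B_t=\chi(t)\setminus L$ are \emph{not} pairwise disjoint in general (different leaves can share non-base vertices), but this is harmless since what actually matters is that the base components $\chi(t)\cap L$ are pairwise disjoint, and your deferral to the sequential replacement argument of Lemma~\ref{lem:three-implies-one} handles this correctly. You are also a bit more explicit than the paper about the final step, invoking finite-state dynamic programming on bounded treewidth via \fii{} rather than leaving it implicit.
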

\begin{proof}
The proof of this theorem is similar to the proof of Lemma~\ref{lem:three-implies-one}, for the sake of completeness we give a complete proof here. Consider an instance $(G,k)$ of $\Pi$, and let $(T, \chi, L)$ be the given \htd\ (resp. \hed) of width (resp. depth) at most $\hhtw(G)$ (resp. $\hhed(G)$) for $G$, and let $\ell = \hhtw(G)$ (resp. $\ell = \hhed(G)$). Let $\SC{P}_{\sf mod}$ be an {\FPT} algorithm for $\Pi$, running in time $f(\hhmd(G)) \cdot n^{\C{O}(1)}$. Using Lemma~\ref{lem:red2finiteindex}, we will next bound the size of $\chi(t)$, for each leaf $t$ in $T$. 

We will assume that the constants $\xi_i$ from Lemma~\ref{lem:red2finiteindex}, for each $i \in [\ell+1]$ are hardcoded in the algorithm. We will bound the number of vertices in $G[\chi(t)\cap L]$ by $\sum_{i \in [\ell+1]} \xi_i$, for each $t \in V(T)$, then the (standard) treewidth of $G$ can be bounded by $\ell + \sum_{i \in [\ell+1]} \xi_i$ (resp. $\ell + 1 + \sum_{i \in [\ell+1]} \xi_i$). Using the property that $\Pi$ has {\fii}, (roughly speaking) our next objective will be to bound the size of $\chi(t)$, for each $t \in V(T)$ by replacements using Lemma~\ref{lem:red2finiteindex}. Let $\wtilde{A}$ be the set of nodes in $T$ whose bag contains more than $\sum_{i \in [\ell+1]} \xi_i$ vertices from $L$, i.e., $\wtilde{A} = \{t \in V(T) \mid |\chi(t) \cap L| > \sum_{i \in [\ell+1]} \xi_i\}$. Furthermore, let $\wtilde{\C{G}}$ be the set of graphs induced by vertices in $L$, in each of the bags of nodes in $\wtilde{A}$, i.e., $\wtilde{\C{G}} = \{G[\chi(t) \cap L] \mid t \in \wtilde{A}\}$. We let $\wtilde{\C{G}} = \{\wtilde{G}_1, \wtilde{G}_2, \cdots, \wtilde{G}_{q}\}$. 

We create a sequence of $G_0, G_1, \cdots, G_{q}$ graphs and a sequence of constant $c_0,c_1,\cdots, c_{q}$ as follows. Intuitively speaking, we will obtain the above sequence of graph by replacing $\wtilde{G}_i$, for $i \in [q]$, by some graph obtained using Lemma~\ref{lem:red2finiteindex}. Set $G_0 = G$ and $c_0 = 0$. We iteratively compute $G_i$, for each $i \in [q]$ (in increasing order) as follows. For the graph $\wtilde{G}_i$, let $\wtilde{t}_i$ be the unique leaf in $T$, such that $V(\wtilde{G}_i) \subseteq \chi(\wtilde{t}_i)$, and let $\wtilde{B}_i = \chi(\wtilde{t}_i) \setminus V(\wtilde{G}_i)$ and $\wtilde{b}_i = |\wtilde{B}_i|$. Note that $|\wtilde{B}_i| \leq \ell$ (resp. $|\wtilde{B}_i| \leq \ell + 1$), as $(T,\chi, L)$ is an \hed\ (resp. \htd) of $G$ of depth (resp. width) $\ell$. Fix an arbitrary injective function $\lambda_{\wtilde{G}_i} : \wtilde{B} \rightarrow \{1,2,\cdots, \wtilde{b}_i\}$, and then $\wtilde{G}_i$ is the boundaried graph with boundary $\wtilde{B}_i$.\footnote{We have slightly abused the notation, and used $\wtilde{G}_i$ to denote both a graph and a boundaried graph.} Note that $V(\wtilde{G}_i]) \subseteq V(G_{i-1})$. Also, let $G'_i$ be the boundaried graph $G_{i-1}-V(\wtilde{G}_i)$, with boundary $\wtilde{B}_i$. Using Lemma~\ref{lem:red2finiteindex} and the algorithm $\SC{P}_{\sf mod}$, we find the graph $\wtilde{G}^*_i$ and the translation constant $c_i$, such that $\wtilde{G}_i \equiv_{\Pi} \wtilde{G}^*_i$ and $|V(\wtilde{G}^*_i)| \leq \xi_{\wtilde{b}_i}$ in time bounded by $\cO(f(\ell) \cdot n^{\C{O}(1)} )$.\footnote{Note that for any graph $\what{G}$, $\iota(\what{G}) \leq |V(\what{G})|$ (see Definition~\ref{def:equiv-boundaried-graph}).}

We let $G_i$ be the graph $\wtilde{G}^*_i \oplus G'_i$, which can be computed in time bounded by $\cO(f(\ell) \cdot n^{\C{O}(1)} )$. Let $c^* = \sum_{i \in [q]} c_i$. With the constructions described above, we are now in a position to prove Lemma~\ref{lem:three-implies-one}. Note that to obtain the desired result, it is enough to prove the following statements. 
\begin{enumerate}
\item The instance $(G_q,k+c^*)$ can be constructed in time bounded by $\cO(g(\ell) \cdot n^{\C{O}(1)})$, for some function $g$, 
\item $(G_q, k+c^*)$ and $(G,k)$ are equivalent instances of $\Pi$, and 
\item The treewidth of $G_q$ is at most $\ell + \max_{i \in [q]}{c_i} + \sum_{i \in [\ell]} \zeta_i$ and the $\C{H}$-treewidth (resp. $\C{H}$-elimination distance) of $G_q$ is at most $\ell + \max_{i \in [q]}{c_i}$.\footnote{We remark that although $k+c^*$ can possibly be much larger than $k$, both the treewidth and the $\C{H}$-treewidth of $G_q$ are at most some additive constants (depending on $\C{H}$) away from $k$.} 
\end{enumerate} 
As stated perviously, we will assume that the constants $\xi_i$, for $i \in [\ell+1]$ are hardcoded in the algorithm. Thus, we can construct the set $\wtilde{\C{G}}$ in polynomial time. Also, note that for any $i \in [q]$, $\iota(\wtilde{G}^*_i \oplus G'_i) \leq \iota(G) \leq |V(G)|$ (see Definition~\ref{def:equiv-boundaried-graph}). Thus, for some function $f$, we can construct the instance $(G_q, k+c^*)$ in time bounded by $g(\ell) \cdot n^{\C{O}(1)}$. 

We will inductively argue that for each $i \in [q]_0$, $(G_i, k + \sum_{j \in [i]_0} c_j)$ and $(G,k)$ are equivalent instances of $\Pi$. As $G_0 = G$ and $k+ c_ 0 =k$, the claim trivially follows for the case when $i = 0$. Next we assume that for some $q' \in [q-1]_0$, for each $i' \in [q']_0$, $(G_{i'}, k + \sum_{j \in [i']_0} c_j)$ and $(G,k)$ are equivalent instances of the problem. We will next prove the statement for $i = i'+1$. It is enough to argue that $(G_{i-1}, k + \sum_{j \in [i-1]_0} c_j)$ and $(G_{i}, k + \sum_{j \in [i]_0} c_j)$ are equivalent instances. Recall that, by construction, $G[V(\wtilde{G}_i)] = G_{i-1}[V(\wtilde{G}_i)] = \wtilde{G}_i$ and $G'_i = G_{i-1} - V(\wtilde{G}_i)$ are boundaried graphs with boundary $\wtilde{B}_i$, and $G_i = \wtilde{G}_i \oplus G'_i$. From Lemma~\ref{lem:red2finiteindex}, $\wtilde{G}^*_i \equiv_{\Pi} \wtilde{G}_i$. Thus by definition, we have that $(G_{i-1}, k + \sum_{j \in [i-1]_0} c_j)$ and $(G_{i}, k + \sum_{j \in [i]_0} c_j)$ are equivalent instances of $\Pi$. 

To prove the third statement, note that it is enough to construct an $\C{H}$-tree decomposition, $(T_q, \chi_q, L_q)$, of $G_q$, where for each $t \in V(T_q)$, we have $|\chi_q(t)| \leq \ell + \max_{i \in [q]}{c_i} + \sum_{i \in [\ell]} \chi_i$ and $|\chi_q^{-1}(t) \setminus L_q| \leq \ell + \max_{i \in [q]}{c_i}$. (Using similar arguments we can also obtain the statement regarding \hed.) Let $X = \cup_{i \in [q]} V(\wtilde{C}_i)$, and $L_q = (L \setminus X ) \cup (V(G_q) \setminus V(G))$ and $T_q = T$. For each $t \in V(T) \setminus \wtilde{A}$, we set $\chi_q(t) =\chi(t)$, and for each $i \in [q]$, we set $\chi_q(\wtilde{t}_i) = (\chi(\wtilde{t}_i) \setminus V(\wtilde{G}_i)) \cup V(\wtilde{G}^*_i)$. For each $i \in [q]$, note that $|V(\wtilde{G}^*_i)| \leq \xi_{\wtilde{b}_i} \leq \sum_{j \in [\ell]} \xi_{j}$. Thus we can obtain that $(T_q, \chi_q, L_q)$ is an $\C{H}$-tree decomposition of $G_q$ that satisfies all the required properties. This concludes the proof. 
%
\end{proof}

\newcommand{\QQ}{{\cal Q}}
\newcommand{\PP}{{\cal P}}
\newcommand{\equ}{\equiv_{u}}
\newcommand{\eqh}{\equiv_{\hh}}
\newcommand{\eqf}{{\sf eq}}
\newcommand{\ugf}{{\sf ug}}
\newcommand{\sigtest}{{\sf Sig-Test}}
\newcommand{\rsig}{{\sf real\mbox{-}sig}}
\newcommand{\heds}{$\hh$-elimination decompositions}
\newcommand{\wT}{{\widehat{T}}}
\newcommand{\imp}[3]{I_{#1}(#2,#3)}
\newcommand{\used}[2]{{\sf Used}_{#1}(#2)}

\newcommand{\restricted}[2]{{\sf Restricted}_{#1}#2}

\newcommand{\present}{\spadesuit}
\newcommand{\future}{\heartsuit}

\section{Uniform {\FPT} Algorithm}

In this section we design a uniform {\FPT} algorithm for \probEDH  parameterized by $\hhdepth(G)$ assuming we have a {\em specific type} of  algorithm for solving \probVDH parameterized by $\hhmd(G)$.   Our algorithm is generic and in the next section we explain the corollaries of this result for various families \hh. Throughout this section we assume that \hh\ is a hereditary family of graphs and it is closed under disjoint union. Throughout the section, $k$ is the parameter in the input instance of \probEDH. Unless specified all the graphs mentioned in this section are $2k$-boundaried graphs and we assume that $k\geq 2$.  Our algorithm uses the recursive understanding technique~\cite{DBLP:journals/siamcomp/ChitnisCHPP16,DBLP:conf/stoc/GroheKMW11,DBLP:conf/focs/KawarabayashiT11}.

Here, we identify a $(q,k)$-unbreakable induced subgraph from an input graph and we replace it with a smaller representative graph, where $q$ is bounded by function of $k$. Eventually, when the size of the graph become bounded by a function of $k$, we use a brute-force algorithm to solve the problem.

\begin{definition}[Canonical equivalence relation]
\label{def:canonical-eq}
The canonical equivalence relation $\eqh$ for \hh\ over the set of boundaried graphs is defined as follows. Two graphs $G_1$ and $G_2$ are equivalent if for any graph $H$, $G_1 \oplus H \in \hh \Leftrightarrow G_2 \oplus  H \in \hh$. 
\end{definition}

We assume that we are given a refinement of the equivalence relation $\eqh$ which we call a {\em user defined equivalence relation}.  We use $\equ$ to denote this refinement of the equivalence relation $\eqh$. We use $\QQ(\equ)$ to denote the set of equivalence classes of $\equ$. 


\begin{definition}[Comparison of equivalence classes]
\label{def:eq-compare}
Let ${Q}_1$ and ${Q}_2$ be two equivalence classes in $\equ$. Let $G_1\in {Q}_1$ and $G_2\in {Q}_2$. 
We say that ${Q}_1$ is at least as good as ${Q}_2$ if the following holds. For any graph $H$, if $G_2\oplus H \in \hh$, then $G_1\oplus H \in \hh$. 
\end{definition}

Throughout the section we also assume that along with $\equiv_{u}$, we are given a user defined function ${\sf ug} \colon \QQ(\equ) \times \QQ(\equ) \mapsto \{0,1\}$ such that $\ugf(Q,Q)=1$ for all $Q\in \QQ(\equ)$ and it has the following property. If $\ugf(Q_1,Q_2)=1$, then $Q_1$ is at least as good as $Q_2$. We would like to mention that the reverse may not be true. 
That is, even if $Q_1'$ is at least as good as $Q_2'$, the user may define that $\ugf(Q_1',Q_2')=0$. 
We also assume that equivalence classes in $\equ$ satisfies the following property. 

\begin{definition}[Component deletion property]
\label{def:cdpeq}
Let $G$ be a graph and $Q$ is an equivalence class in $\equ$ such that $G\in Q$. Let $C$ be a connected component of $G$ such that $V(C)\cap \delta(G)=\emptyset$. Then, $G-C$ belongs to $Q$. Moreover, if $F$ is a graph in $\hh$ and $\delta(F)=\emptyset$, then $G\uplus F \in Q$. 
\end{definition}

Moreover, we  assume that we have an access to an algorithm, called {\em user's algorithm}, denoted by ${\cal A}_u$ that, given a graph $G$, a non-negative integer $k'$ and equivalence class $Q\in \QQ(\equ)$, outputs a vertex subset $S$ of size at most $k'$ such that $G-S$ belongs to an equivalence class which is at least as good as $Q$ (if it exists). If no such set exists, then the algorithm outputs {\sf No}. We use $f_{u}(k', |\QQ(\equ)|,n)$ to denote the running time of ${\cal A}_u$ where $n=\vert V(G)\vert$. 

Next, we define the notion of state-tuple. Here two ``graphs having  same set of state-tuples'' can be treated as identical. 

\begin{definition}[State-tuple]
A state-tuple is a tuple $(D,T,\chi,L,\PP,\eqf)$, where 
$L\subseteq D\subseteq [2k]$, $\PP$ is a partition of $L$, $\eqf \colon \PP \rightarrow \QQ(\equ)$,
$T$ is a rooted forest of depth at most $k$ and $\vert V(T) \vert \leq (2k)^2$, and $\chi \colon V(T) \to 2^D \cup \{\present,\future\}$ such that the following holds. 

\begin{enumerate}
    \item For each internal node~$t$ of~$T$ we have $\chi(t)\in  \{\present,\future,\emptyset\} \cup \{\{x\}\colon x\in (D\setminus L) \}$. 
    \item For each leaf $t$ of~$T$, $\chi(t) \in \PP \cup \{\{x\}\colon x\in (D\setminus L) \}$. 
    \item The sets~$(\chi(t))_{t \in N}$ form a partition of~$D$, where $N=\{t\in V(T) \colon \chi(t)\notin \{\present,\future,\emptyset\}\}$
\end{enumerate}
\end{definition}

Next we define a notation which we use throughout the section. Let $G$ be a graph and $T$ be a rooted forest. 
Let $\chi$ be a function from $V(T)$ to a super set of $2^{V(G)}$. Let $Y$ be the nodes of $T$ that are mapped  to subsets containing at least one vertex from $\delta(G)$ by the function  ${\chi}$. 
Then, we use $\imp{T}{G}{\chi}$ to denote the set of nodes of $T$ that belong to the unique paths in $T$ from the vertices in $Y$ to the corresponding roots. 

\begin{definition}
\label{def:satisf}
Let $G$ and be a graph and $s=(D,T,\chi,L,\PP,\eqf)$ be a state-tuple. 
We say that $G$ satisfies $s$ if $\Lambda(G)=D$ and there is an $\hh$-elimination decomposition $(\wT,\widehat{\chi},\widehat{L})$ of $G$, of depth at most $k$ with the following properties. 
\begin{itemize}
\item[(a)] $T$ is isomorphic to $\wT[Z]$, where $Z=\imp{\wT}{G}{\widehat{\chi}}$. For any node $t\in Z$, we use the same notation to represent the node in $T$ that maps to $t$ by the isomorphism function. That is, $V(T)=Z$.   
\item[(b)] For any node~$t$ of~$T$ with $\widehat{\chi}(t)\cap \delta(G)\neq \emptyset$, 
$\chi(t)=\lambda_G(\widehat{\chi}(t)\cap \delta(G))$
\item[(c)] For any node $t$ of~$T$ with $\widehat{\chi}(t)\neq \emptyset$ and 
$\widehat{\chi}(t)\subseteq V(G)\setminus \delta(G)$, $\chi(t)=\present$. 
\item[(d)] Let $t_1,\ldots,t_{\ell}$ be the leaf nodes of $\wT$ that belong to $Z$. Then, $\PP=(\chi(t_i))_{i\in [\ell]}$. 
\item[(e)] For each node~$t$ of~$T$ with $\chi(t)\in \PP$, $G[\widehat{\chi}(t)]$ belongs to an equivalence class $Q\in \QQ(\equ)$ which  is at least as good as $\eqf(\chi(t))$. That is, $\ugf(Q,\eqf(\chi(t))=1$. 
\end{itemize} 
Also, we say that $G$ satisfies $s$ through $(\wT,\widehat{\chi},\widehat{L})$. 
\end{definition}

\begin{definition}
\label{def:exact-satisf}
We say that a graph $G$ exactly satisfies a state-tuple $(D,T,\chi,L,\PP,\eqf)$ if $G$ satisfies $(D,T,\chi,L,\PP,\eqf)$ such that a stricter condition than condition (e) in Definition~\ref{def:satisf}  holds. 
That is, for each node~$t$ of~$T$ with $\chi(t)\in \PP$, $G[\widehat{\chi}(t)]$ belongs to the equivalence class $\eqf(\chi(t))$. 
\end{definition}

\begin{definition}
Signature of a graph $G$, denoted by $\rsig(G)$, is the set of all state-tuples $s$ such that $G$ satisfies $s$. 
\end{definition}

\begin{observation}
Signature of a graph $G$ is unique and its cardinality is bounded by a function $k$ and $\vert \QQ(\equ)\vert$.  
\end{observation}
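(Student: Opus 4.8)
The plan is to verify the two assertions directly from the definitions; neither requires more than bookkeeping.

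For uniqueness, recall that $\rsig(G)$ is defined to be the set of all state-tuples $s$ such that $G$ satisfies $s$. The predicate ``$G$ satisfies $s$'' of Definition~\ref{def:satisf} is a property of the pair $(G,s)$, so for a fixed $G$ the collection of all such $s$ is a well-defined set: no choice is involved, hence it is unique. Here we adopt the natural convention, already forced by condition (a) of Definition~\ref{def:satisf} (which refers to $T$ only through an isomorphism onto $\wT[Z]$), that the rooted forest $T$ inside a state-tuple is taken up to isomorphism; this makes the word ``unique'' literally meaningful and keeps the number of forests finite.

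For the cardinality bound it suffices to bound the total number of state-tuples $(D,T,\chi,L,\PP,\eqf)$, since $\rsig(G)$ is a subset of this collection. We count coordinate by coordinate. First, $D$ and $L$ are subsets of $[2k]$, so there are at most $2^{2k}\cdot 2^{2k}$ choices for the pair $(D,L)$. Next, $\PP$ is a partition of $L$ and $|L|\le 2k$, so there are at most $B_{2k}$ choices, where $B_{2k}$ is the $2k$-th Bell number, a function of $k$ only. The forest $T$ has $|V(T)|\le (2k)^2$ vertices and depth at most $k$, and the number of isomorphism types of such forests is bounded by, say, the number of graphs on $(2k)^2$ labelled vertices, i.e.\ $2^{O(k^4)}$, again a function of $k$ only. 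Given $T$, the map $\chi$ sends $V(T)$ into $2^{D}\cup\{\present,\future\}$, a set of size at most $2^{2k}+2$, so there are at most $(2^{2k}+2)^{(2k)^2}$ such maps. Finally, $\eqf$ sends $\PP$ into $\QQ(\equ)$ with $|\PP|\le |L|\le 2k$, so there are at most $|\QQ(\equ)|^{2k}$ choices. Multiplying these bounds, the number of state-tuples — and hence $|\rsig(G)|$ — is at most $h(k)\cdot |\QQ(\equ)|^{2k}$ for some computable function $h$ depending only on $k$, which is the claimed form.

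The only point requiring a moment's attention is the convention on $T$ noted above: without identifying forests up to isomorphism one would have ``too many'' state-tuples for the trivial reason of relabelling vertices of $T$, and the claim that the signature is ``unique'' would be vacuous. Apart from this there is no genuine obstacle; the statement is a finite-counting observation whose purpose is to justify the later step where a large unbreakable induced subgraph is replaced by a bounded-size representative within the recursive-understanding scheme.
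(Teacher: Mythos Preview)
The paper states this observation without proof, so there is nothing to compare against line by line. Your argument is correct and is exactly the routine counting the authors intend the reader to supply: each coordinate of a state-tuple ranges over a set whose size depends only on $k$ (and, for $\eqf$, on $|\QQ(\equ)|$), and $\rsig(G)$ is by definition the set of satisfied state-tuples, hence unique. Your remark about treating the rooted forest $T$ up to isomorphism is well taken and necessary for both assertions; the paper implicitly assumes this when it later speaks of isomorphism between $T$ and $\wT[Z]$ in Definition~\ref{def:satisf}.
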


For two graphs $G$ and $H$, and an $\hh$-elimination decomposition $(\wT,\widehat{\chi},\widehat{L})$ of $G\oplus H$, we use $\restricted{G}{(\wT,\widehat{\chi},\widehat{L})}$ to denote the $\hh$-elimination decomposition $(\wT,\psi,F)$ of $G$ obtained by restricting $(\wT,\widehat{\chi},\widehat{L})$ to $G$. Formally $F=\widehat{L}\cap V(G)$ and for any node $t \in V(\wT)$, $\psi(t)=\widehat{\chi}(t)\cap V(G)$.

\begin{definition}
\label{def:realizes}
Let $G$ and $H$ be two graphs 
and $s=(D,T,\chi,L,\PP,\eqf)$ be a state-tuple.  We say that $(G,H)$ realizes $s$, if $\Lambda(G)=D$ and there is an $\hh$-elimination decomposition $(\wT,\widehat{\chi},\widehat{L})$ of $G^{\star}=G\oplus H$, of depth at most $k$ with the following conditions. 
\begin{itemize}
\item[$(i)$] $G$ satisfies $s$ through $\restricted{G}{(\wT,{\widehat{\chi}},{\widehat{L}})}$. 
That is, conditions (a)-(e) in Definition~\ref{def:satisf} hold.  
\item[$(ii)$] For each node $t$ of~$T$ with $\widehat{\chi}(t)\neq \emptyset$ and $\widehat{\chi}(t)\subseteq V(H)\setminus \delta(G)$, $\chi(t)=\future$. (Recall that $T$ is isomorphic to $\wT[Z]$, where $Z=\imp{\wT}{G}{\widehat{\chi}}$).  
\end{itemize}
Then, we say that $(G,H)$ realizes $s$ through $(\wT,\widehat{\chi},\widehat{L})$. 
We say that $(G,H)$ exactly realizes $s$, if instead of conditions $(i)$, we have that $G$ exactly satisfies $s$ through $\restricted{G}{(\wT,{\widehat{\chi}},{\widehat{L}})}$. 
In that case, we say that $(G,H)$ exactly realizes $s$ through 
$(\wT,\widehat{\chi},\widehat{L})$. 
\end{definition}

\begin{definition}
A marked signature is a subset of $\{s \colon s \mbox{  is a state-tuple}\}\times \{0,1\}$. 
\end{definition}

\begin{observation}
\label{obs:no-of-mark-sig}
The number of marked signatures are bounded by a function of $k$ and the number of equivalence classes in $\equ$. 
\end{observation}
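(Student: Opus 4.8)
\textbf{Proof proposal for Observation~\ref{obs:no-of-mark-sig}.}

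The plan is to bound the number of marked signatures by bounding the number of state-tuples and then invoking the fact that a marked signature is a subset of (state-tuples)~$\times\{0,1\}$. First I would fix a state-tuple $s=(D,T,\chi,L,\PP,\eqf)$ and count the choices for each component, using the explicit size constraints in the definition. The sets $D$ and $L$ satisfy $L\subseteq D\subseteq [2k]$, so there are at most $3^{2k}$ choices for the pair $(D,L)$ (each element of $[2k]$ lies in $L$, in $D\setminus L$, or in neither). The partition $\PP$ of $L$ has at most $B_{2k}$ choices, where $B_{2k}$ is the $2k$-th Bell number, hence a function of $k$ alone. The rooted forest $T$ has $|V(T)|\le (2k)^2$ and depth at most $k$; the number of rooted forests on at most $(2k)^2$ labelled-or-unlabelled nodes is bounded by a function of $k$ (e.g.\ at most $((2k)^2+1)^{(2k)^2}$ by Cayley-type bounds, or one may simply note it is some finite quantity $N(k)$). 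The map $\chi\colon V(T)\to 2^D\cup\{\present,\future\}$ is constrained so that, by the three conditions in the state-tuple definition, each internal node is mapped into $\{\present,\future,\emptyset\}\cup\{\{x\}:x\in D\setminus L\}$ and each leaf into $\PP\cup\{\{x\}:x\in D\setminus L\}$; in all cases the number of possible images of a single node is at most $|D|+|\PP|+3\le 2k+2k+3$, so the number of choices for $\chi$ is at most $(4k+3)^{(2k)^2}$, again a function of $k$.

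The only component that depends on the user-defined equivalence relation is $\eqf\colon\PP\to\QQ(\equ)$; the number of such maps is at most $|\QQ(\equ)|^{|\PP|}\le |\QQ(\equ)|^{2k}$, a function of $k$ and $|\QQ(\equ)|$. Multiplying these bounds, the total number of state-tuples is bounded by some function $g(k,|\QQ(\equ)|)$. A marked signature is, by definition, a subset of $\{s:s\text{ is a state-tuple}\}\times\{0,1\}$, i.e.\ a subset of a set of size at most $2\cdot g(k,|\QQ(\equ)|)$; hence the number of marked signatures is at most $2^{2g(k,|\QQ(\equ)|)}$, which is again a function of $k$ and $|\QQ(\equ)|$. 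This is exactly the claimed bound, so this would complete the proof.

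I do not anticipate a genuine obstacle here: the statement is a finiteness count and every component of a state-tuple is explicitly bounded in size by the definitions (the sets live in $[2k]$, the forest has at most $(2k)^2$ nodes and depth $\le k$, and the codomain of $\chi$ has bounded size per node), so the argument is a routine product-of-finitely-many-finite-choices calculation. The one point that requires a moment's care is checking that the image-set of $\chi$ at each node really is of size bounded by a function of $k$ (rather than, say, all of $2^D$), but this follows directly from conditions (1)--(3) of the state-tuple definition, which restrict $\chi(t)$ to singletons from $D\setminus L$, blocks of $\PP$, or the two symbols $\present,\future$ (plus $\emptyset$). Hence the whole argument is a short bookkeeping exercise and I would present it in the compressed form above, without spelling out the exact constants.
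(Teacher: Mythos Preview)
Your proposal is correct. The paper states this observation without proof, treating it as self-evident from the definitions; your explicit counting argument is exactly the routine justification one would give, and every step is sound (indeed even the unrestricted codomain $2^D\cup\{\present,\future\}$ has size at most $2^{2k}+2$, so the restriction via conditions (1)--(3) is not even needed for the mere finiteness claim).
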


In a marked signature every state-tuple is marked as $0$ or $1$, where we interpret the marking with $0$ as marking ``do not care''. Later we will define {\em marked signature of a graph $G$} where for any tuple $s$ that is marked with $0$ (do not care), there is a tuple in the signature which is ``strictly better than'' $s$. Towards that we define the following.  

\begin{definition}
Let $s_1=(D_1,T_1,\chi_1,L_1,\PP_1,\eqf_1)$ and $s_2=(D_2,T_2,\chi_2,L_2,\PP_2,\eqf_2)$ be two state tuples. We say that $s_2$ is strictly better than $s_1$ if the following holds. 
\begin{itemize}
\item The set $D_2\setminus L_2$ is a strict super set of $D_1\setminus L$.
\item  For any two graphs $G$ and $H$, if $(G,H)$ exactly realizes $s_1$, then $(G,H)$ realizes $s_2$. 
\end{itemize}  
\end{definition}

\begin{definition}[Validity]
Let $sig$ be a marked signature. We say that $sig$ is valid if the following holds. For any pair $(s_1=(D_1,T_1,\chi_1,L_1,\PP_1,\eqf_1),0)$ in $sig$ (i.e., a tuple that is marked with $0$), there is a pair $(s_2=(D_2,T_2,\chi_2,L_2,\PP_2,\eqf_2),b)$ in $sig$, such that 
$s_2$ is strictly better than $s_1$,  where $b\in \{0,1\}$.   
\end{definition}

\begin{definition}[compatibility]
Let $G$ be a graph and $sig$ be a marked signature. We say that $sig$ is compatible with $G$ if 
$sig$ is valid and the following holds. 
\begin{itemize}
\item For any $s\in \rsig(G)$,  $\{(s,0),(s,1)\}\cap sig \neq \emptyset$. 
\item For any $s \notin \rsig(G)$, if $(s,b)\in sig$, then $b=0$. 
\end{itemize} 
\end{definition}

We would like to mention that for a graph $G$, there could be many marked signatures that are compatible with $G$.

\begin{definition}[Similarity]
Let $sig_1$ and $sig_2$ be two marked signatures. We say that $sig_1$ and $sig_2$ are similar if the following holds. 
\begin{itemize}
\item $\{s \colon (s,1) \in sig_1 \}\subseteq \{s' \colon (s',0)\in sig_2 \mbox{ or } (s',1)\in sig_2\}$, and  
\item $\{s \colon (s,1) \in sig_2 \}\subseteq \{s' \colon (s',0)\in sig_1 \mbox{ or } (s',1)\in sig_1\}$. 
\end{itemize} 
\end{definition}

Next we prove that if the ``real signatures'' of two graphs $G_1$ and $G_2$ are same, then for any two marked signatures $sig_1$ and $sig_2$ that are compatible with $G_1$ and $G_2$, respectively,  the marked signatures $sig_1$ and $sig_2$ are similar. 

\begin{lemma}
\label{lem:rev:sig:sim}
Let $G_1$ and $G_2$ be two graphs such that $\rsig(G_1)=\rsig(G_2)$. Let $sig_1$ and $sig_2$ be two marked signatures such that $sig_1$ is compatible with $G_1$ and $sig_2$ is compatible with $G_2$. Then $sig_1$ and $sig_2$ are similar. 
\end{lemma}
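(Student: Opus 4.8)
\textbf{Proof plan for Lemma~\ref{lem:rev:sig:sim}.}

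The plan is to unwind the definitions of \emph{compatibility} and \emph{similarity} and show that the only thing a compatible marked signature ``knows'' about the underlying graph is its real signature. So suppose $\rsig(G_1)=\rsig(G_2)=:\cR$, and let $sig_1$ be compatible with $G_1$ and $sig_2$ be compatible with $G_2$. By symmetry it suffices to prove one of the two inclusions in the definition of similarity, say $\{s : (s,1)\in sig_1\}\subseteq \{s' : (s',0)\in sig_2 \text{ or } (s',1)\in sig_2\}$. So fix a state-tuple $s$ with $(s,1)\in sig_1$.

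First I would use compatibility of $sig_1$ with $G_1$ to argue $s\in\cR$: by definition of compatibility, for any $s\notin\rsig(G_1)$, if $(s,b)\in sig_1$ then $b=0$; since $(s,1)\in sig_1$ with marking $1\neq 0$, we must have $s\in\rsig(G_1)=\cR$. Next I would use compatibility of $sig_2$ with $G_2$: since $s\in\rsig(G_2)$ (because $\rsig(G_2)=\cR\ni s$), the first bullet of compatibility gives $\{(s,0),(s,1)\}\cap sig_2\neq\emptyset$, i.e.\ there exists $b\in\{0,1\}$ with $(s,b)\in sig_2$. In either case $s\in\{s' : (s',0)\in sig_2 \text{ or } (s',1)\in sig_2\}$, which is exactly what we needed. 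The other inclusion follows by swapping the roles of $G_1,sig_1$ and $G_2,sig_2$, and then the two inclusions together give that $sig_1$ and $sig_2$ are similar.

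I do not expect a genuine obstacle here: the lemma is essentially a bookkeeping consequence of the definitions, and the key point is simply that a ``mark $1$'' entry in a compatible marked signature is forced to lie in the real signature (contrapositive of the second compatibility bullet), while every real-signature element is recorded (with mark $0$ or $1$) in any compatible marked signature (first compatibility bullet). The only thing to be a little careful about is that the validity condition and the notion of ``strictly better'' play no role in this particular proof — they are only needed to guarantee that compatible marked signatures exist and to control the do-not-care entries — so I would simply note that they are not invoked, and keep the argument to the two short inclusion checks above.
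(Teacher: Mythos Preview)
Your proposal is correct and follows essentially the same approach as the paper's proof. The only cosmetic difference is that the paper phrases the argument by contradiction (assume $sig_1$ and $sig_2$ are not similar, then find $(s,1)\in sig_1$ with $(s,0),(s,1)\notin sig_2$, and derive a contradiction from compatibility), whereas you give the equivalent direct argument; both rely on exactly the two compatibility bullets you identified, and neither invokes validity or the ``strictly better'' relation.
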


\begin{proof}
Suppose $sig_1$ and $sig_2$ are not similar. Then, at least one of the following statements is true.  
\begin{itemize}
\item[$(i)$] There exists a state-tuple $s$ such that $(s,1)\in sig_1$ and $(s,0),(s,1)\notin sig_2$. 
\item[$(ii)$] There exists a state-tuple $s$ such that $(s,1)\in sig_2$ and $(s,0),(s,1)\notin sig_1$. 
\end{itemize}
Since the above statements are symmetric we assume that statement $(i)$ is true and derive a contradiction. 
Since $(s,1)\in sig_1$, we have that $s\in \rsig(G_1)=\rsig(G_2)$. Since $s\in \rsig(G_2)$ we have that either $(s,1) \in sig_2$ or $(s,0)\in sig_2$. This is a contradiction to the assumption that  $(s,0),(s,1)\notin sig_2$. 
\end{proof}

\

\begin{definition}
\label{def:repuni}
We say that a graph $R$ {\em represents} a graph $G$, if the following holds. 
For any graph $H$, $(G\oplus H,k)$ is yes-instance of \probEDH if and only if $(R\oplus H,k)$ is yes-instance of \probEDH.
\end{definition}

Observation~\ref{obs:no-of-mark-sig} implies the following lemma. 

\begin{lemma}
\label{lem:repr}
There is a function $r$ such that for any graph $G$, there is a graph $R$ of size $r(k,\vert\QQ(\equ)\vert)$, that represents $G$. 
\end{lemma}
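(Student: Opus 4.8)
The statement to prove is Lemma~\ref{lem:repr}: there is a function $r$ such that every $2k$-boundaried graph $G$ admits a representative $R$ (in the sense of Definition~\ref{def:repuni}) of size at most $r(k,\vert\QQ(\equ)\vert)$. The key point already flagged in the text is that this ``follows from'' Observation~\ref{obs:no-of-mark-sig} together with the machinery of signatures, marked signatures, and Lemma~\ref{lem:rev:sig:sim}. So the first thing I would do is establish the crucial intermediate claim: \emph{the real signature of $G$ determines whether $(G \oplus H, k)$ is a yes-instance of \probEDH, for every $H$.} Concretely, I would prove that for any two $2k$-boundaried graphs $G_1, G_2$ with $\rsig(G_1) = \rsig(G_2)$, and any $2k$-boundaried graph $H$, we have $(G_1 \oplus H, k) \in \probEDH \iff (G_2 \oplus H, k) \in \probEDH$. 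This is exactly saying $G_1 \equiv G_2$ under the equivalence ``same behaviour when glued with an arbitrary $H$, with respect to \probEDH''.

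\textbf{Key steps.} First, the forward direction of the intermediate claim. Suppose $(G_1 \oplus H, k) \in \probEDH$; fix a witnessing $\hh$-elimination decomposition $(\wT, \widehat\chi, \widehat L)$ of $G_1 \oplus H$ of depth at most $k$. I would read off from it the state-tuple $s$ that $(G_1, H)$ realizes through this decomposition --- i.e.\ take $Z = \imp{\wT}{G_1}{\widehat\chi}$, let $T = \wT[Z]$ (after checking $\vert V(T)\vert \le (2k)^2$, which holds because on each root-to-leaf path at most $2k$ nodes carry a boundary vertex and the tree has depth at most $k$, so the relevant ``spine'' is small), record $\chi$ as in Definition~\ref{def:satisf}(b)--(d) and $\future$-labels as in Definition~\ref{def:realizes}(ii), and let $\eqf$ assign to each base-component leaf the $\equ$-class of $G_1[\widehat\chi(t)]$. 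Then by construction $G_1$ satisfies $s$, so $s \in \rsig(G_1) = \rsig(G_2)$; hence $G_2$ satisfies $s$ through some $\hh$-elimination decomposition $(\wT', \widehat\chi', \widehat{L}')$ of $G_2$ of depth at most $k$. Now I glue: stitch $(\wT', \widehat\chi', \widehat{L}')$ together with the ``$H$-part'' of the original decomposition of $G_1 \oplus H$ along the common spine $T$ --- the $\future$-nodes tell us exactly where $H$'s private vertices hang, the $\present$-nodes and boundary nodes of $G_2$ match up with those of $G_1$ by the isomorphism of spines, and condition (e) gives, for each base component, that $G_2[\widehat\chi'(t)]$ lies in a class at least as good as $\eqf(\chi(t))$, which was the class of $G_1[\widehat\chi(t)]$; by Definition~\ref{def:eq-compare} and the component-deletion property (Definition~\ref{def:cdpeq}) the corresponding glued base component $G_2[\widehat\chi'(t)] \oplus (H\text{-part})$ is still in $\hh$ whenever $G_1[\widehat\chi(t)] \oplus (H\text{-part})$ was. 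This yields a depth-$\le k$ $\hh$-elimination decomposition of $G_2 \oplus H$, i.e.\ $(G_2 \oplus H, k) \in \probEDH$. The reverse direction is symmetric.

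\textbf{From the claim to the lemma.} The intermediate claim says that the relation $G_1 \sim G_2 \iff \rsig(G_1) = \rsig(G_2)$ refines the ``representation'' relation of Definition~\ref{def:repuni}: if $\rsig(G_1) = \rsig(G_2)$ then $G_1$ represents $G_2$ and vice versa. By the Observation right after Definition~\ref{def:realizes}, $\rsig(G)$ is a subset of the (finite) set of all state-tuples, whose cardinality is bounded by a function of $k$ and $\vert\QQ(\equ)\vert$ (this is where Observation~\ref{obs:no-of-mark-sig}, or rather the state-tuple count underlying it, enters: the number of possible values of $\rsig(G)$ is at most $2^{N(k,\vert\QQ(\equ)\vert)}$, a function of $k$ and $\vert\QQ(\equ)\vert$ only). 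Hence there are only finitely many $\sim$-classes, say $M = M(k,\vert\QQ(\equ)\vert)$ of them. Now define $r(k,\vert\QQ(\equ)\vert)$ as follows: for the $\sim$-class of $G$, either it contains a graph on at most $N(k,\vert\QQ(\equ)\vert)$ vertices --- then pick the smallest such graph as $R$ --- or, if one wants a fully self-contained argument, replace $G$ by its ``reduced'' form obtained by exhaustively deleting any vertex whose deletion does not change $\rsig$; since $\rsig$ takes only finitely many values and each strict reduction step changes it, but we need monotonicity\ldots{} Actually the cleanest route is: since each $\sim$-class is nonempty, for each class $\mathcal{C}$ fix once and for all a graph $R_{\mathcal{C}} \in \mathcal{C}$ of \emph{minimum} number of vertices, and set $r(k,\vert\QQ(\equ)\vert) = \max_{\mathcal{C}} \vert V(R_{\mathcal{C}})\vert$, where the max is over the finitely many classes; this is finite and depends only on $k$ and $\vert\QQ(\equ)\vert$. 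Then for any $G$, taking $R = R_{[G]}$ where $[G]$ is the class of $G$, we get $\vert V(R)\vert \le r(k,\vert\QQ(\equ)\vert)$ and, by the intermediate claim, $R$ represents $G$.

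\textbf{Main obstacle.} The genuinely technical part is the forward direction of the intermediate claim: carefully defining the gluing of the two elimination decompositions $(\wT', \widehat\chi', \widehat{L}')$ of $G_2$ and the $H$-restriction of the original decomposition of $G_1 \oplus H$, and verifying that the result is a legitimate $\hh$-elimination decomposition of depth $\le k$ --- in particular that ancestor-descendant constraints on edges between $H$'s private part and $G_2$'s part are respected (they are, because those edges only touch $\delta$-vertices, whose positions in the spine are fixed by $s$), that each base component is genuinely in $\hh$ (here the ``at least as good as'' direction of $\ugf$ in Definition~\ref{def:satisf}(e), combined with Definition~\ref{def:cdpeq} to absorb $H$-side components without boundary, is exactly what is needed), and that the depth does not exceed $k$ (the spine has the same shape, and the $H$-subtrees were already depth-bounded in the $G_1 \oplus H$ decomposition). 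The rest is bookkeeping: counting state-tuples to get finiteness, and the pigeonhole/representative-selection step, both of which are routine given the definitions already in place.
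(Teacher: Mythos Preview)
Your proposal is correct and matches the paper's intended argument. The paper itself offers no proof beyond the single line ``Observation~\ref{obs:no-of-mark-sig} implies the following lemma''; what is left implicit is exactly your two-step plan---finiteness of the set of possible real signatures plus the fact that equal real signatures force mutual representation---and your gluing argument for the intermediate claim is a direct specialization of the paper's subsequent proof of Lemma~\ref{lem:sim:sig} (Section~\ref{sec:lemsimilar}) to the case $\rsig(G_1)=\rsig(G_2)$, where the maximization-over-boundary-deletions trick and the marked-signature/validity machinery become unnecessary.
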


Next we state two important lemmas and then using them prove our results. Later we prove both the lemmas.

\begin{lemma}
\label{lem:sim:sig}
Let $G_1$ and $G_2$ be two graphs such that $\Lambda(G_1)=\Lambda(G_2)$. Let $sig_1$ and $sig_2$ be two marked signatures such that $sig_1$ is compatible with $G_1$ and $sig_2$ is compatible with $G_2$. Moreover, $sig_1$ and $sig_2$ are similar. Then, for any graph $H$, $(G_1\oplus H,k)$ is yes-instance of \probEDH if and only if $(G_2\oplus H,k)$ is yes-instance of \probEDH. 
\end{lemma}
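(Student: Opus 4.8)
The statement is a ``gluing'' lemma: two graphs with the same boundary and with \emph{similar} compatible marked signatures are interchangeable as far as gluing an arbitrary $H$ and asking about $\hhdepth$ is concerned. The plan is to prove one direction (the other follows by the symmetry built into the definition of similarity), so assume $(G_1\oplus H,k)$ is a yes-instance of \probEDH, fix an $\hh$-elimination decomposition $(\wT,\widehat\chi,\widehat L)$ of $G_1\oplus H$ of depth at most $k$, and produce one for $G_2\oplus H$ of depth at most $k$.

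The first step is to read off from $(\wT,\widehat\chi,\widehat L)$ the state-tuple that $(G_1,H)$ realizes. Concretely, I would restrict attention to $Z=\imp{\wT}{G_1}{\widehat\chi}$ --- the nodes on the root paths above boundary-touching bags --- and define $s_1=(D_1,T_1,\chi_1,L_1,\PP_1,\eqf_1)$ exactly as in Definition~\ref{def:realizes}: $D_1=\Lambda(G_1)=\Lambda(G_2)$, $T_1\cong\wT[Z]$, bags touching $\delta(G_1)$ get their labels, bags fully inside $V(G_1)\setminus\delta(G_1)$ get $\present$, bags fully inside $V(H)\setminus\delta(G_1)$ get $\future$, and each surviving leaf $t$ contributes the equivalence class $Q$ of $G_1[\widehat\chi(t)]$ under $\equ$ to $\eqf_1$ (strengthening to \emph{exact} realization so that $\eqf_1$ records the true class, not merely an ``at least as good'' one). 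Thus $(G_1,H)$ exactly realizes $s_1$, and in particular the $G_1$-restriction witnesses $s_1\in\rsig(G_1)$. Here I should be slightly careful: the depth-at-most-$k$ bound on $\wT$ together with the usual normalization (internal bags of size $\le 1$, at most $(2k)^2$ relevant nodes) must be invoked so that $s_1$ is genuinely a legal state-tuple; this is routine but needs the standard cleanup of an $\hh$-elimination decomposition (as in the proof of Lemma~\ref{lem:uniquelargecc}).

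The second step transports $s_1$ across. Since $s_1\in\rsig(G_1)$ and $sig_1$ is compatible with $G_1$, either $(s_1,0)\in sig_1$ or $(s_1,1)\in sig_1$. If $(s_1,1)\in sig_1$: by similarity, $s_1\in\{s':(s',0)\in sig_2\text{ or }(s',1)\in sig_2\}$, so $(s_1,\cdot)\in sig_2$; by compatibility of $sig_2$ with $G_2$, any state-tuple appearing in $sig_2$ that is not in $\rsig(G_2)$ is marked $0$ --- but actually we need the converse direction: we need $s_1\in\rsig(G_2)$. This is forced because $(s_1,b)\in sig_2$ with, if $b=1$, compatibility giving $s_1\in\rsig(G_2)$ directly, and if $b=0$, validity of $sig_2$ handing us a state-tuple $s_2'$ \emph{strictly better} than $s_1$ with $(s_2',b')\in sig_2$, and then a short induction on the (finite) strict-betterness order --- each step strictly enlarges $D\setminus L$, which is bounded by $[2k]$ --- terminates at a tuple actually in $\rsig(G_2)$. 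If instead $(s_1,0)\in sig_1$, validity of $sig_1$ gives a strictly better $s_1'$, and we run the same finite descent inside $sig_1$ first to reach a tuple marked $1$ that is in $\rsig(G_1)$ and strictly better than $s_1$, then cross over as above. In all cases we obtain a state-tuple $s^\star$ with $s^\star\in\rsig(G_2)$ that is either $s_1$ itself or strictly better than $s_1$, and such that \emph{$(G_1,H)$ exactly realizing $s_1$ implies $(G_1,H)$ realizes $s^\star$}; but realization is a statement about the pair, so what we actually extract from $s^\star\in\rsig(G_2)$ is an $\hh$-elimination decomposition $(\wT_2,\widehat\chi_2,\widehat L_2)$ of $G_2$ of depth $\le k$ witnessing that $G_2$ satisfies $s^\star$.

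The third and final step is the reassembly. We now have: the part of $\wT$ ``above and inside $H$'', namely the restriction $\restricted{H}{(\wT,\widehat\chi,\widehat L)}$ together with the skeleton $T_1\cong\wT[Z]$ recording exactly how $H$'s bags and the shared boundary sit in the tree; and a decomposition of $G_2$ that matches $s^\star$, hence (since $s^\star$ is $s_1$ or strictly better than it, and strict-betterness is designed precisely so that the ``$G$-side'' can only have \emph{more} eliminated boundary vertices and compatible $\future$/$\present$/leaf structure) plugs into the same skeleton. I would glue along $Z$: keep every $H$-bag and every shared-boundary bag as in $(\wT,\widehat\chi,\widehat L)$, and below each leaf of the skeleton that was tagged as a base component replace the $G_1$-subtree by the corresponding $G_2$-subtree from $(\wT_2,\widehat\chi_2,\widehat L_2)$; the $\future$-tagged bags simply retain their $H$-content. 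I must check (i) the partition and ancestor--descendant conditions of Definition~\ref{def:H-elim} --- these hold because edges of $G_2\oplus H$ are either $H$-edges (handled by $\wT$), $G_2$-edges within one glued subtree (handled by $\wT_2$), or edges incident to $\delta$, whose endpoints were already placed on root-paths in $Z$; (ii) base components: a leaf bag of the glued forest is either a pure-$H$ base component (inherited), or a $G_2[\widehat\chi_2(t)]$; the latter lies in an $\equ$-class at least as good as $\eqf(\chi(t))$, which via $\ugf$ and the Component Deletion Property (Definition~\ref{def:cdpeq}) --- needed because after gluing, disjoint $\hh$-pieces with empty boundary may be absorbed into one base component --- yields a graph in $\hh$; (iii) depth: the skeleton has depth $\le k$, the $G_2$-subtrees have depth $\le k$ \emph{and} are hung below skeleton leaves, but crucially the skeleton leaves that receive $G_2$-subtrees are exactly the ones that were leaves because the original $\wT$ had the base component there, so the total root-to-leaf length is $\le k$ --- this is the same accounting as condition (d)/(e) bookkeeping in Definition~\ref{def:satisf}. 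Hence $\hhdepth(G_2\oplus H)\le k$. Running the identical argument with the roles of $1$ and $2$ swapped (legitimate since similarity is symmetric) gives the converse implication.

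\textbf{Main obstacle.} The delicate point is step three's verification that a decomposition of $G_2$ matching a state-tuple $s^\star$ that is only \emph{strictly better} than $s_1$ (not equal) still glues correctly with the $H$-side that was built against $s_1$: one must show the extra eliminated boundary vertices and the reshuffled $\present/\future$ tags do not create an edge of $G_2\oplus H$ whose endpoints fail the ancestor--descendant condition, nor inflate the depth. This is exactly the content that the definition of ``strictly better'' (plus exact realization) is engineered to guarantee, so the proof will amount to unwinding that definition carefully against Definitions~\ref{def:satisf} and~\ref{def:realizes} --- tedious tree surgery rather than a conceptual difficulty, but it is where essentially all the work lies.
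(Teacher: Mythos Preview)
Your overall architecture---extract a state-tuple from a decomposition of $G_1\oplus H$, push it into $\rsig(G_2)$, then glue a $G_2$-decomposition with the $H$-side---is exactly the paper's. The gap is in your transport step.

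Your finite descent implicitly treats ``strictly better'' as transitive for the realization clause, and it is not. From $(s_1,0)$ validity gives $s_2'$ strictly better than $s_1$; since $(G_1,H)$ \emph{exactly} realizes $s_1$, you get that $(G_1,H)$ \emph{realizes} $s_2'$. If now $(s_2',0)$, validity gives $s_2''$ strictly better than $s_2'$---but that hypothesis requires \emph{exact} realization of $s_2'$, which you do not have. So after one step the chain loses its link to $H$. You do reach a terminal $s^\star$ with $(s^\star,1)\in sig_2$, hence $s^\star\in\rsig(G_2)$, but your claim ``$(G_1,H)$ exactly realizing $s_1$ implies $(G_1,H)$ realizes $s^\star$'' is not justified. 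This also undercuts the reassembly: you propose to glue the \emph{original} $H$-side (cut out along $s_1$'s skeleton $T_1$) with a $G_2$-decomposition fitted to $s^\star$'s skeleton. Nothing in the definition of ``strictly better'' constrains the tree $T$ or the $\present/\future$ pattern---only $D\setminus L$ is forced to grow---so there is no reason these two skeletons are compatible. Your closing assertion that ``the definition of `strictly better' \ldots\ is engineered to guarantee'' this is too optimistic: what it guarantees is a \emph{new} decomposition of $G_1\oplus H$, not that the old $H$-piece plugs into a new skeleton.

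The paper avoids both problems with one clean move: among all depth-$\le k$ $\hh$-elimination decompositions of $G_1\oplus H$, choose one that \emph{maximizes} the number of boundary vertices of $G_1$ placed in internal bags, and extract $s$ from \emph{that}. Then $(s,1)\in sig_1$ by a single-step argument (if $(s,0)$, validity yields $s'$ strictly better than $s$; exact realization of $s$ gives realization of $s'$, i.e.\ a decomposition of $G_1\oplus H$ with strictly more boundary vertices deleted, contradicting maximality), and by the \emph{same} single-step argument $(s,1)\in sig_2$ (if $(s,0)\in sig_2$, validity of $sig_2$ yields $s'$ strictly better than $s$, and again $(G_1,H)$ realizes $s'$, contradicting the same maximality). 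No induction is needed, and you land with the \emph{same} $s$ on both sides: $s\in\rsig(G_2)$ while $(G_1,H)$ exactly realizes $s$. Now the $G_2$-decomposition witnessing $s$ and the $H$-side of the chosen decomposition share the identical skeleton $T$, and the glue is exactly the tree surgery you describe for the $s^\star=s_1$ case.

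Your descent can be repaired---at each step, take the new decomposition of $G_1\oplus H$ that ``realizes'' provides, re-extract its \emph{exact} state-tuple, and iterate on that; $|D\setminus L|$ strictly increases so this terminates---but then you are doing the maximality argument one step at a time, and you must glue against the $H$-side of the \emph{final} decomposition, not the original one.
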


We prove Lemma~\ref{lem:sim:sig} in Section~\ref{sec:lemsimilar}. Next we prove that given a graph $(q,k)$-unbreakable graph $G$, a marked signature that is compatible with $G$ can be computed efficiently. 
Recall that we are given an algorithm ${\cal A}_u$ to find a deletion set of an input graph to a given equivalence class in $\QQ(\equ)$ and its running time is denoted by the function $f_u$. 

\begin{lemma}
\label{lem:comp:sig}
There is a function $g$ and an algorithm that given a $(q,k)$-unbreakable graph $G$ (recall that $G$ is also a $2k$-boundaried graph) as input, runs in time $g(k,q,\vert \QQ(\equ)\vert) \cdot f_{u}(k,\vert \QQ(\equ)\vert,n) \cdot n^{\OO(1)}$, and outputs a marked signature that is compatible with $G$, where $n=\vert V(G)\vert$.   
\end{lemma}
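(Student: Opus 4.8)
The plan is to compute, for the $(q,k)$-unbreakable graph $G$, a marked signature compatible with $G$ by first computing a marked signature whose $1$-labelled part is exactly the real signature $\rsig(G)$ (and all other tuples labelled $0$), and then enlarging it to make it valid. The main structural fact we rely on is Lemma~\ref{lem:boundedsep} (applied with the unbreakability parameter $\alpha(k)$ in the role of our $q$, which we are free to set): for the $(q,k)$-unbreakable graph $G$, either $|V(G)|\le 3(q+k)$, in which case we can brute-force over all $\hh$-elimination decompositions of depth at most $k$ and all state-tuples and directly read off $\rsig(G)$ in time bounded by a function of $k,q,|\QQ(\equ)|$; or $G[L]$ has a unique ``large'' base component $D^*$ with $|V(G)\setminus V(D^*)|\le q+k$ for \emph{every} $\hh$-elimination decomposition $(T,\chi,L)$ of depth at most $k$ witnessing a given state-tuple. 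This is precisely the leverage that lets us decide satisfaction of a state-tuple efficiently.

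First I would handle the bounded case as above. For the main case $|V(G)|>3(q+k)$, fix a candidate state-tuple $s=(D,T,\chi,L,\PP,\eqf)$; there are only $h(k,|\QQ(\equ)|)$ of them, so we may iterate over all of them. To test whether $G$ satisfies $s$, observe that by Lemma~\ref{lem:boundedsep} and the component-deletion property (Definition~\ref{def:cdpeq}), any witnessing decomposition is obtained by: (i) choosing the set $A\subseteq V(G)$ of vertices placed on the nodes of $T$ together with the root-to-$t^*$ path leading to the large base component — this set has size at most $(2k)^2$ when restricted to $Z=\imp{\wT}{G}{\widehat\chi}$, but crucially the vertices that must be guessed are only those in $V(G)\setminus V(C^*)$ where $C^*$ is the unique big component of $G-(\text{small guessed set})$, plus $O(k^2)$ boundary-adjacent path vertices, all of which number at most $q+k+(2k)^2$; and (ii) for the large base component $C^*=D^*$, verifying that $G[V(D^*)]$, as a boundaried graph with boundary $N_G(D^*)$ (of size $\le k$), lies in an equivalence class at least as good as $\eqf$ of the relevant part of $\PP$. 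Step (ii) is where the user's algorithm ${\cal A}_u$ enters: the ``at least as good'' comparison for the big component is equivalent to asking whether $G[V(D^*)]$ lies in a class $Q$ with $\ugf(Q,\eqf(\cdot))=1$, which we can test by enumerating all target classes $Q$ and invoking ${\cal A}_u$ with deletion budget $0$; the remaining small components $G[\chi(t)]$ for leaves $t\notin Z$ are of bounded size and need not be inspected because the state-tuple only constrains the leaves in $Z$. Combining, deciding $G$ satisfies $s$ reduces to iterating over all $n^{O(q+k)}$-many (in fact, by Proposition~\ref{lem:sep}-type enumeration of small connected sets with small neighbourhoods, only $2^{O(q+k)}\cdot n^{O(1)}$-many) ways to guess the ``small'' part, and for each, one call to ${\cal A}_u$ and polynomial bookkeeping. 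This yields $\rsig(G)$ in the claimed time.

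Having computed $\rsig(G)$, set $sig_0 := \{(s,1): s\in\rsig(G)\}$. This already satisfies the second bullet of compatibility (every $(s,b)$ with $s\notin\rsig(G)$ has $b=0$ — vacuously, since $sig_0$ contains no such pair) and the first bullet. It may fail validity only if we later add $0$-labelled tuples, so instead we directly enforce validity: for every state-tuple $s_1$ that is \emph{not} strictly dominated within $sig_0$ but for which there exists a strictly better $s_2$ that $G$ \emph{does} satisfy, nothing need be added; the subtlety is that validity demands that whenever we include a $0$-labelled $s_1$ there is a strictly-better pair present, so the clean solution is simply to take $sig := sig_0$ and add, for each state-tuple $s_1\notin\rsig(G)$ such that there is some $s_2\in\rsig(G)$ strictly better than $s_1$, the pair $(s_1,0)$. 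Strict betterness (Definition) is a purely combinatorial relation between state-tuples — it asks that $D_2\setminus L_2\supsetneq D_1\setminus L_1$ and that exact-realization of $s_1$ implies realization of $s_2$ — and crucially it does \emph{not} depend on $G$, so it can be precomputed as a fixed table of size $h(k,|\QQ(\equ)|)$. The resulting $sig$ is valid by construction, and both compatibility bullets hold since we only added $0$-labelled tuples outside $\rsig(G)$ that are dominated by genuine signature tuples. The total running time is $g(k,q,|\QQ(\equ)|)\cdot f_u(k,|\QQ(\equ)|,n)\cdot n^{O(1)}$ as required.

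The main obstacle I expect is the correctness and efficiency of testing ``$G$ satisfies $s$'', specifically reconciling the guessing of the $O(k^2)$ tree-nodes of $T$ and their preimages with the unbreakability bound: one must argue that although $Z$ may have up to $(2k)^2$ nodes, the vertices of $G$ that need to be \emph{searched for} (as opposed to those inside the large base component, which is handled wholesale via ${\cal A}_u$) always lie in a set of size $O(q+k+k^2)$, so that an FPT-time enumeration suffices. This requires carefully invoking Lemma~\ref{lem:boundedsep} (and Observation~\ref{obs:boundedsep}) not just once but along the recursion implicit in the root-to-$t^*$ path, tracking that at each level the ``rest'' of the graph stays small; the component-deletion property (Definition~\ref{def:cdpeq}) is what guarantees that the many small base components at leaves outside $Z$ can be freely absorbed or ignored without affecting which equivalence class the relevant leaf bags land in. A secondary technical point is making the ``at least as good'' test for $D^*$ go through ${\cal A}_u$ with budget $0$ — one must check that ${\cal A}_u$'s specification (find a deletion set of size $\le k'$ landing in a class at least as good as $Q$) with $k'=0$ indeed decides membership in the up-set of $Q$, which it does by definition.
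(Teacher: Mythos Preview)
Your high-level strategy—compute $\rsig(G)$ exactly and output $\{(s,1):s\in\rsig(G)\}$—would indeed yield a compatible (and trivially valid) marked signature. The gap is entirely in how you propose to decide ``$G$ satisfies $s$'' within the stated time bound. You want to guess the whole of $V(G)\setminus V(D^*)$ (at most $q+k$ vertices) and then call ${\cal A}_u$ with budget $0$ on the large leaf. The honest cost of this guess is $n^{O(q+k)}$, which violates the required $n^{O(1)}$. Your parenthetical appeal to Proposition~\ref{lem:sep} does not rescue this: that proposition enumerates $(p,q')$-connected sets of size at most $p$, but $D^*$ is large (size $\Theta(n)$), and its complement $V(G)\setminus V(D^*)$, while small, need not be connected and can have arbitrarily large neighbourhood in $G$. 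Knowing that each small component of $G-N_G(D^*)$ is a $(q,k)$-connected set is useless until you know $N_G(D^*)$, and guessing that separator costs $n^{k}$. So the claimed $2^{O(q+k)}\cdot n^{O(1)}$ enumeration is not justified.

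The paper takes a genuinely different route that sidesteps this. It does \emph{not} attempt to compute $\rsig(G)$ exactly. Instead, for each state-tuple $s$ it runs a branching procedure \textsc{Sig-Test}: starting from a partial solution (only the boundary vertices placed, as dictated by $s$), it repeatedly finds either an inconsistent triplet or a maximal $(q+k,k)$-connected set whose closed neighbourhood contains an obstruction, and branches on the $O(q+k)$ candidate vertices, each time filling one more internal node of the guessed tree $T^\star$ (of size $\le q+k$). After at most $q+k$ levels of branching it is left with a single large component $G^\star$ and calls ${\cal A}_u$ once with \emph{positive} budget $k'$ (the number of still-empty $\present$ nodes on the path to $t^\star$). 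If the returned deletion set $S$ avoids $\delta(G)$, it certifies $(s,1)$; if $S\cap\delta(G)\neq\emptyset$, it cannot certify $s$ but instead constructs a strictly-better $\tilde{s}$ and outputs $(s,0),(\tilde{s},1)$. Thus the paper's output may contain $(s,0)$ even when $s\in\rsig(G)$—this is precisely why the notions of validity and compatibility (rather than exact computation of $\rsig(G)$) were introduced. Your plan to call ${\cal A}_u$ with budget $0$ forgoes the mechanism that makes the running time polynomial in $n$, and your workaround via connected-set enumeration does not go through.

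A minor point: your final step of adding $(s_1,0)$ for $s_1\notin\rsig(G)$ dominated by some $s_2\in\rsig(G)$ is unnecessary (as you yourself note, $sig_0$ is already valid), and would anyway require deciding the ``strictly better'' relation, which quantifies over all boundaried graphs and is not obviously a finite check.
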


We prove Lemma~\ref{lem:comp:sig} in Section~\ref{sec:proof:lem:comp:sig}. Now, using Lemmas~\ref{lem:sim:sig} and \ref{lem:comp:sig}, we define a uniform {\FPT} algorithm for \probEDH. We know that by Lemma~\ref{lem:repr}, for any graph $G$, there is a representative of $G$ and its size bounded a function $r$ of $k$ and $|\QQ(\equ)|$. We remark that this is an existential result and we do not the function $r$. So first we design an algorithm for \probEDH\ assuming we know the value $r(k,|\QQ(\equ)|)$. Later we explain how to get rid of this assumption. In the rest of the section we use $q=2^{k+1}\cdot r(k,|\QQ(\equ)|)$.

\begin{lemma}
\label{lem:comp:rep:unbre}
There is a function $g'$ and an algorithm ${\cal A}_r$ that given a $(q,k)$-unbreakable graph $G$, runs in time $g'(k,q,|\QQ(\equ)|)\cdot f_u(k,|\QQ(\equ)|,n) \cdot n^{\OO(1)}$ and outputs a graph $R$ of size at most  $r(k,|\QQ(\equ)|)$ such that $R$ represents $G$, where $n=\vert V(G)\vert$. Here $f_u$ is the running time of the user's algorithm ${\cal A}_u$.
\end{lemma}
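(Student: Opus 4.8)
The plan is to prove Lemma~\ref{lem:comp:rep:unbre} by combining the ``signature computation'' from Lemma~\ref{lem:comp:sig} with a brute-force search over small candidate representatives, using Lemma~\ref{lem:repr} to guarantee that a representative of the claimed size exists and Lemma~\ref{lem:sim:sig} to certify correctness of the search. Concretely, given the $(q,k)$-unbreakable graph $G$, the algorithm ${\cal A}_r$ first invokes the algorithm of Lemma~\ref{lem:comp:sig} to obtain, in time $g(k,q,|\QQ(\equ)|)\cdot f_u(k,|\QQ(\equ)|,n)\cdot n^{\OO(1)}$, a marked signature $sig_G$ that is compatible with $G$. Then it enumerates all $2k$-boundaried graphs $R$ on at most $r(k,|\QQ(\equ)|)$ vertices (there are at most a function of $k$ and $|\QQ(\equ)|$ many of them, since both the vertex count and the boundary size are bounded by functions of $k$ and $|\QQ(\equ)|$); note that each such $R$ has a bounded number of vertices, so $\rsig(R)$ can be computed by brute force by directly checking Definition~\ref{def:satisf} over all forests $\wT$ of depth at most $k$ with $\vert V(\wT)\vert\le (2k)^2$ and all maps $\widehat\chi$, where membership tests $G[\widehat\chi(t)]\in Q$ and the ``at least as good'' relation via $\ugf$ are decided using the (bounded) data of $\equ$. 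From $\rsig(R)$ one extracts a marked signature $sig_R$ compatible with $R$ (e.g.\ mark each $s\in\rsig(R)$ with $1$, mark with $0$ exactly those $s\notin\rsig(R)$ for which some strictly-better tuple lies in $\rsig(R)$, and discard the rest; validity and compatibility are then immediate from the definitions). The algorithm outputs the first $R$ for which $sig_R$ and $sig_G$ are \emph{similar} in the sense of Definition~\ref{def:realizes}'s surrounding notion.

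For correctness, first observe such an $R$ is found: by Lemma~\ref{lem:repr} there is a graph $R^\star$ of size at most $r(k,|\QQ(\equ)|)$ that represents $G$. We need to argue that $sig_{R^\star}$ is similar to $sig_G$; the natural route is that representation (Definition~\ref{def:repuni}) forces $\rsig(R^\star)=\rsig(G)$ — indeed, membership $s\in\rsig(\cdot)$ is, by Definitions~\ref{def:satisf}, \ref{def:realizes}, a condition that can be ``tested'' by gluing on an appropriate $H$ and asking about \probEDH, so two graphs that are interchangeable for all $H$ under \probEDH\ have the same real signature — and then Lemma~\ref{lem:rev:sig:sim} gives that any two marked signatures compatible with $R^\star$ and $G$ respectively are similar. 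Hence the search does not fail. Conversely, when the algorithm outputs some $R$ with $sig_R\sim sig_G$, we have $\Lambda(R)=\Lambda(G)$ (both signatures record the same boundary label set, which we should include as part of what ``compatible'' transmits, or simply restrict the enumeration to $R$ with $\Lambda(R)=\Lambda(G)$), so Lemma~\ref{lem:sim:sig} applies and yields that for every graph $H$, $(R\oplus H,k)$ is a yes-instance of \probEDH\ iff $(G\oplus H,k)$ is — i.e.\ $R$ represents $G$, as required by Definition~\ref{def:repuni}.

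For the running time: computing $sig_G$ costs $g(k,q,|\QQ(\equ)|)\cdot f_u(k,|\QQ(\equ)|,n)\cdot n^{\OO(1)}$ by Lemma~\ref{lem:comp:sig}; the enumeration of candidates $R$ and the computation of each $\rsig(R)$ and $sig_R$ is a function of $k$ and $|\QQ(\equ)|$ only (no dependence on $n$, since $|V(R)|$ is bounded); and the similarity test between two marked signatures is again bounded by a function of $k$ and $|\QQ(\equ)|$ by Observation~\ref{obs:no-of-mark-sig}. Folding all of these into a single function $g'(k,q,|\QQ(\equ)|)$ gives the stated bound $g'(k,q,|\QQ(\equ)|)\cdot f_u(k,|\QQ(\equ)|,n)\cdot n^{\OO(1)}$.

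I expect the main obstacle to be the bookkeeping around \emph{which} marked signature the two algorithms (Lemma~\ref{lem:comp:sig} for $G$, brute force for $R$) actually produce, and in particular pinning down that $\rsig(R^\star)=\rsig(G)$ from the representation hypothesis — this requires unwinding Definitions~\ref{def:satisf} and~\ref{def:realizes} carefully to see that ``$s\in\rsig(\cdot)$'' is genuinely preserved under the equivalence ``interchangeable for all $H$ w.r.t.\ \probEDH,'' using the \hh-closure-under-disjoint-union and the component-deletion property (Definition~\ref{def:cdpeq}). A secondary subtlety is ensuring that the similarity relation is transitive/robust enough that ``$sig_G\sim sig_{R}$'' is exactly the right certificate: Lemma~\ref{lem:sim:sig} supplies soundness and Lemma~\ref{lem:rev:sig:sim} supplies completeness, so the argument should close once the $\rsig$-equality step is in place.
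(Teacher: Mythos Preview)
Your overall strategy --- compute a compatible marked signature $sig_G$ via Lemma~\ref{lem:comp:sig}, enumerate small candidates $R$, compute a compatible $sig_R$, and return the first $R$ with $sig_R$ similar to $sig_G$ --- is exactly the paper's approach, and your soundness argument via Lemma~\ref{lem:sim:sig} is fine. The paper in fact also invokes Lemma~\ref{lem:comp:sig} on each small $R$ rather than a separate brute force, but since Lemma~\ref{lem:comp:sig} already reverts to brute force when $|V(R)|\le 3q+k$, this is a cosmetic difference.

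The genuine gap is in your completeness argument. You try to deduce $\rsig(R^\star)=\rsig(G)$ from the mere fact that $R^\star$ \emph{represents} $G$, arguing that ``$s\in\rsig(\cdot)$ can be tested by gluing an appropriate $H$ and asking about \probEDH.'' This implication is not clear and, as stated, need not hold: $\rsig$ is defined via the user-specified refinement $\equ$ and the user-specified function $\ugf$ (condition~(e) of Definition~\ref{def:satisf}), whereas representation (Definition~\ref{def:repuni}) only depends on yes/no answers to \probEDH, which in turn depend only on $\hh$. Two boundaried graphs lying in distinct $\equ$-classes that are nonetheless $\eqh$-equivalent can have different real signatures (whenever $\ugf$ distinguishes their classes) yet still represent each other. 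So your ``testing by an $H$'' sketch cannot work in general, and the obstacle you flag at the end is real.

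The fix is short and is what the paper actually uses. You do not need to go through representation at all: the number of possible real signatures is bounded by a function of $k$ and $|\QQ(\equ)|$ (each $\rsig$ is a subset of the finite set of state-tuples). Hence there exists some graph $R$ of size at most $r(k,|\QQ(\equ)|)$ with $\rsig(R)=\rsig(G)$ directly; this is precisely what the proof of Lemma~\ref{lem:repr} gives. For that $R$, Lemma~\ref{lem:rev:sig:sim} yields that any compatible $sig_R$ is similar to $sig_G$, so your enumeration finds it. With this one-line replacement of your completeness step, the argument closes.
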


\begin{proof}
Our algorithm uses Lemma~\ref{lem:comp:sig}.  The pseudocode of our algorithm is given in Algorithm~\ref{alg:Ar}.

\begin{algorithm}[H]
\SetAlgoLined
\KwResult{A graph $R$ that represents the input $(q,k)$-unbreakable graph $G$}
 Using Lemma~\ref{lem:comp:sig} compute a marked signature $sig_G$ that is compatible with $G$\;
 \For{$i=1$ to $r(k,|\QQ(\equ)|)$}{
 \For{every graph $R$ on $i$ vertices}{
 Using Lemma~\ref{lem:comp:sig} compute a marked signature $sig_R$ that is compatible with $R$\;
\If{$sig_G$ and $sig_R$ are similar}{
 Output $R$\;
}
}
 }
 
 \caption{Algorithm ${\cal A}_r$}
 \label{alg:Ar}
\end{algorithm}

Next we prove the correctness of the algorithm. By Lemma~\ref{lem:repr}, we know that there is a representative of $G$, of size at most $r(k,|\QQ(\equ)|)$. Thus, there is a graph $R$ of size at most $r(k,|\QQ(\equ)|)$ such that  $\rsig(G)=\rsig(R)$. Hence, by Lemma~\ref{lem:rev:sig:sim}, for any  
marked signatures $sig_R$ and $sig_G$ that are compatible with $R$ and $G$, respectively,  
we have that $sig_G$ and $sig_R$ are similar. 
Moreover, by Lemma~\ref{lem:sim:sig} if there is a graph $R$ and a marked signature $sig_R$ compatible with $R$ such that $sig_G$ and $sig_R$ are similar, then $R$ represents $G$.  Therefore, the algorithm is correct. 

Notice that the algorithm runs the algorithm of Lemma~\ref{lem:comp:sig} at most $r(k,|\QQ(\equ)|)+1$ times. This implies that the total running time of the algorithm is upper bounded by $g'(k,q,|\QQ(\equ)|)\cdot f_u(k,|\QQ(\equ)|,n)\cdot n^{\OO(1)}$ for some function $g'$.  This completes the proof of the lemma. 
\end{proof}

There is an algorithm to determine (approximately) whether a graph is unbreakable. 
We use the statement from \cite{LokshtanovR0Z18}, although lemmas similar to it can be found in \cite{DBLP:journals/siamcomp/ChitnisCHPP16}.

\begin{proposition}{\rm ~\cite{LokshtanovR0Z18}}
\label{prop:unbr}
There is an algorithm {\sf Break-ALG}, that given two positive integer $s, c \in {\mathbb N}$ and a graph $G$, runs in time $2^{\OO(c \log(s+c))}\cdot n^3\log n)$ and either returns an $(\frac{s}{2^c},c)$-witnessing separation or correctly concludes that $G$ is $(s, c)$-unbreakable.
\end{proposition}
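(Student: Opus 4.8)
The plan is to follow the \emph{recursive understanding} / \emph{randomized contractions} paradigm of Chitnis et al.~\cite{DBLP:journals/siamcomp/ChitnisCHPP16}, adapted from edge separations to vertex separations and then derandomized. Suppose $G$ is $(s,c)$-breakable and fix a witnessing separation $(X,Y)$ with separator $Z = X\cap Y$, $|Z|\le c$, and $|X\setminus Y|,|Y\setminus X|\ge s$; the task is to recover \emph{some} separation of order at most $c$ whose two sides each have size at least $s/2^c$, without knowing $(X,Y)$. The two ingredients are: (i) a max-flow subroutine which, given a source $a$ and a sink $b$, computes a minimum $a$--$b$ vertex cut (Menger's theorem) using at most $c+1$ augmenting paths, hence in time $\mathcal{O}(c\cdot(n+m))$; and (ii) a color-coding step which, over an explicit splitter / perfect hash family of size $2^{\mathcal{O}(c\log c)}\cdot\log n$, is guaranteed to produce a coloring of $V(G)$ under which the $\le c$ vertices of $Z$ are ``isolated'' (injectively colored, with colors distinct from those of two chosen representatives $a\in X\setminus Y$ and $b\in Y\setminus X$). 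For each coloring in the family we attempt a reconstruction: fix the representatives, tentatively delete the separator-colored vertices, run max-flow between $a$ and $b$ in the residual graph, and if a cut $Z'$ of size $\le c$ is found, form the separation $(A\cup Z',\ V(G)\setminus A)$ where $A$ is the $a$-side component of $G-Z'$.

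If some reconstructed separation already has both sides of size $\ge s/2^c$, we output it. Otherwise the ``small'' side is a pocket of fewer than $s/2^c$ vertices, and we recurse: restrict to the large side (re-attaching $Z'$ as boundary) and search again, now with a \emph{halved} size target. This recursion has depth $\mathcal{O}(\log(n/s))$ and is interleaved with the guessing of the $\le c$ separator vertices, so that in the worst case the guaranteed size target is divided by $2$ at most $c$ times, which is exactly the source of the $s/2^c$ bound in the statement. The base cases are $c=0$ — handled by computing the connected components of the current graph and greedily partitioning them into two groups each of total size $\ge s/2^c$, or certifying this is impossible — and the case where the current graph has fewer than $2\cdot (s/2^c)$ vertices, where the unbreakability verdict is immediate.

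For the running time, each recursion node performs polynomially many max-flow computations, each costing $\mathcal{O}(c\cdot n^2)$; there are $2^{\mathcal{O}(c\log c)}\log n$ colorings to try per node; the recursion depth is $\mathcal{O}(\log(n/s))$; and at each level the bookkeeping of how the colored separator pieces split between the two reconstructed sides contributes a further $2^{\mathcal{O}(\log(s+c))}$-type factor. Multiplying these together and absorbing polynomial factors yields the claimed $2^{\mathcal{O}(c\log(s+c))}\cdot n^3\log n$ bound. Correctness of the ``$(s,c)$-unbreakable'' verdict follows from the splitter guarantee: if a witnessing separation of order $\le c$ existed at all, some coloring at some recursion branch would isolate its separator and the corresponding max-flow would succeed; hence if every branch fails, $G$ is $(s,c)$-unbreakable. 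A cleaner \emph{randomized} variant replaces the splitter family by independent uniform colorings and succeeds with constant probability; derandomizing via explicit $(n,\mathcal{O}(c))$-splitters is routine.

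The main obstacle I expect is the bookkeeping of the \textbf{loss factor}: showing that a cut reconstructed after up to $c$ nested separator guesses and $\mathcal{O}(\log(n/s))$ pocket-removal steps still separates $G$ into two parts each of size $\ge s/2^c$, rather than something much smaller or nothing at all. One must set up the recursion so that every step in which the reconstructed cut is unbalanced is charged to a legitimate halving of the size target, and so that the number of halvings chargeable to separator-vertex guesses is at most $c$; pinning down the exact invariants (what the current boundary is, what the current size target is, which representative vertices are admissible) is where the real care lies. A secondary subtlety is making the color-coding step genuinely deterministic at the stated family size while preserving the precise property ``isolate $Z$ and keep it color-disjoint from a representative of each of the two big sides''.
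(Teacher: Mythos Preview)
The paper does not prove this proposition: it is stated with the citation \cite{LokshtanovR0Z18} and used as a black box, with the remark that ``lemmas similar to it can be found in \cite{DBLP:journals/siamcomp/ChitnisCHPP16}''. There is therefore no in-paper proof to compare your attempt against.

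That said, your sketch is in the right spirit (color-coding/splitters to isolate a small separator, Menger/max-flow to locate a cut, then a size-halving recursion), and you correctly identify the crux as the bookkeeping that caps the total size loss at a factor $2^c$. What you have written, however, is a plan rather than a proof: the key invariant --- that the number of ``unbalanced'' steps in which the size target must be halved is at most $c$ --- is asserted but not established, and your two stated bounds on the recursion (depth $\mathcal{O}(\log(n/s))$ versus ``at most $c$ halvings'') are not reconciled. In the actual arguments in \cite{DBLP:journals/siamcomp/ChitnisCHPP16,LokshtanovR0Z18} this is handled by a careful charging scheme tied to the separator vertices of a fixed hypothetical witnessing separation; your sketch gestures at this but does not pin it down. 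If you want a self-contained proof, that invariant is the piece to make precise.
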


Next we prove the following theorem. The proof of the theorem has the same {\em general template} employed by algorithms based on the recursive understanding technique. So, we give a proof sketch of Theorem~\ref{thm:uni:hypo:elim}.

\begin{theorem}
\label{thm:uni:hypo:elim}
There is a function $f$ and  an algorithm for \probEDH running in time $f(k,q,|\QQ(\equ)|)\cdot f_u(k,|\QQ(\equ)|,n) \cdot n^{\OO(1)}$. 
  Here,  we assume that we know the value $r(k,|\QQ(\equ)|)$.  
\end{theorem}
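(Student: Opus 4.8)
The plan is to follow the standard recursive understanding template, using the two main structural results (Lemma~\ref{lem:sim:sig} and Lemma~\ref{lem:comp:sig}) together with the replacement procedure of Lemma~\ref{lem:comp:rep:unbre}. First I would describe the recursive algorithm. Given an instance $(G,k)$ of \probEDH, if $|V(G)|$ is bounded by a function of $k$, $q$ and $|\QQ(\equ)|$, then solve the problem by brute force (enumerating all rooted forests of depth at most $k$ on $\OO(|V(G)|)$ nodes, all maps $\chi$, and checking whether the base components lie in $\hh$; since $\hh$ is \mso-definable and has bounded size here, this is done in time bounded by a function of $k,q,|\QQ(\equ)|$). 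Otherwise, run the algorithm {\sf Break-ALG} of Proposition~\ref{prop:unbr} with parameters $s=q$ and $c=k$. Either it reports that $G$ is $(q,k)$-unbreakable, or it returns a $(\frac{q}{2^k},k)$-witnessing separation $(X,Y)$, i.e.\ a separation of order at most $k$ with both $|X\setminus Y|$ and $|Y\setminus X|$ at least $\frac{q}{2^k} = 2\cdot r(k,|\QQ(\equ)|)$.

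In the unbreakable case, we are essentially done at the top level: we can invoke the brute-force base case once the graph has been shrunk, but more to the point the recursion never produces a genuinely unbreakable large graph that we cannot handle --- this case only arises when we have recursed into a leaf of the recursion. I would handle it by observing that $G$ itself can be treated as a $2k$-boundaried graph with empty boundary, apply Lemma~\ref{lem:comp:rep:unbre} to obtain a representative $R$ of size at most $r(k,|\QQ(\equ)|)$, and then by Definition~\ref{def:repuni} (with $H$ the empty graph) decide $(G,k)\in$\probEDH\ by deciding $(R,k)\in$\probEDH\ by brute force. In the breakable case, let $Z = X\cap Y$, so $|Z|\le k$, and let $W$ be the smaller of the two sides, say $W = X\setminus Y$ (relabel if needed), which has size at least $2r(k,|\QQ(\equ)|)$; then $\vert Y\vert \ge q/2^k$ as well. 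Now recurse on the boundaried graph $G_W := G[X]$ with boundary $Z$ (using a labelling into $\{1,\dots,2k\}$, which is possible since $|Z|\le k\le 2k$): but crucially $G[X]$ need not be unbreakable, so we actually recurse on $G[X]$ as a \probEDH-style instance together with its boundary, obtaining by induction a representative $R_W$ of $G[X]$ with $|V(R_W)| \le r(k,|\QQ(\equ)|)$. We then replace $G[X]$ by $R_W$ inside $G$, i.e.\ form $G' := R_W \oplus_\delta (G\setminus (X\setminus Z))$ where the latter is the $2k$-boundaried graph with boundary $Z$. By Definition~\ref{def:repuni}, $(G,k)$ and $(G',k)$ are equivalent instances of \probEDH. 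Since $|V(R_W)| \le r(k,|\QQ(\equ)|) < |W|$, we have $|V(G')| < |V(G)|$, so we recurse on $(G',k)$; this strictly decreases the number of vertices, guaranteeing termination, and the depth of recursion is at most $n$.

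The main technical subtlety --- and what I expect to be the chief obstacle --- is bounding the running time to $f(k,q,|\QQ(\equ)|)\cdot f_u(k,|\QQ(\equ)|,n)\cdot n^{\OO(1)}$ rather than having the factors $f_u$ and $r$ compound across the recursion. The standard fix is to note that each invocation of {\sf Break-ALG} either terminates the recursion or shrinks the graph by at least one vertex, so there are at most $n$ levels; at each level we call {\sf Break-ALG} once (cost $2^{\OO(k\log(q+k))}\cdot n^3\log n$) and, whenever we actually replace a piece, we call the algorithm of Lemma~\ref{lem:comp:rep:unbre} once --- but this latter call is only valid on a $(q,k)$-unbreakable graph. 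So I would refine the recursion: when {\sf Break-ALG} returns a witnessing separation, we do \emph{not} call Lemma~\ref{lem:comp:rep:unbre} directly on $G[X]$; instead we recurse on $G[X]$ (with boundary $Z$) to obtain its representative, and only at the bottom of the recursion --- where {\sf Break-ALG} reports unbreakability on the current boundaried graph $G^\star$ --- do we apply Lemma~\ref{lem:comp:rep:unbre} to $G^\star$, at cost $g'(k,q,|\QQ(\equ)|)\cdot f_u(k,|\QQ(\equ)|,n)\cdot n^{\OO(1)}$. Multiplying the per-level cost (which is $g'(k,q,|\QQ(\equ)|)\cdot f_u(k,|\QQ(\equ)|,n)\cdot n^{\OO(1)}$ in the worst case, since $f_u$ and the functions of $k,q,|\QQ(\equ)|$ are non-decreasing and the current graph has at most $n$ vertices) by the $\OO(n)$ levels yields a total of $f(k,q,|\QQ(\equ)|)\cdot f_u(k,|\QQ(\equ)|,n)\cdot n^{\OO(1)}$ for a suitable function $f$, using that $f_u(k,|\QQ(\equ)|,n)\le f_u(k,|\QQ(\equ)|,n)$ is monotone in $n$. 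Correctness follows from repeated application of Lemma~\ref{lem:comp:rep:unbre} (each replacement preserves the answer by Definition~\ref{def:repuni}) and Lemma~\ref{lem:sim:sig} (which underlies why the representative computed via matching marked signatures is genuinely a representative, as already argued in the proof of Lemma~\ref{lem:comp:rep:unbre}), together with the correctness of the brute-force base case. The assumption that $r(k,|\QQ(\equ)|)$ is known is used only to set $q$ and to bound the enumeration in Lemma~\ref{lem:comp:rep:unbre}; it will be removed in the next step of the paper by iterating over guesses for this value.
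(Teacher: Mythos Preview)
Your approach is essentially the same as the paper's --- recursive understanding via Proposition~\ref{prop:unbr} to split, with Lemma~\ref{lem:comp:rep:unbre} applied at the unbreakable leaves --- but there is one genuine gap.

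In your ``inner recursion'' that computes a representative of a boundaried piece, you will repeatedly split boundaried graphs. Your justification ``using a labelling into $\{1,\dots,2k\}$, which is possible since $|Z|\le k\le 2k$'' is only valid at the first level, where the starting graph has empty boundary. Once you recurse into a piece that already carries a boundary $B$ with $|B|\le 2k$, the next separator $Z'$ (of order $\le k$) gives the two sides boundaries $(B\cap X')\cup Z'$ and $(B\cap Y')\cup Z'$, and one of these may have size up to $3k$; another level could give $4k$, and so on. Since all of the signature machinery (state-tuples, Lemma~\ref{lem:comp:sig}, Lemma~\ref{lem:comp:rep:unbre}) is set up specifically for $2k$-boundaried graphs, the recursion as you describe it is not well-defined past the first level. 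Your choice to always take the \emph{smaller} side does not help here.

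The paper resolves this with an observation you are missing: since $|B|\le 2k$, at least one side of the separation contains at most $k$ vertices of $B$, and hence that side's new boundary has size at most $k+k=2k$. The paper always recurses on \emph{that} side (call it $G_1$), obtains its representative $R$, and then recurses on $R\oplus_\delta G_2$. It casts this as a single recursive procedure $\mathcal{A}$ on $2k$-boundaried graphs that returns a representative --- your two nested recursions are really a rephrasing of the same thing --- and bounds the running time via the recurrence $T(n)=T(n_1)+T(n_2+r(k,|\QQ(\equ)|))+d(k,q,|\QQ(\equ)|,n)$ with $n_1+n_2\le n+2k$, rather than the looser ``depth at most $n$'' argument you give. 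Your running-time sketch can be made to work, but only once the boundary-maintenance issue is addressed.
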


\begin{proof}[Proof sketch]
We design a recursive algorithm, denoted by ${\cal A}$,  where the input is a $2k$-boundaried graph $G'$ and the output is a graph of size at most $r(k,|\QQ(\equ)|)$ that represents the input graph. 
 The following are the steps of the recursive algorithm ${\cal A}$. 
\begin{itemize}

\item[(1)] Apply Proposition~\ref{prop:unbr} on $(G',q,k)$ and it outputs either an $(\frac{q}{2^k},k)$-witnessing separation $(X,Y)$ or conclude that  $G'$ is $(q,k)$-unbreakable. 

\item[(2)] If $G'$ is $(q,k)$-unbreakable, then compute a representative $R$ of $G'$ using Lemma~\ref{lem:comp:rep:unbre}.

\item[(3)] Suppose we get an $(\frac{q}{2^k},k)$-witnessing separation $(X,Y)$ in Step $(1)$. Let $G_1$ be the boundaried graph $G[X]$ with $\delta(G_1)=(\delta(G)\cup S) \cap V(G_1)$ and $G_2$ be the boundaried graph $G[Y]$ with $\delta(G_2)=(\delta(G)\cup S) \cap V(G_2)$. Since $\vert \delta(G)\vert \leq 2k$, either $G_1$ or $G_2$ is a $2k$-boundaried graph. 
Without loss of generality let  $G_1$ is be a  $2k$-boundaried graph. 
Since $(X,Y)$ is a $(\frac{q}{2^k},k)$-witnessing separation, we have that $\vert V(G_1)\vert \geq \frac{q}{2^{k}}\geq 2 r(k,|\QQ(\equ)|)$. 
\item[(4)] Recursively call ${\cal A}$ and compute a representative $R$ of $G_1$ of size at most $r(k,|\QQ(\equ)|)$. 
\item[(5)] Recursively call ${\cal A}$ on $R\oplus_{\delta} G_2$ and output the result. 
\end{itemize}

Let $(G,k)$ be the given input instance of \probEDH.  We assume that $G$ is a $2k$-boundaried graph with $\delta(G)=\emptyset$. Then run ${\cal A}$ on $G$ and let $R$ be the output. Then we do a brute force computation on $R$ and output accordingly. The correctness of our algorithm follows from the correctness of  Lemma~\ref{lem:comp:rep:unbre}, Proposition~\ref{prop:unbr}, and Definition~\ref{def:repuni}. 

Now we analyse the running time. Let $T(n,k,|\QQ(\equ)|)$ be the running time of the algorithm ${\cal A}$.  
In the algorithm we make two recursive calls where the size of the input graphs are at most $|V(G_1)|$ and $|V(G_2)|+r(k,|\QQ(\equ)|)$. Here $|V(G_1)|+|V(G_2)|\leq n+2k$. This implies that the running time has the following recurrence relation. 
$$T(n,k,|\QQ(\equ)|)=T(n_1,k,|\QQ(\equ)|)+T(n_2+r(k,|\QQ(\equ)|),k,|\QQ(\equ)|)+d(k,\QQ(\equ),n)$$
where $n_1+n_2\leq n+ 2k$, and $d(k,\QQ(\equ),n)$ is the time taken for Steps $(1)$ and $(2)$. 
The base case is $T(2q-1,k,|\QQ(\equ)| )\leq d(k,\QQ(\equ),2q-1)$.  By solving the above recursive formula, we get that the total running time is at most $f_1(k,q,|\QQ(\equ)|)\cdot f_u(k,|\QQ(\equ)|,n) \cdot n^{\OO(1)}$ for some function $f_1$. 

Now to solve the problem, first we run ${\cal A}$ on the input graph and get a representative $G^{\star}$ of size at most $r(k,|\QQ(\equ)|)$ for the input graph. Then, we do a brute force computation on $G^{\star}$ and output accordingly. The running time of the algorithm follows from the running time of ${\cal A}$ and the fact that the size of $G^{\star}$  is at most $r(k,|\QQ(\equ)|)$. This completes the proof of the theorem. 
\end{proof}

Next, we explain how to get rid of the assumption that we know the value $r(k,|\QQ(\equ)|)$ in Theorem~\ref{thm:uni:hypo:elim}. First, we notice that if we choose a value $r'$ instead of $r(k,|\QQ(\equ)|)$ and if the algorithm in Lemma~\ref{lem:comp:rep:unbre} succeeds, then the output is a representative of the input graph. If the algorithm terminates without producing an output, then our choice $r'$ for $r(k,|\QQ(\equ)|)$ is wrong.
In that case we say that the algorithm fails.  Now, we run the algorithm in Theorem~\ref{thm:uni:hypo:elim} by substituting values $1,2,3,\ldots$ instead of  $r(k,|\QQ(\equ)|)$ in the order. If for a value $r'$  the algorithm in Lemma~\ref{lem:comp:rep:unbre} fails restart the whole algorithm with the next value  $r'+1$. Clearly we will succeed on or before we reach the value $r(k,|\QQ(\equ)|)$. Thus, we have the following theorem.

\begin{theorem}
\label{thm:uni:elim}
There is a function $f$and  an algorithm for \probEDH running in time $f(k,q,|\QQ(\equ)|)\cdot f_u(k,|\QQ(\equ)|,n) \cdot n^{\OO(1)}$. Here $f_u$ is the running time of the user's algorithm ${\cal A}_u$.  
\end{theorem}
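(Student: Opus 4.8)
The plan is to bootstrap Theorem~\ref{thm:uni:hypo:elim}. The only place the value $r(k,|\QQ(\equ)|)$ is used in that algorithm is inside the routine ${\cal A}_r$ of Lemma~\ref{lem:comp:rep:unbre} (Algorithm~\ref{alg:Ar}), which, on a $(q,k)$-unbreakable graph $G$, searches through all boundaried graphs on at most $r(k,|\QQ(\equ)|)$ vertices and outputs the first one whose compatible marked signature is similar to that of $G$. The crucial observation is that this search is \emph{self-certifying}: by Lemma~\ref{lem:sim:sig}, as soon as ${\cal A}_r$ finds a graph $R$ with $|V(R)|\le r'$ and a compatible marked signature similar to that of $G$, the graph $R$ genuinely represents $G$ in the sense of Definition~\ref{def:repuni}, \emph{regardless of whether $r'$ equals the true value} $r(k,|\QQ(\equ)|)$. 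Conversely, if the search over all graphs on at most $r'$ vertices fails to produce such an $R$, then $r'$ was too small; and by Lemma~\ref{lem:rev:sig:sim} together with Lemma~\ref{lem:repr}, the search is guaranteed to succeed for every $r'\ge r(k,|\QQ(\equ)|)$.

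Concretely, I would first modify the recursive algorithm ${\cal A}$ of Theorem~\ref{thm:uni:hypo:elim} into a family of algorithms ${\cal A}^{r'}$ indexed by a guess $r'\in\mathbb N$: in ${\cal A}^{r'}$ every invocation of ${\cal A}_r$ is replaced by the variant that uses the bound $r'$ in place of $r(k,|\QQ(\equ)|)$, and the unbreakability threshold passed to Proposition~\ref{prop:unbr} is set to $q'=2^{k+1}\cdot r'$ accordingly. If any such invocation fails (exhausts all graphs on at most $r'$ vertices without a match), the whole run of ${\cal A}^{r'}$ aborts and reports \textsf{fail}. I would then wrap this in an outer loop running ${\cal A}^{1},{\cal A}^{2},{\cal A}^{3},\ldots$ in order, restarting with $r'+1$ the moment \textsf{fail} is reported. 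The first $r'$ for which the run completes produces a representative $G^{\star}$ of the input graph $G$ (with $\delta(G)=\emptyset$), on which a brute-force computation decides \probEDH; by the discussion above this happens no later than $r'=r(k,|\QQ(\equ)|)$, and correctness then follows verbatim from the proof of Theorem~\ref{thm:uni:hypo:elim}.

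For the running time, for a fixed guess $r'$ the algorithm ${\cal A}^{r'}$ runs in time $f(k,2^{k+1}r',|\QQ(\equ)|)\cdot f_u(k,|\QQ(\equ)|,n)\cdot n^{\OO(1)}$ by Theorem~\ref{thm:uni:hypo:elim} (an early abort only shortens this). Summing over $r'=1,\ldots,r(k,|\QQ(\equ)|)$ and using monotonicity of $f$ together with $q=2^{k+1}\cdot r(k,|\QQ(\equ)|)$, the total is at most $r(k,|\QQ(\equ)|)\cdot f(k,q,|\QQ(\equ)|)\cdot f_u(k,|\QQ(\equ)|,n)\cdot n^{\OO(1)}$, which is of the claimed form $f(k,q,|\QQ(\equ)|)\cdot f_u(k,|\QQ(\equ)|,n)\cdot n^{\OO(1)}$ after absorbing the leading factor into $f$.

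The one point requiring care — which I expect to be the main, though modest, obstacle — is ruling out that a too-small guess $r'$ produces an \emph{incorrect} output rather than an honest \textsf{fail}. This reduces to two checks: (i) whenever ${\cal A}^{r'}$'s call to the modified ${\cal A}_r$ outputs a graph $R$, the similarity test it performed is exactly the hypothesis of Lemma~\ref{lem:sim:sig}, so $(R\oplus H,k)$ and $(G\oplus H,k)$ agree as instances of \probEDH\ for every $H$; and (ii) a \textsf{fail} triggered deep inside a recursive call of ${\cal A}^{r'}$ is propagated all the way to the top and never silently misread as a valid representative of some intermediate graph $R\oplus_\delta G_2$. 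Both are routine once the interfaces of Lemmas~\ref{lem:comp:rep:unbre} and \ref{lem:sim:sig} are made precise, but they are precisely where a naive guess-and-restart argument could break, so I would spell them out explicitly.
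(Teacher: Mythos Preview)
Your proposal is correct and follows essentially the same approach as the paper: iterate guesses $r'=1,2,3,\ldots$ for $r(k,|\QQ(\equ)|)$, run the algorithm of Theorem~\ref{thm:uni:hypo:elim} with that guess, restart on failure, and observe that any success yields a genuine representative (via Lemma~\ref{lem:sim:sig}) while success is guaranteed by $r'=r(k,|\QQ(\equ)|)$. You have in fact been more careful than the paper, which states the argument in a single short paragraph and does not explicitly discuss adjusting $q$ or propagating failures through the recursion.
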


\subsection{Proof of Lemma~\ref{lem:sim:sig}}
\label{sec:lemsimilar}

To prove the lemma it is enough to prove that 
for any graph $H$ such that $(G_1\oplus H,k)$ is a yes-instance of \probEDH,   $(G_2\oplus H,k)$ is also a yes-instance of \probEDH. Towards that let $H$ be an arbitrary graph such that $(G_1\oplus H,k)$ is a yes-instance of \probEDH. That is, $\hhdepth(G_1 \oplus H) \leq k$.  First, let us fix some notations. Let $G_1^{\star}=G_1 \oplus H$, $G_2^{\star}=G_2 \oplus H$, $B_1=\delta(G_1)$, $B_2=\delta(G_2)$, and $D=\Lambda(G_1)=\Lambda(G_2)$. Among all the \heds\ of $G_1^{\star}$, of depth at most $k$ (recall that $\hhdepth(G^{\star}_1) \leq k$), let  $(\wT_1,\widehat{\chi}_1,\widehat{L}_1)$ be an \hed\ of $G_1^{\star}$ with maximum number of vertices from $B_1$ are deleted (i.e., the number of internal nodes in $\wT_1$ that are mapped to subsets of $B_1$ by $\widehat{\chi}$ is maximized).

Now we will derive a state-tuple $s=(D,T,\chi,L,\PP,\eqf)$ in $\rsig(G_1)$ from  $(\wT_1,\widehat{\chi}_1,\widehat{L}_1)$ such that $(G_1,H)$ exactly realizes $s$ through $(\wT_1,\widehat{\chi}_1,\widehat{L}_1)$. 
Recall that $D=\Lambda(G_1)=\Lambda(G_2)$. 
\begin{itemize}
\item[$(i)$]
The rooted forest $T$ is isomorphic to $\wT_1[Z_1]$, where $Z_1=\imp{\wT_1}{G_1}{\widehat{\chi}_1}$. For each node $t\in Z_1$, we use the same notation to represent the node in $T$ that is mapped to $t$ by the isomorphism function. That is, $V(T)=Z_1$. 
%
Since $\vert B_1\vert \leq 2k$ and the depth of the rooted forest $\wT_1$ is at most $k$, we have that $\vert V(T)\vert \leq (2k)^2$. 
\item[$(ii)$] Next we define $\chi$. For any node $t$ in $T$, if $\widehat{\chi}_1(t)\cap B_1\neq \emptyset$, then we set $\chi(t)=\lambda_{G_1}(\widehat{\chi}_1(t)\cap B_1)$. For any node $t$ in $T$, if $\widehat{\chi}_1(t)\subseteq V(G_1)\setminus B_1$, then we set $\chi(t)=\present$. For any node $t$ in $T$, if $\widehat{\chi}_1(t)\subseteq V(G_1^{\star})\setminus V(G_1)$, then we set $\chi(t)=\future$. 
\item[$(iii)$]
We define $L=\lambda_{G_1}(B_1\cap \widehat{L}_1)$. From the construction of $T$ and $\chi$, notice that for any leaf node $t$ of $\wT_1$, if $t\in V(T)$, then $\chi(t)\subseteq L$. Let $t_1,\ldots,t_{\ell}$ be the leaves of $\wT_1$ that belong to $V(T)$. Then, $(\chi(t_i))_{i\in [\ell]}$ forms a partition of $L$. We set $\PP=(\chi(t_i))_{i\in [\ell]}$. 
\item[$(iv)$]
Now, for each $i\in [\ell]$, we set $\eqf(\chi(t_i))$ to be the the equivalence class  of $\equ$ that contains  $G_1[\widehat{\chi}_1(t_i)]$. 
\end{itemize}

This completes the construction of the state-tuple $s=(D,T,\chi,L,\PP,\eqf)$. From the construction of $s$, we have that $s\in \rsig(G_1)$  and $(G_1,H)$ exactly realizes $s$ through $(\wT_1,\widehat{\chi}_1,\widehat{L}_1)$ (see Definitions~\ref{def:exact-satisf} and \ref{def:realizes}).

\begin{claim}
\label{clm:sig1s}
$(s,1)$ belongs to $sig_1$. 
\end{claim}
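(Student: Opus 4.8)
The plan is to pit the validity and compatibility conditions on $sig_1$ against the extremal way in which $(\wT_1,\widehat{\chi}_1,\widehat{L}_1)$ was chosen, namely as an $\hh$-elimination decomposition of $G_1^{\star}$ of depth at most $k$ that eliminates (places in internal nodes) as many vertices of $B_1$ as possible. Since $s$ was built so that $s\in\rsig(G_1)$, compatibility of $sig_1$ with $G_1$ already guarantees that $\{(s,0),(s,1)\}\cap sig_1\neq\emptyset$; hence it is enough to rule out $(s,0)\in sig_1$, and then $(s,1)\in sig_1$ follows.

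So I would argue by contradiction: suppose $(s,0)\in sig_1$. Validity of $sig_1$ then provides a pair $(s',b)\in sig_1$, with $b\in\{0,1\}$, where $s'=(D',T',\chi',L',\PP',\eqf')$ is strictly better than $s$. By the first clause of the definition of ``strictly better'' we have $D'\setminus L'\supsetneq D\setminus L$, so $|D'\setminus L'|\geq|D\setminus L|+1$. We have already established that $(G_1,H)$ exactly realizes $s$ through $(\wT_1,\widehat{\chi}_1,\widehat{L}_1)$, so the second clause yields that $(G_1,H)$ realizes $s'$; fix an $\hh$-elimination decomposition $(\wT',\widehat{\chi}',\widehat{L}')$ of $G_1^{\star}$ of depth at most $k$ witnessing this.

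The next step is to translate the abstract coordinate $D'\setminus L'$ into structure of $(\wT',\widehat{\chi}',\widehat{L}')$. Unfolding Definitions~\ref{def:realizes} and \ref{def:satisf}, $G_1$ satisfies $s'$ through the restriction of $(\wT',\widehat{\chi}',\widehat{L}')$ to $G_1$, with $D'=\Lambda(G_1)=\lambda_{G_1}(B_1)$, while $L'$ consists of the labels of the vertices of $B_1$ that lie in leaf nodes of $\wT'$ (equivalently, in base components, since leaf bags are contained in $\widehat{L}'$ by Definition~\ref{def:H-elim}). As each vertex of $B_1$ appears in exactly one node of $\wT'$, the set $D'\setminus L'$ is exactly the set of labels of the vertices of $B_1$ that lie in internal nodes of $\wT'$; and because internal nodes carry at most one vertex, $|D'\setminus L'|$ is the number of internal nodes of $\wT'$ whose $\widehat{\chi}'$-image is a singleton subset of $B_1$. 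The same computation for $s$ and $(\wT_1,\widehat{\chi}_1,\widehat{L}_1)$ shows that $|D\setminus L|$ is the number of internal nodes of $\wT_1$ mapped to a singleton subset of $B_1$ --- exactly the quantity maximized in the choice of $(\wT_1,\widehat{\chi}_1,\widehat{L}_1)$. Hence $(\wT',\widehat{\chi}',\widehat{L}')$ is an $\hh$-elimination decomposition of $G_1^{\star}$ of depth at most $k$ that eliminates strictly more vertices of $B_1$, contradicting maximality. Therefore $(s,0)\notin sig_1$, and so $(s,1)\in sig_1$.

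I expect the only genuine obstacle to be the bookkeeping of the previous paragraph: one must verify carefully, against Definitions~\ref{def:satisf} and \ref{def:realizes}, that the coordinates $D\setminus L$ and $D'\setminus L'$ of a state-tuple really do record the number of boundary vertices of $G_1$ eliminated by any $\hh$-elimination decomposition realizing that state-tuple, so that the strict-superset clause of ``strictly better'' indeed produces a decomposition strictly beating the extremal one. The remaining steps are a short logical chase through compatibility and validity.
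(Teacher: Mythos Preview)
Your proposal is correct and follows essentially the same approach as the paper's own proof: use compatibility to get $\{(s,0),(s,1)\}\cap sig_1\neq\emptyset$, assume $(s,0)\in sig_1$, invoke validity to obtain a strictly better $s'$, and then use that $(G_1,H)$ exactly realizes $s$ to realize $s'$ via a decomposition eliminating strictly more vertices of $B_1$, contradicting the extremal choice of $(\wT_1,\widehat{\chi}_1,\widehat{L}_1)$. Your write-up simply spells out in more detail the bookkeeping that identifies $|D\setminus L|$ and $|D'\setminus L'|$ with the number of boundary vertices placed in internal nodes, which the paper leaves implicit.
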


\begin{proof}
Since $s \in \rsig(G_1)$ and $sig_1$ is compatible with $G_1$, we have that at least one among $(s,0)$ and $(s,1)$ belongs to $sig_1$. To prove that $(s,1)\in sig_1$, we prove that $(s,0)$ does not belong to $sig_1$. Since $sig_1$ is a valid marked signature, if $(s,0)\in sig_1$, then there is a pair $(s_1=(D_1,T_1,\chi_1,L_1,\PP_1,\eqf_1),b)\in sig_1$ such that $D_1\setminus L_1$ is a strict super set of $D\setminus L$ and $s_1$ is at least as good as $s$. This implies that since $(G_1,H)$ exactly realizes $s$, $(G_1,H)$ realizes $s_1$ and more labels of the boundary vertices are there in $D_1\setminus L_1$ than in $D\setminus L$. 
That is, there is an \hed\ $(T^{\star},\chi^{\star},L^{\star})$ of $G_1^{\star}$ such that the number of internal nodes in $T^{\star}$ that are mapped to subsets of $B_1$ by ${\chi}^{\star}$ is strictly more than the 
number of internal nodes in $\wT$ that are mapped to subsets of $B_1$ by $\widehat{\chi}$.  This contradicts our choice of the \hed\ $(\wT_1,\widehat{\chi}_1,\widehat{L}_1)$ of $G_1^{\star}$. 
\end{proof}

Since $(s,1)\in sig_1$, and $sig_1$ and $sig_2$ are similar, we have that either $(s,1)\in sig_2$ or 
$(s,0)\in sig_2$. Next we prove that indeed $(s,1)\in sig_2$. 

\begin{claim}
\label{clm:sig2s}
$(s,1)$ belongs to $sig_2$. 
\end{claim}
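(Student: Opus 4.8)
The plan is to reuse, almost verbatim, the argument of Claim~\ref{clm:sig1s}, the only twist being that we must keep reasoning about the pair $(G_1,H)$ and never switch to $(G_2,H)$. This is legitimate because the relation ``strictly better'' is quantified over \emph{all} pairs of graphs: its second clause says that \emph{whenever} a pair $(G,H')$ exactly realizes a tuple $s_1$, that same pair realizes $s_2$. So even though the ``better'' tuple we extract will live in $sig_2$, its consequences may be instantiated at $(G_1,H)$ --- the pair we actually understand, and about which we already know that it exactly realizes $s$ through the chosen maximal \hed\ $(\wT_1,\widehat{\chi}_1,\widehat{L}_1)$ of $G_1^{\star}$.

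Concretely I would proceed as follows. By Claim~\ref{clm:sig1s} we have $(s,1)\in sig_1$; hence, by the first inclusion in the definition of similarity of $sig_1$ and $sig_2$, either $(s,1)\in sig_2$ or $(s,0)\in sig_2$. Assume toward a contradiction that $(s,1)\notin sig_2$, so $(s,0)\in sig_2$. Since $sig_2$ is a \emph{valid} marked signature, there is a pair $(\hat s,b)\in sig_2$ with $\hat s=(\hat D,\hat T,\hat\chi,\hat L,\hat\PP,\hat\eqf)$ strictly better than $s$; in particular $\hat D\setminus\hat L\supsetneq D\setminus L$, and, because $(G_1,H)$ exactly realizes $s$, the pair $(G_1,H)$ realizes $\hat s$. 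Unfolding ``$(G_1,H)$ realizes $\hat s$'' via Definition~\ref{def:realizes}: first, $\Lambda(G_1)=\hat D$, and since $\Lambda(G_1)=D$ this forces $\hat D=D$, hence $\hat L\subsetneq L$ (complements taken inside $D$); second, there is an \hed\ $(T^{\star},\chi^{\star},L^{\star})$ of $G_1^{\star}=G_1\oplus H$ of depth at most $k$ through which $G_1$ satisfies $\hat s$ (via the restriction of $(T^{\star},\chi^{\star},L^{\star})$ to $G_1$). By the recipe that reads a state-tuple off a decomposition --- conditions (b)--(d) of Definition~\ref{def:satisf}, i.e.\ steps $(i)$--$(iv)$ used above to build $s$ --- the set $\hat D\setminus\hat L$ is exactly the set of labels of boundary vertices of $G_1$ that are \emph{not} base vertices of $(T^{\star},\chi^{\star},L^{\star})$, i.e.\ that are deleted there; symmetrically $D\setminus L$ is exactly the set of deleted boundary vertices of $(\wT_1,\widehat{\chi}_1,\widehat{L}_1)$ (step $(iii)$, where $L=\lambda_{G_1}(B_1\cap\widehat{L}_1)$). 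Since $|\hat D\setminus\hat L|>|D\setminus L|$, the decomposition $(T^{\star},\chi^{\star},L^{\star})$ is an \hed\ of $G_1^{\star}$ of depth at most $k$ that deletes strictly more vertices of $B_1$ than $(\wT_1,\widehat{\chi}_1,\widehat{L}_1)$ does, contradicting the choice of $(\wT_1,\widehat{\chi}_1,\widehat{L}_1)$ as a decomposition of $G_1^{\star}$ of depth at most $k$ deleting a maximum number of vertices of $B_1$. Therefore $(s,1)\in sig_2$, which is the assertion of Claim~\ref{clm:sig2s}; combined with $(s,1)\in sig_1$ and the compatibility of $sig_2$ with $G_2$, this also gives $s\in\rsig(G_2)$, the fact needed to continue the proof of Lemma~\ref{lem:sim:sig}.

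The step I expect to be the main obstacle is the same bookkeeping that already underlies Claim~\ref{clm:sig1s}: precisely matching $|\hat D\setminus\hat L|$ with the number of deleted boundary vertices of the decomposition produced by Definition~\ref{def:realizes}. One must remember that some nodes of $\hat T$ are leaves of $\hat T$ only because their subtrees fell outside the ``important'' part $Z=\imp{T^{\star}}{G_1}{\chi^{\star}}$, yet they still map to singletons $\{x\}$ with $x\in\hat D\setminus\hat L$ and hence correspond to genuine internal (deleting) nodes of $T^{\star}$; and one must check that the depth-$\le k$ guarantee in Definition~\ref{def:realizes} is preserved, so that $(T^{\star},\chi^{\star},L^{\star})$ is an admissible competitor to $(\wT_1,\widehat{\chi}_1,\widehat{L}_1)$. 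Everything else is a mechanical transcription of the proof of Claim~\ref{clm:sig1s} with $sig_2$ in place of $sig_1$.
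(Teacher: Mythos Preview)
Your proposal is correct and follows essentially the same approach as the paper: assume $(s,0)\in sig_2$, invoke validity of $sig_2$ to obtain a strictly better tuple $\hat s$, and use that $(G_1,H)$ exactly realizes $s$ to conclude $(G_1,H)$ realizes $\hat s$, contradicting the maximality of $(\wT_1,\widehat{\chi}_1,\widehat{L}_1)$. Your write-up is in fact more careful than the paper's (you spell out why $\hat D=D$ and hence $\hat L\subsetneq L$, and you flag the bookkeeping about leaves of $\hat T$), but the underlying argument is identical.
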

\begin{proof}
We know that either $(s,1)\in sig_2$ or $(s,0)\in sig_2$. Suppose $(s,1)\notin sig_2$. Then $(s,0)\in sig_2$. 
Since $sig_2$ is a valid marked signature and $(s,0)\in sig_2$, there is a pair $(s_2=(D_2,T_2,\chi_2,L_2,\PP_2,\eqf_2),b)\in sig_2$, for some $b\in \{0,1\}$, such that $D_2\setminus L_2$ is a strict super set of $D\setminus L$ and $s_2$ is at least as good as $s$.  This implies that since $(G_1,H)$ exactly realizes $s$, $(G_1,H)$ realizes $s_2$ and more labels of the boundary vertices are there in $D_2\setminus L_2$ than in $D\setminus L$. As like in the proof of Claim~\ref{clm:sig1s}, this contradicts our choice of the \hed\ $(\wT_1,\widehat{\chi}_1,\widehat{L}_1)$ of $G_1^{\star}$. 
\end{proof}

Because of Claim~\ref{clm:sig2s}, $s\in \rsig(G_2)$. From this we will prove that $G_2^{\star}$ is a yes-instance of \probEDH. Since $s\in \rsig(G_2)$, $G_2$ satisfies $s$. This implies that there is an $\hh$-elimination decomposition $(T',\chi',L')$ of $G_2$ with the following properties.

\begin{itemize}
\item[(a)] $T$ is isomorphic to $T'[Z_2]$, where $Z_2=\imp{T'}{G_2}{{\chi}'}$. For any node $t\in Z_2$, we use the same notation to represent the node in $T$ that maps to $t$ by the isomorphism function. That is, $V(T)=Z_2$.      
\item[(b)] For any node~$t$ of~$T$ with ${\chi}'(t)\cap B_2\neq \emptyset$, 
$\chi(t)=\lambda_G({\chi}'(t)\cap B_2)$
\item[(c)] For any node $t$ of~$T$ with ${\chi}'(t)\neq \emptyset$ and 
${\chi}'(t)\subseteq V(G_2)\setminus B_2$, $\chi(t)=\present$. 
\item[(d)] Let $d_1,\ldots,d_{\ell'}$ be the leaf nodes of $T'$ that belong to $Z_2$. Then, $\PP=(\chi(d_i))_{i\in [\ell']}$. 
\item[(e)] For each node~$t$ of~$T$ with $\chi(t)\in \PP$, the equivalence class $Q\in \QQ(\equ)$ that contains $G_2[{\chi}'(t)]$  is at least as good as $\eqf(\chi(t))$. That is, $\ugf(Q,\eqf(\chi(t))=1$. 
\end{itemize}


Next we construct an $\hh$-elimination decomposition $(\wT_2,\widehat{\chi}_2,\widehat{L}_2)$ of $G^{\star}_2$ from $s$, $(\wT_1,\widehat{\chi}_1,\widehat{L}_1)$ and  $(T',\chi',L')$, and prove that that indeed $(\wT_2,\widehat{\chi}_2,\widehat{L}_2)$ is an  $\hh$-elimination decomposition  of $G^{\star}_2$, of depth at most $k$.  
\begin{itemize}
\item First we will construct the rooted forest $\wT_2$. Recall that $Z_1=\imp{\wT_1}{G_1}{\widehat{\chi}_1}$,  $Z_2=\imp{T'}{G_2}{{\chi}'}$, and $V(T)=Z_1=Z_2$. Moreover, $V(T)\subseteq V(\wT_1)$ and $V(T)\subseteq V(T')$. The vertex set of $\wT_2$ is $V(T')\uplus (V(\wT_1)\setminus V(T))$. The edge set of $\wT_2$ is the union of the edge sets of $T'$ and $\wT_1$. All the roots of $T'$ and $\wT_1$ are the roots of $\wT_2$. From the construction of $\wT_2$, it is easy to prove that $\wT_2$ is a forest. For any leaf node $t$ of $\wT_2$,  the unique path in $\wT_2$ from $t$ to the root, is a leaf to root path either in $T'$ or in $\wT_1$. Therefore, the depth of the forest $\wT_2$ is at most $k$ because the depths of $T'$ and $\wT_1$ are at most $k$. 

\item  Next we define $\widehat{\chi}_2$. 
\begin{itemize}
\item For any internal node $t$ of $\wT_2$, if $t\in V(T')$ and $\chi'(t)\neq \emptyset$, then we set $\widehat{\chi}_2(t)=\chi'(t)$. 

\item For any internal node $t$ of $\wT_2$, if $t\in V(\wT_1)$ and $\widehat{\chi}_1(t)\subseteq V(G_1^{\star})\setminus V(G_1)$, then we set $\widehat{\chi}_2(t)=\widehat{\chi}_1(t)$. Notice that because of conditions $(ii)$, $(b)$, and $(c)$, for such nodes $t$, we have that $\chi'(t)=\emptyset$. 

\item 
Now let $t$ be a leaf node of $\wT_2$. If $t$ is a leaf node in $T'$, then let $U'=\chi'(t)$. If $t$ is a leaf node in $\wT_1$, then let $U_1=\widehat{\chi}_1(t)\setminus V(G_1)$. Then, we set $\widehat{\chi}_2(t)=U'\cup U_1$. 

\item For all other nodes $t$ in $\wT_2$ (which are not considered above), we set $\widehat{\chi}_2(t)=\emptyset$. 

\end{itemize}

\item Next, we define $\widehat{L}_2= (V(H)\cap \widehat{L}_1)\cup (V(G_2)\cap L')$. 

\end{itemize}

Next, we prove that $(\wT_2,\widehat{\chi}_2,\widehat{L}_2)$ is indeed an $\hh$-elimination decomposition 
of  $G^{\star}_2$ of depth at most $k$. Since we have already proved that the depth of $\wT_2$ is at most $k$,  the depth of the decomposition $(\wT_2,\widehat{\chi}_2,\widehat{L}_2)$ is at most $k$. Now we prove that 
$(\wT_2,\widehat{\chi}_2,\widehat{L}_2)$ satisfies all the conditions of Definition~\ref{def:eldeco}. 

Let $t$ be an internal node in $\wT_2$ and suppose that $\widehat{\chi}_2(t)\neq \emptyset$. Then, exactly one of the following is true. 
\begin{itemize}
\item $t$ is an internal node in $T'$ and $\widehat{\chi}_2(t)=\chi'(t)$.
\item $t$ is an internal node in $\wT_1$ and $\widehat{\chi}_2(t)=\widehat{\chi}_1(t)$. 
\end{itemize} 
Thus, since $(\wT_1,\widehat{\chi}_1,\widehat{L}_1)$ and  $(T',\chi',L')$ are $\hh$-elimination decompositions of $G^{\star}_1$ and $G_2$, we have that $\vert \widehat{\chi}_2(t)\vert \leq 1$ and $\widehat{\chi}_2(t)\subseteq V(G)\setminus \widehat{L}_2$. Therefore, condition $(1)$ of Definition~\ref{def:eldeco} is satisfied. 

From the construction of  $\widehat{\chi}_2$, notice that for any $v\in V(G_2^{\star})$, there is a node  $t\in V(\wT_2)$ such that $v\in \widehat{\chi}_2(t)$. Moreover, since $(\chi'(t))_{t\in V(T')}$ is a partition of $V(G_2)$ and   $(\widehat{\chi}_1(t)\cap (V(G^{\star}_1)\setminus V(G_1)))_{t\in V(\wT_1)}$ forms a partition of  $V(G^{\star}_1)\setminus V(G_1)$, we have that $(\widehat{\chi}_2(t))_{t\in V(\wT_2)}$ forms a partition of $V(G^{\star}_2)$. Thus, condition $(2)$ of Definition~\ref{def:eldeco} is satisfied. 

Now we prove that $(\wT_2,\widehat{\chi}_2,\widehat{L}_2)$ satisfies condition $(3)$ of Definition~\ref{def:eldeco}. Let $uv\in E(G^{\star})$ be an arbitrary edge. Let $u\in \widehat{\chi}_2(t)$ and $u\in \widehat{\chi}_2(t')$.  Suppose $uv\in E(G_2)$. Then, since $(T',\chi',L')$ is an $\hh$-elimination decomposition of $G_2$ and $T'$ is a subgraph of $\wT_2$, we have that $t$ and $t'$ are in ancestor-descendant  
relation of $\wT_2$. Suppose $uv\in E(H)$. Then, $t$ and $t'$ are also belong to $\wT_1$. Since $\wT_1$ is a subgraph of $\wT_2$ and  $(\wT_1,\widehat{\chi}_1,\widehat{L}_)$ is an $\hh$-elimination decomposition of $G_1^{\star}$, we have that $t$ and $t'$ are in ancestor-descendant  relation of $\wT_2$.

Next we prove that $(\wT_2,\widehat{\chi}_2,\widehat{L}_2)$ satisfies condition $(4)$ of Definition~\ref{def:eldeco}. That is, for each leaf node $t$ of $\wT_2$, $G_2^{\star}[\widehat{\chi}_2(t)]$ belongs to $\hh$. 
If $\widehat{\chi}_2(t)\subseteq V(G_2)$, then $G_2^{\star}[\widehat{\chi}_2(t)]=G_2[\widehat{\chi}_2(t)]$ is a base component in $(T',\chi',L')$ and hence it belongs to $\hh$. 
If $\widehat{\chi}_2(t)\subseteq V(H)$, then $G_2^{\star}[\widehat{\chi}_2(t)]=H[\widehat{\chi}_2(t)]$ is an induced subgraph of a base component in $(\wT_1,\widehat{\chi}_1,\widehat{L})$ and hence it belongs to $\hh$ because $\hh$ is a hereditary family.  

Now, we are in the case when  $\widehat{\chi}_2(t)\cap (V(G_2)\setminus B_2)\neq \emptyset$ and $\widehat{\chi}_2(t)\cap (V(G_2^{\star})\setminus V(G_2))\neq \emptyset$. Let $H'=H[\widehat{\chi}_2(t)\cap V(H)]$ and $G'_2=G_2[\widehat{\chi}_2(t)\cap V(G_2)]$. Then $G_2'\oplus H'$ is the graph $G_2^{\star}[\widehat{\chi}_2(t)]$. If $B_2\cap \widehat{\chi}_2(t)=\emptyset$, then by the construction of $\wT_2$, 
either $V(G_2')=\emptyset$ or $V(H')=\emptyset$ which is a contradiction to the assumption that 
$\widehat{\chi}_2(t)\cap (V(G_2)\setminus B_2)\neq \emptyset$ and $\widehat{\chi}_2(t)\cap (V(G_2^{\star})\setminus V(G_2))\neq \emptyset$. So we assume that 
$B_2\cap \widehat{\chi}_2(t)\neq\emptyset$. This implies that $t$ is a leaf node in $T'$ as well as a leaf node in $\wT_1$. Let $G_1'=G_1[\widehat{\chi}_1(t)\cap V(G_1)]$. Notice that  $G_1'\oplus H'$ is a base component in $(\wT_1,\widehat{\chi}_1,\widehat{L}_1)$, and hence $G_1'\oplus H'$ belongs to $\hh$.  Since $(G_1,H)$ exactly realizes $s$ through  $(\wT_1,\widehat{\chi}_1,\widehat{L}_1)$, we have that $G_1'$ belongs to the equivalence class $\eqf_1(\widehat{\chi}_1(t))$. Since $G_2$ satisfies $s$, we have that $G_2'$ belongs to an equivalence class which is at least as good as $\eqf_1(\chi_1(t))$.  This implies that $G_2'\oplus H'$ belongs to $\hh$ because $G_1'$ belongs to the equivalence class $\eqf_1(\widehat{\chi}_1(t))$ and $G_1'\oplus H'$ belongs to $\hh$. 

This completes the proof of the lemma. 

\subsection{Proof of Lemma~\ref{lem:comp:sig}}
\label{sec:proof:lem:comp:sig}

The proof of Lemma~\ref{lem:comp:sig} is identical to the proof of Lemma~\ref{lem:an_un}, where we use the algorithm ${\cal A}_u$ instead of $\SC{M}_{\sf mod}$. That is, we have a branching algorithm where the main computation boils down to  executions of ${\cal A}_u$ for each state tuple $s$. Here, the algorithm is notationally cumbersome compared to the proof of Lemma~\ref{lem:an_un}. Throughout the section $G$ is a $(q,k)$-unbreakable graph.

If $\vert V(G)\vert \leq 3q+k$, then we do a brute force computation. Otherwise, 
to prove the lemma we do the following. For each state-tuple $s=(D,T,\chi,L,\PP,\eqf)$, either we identify that $s\in \rsig(G)$ or we identify a state-tuple $\tilde{s}=(\tilde{D},\tilde{T},\tilde{\chi},\tilde{L},\tilde{\PP},\tilde{\eqf}) \in \rsig(G)$ such that $\tilde{s}$ is strictly better than $s$. 
   or we may ``fail''. If we identify that  $s\in \rsig(G)$, then we include $(s,1)$ in the output signature. If we identify that $\tilde{s}\in \rsig(G)$, then we include $(s,0)$ and $(\tilde{s},1)$ in the output signature. If we fail to identify $s$ or $\tilde{s}$, then we prove that indeed $s\notin \rsig(G)$. 
This is formalized in the following lemma. 

\begin{lemma}
\label{lem:sigletest}
Suppose $\vert V(G)\vert >3q+k$. There is a function $g_1$ and an algorithm {\sigtest} that given a graph $G$ and  a state-tuple $s$, runs in time $g_1(q,k)\cdot f_{u}(k,|\QQ(\equ)|,n)\cdot n^{\OO(1)}$,
 and outputs exactly one of the following: $(i)$ $(s,1)$, $(ii)$ $(s,0)$ and $(\tilde{s}=({D},\tilde{T},\tilde{\chi},\tilde{L},\tilde{\PP},\tilde{\eqf}),1)$, or $(iii)$ {\sf Fail}. The algorithm has the following properties. 

\begin{itemize}
\item[(a)] If the output is $(s,1)$, then $s\in \rsig(G)$.
\item[(b)] If the output is $(s,0)$ and $(\tilde{s},1)$, then $\tilde{s}\in \rsig(G)$,
and $\tilde{s}$ is strictly better than $s$.
\item[(c)] If  $s\in \rsig(G)$, then the output is either $(i)$ or $(ii)$. 
\end{itemize}  
\end{lemma}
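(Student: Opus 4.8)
The plan is to imitate almost verbatim the branching algorithm from the proof of Lemma~\ref{lem:an_un}, but run it \emph{per state-tuple} $s=(D,T,\chi,L,\PP,\eqf)$ rather than with a single auxiliary set $A^{*}$, and using the user's algorithm ${\cal A}_u$ in place of $\SC{M}_{\sf mod}$. First I would check the cheap consistency conditions on $s$: we must have $\Lambda(G)=D$, and all the structural requirements of a state-tuple; if any fails we output {\sf Fail}. Then, exactly as in \auedtf, I would use the labelled forest $T$ of $s$ to recover a candidate ``spine'' of the $\hh$-elimination decomposition: the internal nodes $t$ of $T$ with $\chi(t)=\{x\}$ for $x\in D\setminus L$ fix which boundary vertices are deleted and in what ancestor order, so they play the role of the set $A^{*}$. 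Since $|D\setminus L|\le 2k$ and $T$ has depth at most $k$, this guessed prefix has size at most $2k$; I would then look at the unique large connected component $C^{*}$ of $G$ after removing these vertices (this is where $(q,k)$-unbreakability and $|V(G)|>3q+k$ are used, via Observation~\ref{obs:boundedsep}/Lemma~\ref{lem:boundedsep}), and note $|V(G)\setminus V(C^{*})|<q+k$.

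Next I would reuse the three base cases and the single branching rule of the \auedtf{} algorithm, adapted to the richer bookkeeping. The analogue of Base Case~1 rejects (outputs {\sf Fail}) when the measure is exhausted and $C^{*}$ cannot be certified to lie in the right equivalence class. The analogue of Base Case~2 is where the user's algorithm does the real work: given the equivalence class $Q\in\QQ(\equ)$ that $s$ demands for the leaf whose bag will contain $C^{*}$, we call ${\cal A}_u$ on $C^{*}$ with parameter $k$ and target class $Q$ to obtain (if it exists) a deletion set $S^{*}$ of size at most $k$ such that $C^{*}-S^{*}$ lies in a class at least as good as $Q$; using the component deletion property (Definition~\ref{def:cdpeq}) and closure of $\hh$ under disjoint union, together with the brute-force enumeration over the $<q+k$ vertices outside $C^{*}$ (Cayley bound $q^{\Oh(q)}$ on labelled forests), we can decide whether the guessed prefix plus $S^{*}$ yields a bona fide decomposition realizing either $s$ exactly or some strictly better $\tilde s$. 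If it realizes $s$ (every leaf lands in \emph{exactly} the demanded class), output $(s,1)$; if only a strictly better $\tilde s$ is realized — which happens precisely when more boundary vertices get deleted, i.e.\ $\tilde D\setminus\tilde L\supsetneq D\setminus L$, matching Definition of ``strictly better'' — output $(s,0)$ together with $(\tilde s,1)$. The branching rule is identical to Branching Rule~1: enumerate maximal $(q,k)$-connected sets $C\subseteq V(C^{*})$ with $G[N_{C^{*}}[C]]\notin\hh$ using Proposition~\ref{lem:sep}, and for each $v\in N_G[C]$ recurse after adding $v$ to the prefix (decreasing the budget $k-|\text{prefix}|$); the depth of the recursion tree is at most $k+1$, giving a $(\ak+k)^{\Oh(\ak+k)}$ factor.

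Correctness properties (a)--(c) then follow by exactly the argument of Lemma~\ref{lem:brorbasecase3} and the proof of Lemma~\ref{lem:an_un}. Property (a): whenever we emit $(s,1)$ we have explicitly exhibited an $\hh$-elimination decomposition of $G$ witnessing $s\in\rsig(G)$ (conditions (a)--(e) of Definition~\ref{def:satisf} are read off from the construction). Property (b): when we emit $(\tilde s,1)$ we likewise exhibit a decomposition, and $\tilde s$ was obtained by deleting a proper superset of the boundary-label set, so $\tilde s$ is strictly better than $s$ in the sense of the paper's definition. Property (c): if $s\in\rsig(G)$, fix a witnessing decomposition $(\wT,\widehat\chi,\widehat L)$ maximizing the number of deleted boundary vertices; then $C^{*}$ lies in an equivalence class at least as good as $\eqf$ demands, so either Base Case~2 certifies $s$ directly, or (as in Lemma~\ref{lem:brorbasecase3}) $\C{C}_{\sf conn}\ne\emptyset$ and every member meets the spine $S$, so the branching rule fires correctly and recursion eventually reaches a certifying leaf — hence the output is $(i)$ or $(ii)$, never {\sf Fail}. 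The running time is $g_1(q,k)\cdot f_u(k,|\QQ(\equ)|,n)\cdot n^{\Oh(1)}$ since the recursion tree has $(\ak+k)^{\Oh(\ak+k)}$ nodes, each node makes at most one call to ${\cal A}_u$ (per equivalence class, of which there are a function of $k$ many) plus $2^{\Oh(q+k)}n^{\Oh(1)}$ work for enumerating connected sets and the $q^{\Oh(q)}$ brute force over vertices outside $C^{*}$.

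\textbf{Main obstacle.} The delicate point is the adaptation of Base Case~2 and of Lemma~\ref{lem:brorbasecase3}: in the original proof the target was membership in $\hh$, whereas here the target is membership in a prescribed equivalence class $\eqf(\chi(t))$ (or a strictly better one), and there can be several leaves each with its own demanded class and its own slice of $C^{*}$. One must argue that a maximizing witness decomposition distributes $V(C^{*})$ among $V(D^{*})$, the spine, and small ``obstruction'' pieces in a way where the user's algorithm output $S^{*}$ can be substituted leaf-by-leaf without ever \emph{worsening} a class (using $\ugf$ transitivity, closure under disjoint union, and the component deletion property), and simultaneously that any piece $Y$ of $C^{*}$ whose induced graph-plus-boundary is \emph{not} in $\hh$ must touch the spine — this is the exact analogue of the $S_{\sf obs}=\emptyset$ versus $S_{\sf obs}\ne\emptyset$ case split in Lemma~\ref{lem:brorbasecase3}, and getting the equivalence-class accounting to go through cleanly is the bulk of the work.
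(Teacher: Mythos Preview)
Your plan follows the right high-level template (adapt the \auedtf{} branching argument, replace $\SC{M}_{\sf mod}$ by ${\cal A}_u$), but it misses a structural ingredient that the paper's proof needs and that a direct port of Lemma~\ref{lem:an_un} does not supply.

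The difficulty is this. Your prefix $A^{*}$ consists only of the boundary vertices that $s$ designates for deletion (labels in $D\setminus L$). The boundary vertices with labels in $L$ --- those that $s$ places in specific leaves of $T$ --- remain in $G-A^{*}$, and several of them, assigned by $s$ to \emph{distinct} leaves $t_1\neq t_2$, may well lie in the same large component $C^{*}$ (they are only separated, in a witnessing decomposition, by non-boundary $\present$-vertices that you have not yet placed). Now when you call ${\cal A}_u$ on $C^{*}$ with target class $Q=\eqf(\chi(t^{*}))$ and it returns some $S^{*}$, dumping all of $C^{*}-S^{*}$ into $t^{*}$'s bag drags along boundary vertices whose labels belong to $\chi(t_1)$ with $t_1\neq t^{*}$. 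The resulting decomposition realizes neither $s$ (wrong partition $\PP$) nor a strictly better $\tilde s$ (since $L$ did not shrink, only $\PP$ changed). And because this failure is a \emph{label inconsistency}, not an $\hh$-obstruction, your single branching rule on connected sets with $G[N_{C^{*}}[C]]\notin\hh$ need not fire: $C^{*}$ may be perfectly in $\hh$ and yet impossible to fit into $s$. So property~(c) breaks.

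The paper closes this gap with two devices you omit. First, it guesses upfront the \emph{entire} rooted forest $T^{*}\cong\wT$ of size at most $q+k$ together with the placement of all boundary vertices in it (not just the deleted ones), yielding a \emph{partial solution} in which $\delta(G)$ is already fully assigned. Second, it adds a separate \emph{Branching Rule~1} on ``inconsistent triplets'': whenever two vertices of the same component of $G-\used{\chi^{*}}{G}$ have neighbours in $\chi^{*}$-bags that are not ancestor--descendant, one branches on the first or last $q$ vertices of a path between them. Only after this rule is exhausted is every remaining component anchored to a single leaf-to-root path, so the large piece $J^{*}$ genuinely belongs to $t^{*}$'s subtree; condition~(b) of Lemma~\ref{lem:finaldeletion} then ensures that the final call to ${\cal A}_u$ operates on a graph whose boundary labels are exactly those of $\chi(t^{*})$, and the $S\cap\delta(G)\neq\emptyset$ case is precisely what produces a strictly better $\tilde s$. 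Your ``main obstacle'' paragraph senses the right neighbourhood but frames it as equivalence-class bookkeeping; the missing step is the separation mechanism, and the fix is an extra branching rule rather than a more careful accounting.
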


First assuming Lemma~\ref{lem:sigletest},  we prove Lemma~\ref{lem:comp:sig}.

\begin{proof}[Proof of Lemma~\ref{lem:comp:sig}]
Initially we set $sig=\emptyset$. For each state-tuple $s$, we run {\sigtest}. If the output is $(s,1)$, then we add $(s,1)$ to $sig$. If the output is $(s,0)$ and $(\tilde{s},1)$, then we add  $(s,0)$ and $(\tilde{s},1)$ to $sig$. Finally we output $sig$. The correctness of the algorithm follows from the correctness of {\sigtest}. Since the number of state-tuples is bounded by $k$ and $\vert \QQ(\equ)\vert$, by Lemma~\ref{lem:sigletest}, the running time of the algorithm follows. 
\end{proof}

\subsubsection{Proof of Lemma~\ref{lem:sigletest}}

First we give an overview of our algorithm {\sigtest}. Let $s\in \rsig(G)$. To identify that $s$ is indeed belongs to $\rsig(G)$ we have to test the existence of an $\hh$-elimination decomposition $(\wT,\widehat{\chi},\widehat{L})$ of $G$, of depth at most $k$, such that $G$ satisfies $s$ through $(\wT,\widehat{\chi},\widehat{L})$. Since $G$ is $(q,k)$-unbreakable and $\vert V(G)\vert > 3q+k$, by
Lemma~\ref{lem:boundedsep} we know that there is exactly one connected component $C^{\star}$ in $G[\widehat{L}]$ that has at least $q$ vertices and $|V(G)\setminus V(C^{\star})|\leq q+k$. This implies that 
$|V(\wT)|\leq q+k$. Thus, we can guess the rooted forest $\wT$ and the leaf node $t^{\star}$ in $\wT$ such that $V(C^{\star})\subseteq \widehat{\chi}(t^{\star})$. Let $EQ^{\star}$ be the equivalence class mentioned in the state-tuple corresponding to the leaf node $t^{\star}$ (assuming $C^{\star}$ contains some boundary vertices). Then $\widehat{\chi}(t^{\star})$ belongs to an equivalence class which is at least as good as $EQ^{\star}$.   Using  branching rules we {\em almost} identify the values $\widehat{\chi}(t)$ for all the nodes $t$ except the nodes in the unique path $P^{\star}$ from $t^{\star}$ to the root.  Finally we will have a connected component $D^{\star}$ such that $N(D^{\star})\subseteq \bigcup_{t\in V(P^{\star})} \widehat{\chi}(t)$. In this step we run the user's algorithm ${\cal A}_u$ to identify the vertices $S$ of $D^{\star}$ to be placed in $\widehat{\chi}(t)$ for the internal nodes of $T$   in the path from $t^{\star}$ to the root such that 
$\chi(t)=\present$ and $D^{\star}-S$ belongs to an equivalence class which is at least as good  $EQ$. If $S$ does not contain any vertex from $\delta(G)$, then we identify a witness (i.e., the $\hh$-elimination decomposition $(\wT,\widehat{\chi},\widehat{L})$) for the fact that $s\in \rsig(G)$. In that case the algorithm {\sigtest} outputs $(s,1)$. If $S\cap \delta(G)\neq \emptyset$, then we get a witness for another state-tuple $\tilde{s}$ and we prove that $\tilde{s}$ is at least as good as $s$. In that case the algorithm {\sigtest} outputs $(s,0)$ and $(\tilde{s},1)$. If ${\cal A}_u$ outputs {\sf No}, then {\sigtest} outputs {\sf Fail}. 

Now we formally prove Lemma~\ref{lem:sigletest}.  We start by defining  the notion of a partial solution. 

\begin{definition}[Partial solution] Given a graph $G$ and a state-tuple $s=(D,T,\chi,L,\PP,\eqf)$, a partial solution of $(G,s)$ is a tuple $e=(D,T^{\star}, {\chi}^{\star},{L}^{\star})$ 
where~$T^{\star}$ is a rooted forest of depth at most $k$ and~${\chi}^{\star} \colon V(T^{\star}) \to 2^{V(G)}$ and $L^{\star} \subseteq V(G)$, such that the following properties hold. 

\begin{itemize}
\item[(a)] $\Lambda(G)=D$ and $\lambda_G(L^{\star}\cap \delta(G))= L$. 
\item[(b)] For each internal node~$t$ of~$T^{\star}$ we have~$|{\chi}^{\star}(t)| \leq 1$ and ${\chi}^{\star}(t)\subseteq V(G)\setminus L$.  
\item[(c)] The sets~$({\chi}^{\star}(t))_{t \in V(T^{\star})}$ form a partition of a subset of $V(G)$.
\item[(d)] $T$ is isomorphic to $T^{\star}[Z]$, where $Z=\imp{T^{\star}}{G}{{\chi}^{\star}}$. For any node $t\in Z$, we use the same notation to represent the node in $T$ that maps to $t$ by the isomorphism function. That is, $V(T)=Z$.   
\item[(e)] For any node~$t$ of~$T$ with ${\chi}^{\star}(t)\cap \delta(G)\neq \emptyset$, 
$\chi(t)=\lambda_G({\chi}^{\star}(t)\cap \delta(G))$
\item[(f)] For each node $t$ of~$T$ with ${\chi}^{\star}(t)\neq \emptyset$ and 
${\chi}^{\star}(t)\subseteq V(G)\setminus \delta(G)$, $\chi(t)=\present$. 
\item[(g)] Let $t_1,\ldots,t_{\ell}$ be the leaf nodes of $T^{\star}$ that belong to $Z$. Then, $\PP=(\chi(t_i))_{i\in [\ell]}$. 
\item[(h)] For any two nodes $t$ and $t'$ in $T^{\star}$ such that they are not in an ancestor-descendant relationship, there is no edge in $G$ between a vertex in $\chi^{\star}(t)$ and $\chi^{\star}(t')$. 
\end{itemize} 
\end{definition}

For a partial solution $e=(D,T^{\star}, {\chi}^{\star},{L}^{\star})$ of $(G,s)$, we use $\used{\chi^{\star}}{G}$ to denote the set $\bigcup_{t\in V(T^{\star})} \chi^{\star}(t)$. That is, in the partial solution $e$ we have identified the places of $\used{\chi^{\star}}{G}$. The objective is to  keep including vertices from $V(G)\setminus \used{\chi^{\star}}{G}$ to the sets $\{\chi^{\star}(t)\}_{t\in V(T^{\star})}$ through branching rules.

Recall that $(G,s=(D,T,\chi,L,\PP,\eqf))$  is the input of the algorithm {\sigtest}. To understand the steps of the algorithm let us assume that $s\in \rsig(G)$ and $(\wT,\widehat{\chi},\widehat{L})$ be an $\hh$-elimination decomposition of $G$, of depth at most $k$, such that $G$ satisfies $s$ through $(\wT,\widehat{\chi},\widehat{L})$.  
As mentioned earlier, since $G$ is $(q,k)$-unbreakable and $\vert V(G)\vert > 3q+k$, by
Lemma~\ref{lem:boundedsep}, there is exactly one connected component $C^{\star}$ in $G[\widehat{L}]$ that has at least $q$ vertices and $|V(G)\setminus V(C^{\star})|\leq q+k$ and hence $|V(\wT)|\leq q+k$. Therefore, as a first step we guess the tree  $\wT$, all the nodes in $t$ such that $\widehat{\chi}(t)$ contains at least one vertex from $\delta(G)$, $\widehat{\chi}(t)\cap \delta(G)$, and the node $t^{\star}$ such that $\widehat{\chi}(t^{\star})=C^{\star}$.  Since  $\vert \delta(G)\vert \leq 2k$ and  $|V(\wT)|\leq q+k$ the number of choices for this guess is bounded by $(q+k)^{\OO(q+k)}$. Thus, we assume that initially we have a partial solution $e=(D,T^{\star}, {\chi}^{\star},{L}^{\star})$, where $T^{\star}=\wT$ with the following properties. (Recall that $T$ is an induced subgraph of $T^{\star}$). 

\begin{itemize}
\item For any node $t\in V(T)$, $\chi^{\star}(t)\subseteq \delta(G)$ and  $\chi(t)=\lambda_G(\chi^{\star}(t)\cap \delta(G))$.
\item $\delta(G)=\bigcup_{t\in V(T^{\star})} \chi^{\star}(t)$.  
\end{itemize}

We remind that we also know the node $t^{\star}\in V(T^{\star})=V(T)$ such that the vertices of the ``unknown'' large component $C^{\star}$ belong to $\widehat{\chi}(t^{\star})$. Now on, we assume that our instances contain $(G,s,e,t^{\star})$ where $e$ is a partial solution and $t^{\star}$ is the special leaf node in the rooted forest $T^{\star}$ of~$e$. The objective is to construct an $\hh$-elimination decomposition $(\wT,\widehat{\chi},\widehat{L})$ of $G$ (we call it as solution) {\em obeying} the partial solution $e$ such that either $G$ satisfies $s$ through $(\wT,\widehat{\chi},\widehat{L})$ or $G$ satisfies $\tilde{s}$ (derived from the solution) through $(\wT,\widehat{\chi},\widehat{L})$ with the property that $\tilde{s}$ is strictly better than $s$. We formally define when a solution obeys a partial solution. 

\begin{definition}
Let $s=(D,T,\chi,L,\PP,\eqf)$ be a state-tuple and $e=(D,T^{\star}, {\chi}^{\star},{L}^{\star})$ be a partial solution. We say that a solution $(\wT,\widehat{\chi},\widehat{L})$ obeys the partial solution $e$ if $\wT$ is isomorphic to $T^{\star}$ (we use the same vertex to denote its image in the isomorphism function, i.e., $V(\wT)=V(T^{\star})$) and for each $t\in V(\wT)=V(T^{\star})$, $\chi^{\star}(t)\subseteq \widehat{\chi}(t)$. 
\end{definition}

Next we explain about our branching rules. For each instances created in the branching rule, at least for one internal node $t\in V(T^{\star})$ with $\chi^{\star}(t)=\emptyset$, we will have a assigned a vertex from $V(G)$ to it.  That is, in the new instance  $(G,s,e'=(D,T^{\star}, {\chi}',L'),t^{\star})$ created there is at least one new internal node $t\in V(T^{\star})$ with $\chi^{\star}(t)=\emptyset$ and $\chi'(t)\neq \emptyset$. Since $\vert V(T^{\star})\vert \leq q+k$ and for each internal node $t$, $\vert \chi'(t)\vert \leq 1$, we use the number of internal nodes $t$ with $\chi^{\star}(t)= \emptyset$ as a measure for our branching algorithm. In our branching rules, the measure decreases by at least one after the application of each branching rule. Next we explain our branching rules. A branching step creates more instances. If for any of the instance the output is $(s,1)$, then the output of our algorithm is $(s,1)$. If this is not the case and at least one of them outputs $(s,0)$ and $(\tilde{s},1)$, then the output of our algorithm is $(s,0)$ and $(\tilde{s},1)$. If for all the instances, the output is {\sf Fail}, then we output {\sf Fail}. 

Our first branching rule attempt to fix ``neighborhood'' inconsistencies, below we begin by explaining what we mean by an inconsistency that our first branching rules attempts to fix. Suppose there are (not necessarily distinct) vertices $u,u'$ in the same connected component of $G'=G-\used{\chi^{\star}}{T^{\star}}$ and two nodes $t$ and $t'$ in $T^{\star}$ such that $t$ and $t'$ are not in ancestor-descendant relationship, and
$u$ and $u'$ have neighbors in $\chi^{\star}(t)$ and $\chi^{\star}(t')$, respectively. Then consider any $u-u'$ path $P$ in $G'$ (which exists, since $u$ and $u'$ are in the same connected component of $G'$). We remark that if $u=u'$, then $P$ is a path on one vertex. For the above, we say that $(u,u',P)$ is an {\em inconsistent triplet}.

We will observe that for any solution $(\wT,\widehat{\chi},\widehat{L})$ obeying the partial solution $e$, $V(P) \cap (V(G)\setminus L) \neq \emptyset$. In fact, we can establish a stronger statement than the above, which guarantees that at least one among the first and the last $q$ many vertices in $P$, belongs to $V(G)\setminus L$. The above property is obtained because at least one of $u$ or $u'$ cannot belongs to the unique connected component of size at least $q$ in $G[L]$, for any solution $(\wT,\widehat{\chi},\widehat{L})$ (see Lemma~\ref{lem:boundedsep}). The above leads us to the following branching rule.

\medskip
\noindent{\bf Branching Rule 1:} Let $(u,u',P)$ is an inconsistent triplet in $(G,s,e=(D,T^{\star}, {\chi}^{\star},{L}^{\star}),t^{\star})$. Then for each  vertex $v$  in the first $q$ vertices or in the last $q$ vertices in the path $P$ and for each internal node $t\in V(T^{\star})$ such that $\chi^{\star}(t)=\emptyset$ and $\chi(t)=\present$, construct a new instance $(G,s,e'=(D,T^{\star}, {\chi}',{L}^{\star}),t^{\star})$ if $e'$ is a partial solution for $(G,s)$ and solve it. Here $\chi'(t)=\{v\}$ and $\chi'(t_1)=\chi^{\star}(t_1)$ for all $t_1 \neq t$.

The correctness of the branching rule follows from Lemma~\ref{lem:boundedsep}. Notice that if Branching Rule 1 is not applicable, then there is no ``crossing edge'' inconsistencies. That is, let $(G,s,e=(D,T^{\star}, {\chi}^{\star},{L}^{\star}),t^{\star})$ be an instance such that Branching Rule 1 is not applicable. Then for any connected component $C$ of $G-\used{\chi^{\star}}{G}$, there is a leaf node $t$ in $T^{\star}$ such that for any vertex $w\in N_G(V(C))$, there is an ancestor $t'$ of $t$ ($t'$ maybe equal to $t$) with $w\in \chi^{\star}(t')$.

Next, we do a branching rule similar to the Branching Rule~1 in  the proof of Lemma \ref{lem:one-implies-5-9}. 
Recall that for a graph $G$ and integers $p,q'\in \mathbb{N}$, a set $B \subseteq V({G})$ is a {\em $(p,q')$-connected set} in ${G}$, if $\what{G}[B]$ is connected, $|B| \leq p$ and $|N_{{G}}(B)|\leq q'$. A $(p,q')$-connected set $B$ in ${G}$ is {\em maximal} if there does not exist another $(p,q')$-connected set $B^*$ in ${G}$, such that $B \subset B^*$.  Using Proposition~\ref{lem:sep} we can enumerate all maximal $(p,q')$-connected sets in time $2^{p+q'}\cdot n^{\OO(1)}$.  

Since the the family $\hh$ is closed under disjoint union, if there is a graph $H\notin \hh$, then there is a connected component $H'$ of $H$ such that $H'\notin \hh$. Let $(G,s,e=(D,T^{\star}, {\chi}^{\star},{L}^{\star}),t^{\star})$ be an instance and let $(\wT,\widehat{\chi},\widehat{L})$ be a hypothetical solution obeying the partial solution $e$. Let $C$ be a maximal $(q+k,k)$-connected set in $G$ such that $G[C'] \notin \hh$ where $C'=V(C) \setminus \used{\chi^*}{G}]$. 
Then by the hereditary property of $\hh$, there exists a 
connected component $F$  in $G[C']$ such that $F\notin \hh$. 
Then observe that for the solution $(\wT,\widehat{\chi},\widehat{L})$ obeying the partial solution $e$, $V(F)\cap (V(G)\setminus \widehat{L}) \neq \emptyset$. This leads to the following branching rule. Let ${\cal C}$ be the set of all maximal $(q+k,k)$-connected set in $G$.

\medskip
\noindent{\bf Branching Rule 2:}  Let $(G,s,e=(D,T^{\star}, {\chi}^{\star},{L}^{\star}),t^{\star})$ be an instance. Let $C\in {\cal C}$ such that $G[C'] \notin \hh$ where $C'=Q\setminus \used{\chi^*}{G}$. 
Let $F$ be a connected component in $G[C']$ such that  $F\notin \hh$.
Then for each $v\in V(F)$ and for each internal node $t\in V(T^{\star})$ such that $\chi^{\star}(t)=\emptyset$ and $\chi(t)=\present$, construct a new instance $(G,s,e'=(D,T^{\star}, {\chi}',{L}^{\star}),t^{\star})$ if $e'$ is a partial solution for $(G,s)$ and solve. Here $\chi'(t)=\{v\}$ and $\chi'(t_1)=\chi^{\star}(t_1)$ for all $t_1 \neq t$. 

Towards the correctness of Branching Rule 2, notice that $V(F)$ cannot be a subset of $\widehat{\chi}(t')$ for any leaf node $t'$ because $F\notin \hh$. Now suppose $V(F)\subseteq \bigcup_{t'\in R}\widehat{\chi}(t')$ where $R$ is the set of leaf nodes of $\wT$.  Then, since $F$ is connected there exists a two distinct leaf nodes  
$t_1$ and $t_2$ such that there is an edge in $G$ between a vertex in  $\widehat{\chi}(t_1)$ and a vertex in $\widehat{\chi}(t_2)$. This contradicts the fact  that $(\wT,\widehat{\chi},\widehat{L})$ is an $\hh$-elimination decomposition. 

Now, we prove the following lemma.

\begin{lemma}
\label{lem:uniqueuni}
Suppose Branching Rule 1 is not applicable for an instance $(G,s,e=(D,T^{\star}, {\chi}^{\star},{L}^{\star}),t^{\star})$. Let $Y=\bigcup_{t'\in R}{\chi}^{\star}(t')$ where $R$ is the set of leaf nodes of $T^{\star}$. 
Then there is a unique connected component $J^*$ in $G-(\used{\chi^{\star}}{G} \setminus Y)$ that has at least $q$ vertices, and $|V(G) \setminus V(J^*)| \leq q+k$.
\end{lemma}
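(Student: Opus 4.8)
The plan is to mimic the argument already used for Observation~\ref{obs:boundedsep} and Lemma~\ref{lem:boundedsep}, exploiting that $G$ is $(q,k)$-unbreakable together with $|V(G)| > 3q+k$, but now relative to the vertex set $A := \used{\chi^{\star}}{G} \setminus Y$. First I would observe that $A \subseteq V(G)\setminus L^{\star}$ is a set of bounded size: the nodes of $T^{\star}$ that contribute to $\used{\chi^{\star}}{G}$ but not to $Y$ are precisely the \emph{internal} nodes of $T^{\star}$ at which $\chi^{\star}$ is defined, and since $T^{\star}$ has depth at most $k$ while each internal node $t$ satisfies $|\chi^{\star}(t)|\le 1$, we get $|A| \le k$. (More carefully: the vertices of $A$ lying on any single root-to-leaf branch of $T^{\star}$ number at most $k$, and they form a "downward-closed" structure; but the cleanest bound is simply that the internal nodes carrying a vertex are at most the number of edges on branches, hence $|A|\le k$ after passing to the canonical form of $T^{\star}$ in which every node is nonempty. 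If $T^{\star}$ is not in that form we can first restrict to $\imp{T^{\star}}{G}{\chi^{\star}}$ together with the branch of $t^{\star}$, which does not change $G - A$.) Having $|A|\le k$, I invoke Observation~\ref{obs:boundedsep} with $S = A$: since $G$ is $(q,k)$-unbreakable (here $q$ plays the role of $\alpha(k)$) and $|V(G)| > 2q+k$, there is exactly one connected component $J^{*}$ of $G - A = G - (\used{\chi^{\star}}{G}\setminus Y)$ with at least $q$ vertices, and $|V(G)\setminus V(J^{*})| < q+k \le q+k$.

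The only remaining point is to check that the hypothesis "Branching Rule 1 is not applicable" is not actually needed for the \emph{existence and uniqueness} of $J^{*}$ — that part follows purely from unbreakability — but it \emph{is} the reason the statement is phrased this way, because when Branching Rule~1 is inapplicable the component $J^{*}$ has the additional "clean boundary" property described right after Branching Rule~1 (for every component $C$ of $G - \used{\chi^{\star}}{G}$ the neighbourhood $N_G(V(C))$ lies along a single root-to-leaf branch), and this is what makes $J^{*}$ a candidate for the large base component of a solution obeying $e$. So in the write-up I would state the size bounds as the content of the lemma (these come from unbreakability alone), and remark that inapplicability of Branching Rule~1 is what we will use subsequently; if the intended reading is that we also want $N_G(V(J^{*})) \subseteq \used{\chi^{\star}}{G}\setminus Y$, that is immediate from $J^{*}$ being a connected component of $G-A$.

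The main (and essentially only) obstacle is the bound $|A|\le k$: one must be careful that $\used{\chi^{\star}}{G}$ can a priori include vertices placed at leaf nodes (the set $Y$) as well as vertices whose node is not on the branch of $t^{\star}$, and the subtraction $\setminus Y$ removes exactly the leaf contributions, so what survives is supported on internal nodes. Combined with the depth-$\le k$ constraint on $T^{\star}$ and $|\chi^{\star}(t)|\le 1$ for internal $t$, and the fact that the internal nodes with $\chi^{\star}(t)\ne\emptyset$ lie on root-to-$t^{\star}$ (and other active) paths of length $\le k$, one gets $|A| \le k$. Once that is in hand the rest is a direct citation of Observation~\ref{obs:boundedsep}. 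I would therefore structure the proof as: (1) identify $A = \used{\chi^{\star}}{G}\setminus Y$ and show $|A|\le k$; (2) apply Observation~\ref{obs:boundedsep} to $G$ and $S=A$ to extract the unique large component $J^{*}$ with $|V(G)\setminus V(J^{*})| \le q+k$; (3) note, using inapplicability of Branching Rule~1, that $N_G(V(J^{*}))$ lies along a single root-to-leaf branch of $T^{\star}$, which is the form in which the lemma will be used downstream.
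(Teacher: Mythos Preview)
Your approach hinges on the bound $|A| \le k$ for $A = \used{\chi^{\star}}{G}\setminus Y$, but this bound is not correct, and the gap is fatal for applying Observation~\ref{obs:boundedsep}. The set $A$ consists of all vertices that $\chi^{\star}$ places at \emph{internal} nodes of $T^{\star}$. These internal nodes are not confined to a single root-to-leaf path: after the initial guess, boundary vertices in $D\setminus L$ (up to $2k$ of them) may sit at internal nodes scattered across different branches of $T$, and each application of Branching Rule~1 or~2 fills a further internal node $t\in V(T)$ with $\chi(t)=\present$. Since $|V(T)|$ is only bounded by $(2k)^2$ (see the definition of a state-tuple), $|A|$ can easily exceed $k$; your parenthetical attempts to get down to $k$ by ``passing to canonical form'' or by counting edges on a single branch do not survive the presence of multiple boundary-carrying leaves. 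Once $|A|>k$, the $(q,k)$-unbreakability of $G$ no longer rules out two large components of $G-A$: a pair of such components would only yield a separation of order $|A|$, not $k$.

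The paper avoids this obstacle by a different route that makes the hypothesis on Branching Rule~1 do the work you tried to dismiss. It does not bound $|A|$ at all. Instead, because Branching Rule~1 is inapplicable, every connected component $C$ of $G-(\used{\chi^{\star}}{G}\setminus Y)$ has all its $G$-neighbours lying along a single root-to-leaf path of $T^{\star}$; hence $C$ can be assigned in its entirety to a single leaf $t_C$. Doing this for every component extends $\chi^{\star}$ to a full $\mathcal{G}$-elimination decomposition $(T^{\star},\chi_1,L_1)$ of $G$ of depth $\le k$ (with $\mathcal{G}$ the class of all graphs). Then Lemma~\ref{lem:boundedsep}, applied to this decomposition, yields the unique large base component $J^{*}$ and the bound $|V(G)\setminus V(J^{*})|\le q+k$ without ever needing a cardinality bound on the internal part. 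So the inapplicability of Branching Rule~1 is not merely used ``downstream'' --- it is precisely what makes the existence-and-uniqueness statement go through.
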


\begin{proof}
Since Branching Rule 1 is not applicable for any connected component $C$ in $G-(\used{\chi^{\star}}{G} \setminus Y)$, there is a leaf node $t_C$ in $T^{\star}$ such that for any vertex $w\in N_G(C)$, there is a node $t'$ in the unique path in $T^{\star}$ from $t_C$ to the root such that $w\in \chi^{\star}(t_C)$. 

 Let ${\cal G}$ be the set of all graphs.  Now we construct a ${\cal G}$-elimination decomposition $(T^{\star},\chi_1,L_1)$ from $e$ as follows. 
Initially set $\chi_1(t)=\chi^{\star}(t)$ for all $t\in V(T^{\star})$. For each connected component $C$ in $G-(\used{\chi^{\star}}{G} \setminus Y)$, add $V(C)$ to $\chi_1(t_C)$. Then $(T^{\star},\chi_1,L_1)$ is a ${\cal G}$-elimination decomposition. Now the lemma follows by applying Lemma~\ref{lem:boundedsep} on  $(T^{\star},\chi_1,L_1)$. 
\end{proof}

Now let $(G,s,e=(D,T^{\star}, {\chi}^{\star},{L}^{\star}),t^{\star})$ be an instance such that Branching Rules 1 and $2$ are not applicable. Let $(\wT,\widehat{\chi},\widehat{L})$ be a hypothetical solution obeying the partial solution $e$. Let $Y=\bigcup_{t'\in R}{\chi}^{\star}(t')$ where $R$ is the set of leaf nodes of $T^{\star}$. 
Then by Lemma~\ref{lem:uniqueuni}, there is a unique connected component $J^*$ in $G-(\used{\chi^{\star}}{G} \setminus Y)$ that has at least $q$ vertices, and $|V(G) \setminus V(J^*)| \leq q+k$. Now since $|V(G) \setminus V(J^*)|\leq q+k$, for each vertex $v\in V(G) \setminus V(J^*)$, we can guess the node $t_v\in V(T^{\star})=V(\wT)$ such that $v\in \widehat{\chi}(t_v)$. Thus, according to our guess we update the function $\chi^{\star}$. That is, we update $\chi^{\star}(t_v):=\chi^{\star}(t_v)\cup \{v\}$. 
Also notice that if there is a vertex $u\in V(J^{\star})$, an edge $uw\in E(G)$ and a node $t\in V(T^{\star})$ 
such that $w\in \chi^{\star}(t)$, and $t^{\star},t$ are not in an ancestor-descendant relation, then clearly 
there is an internal node $t_u$ in $T^{\star}$ such that $\widehat{\chi}(t_u)=\{u\}$. Thus, we also guess the node $t_u$ for such vertices.  This implies that after these steps for any vertex $x\in V(G)\setminus \used{\chi^{\star}}{G}$, and any edge $xy\in E(G)$ either $y\in V(G)\setminus \used{\chi^{\star}}{G}$ or there is a node $t$ in the unique path $P^{\star}$ in $T^{\star}$ from $t^{\star}$ to the root such that $y\in \chi^{\star}(t)$.

Next, we prove that at this stage by running one execution of the user defined function ${\cal A}_u$ we get the output $(i)$ $(s,1)$ or $(i)$ $(s,0)$ and $(\tilde{s},1)$. 
  
 \begin{lemma}
 \label{lem:finaldeletion}
 Let $I=(G,s,e,t^{\star})$ be an instance such that Branching Rules $1$ and $2$ are not applicable, where $s=(D,T,\chi,L,\PP,\eqf)$ and $e=(D,T^{\star}, {\chi}^{\star},{L}^{\star})$.  Let $(\wT,\widehat{\chi},\widehat{L})$ be a hypothetical solution obeying the partial solution $e$.  Let $P^{\star}$ be the unique path in $T^{\star}$ from  $t^{\star}$ to the root. Let $k'$ be the number of nodes $t$ in  $V(P^{\star})$  such that $\chi(t)=\present$ and $\chi^{\star}(t)=\emptyset$. Notice that for any leaf node $t$ in $T^{\star}$ such that $\chi^{\star}(t)\cap \delta(G)\neq \emptyset$, we have $\lambda_G(\chi^{\star}(t)\cap \delta(G))=\lambda_G(\widehat{\chi}(t)\cap \delta(G))=\chi(t)\in \PP$.  Let $EQ^{\star}$ be the equivalence class in $\equ$ such that $\eqf(\chi(t^{\star}))=EQ^{\star}$.   Let $U=V(G)\setminus \used{\chi^{\star}}{G}$ and $G^{\star}=G[U\cup \chi^{\star}(t^{\star})]$. Suppose we have the following two conditions.

\begin{itemize}
\item[(a)] $\delta(G)\subseteq \bigcup_{t\in V(T^{\star})} \chi^{\star}(t)$.  
\item[(b)] For any vertex $x\in U$ and any edge $xy\in E(G)$, either $y\in U$ or there is a node $t \in V(P^{\star})$ such that $y\in \chi^{\star}(t)$.
\end{itemize}
  
Then, the below conditions are true. 
\begin{itemize}
\item[$(i)$]  There is vertex subset $S$ of size at most $k'$ such that  $G^{\star}-S$ belongs to an equivalence class which is at least as good as $EQ^{\star}$.
\item[$(ii)$] Given  a vertex subset $S$ of size at most $k'$ such that $G^{\star}-S$ belongs to an equivalence class which is at least as good as $EQ^{\star}$, then we can construct  an $\hh$-elimination decomposition $(T^{\star},\psi,Z^{\star})$  of $G$ of depth at most $k$, in polynomial time such that  the following holds. 
\begin{itemize}
\item If $S\cap \delta(G)=\emptyset$, then $G$ satisfies $s$ through $(T^{\star},\psi,Z^{\star})$.  
\item If $S\cap \delta(G)\neq \emptyset$, then we can construct a state-tuple $\tilde{s}
=(D,T,\chi',L',\PP',\eqf')$  in polynomial time such that $G$ satisfies $\tilde{s}$ through $(T^{\star},\psi,Z^{\star})$, and $\tilde{s}$ is strictly better than $s$. 
 \end{itemize}
 \end{itemize}
 \end{lemma}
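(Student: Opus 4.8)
I would prove the two itemized claims in order, splitting the second into the two cases $S\cap\delta(G)=\emptyset$ and $S\cap\delta(G)\neq\emptyset$. Part~$(i)$ extracts the set $S$ from the hypothetical solution $(\wT,\widehat{\chi},\widehat{L})$ obeying $e$; Part~$(ii)$ rebuilds a depth-$\leq k$ decomposition of $G$ out of $e$ and any such $S$, and then reads off either $s$ or a strictly better state-tuple $\tilde{s}$ from it.

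\textbf{Part (i): producing $S$.} I would set $S:=(U\cup\chi^{\star}(t^{\star}))\cap(V(G)\setminus\widehat{L})$, the vertices of $G^{\star}=G[U\cup\chi^{\star}(t^{\star})]$ that the hypothetical solution places on internal nodes of $\wT$. First one checks that $\widehat{\chi}(t^{\star})\subseteq U\cup\chi^{\star}(t^{\star})$ (a vertex of $\widehat{\chi}(t^{\star})$ not already in $\chi^{\star}(t^{\star})$ lies in no $\chi^{\star}(t)$, hence in $U$ by $\chi^{\star}\subseteq\widehat{\chi}$), so $G[\widehat{\chi}(t^{\star})]=G^{\star}[\widehat{\chi}(t^{\star})]$ and its whole boundary is $\chi^{\star}(t^{\star})$. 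The heart of this part is the bound $|S|\leq k'$: I would argue that every vertex of $G^{\star}$ that sits on an internal node of $\wT$ actually sits on one of the $k'$ nodes of $P^{\star}$ with $\chi(\cdot)=\present$ and $\chi^{\star}(\cdot)=\emptyset$, each of which, being internal, holds at most one vertex. This combines (a) hypothesis $(b)$, so a vertex of $U$ has its only edges to $\chi^{\star}$-placed vertices landing on $P^{\star}$; (b) the non-applicability of Branching Rule~1, which forces each connected component of $G-\used{\chi^\star}{G}$ to attach cleanly below a single leaf; and (c) Lemma~\ref{lem:uniqueuni} together with the guesses already performed, which pin the unique large component at $t^{\star}$ and have already placed everything outside it --- so an internal slot off $P^{\star}$ cannot receive a vertex of $U$ without violating either that $\widehat{\chi}$ is an $\hh$-elimination decomposition or $\chi^{\star}\subseteq\widehat{\chi}$ on the guessed nodes. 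Given $|S|\leq k'$, since distinct leaves of $\wT$ are incomparable, $G^{\star}-S=\biguplus_{t\text{ leaf}}G^{\star}[\widehat{\chi}(t)\cap V(G^{\star})]$; the summand at $t^{\star}$ is $G[\widehat{\chi}(t^{\star})]$, which lies in a class at least as good as $EQ^{\star}=\eqf(\chi(t^{\star}))$ by the satisfying property (Definition~\ref{def:satisf}$(e)$); every other summand is a boundaryless induced subgraph of a base component, hence in $\hh$ by hereditariness. Absorbing these boundaryless summands into the $t^{\star}$-summand via the component deletion property (Definition~\ref{def:cdpeq}) shows $G^{\star}-S$ lies in a class at least as good as $EQ^{\star}$.

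\textbf{Part (ii): rebuilding the decomposition.} Given $S$ with $|S|\leq k'$ and $G^{\star}-S$ at least as good as $EQ^{\star}$, I would define $(T^{\star},\psi,Z^{\star})$ explicitly: put $\psi(t):=\chi^{\star}(t)$ for all nodes except $t^{\star}$ and the $k'$ internal $\present$-slots on $P^{\star}$; put $\psi(t^{\star}):=V(G^{\star})\setminus S$; and distribute the $\leq k'$ vertices of $S$ one per node over those $\present$-slots (leaving surplus slots empty); let $Z^{\star}$ be the vertices not on internal nodes. Then I would verify the axioms of Definition~\ref{def:eldeco}: depth is inherited from $T^{\star}$; internal bags have size $\leq 1$; the bags partition $V(G)=\used{\chi^\star}{G}\uplus U$ by a short count; edges respect ancestor--descendant order, using partial-solution property $(h)$ for edges between two $\chi^{\star}$-placed vertices and hypothesis $(b)$ for every edge with an endpoint in $U$ (whose other endpoint is then forced onto the root-to-leaf chain $P^{\star}\cup\{t^{\star}\}$); and the base components lie in $\hh$ --- for leaves $t\neq t^{\star}$ because $G[\chi^{\star}(t)]$ is the corresponding guessed base component (an invariant of the partial solutions the algorithm produces), and for $t^{\star}$ because $G[\psi(t^{\star})]=G^{\star}-S$ is at least as good as $EQ^{\star}$ and $EQ^{\star}$ is realised by a graph in $\hh$, whence $G^{\star}-S\in\hh$ too (Definition~\ref{def:eq-compare}). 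For the case split: if $S\cap\delta(G)=\emptyset$, only non-boundary vertices moved, so $\imp{T^{\star}}{G}{\psi}=\imp{\wT}{G}{\widehat{\chi}}$, all the $\chi$-values read off $(T^{\star},\psi,Z^{\star})$ coincide with those of $s$, and the $t^{\star}$-leaf is at least as good as $\eqf(\chi(t^{\star}))$, so $G$ satisfies $s$ through $(T^{\star},\psi,Z^{\star})$. If $S\cap\delta(G)\neq\emptyset$, write $S_{\delta}=S\cap\delta(G)\subseteq\chi^{\star}(t^{\star})$ and define $\tilde{s}=(D,T,\chi',L',\PP',\eqf')$ by promoting the labels $\lambda_{G}(S_{\delta})$ out of $L$: $L':=L\setminus\lambda_{G}(S_{\delta})$, $\chi'$ equal to $\lambda_{G}$ of the promoted vertex on the relevant $\present$-slots and $\present$ elsewhere, $\chi'(t^{\star}):=\chi(t^{\star})\setminus\lambda_{G}(S_{\delta})$, $\PP'$ the partition of $L'$ shrinking the $t^{\star}$-part, and $\eqf'$ equal to $\eqf$ except $\eqf'(\chi'(t^{\star}))$ is the class of $G^{\star}-S$. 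All promotions happen on $P^{\star}\subseteq Z$, so $\imp{\wT}{G}{\psi}$ still restricts to $T$ and $\tilde{s}$ is a legal state-tuple; $G$ satisfies $\tilde{s}$ through $(T^{\star},\psi,Z^{\star})$ by construction (using $\ugf(Q,Q)=1$ at $t^{\star}$); and $\tilde{s}$ is strictly better than $s$ because $D\setminus L'\supsetneq D\setminus L$ as $S_{\delta}\neq\emptyset$, and because whenever $(G',H')$ exactly realises $s$ the base of $G'$ at the $t^{\star}$-leaf lies \emph{exactly} in $EQ^{\star}$, so lifting its boundary vertices with labels $\lambda_{G}(S_{\delta})$ onto the designated $\present$-slots (keeping the $H'$-only nodes marked $\future$, cf.\ Definition~\ref{def:realizes}) produces a decomposition realising $\tilde{s}$ --- the new $t^{\star}$-base being an $EQ^{\star}$-representative with those boundary vertices deleted, a graph at least as good as $\eqf'(\chi'(t^{\star}))$ by the choice of $\eqf'$.

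\textbf{Main obstacle.} The two delicate points are: $(1)$ the bound $|S|\leq k'$ in Part~$(i)$ --- precisely, ruling out that a vertex of $U$ is placed on an internal node off $P^{\star}$, which is exactly where the uniqueness of the large component (Lemma~\ref{lem:uniqueuni}) and the clean-attachment consequence of Branching Rule~1 being inapplicable must be argued carefully; and $(2)$ the ``strictly better'' verification when $S\cap\delta(G)\neq\emptyset$, where one must make the lifting of labeled boundary vertices work uniformly over \emph{all} pairs $(G',H')$ exactly realising $s$, and in particular pick $\eqf'(\chi'(t^{\star}))$ so that it is simultaneously dominated by the class of our $G^{\star}-S$ and by the class of every $EQ^{\star}$-representative with the $\lambda_{G}(S_{\delta})$-labeled boundary vertices removed.
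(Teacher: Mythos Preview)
Your Part~$(i)$ argument has a real gap. You define $S=V(G^{\star})\cap(V(G)\setminus\widehat{L})$ and claim $|S|\leq k'$ by asserting that every vertex of $G^{\star}$ landing on an internal node of $\wT$ must land on one of the $k'$ free $\present$-slots on $P^{\star}$. This is not justified. A vertex $v\in U$ may well be placed by $\widehat{\chi}$ at an internal node $t_2\notin V(P^{\star})$: take $t_2$ to be a child of some $t_1\in V(P^{\star})$ with $t_2\neq t^{\star}$. Condition~$(b)$ only says $v$'s neighbours lie in $U$ or in $\chi^{\star}$-bags along $P^{\star}$; all such neighbours on $P^{\star}$ are then at ancestors of $t_2$, and neighbours in $U$ may be at descendants of $t_2$, so the decomposition axioms are respected. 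The guessing performed before the lemma places only $V(G)\setminus V(J^*)$ and the vertices with ``crossing'' edges; it does not place every internal-node vertex of $J^*$, so such a $v$ can remain in $U$. Hence $|S|$ may exceed $k'$.

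The paper does \emph{not} try to bound $|S^{\star}|$ (your $S$). Instead it takes the smaller set $X=S^{\star}\cap\bigcup_{t\in V(P^{\star})\setminus\{t^{\star}\}}\widehat{\chi}(t)$, which does have size at most $k'$, and shows that already $G^{\star}-X$ lies in a class at least as good as $EQ^{\star}$. The leftover $F=S^{\star}\setminus X$ (vertices at internal nodes off $P^{\star}$) satisfies $G[F]\in\hh$, and this is exactly where the non-applicability of Branching Rule~2 is used---a hypothesis you never invoke in Part~$(i)$. Once $G[F]\in\hh$ and $F\cap\delta(G)=\emptyset$, the component deletion property lets you absorb $G[F]$ into $G^{\star}-S^{\star}$ to conclude.

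For Part~$(ii)$, your construction of $(T^{\star},\psi,Z^{\star})$ matches the paper's, but your choice $\eqf'(\chi'(t^{\star}))=$ ``the class of $G^{\star}-S$'' creates precisely the obstacle you flag as difficulty~(2): for an arbitrary pair $(G',H')$ exactly realising $s$, the $t^{\star}$-base of $G'$ sits in $EQ^{\star}$, and after deleting the $\lambda_G(S_\delta)$-labelled vertices you have no control relating it to the possibly stronger class of your specific $G^{\star}-S$. The paper sidesteps this entirely by setting $\eqf'(\chi'(t))=\eqf(\chi(t))$ for every $t$, including $t^{\star}$; the ``strictly better'' verification then only needs that deleting boundary vertices from a base in $EQ^{\star}$ keeps it at least as good as $EQ^{\star}$ (via hereditariness), which is much easier.
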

 
 \begin{proof}
 
First we prove property $(i)$ of the lemma. For the sake of contradiction, suppose there is no vertex subset $S$ of size at most $k'$ in $G^{\star}$ such that $G^{\star}-S$ belongs to a class which is at least as good as $EQ^{\star}$. Recall that $(\wT,\widehat{\chi},\widehat{L})$ is a solution obeying the partial solution $e$. 
Let $S^{\star}=V(G^{\star})\setminus \widehat{L}$. Then $G^{\star}-S^{\star}$ belongs to a class which is at least as good as $EQ^{\star}$. This implies that $\vert S^{\star}\vert >k'$.  Let $X=\bigcup_{t\in V(P^{\star})\setminus \{t^{\star}\}} \widehat{\chi}(t)\cap S^{\star}$. Notice that $|X| \leq k'$. We prove that $G^{\star}-X$ belongs to a class which is at least as good as $EQ^{\star}$ and that will be a contradiction to our assumption. 
Let $F=S^{\star}\setminus X$. Since $\bigcup_{t\in V(P^{\star})} \widehat{\chi}(t)\cap F =\emptyset$, 
$F$ has neighbors only in   $\bigcup_{t\in V(P^{\star})\setminus \{t^{\star}\}} \widehat{\chi}(t)$ (because $F\subseteq V(G^{\star})$ and $(\wT,\widehat{\chi},\widehat{L})$ is a solution), and Branching Rule $2$ is not applicable, we have that $G[F]\in \hh$. Moreover, $F\cap \delta(G)=\emptyset$. Thus, by the component deletion property (see Definition~\ref{def:cdpeq}), we have that 
$G^{\star}-X=(G^{\star}-S^{\star})\uplus G[F]$ belongs to an equivalence which is at least as good as $EQ^{\star}$. This is a contradiction to our assumption that there is no vertex subset $S$ of size at most $k'$ in $G^{\star}$ such that $G^{\star}-S$ belongs to a class which is at least as good as $EQ^{\star}$.
This completes the proof of property~$(i)$.

 Now we prove property $(ii)$. Suppose there is a vertex subset $S=\{v_1,\ldots,v_{\ell}\}$ such that $\ell\leq k'$ and $G^{\star}-S$ belongs to an equivalence class which is at least as good as $EQ^{\star}$. Now we construct an $\hh$-elimination decomposition $(T^{\star},\psi,Z^{\star})$ as follows.   Let $W=\{t_{1},\ldots,t_{\ell}\}$ be a subset of nodes in $P^{\star}$ such that for all $i\in [\ell]$ $\chi(t_i)=\present$ and $\chi^{\star}(t_i)=\emptyset$. Now for each $i\in [\ell]$, we set $\psi(t_i)=\{v_i\}$. For each $t\in V(T^{\star})\setminus (W\cup \{t^{\star}\})$, we set $\psi(t)=\chi^{\star}(t)$. Finally, we set $\psi(t^{\star})=V(G^{\star})\setminus S$. Let $R$ be the set of leaf nodes in $T^{\star}$. We define $Z^{\star}=\bigcup_{t\in R}\psi(t)$.
 
 \begin{claim}
 \label{clim:eqclassesaregood}
 The graph $G[\psi(t^{\star})]$ is in an equivalence class which is at least as good as $EQ^{\star}$. For each $t\in R\setminus \{t^{\star}\}$, $\psi(t)\subseteq \widehat{\chi}(t)$ and  if $\psi(t)\cap \delta(G)\neq \emptyset$, then $G[\psi(t)]$ is in an equivalence class which is at least as good as $\eqf(\chi(t))$. 
 \end{claim}

 \begin{proof}
 Since $\psi(t^{\star})=V(G^{\star})\setminus S$, by our assumption (i.e., premise of the condition $(ii)$), we have that $G[\psi(t^{\star})]$ is in an equivalence class which is at least as good as $EQ^{\star}$. 
 
 From the construction of $\psi$ we have that  for each $t\in R\setminus \{t^{\star}\}$, $\psi(t)=\chi^{\star}(t)\subseteq \widehat{\chi}(t)$. Now fix a node $t\in R\setminus \{t^{\star}\}$ such that $\psi(t)\cap \delta(G)\neq \emptyset$. Then, we know that $G[\psi(t)]$ is an induced subgraph of $G[\widehat{\chi}(t)]$. In fact we prove that $G[\psi(t)]$ is a union of some connected components of $G[\widehat{\chi}(t)]$. Suppose not. 
 Then there is an edge $xy$ in the graph $G[\widehat{\chi}(t)]$ such that $x\in \psi(t)$ and $y\notin \psi(t)$. 
 Since $y\in \widehat{\chi}(t)$, we have that $y\notin \chi^{\star}(t')$ for any $t'\neq t$ because $(\wT,\widehat{\chi},\widehat{L})$ obeys the partial solution $e$. This implies that $y\in U$. This contradicts 
 condition $(b)$ in the lemma. So $G[\psi(t)]$ is obtained by deleting some connected components from 
 $G[\widehat{\chi}(t)]$. Moreover, we know that $\psi(t)\cap \delta(G)=\widehat{\chi}(t)\cap \delta(G)$.  
 Thus, by the component deletion property (see Definition~\ref{def:cdpeq}), we have that $G[\psi(t)]$ 
 and $G[\widehat{\chi}(t)]$ are in same equivalence class which is at least as good as $\eqf(\chi(t))$. This completes the proof of the  claim. 
 \end{proof}

 \begin{claim}
 $(T^{\star},\psi,Z^{\star})$ is an $\hh$-elimination decomposition of $G$ of depth at most $k$.  
 \end{claim}
 \begin{proof}
 Since $e$ is a partial solution, we have that the depth of $T^{\star}$ is at most $k$. Now we prove that 
 $(T^{\star},\psi,Z^{\star})$  satisfies the conditions of Definition~\ref{def:eldeco}. From the construction of $\psi$, we have that $\vert \psi(t)\vert \leq 1$ and $\psi(t)\subseteq V(G)\setminus Z^{\star}$ for any $t\in W$. For any internal node $t$ of $T^{\star}$ such that $t\notin W$,  $\psi(t)=\widehat{\chi}(t)\subseteq V(G)\setminus Z^{\star}$ and $\vert \psi(t)\vert=\vert\widehat{\chi}(t)\vert \leq 1$ because $(\widehat{T},\widehat{\chi},\widehat{L})$ is 
 $\hh$-elimination decomposition. This implies that condition $(1)$ of Definition~\ref{def:eldeco} holds. 
 From the construction of $\psi$, condition $(2)$  of Definition~\ref{def:eldeco} holds. 
 
 Now we prove condition $(3)$ of Definition~\ref{def:eldeco}.  Let $uv \in E(G)$  and $u \in \psi(t)$ and~$v \in \psi(t')$ for some $t,t'\in V(T^{\star})$.  Suppose $V(P^{\star})\cap \{t,t'\}=\emptyset$. Then, $\psi(t)=\chi^{\star}(t)\subseteq \widehat{\chi}(t)$ and $\psi(t')=\chi^{\star}(t')\subseteq \widehat{\chi}(t')$. Then, since $(\widehat{T},\widehat{\chi},\widehat{L})$ is an 
 $\hh$-elimination decomposition, we have that $t$ and $t'$ are in ancestor-descendant relationship in~$T^{\star}$. Suppose $t,t'\in V(P^{\star})$. Then $t$ and $t'$ are in ancestor-descendant relationship in~$T^{\star}$. Now consider the case when $|V(P^{\star})\cap \{t,t'\}|=1$.  Without loss of generality let $t\in V(P^{\star})$ and $t'\notin V(P^{\star})$. Thus, from the construction of $\psi$, we have that $v\in \psi(t')=\chi^{\star}(t')\subseteq \widehat{\chi}(t')$. If $u\in \chi^{\star}(t)$, then $t$ and $t'$ are in ancestor-descendant relation in~$T^{\star}$. Otherwise we contradict the fact that $e$ is a partial solution. If  $u\notin \chi^{\star}(t)$, then $u\in U=V(G^{\star}) \setminus \chi^{\star}(t^{\star})$.  Then, it will contradict condition $(b)$ of the lemma. Thus, we have proved condition $(3)$ of Definition~\ref{def:eldeco}.

 Next we prove condition $(4)$ of Definition~\ref{def:eldeco}. Recall that for any leaf node $t\in R\setminus \{t^{\star}\}$, $\psi(t)=\chi^{\star}(t)\subseteq \widehat{\chi}(t)$. Also, since $(\widehat{T},\widehat{\chi},\widehat{L})$ is an  $\hh$-elimination decomposition, $G[\widehat{\chi}(t)]\in \hh$. Thus, since $G[\psi(t)]$ is an induced subgraph of $G[\widehat{\chi}(t)]$, we get that $G[\psi(t)]\in \hh$ because $\hh$ is an hereditary family. Recall that $\psi(t^{\star}) =V(G^{\star})\setminus S$. Since $G^{\star}-S$ is in an equivalence class which is at least as good as $EQ^{\star}$ (see Claim~\ref{clim:eqclassesaregood}), we have that $G[\psi(t^{\star})]\in \hh$. This completes the proof of the claim. 
 \end{proof}

  \begin{claim}
  \label{clm:satisfiessolution}
   If $S\cap \delta(G)=\emptyset$, then $G$ satisfies $s$ through $(T^{\star},\psi,Z^{\star})$.  
  \end{claim}
\begin{proof}
Since the hypothetical solution $(\widehat{T},\widehat{\chi},\widehat{L})$ obeys the partial solution $e=(D,T^{\star}, {\chi}^{\star},{L}^{\star})$, $G$ satisfies $s$ through $(\widehat{T},\widehat{\chi},\widehat{L})$, and for each $t\in V(\wT)=V(T^{\star})$, $\chi^{\star}(t)\subseteq \widehat{\chi}(t)$. 
We need to prove that $G$ satisfies $s$ through $(T^{\star},\psi,Z^{\star})$.  Since $T^{\star}$ is isomorphic to $\widehat{T}$, condition $(a)$ of Definition~\ref{def:satisf} is true. 
Since for any node $t\in V(T^{\star})$, $\psi(t)\cap \delta(G)=\widehat{\chi}(t)\cap \delta(G)$, condition $(b)$ of Definition~\ref{def:satisf} is true. Notice that for any $v_i\in S$, we set $\psi(t_i)=\{v_i\}$, where $\chi(t_i)=\present$. This implies that condition $(c)$ of Definition~\ref{def:satisf} is true. Notice that for any vertex $v\in \delta(G)$, there is a node $t$ such that $v\in \psi(t)$ and $v\in \widehat{\chi}(t)$. This implies that condition $(d)$ of Definition~\ref{def:satisf} is true. Condition $(e)$ of Definition~\ref{def:satisf}  follows from Claim~\ref{clim:eqclassesaregood}. 
\end{proof}

\begin{claim}
If $S\cap \delta(G)\neq \emptyset$, then we can construct a state-tuple $\tilde{s}
=(D,T,\chi',L',\PP',\eqf')$  in polynomial time such that $G$ satisfies $\tilde{s}$ through $(T^{\star},\psi,Z^{\star})$, $\tilde{s}$ is strictly better than $s$. 
  \end{claim}
  
  \begin{proof}
  First we define the function $\chi'$. Recall that $S=\{v_1,\ldots,v_{\ell}\}$, and $W=\{t_{1},\ldots,t_{\ell}\}$ be a subset of nodes in $P^{\star}$ such that for all $i\in [\ell]$ $\chi(t_i)=\present$ and $\chi^{\star}(t_i)=\emptyset$.
  For any $t\in V(T^{\star})\setminus \{t_1,\ldots,t_{\ell},t^{\star}\}$, we set $\chi'(t)=\chi(t)$. 
  For any $i\in [\ell]$, if $v_i\in \delta(G)$, then $\chi'(t_i)=\{\lambda_G(v_i)\}$. Otherwise  $\chi'(t_i)=\present$. 
  Finally, $\chi'(t^{\star})=\lambda_G(\psi(t^{\star})\cap \delta(G))$. 
  The construction of $\chi'$ and the fact that $S\cap \delta(G)\neq \emptyset$ implies that $D\setminus L\subset D\setminus L'$. Let $R'\subseteq R$ be the set of leaf nodes in $T^{\star}$ such that for any $t\in R'$, $\psi(t)\cap \delta(G)\neq \emptyset$. Then $\PP'=(\chi'(t))_{t\in R'}$. Notice that for each $t\in R'\setminus \{t^{\star}\}$, $\chi(t)=\chi'(t)$ and $\chi'(t^{\star})\subset \chi(t)$.  Now we define $\eqf'$. For any $t\in R'$, we define $\eqf'(\chi'(t))=\eqf(\chi(t))$. Finally $L'=L\setminus \lambda_G(S\cap \delta(G))$. This completes the construction of $\tilde{s}=(D,T,\chi',L',\PP',\eqf')$. Using arguments similar to the one in Claim~\ref{clm:satisfiessolution}, one can prove that $G$ satisfies $\tilde{s}$ through $(T^{\star},\psi,Z^{\star})$, and hence it is omitted here.  
  

  Next we prove that $\tilde{s}$ is strictly better than  $s$. We have already proved that $D\setminus L\subset D\setminus L'$. 
   Let $H$ be a graph such that $G\oplus H$ exactly realizes $s$ through $(\widehat{T},\eta,L_1)$ 
   and $(\wT,\widehat{\chi},\widehat{L})$ is equal to $\restricted{G}{(\widehat{T},\eta,L_1)}$. 
   Thus we have the following.
   \begin{itemize}
  \item[$(i)$] $G$ exactly satisfies $s$ through $(\wT,\widehat{\chi},\widehat{L})$. 
\item[$(ii)$] For each node $t$ of~$T$ with $\eta(t)\neq \emptyset$ and $\eta(t)\subseteq V(H)\setminus \delta(G)$, $\chi(t)=\future$.
   \end{itemize}

We will prove that $G\oplus H$ realizes $\tilde{s}$. Towards that we will construct an $\hh$-elimination decomposition $(\wT,\phi, L_2)$ of $G\oplus H$, of depth at most $k$, such that $G\oplus H$ realizes  $\tilde{s}$  through $(\wT,\phi, L_2)$.  Statement $(i)$ implies that for any $t\in R'$, $G[\widehat{\chi}(t)]$ belongs to $\eqf(\chi(t))$.
 Recall that $T^{\star}$ is isomorphic to $\wT$ and $V(T^{\star})=V(\wT)$. From the construction of $\tilde{s}$, we have that for any node $t\in V(T)$, $\chi(t)=\future$ if and only if  $\chi'(t)=\future$.  Now we construct the function $\phi$. Let $V=V(G)$ and $V'=V(G\oplus H) \setminus V(G)$.  For any node $t\in V(\wT)$, we define $\phi(t)=(\psi(t)\cap V)\cup (\eta(t)\cap V')$. Let $L_2=\bigcup_{t\in R} \phi(t)$ where $R$ is the set of leaf nodes in $\wT$.  
 
 Now we prove that $G\oplus H$ realizes $\tilde{s}$ through $(\wT,\phi, L_2)$. It is easy to verify that $(T^{\star},\psi,Z^{\star})$ is equal to   $\restricted{G}{(\wT,\phi, L_2)}$. We have already mentioned that   
 $G$ satisfies $\tilde{s}$ through $(T^{\star},\psi,Z^{\star})$.   Next we prove that indeed $(\wT,\phi, L_2)$ is an $\hh$-elimination decomposition of $G\oplus H$.  It is easy to verify that conditions $(1)$ and $(2)$ of 
 Definition~\ref{def:eldeco} holds.  Now we will prove that condition $(3)$ of Definition~\ref{def:eldeco} holds. 
 Let $uv$ be an edge in $G\oplus H$ and $t,t'\in V(\wT)$ such that $u\in \phi(t)$ and $v\in \phi(t')$. Suppose $u,v\in V(G)$. Then, since $G$ satisfies $\tilde{s}$ through $(T^{\star},\psi,Z^{\star})$, which is equal to $\restricted{G}{(\wT,\phi, L_2)}$, we have that $t$ and $t'$ are in ancestor-descendant relation.  Suppose $u,v\in V(H)$. Then, since $(\widehat{T},\eta,L_1)$ is an $\hh$-elimination decomposition of $G\oplus H$, from the construction of $\phi$, we have that $t$ and $t'$ are in ancestor-descendant relation. Thus, we have proved that condition $(3)$ of Definition~\ref{def:eldeco} holds.

 Next we prove that condition $(4)$ of Definition~\ref{def:eldeco} holds. 
 Let $S_H=\{v\in \delta(H)\colon \lambda_G(v)\in \lambda_G(S\cap \delta(G))\}$.  That is $S_H$ is the set of boundary vertices in $H$ that has the same label as the vertices in $S\cap \delta(G)$. 
 Let $t$ be a leaf node. 
 Let $G_1=G[\eta(t)\cap V(G)]=G[\widehat{\chi}(t)]$ and $H_1=H[\eta(t) \cap V(H)]$. 
 Let $G_2=G[\phi(t)\cap V(G)]= G[\psi(t)]$ and $H_2=H[(\eta(t)\cap V(H)) \setminus S_H]$. 
 Notice that $G[\eta(t)]=G_1\oplus H_1$ and $G[\phi(t)]=G_2\oplus H_2$. 
 Since $G\oplus H$ exactly realizes $s$ through $(\widehat{T},\eta,L_1)$, we have that $G_1$ belongs to the equivalence class $\eqf(\chi(t))$ and $G_1\oplus H_1 \in \hh$. 
  Since $G$ satisfies $\tilde{s}$ through $(T^{\star},\psi,Z^{\star})$
 $G_2$ belongs to an equivalence class which is at least as good as $\eqf'(\chi'(t))=\eqf(\chi(t))$. 
 This implies that $G_2\oplus H_1\in \hh$. 
 Since $H_2$ is an induced subgraph of $H_1$ and $G_2\oplus H_1\in \hh$, by the hereditary property of $\hh$, we have that $G_2\oplus H_2\in \hh$. 
\end{proof}
This completes the proof of the lemma. 
 \end{proof}

Because of Lemma~\ref{lem:finaldeletion}, in the final step of our algorithm {\sigtest}, we run the algorithm ${\cal A}_u$ on the input $(G^{\star},EQ^{\star},k')$ as defined in Lemma~\ref{lem:finaldeletion}. 
If the algorithm ${\cal A}_u$ fails to output a solution, then we output {\sf Fail}. Otherwise, 
let $S$ be the output of the algorithm ${\cal A}_u$. If $S\cap \delta(G)=\emptyset$, then we output $(s,1)$. If $S\cap \delta(G)\neq \emptyset$, then we construct $\tilde{s}$ (as mentioned in Lemma~\ref{lem:finaldeletion}) and output $(s,0)$ and $(\tilde{s},1)$.

\paragraph*{Running time analysis} Each branching rules makes a function of $q+k$ many branches. 
Since the number of nodes in $T^{\star}$ is at most $q+k$ and in each branch of the branching algorithm the number of internal nodes $t$ with $\chi^{\star}(t)=\emptyset$ decreases by $1$, the depth of the branching algorithm is at most $q+k$. In the final step we run the algorithm ${\cal A}_u$. This implies that the running time of the algorithm is upper bounded $g_1(q,k)\cdot f_{u}(k,|\QQ(\equ)|,n)\cdot n^{\OO(1)}$ for some function $g_1$.

\section{Applications of the Uniform Algorithm}\label{sec:applicationsOfUniformAlgorithm}
The objective of this section is to prove the following theorem. 

\begin{theorem}\label{thm:apps-uniform}
\probEDH\ admits a uniform {\rm \FPT} algorithm, for the case when $\C{H}$ is any one of the following:
\begin{enumerate}\setlength{\itemsep}{-1pt}
\item the family of chordal graphs,
\item the family of interval graphs,
\item the family of bipartite graphs,
\item for a fixed finite family of graphs $\mathbb{O}$, a graph is in $\C{H}$, if and only if it does not contain any graph from $\mathbb{O}$ as an induced subgraph, 
\item for a fixed finite family of graphs $\mathbb{O}$, a graph is in $\C{H}$, if and only if it does not contain any graph from $\mathbb{O}$ as a minor,\footnote{As minor closed families are characterized by a finite family of forbidden minors by Robertson-Seymour Theorem, so the condition on finiteness for this case is not necessary.} or
\item for a fixed finite family of graphs $\mathbb{O}$, a graph is in $\C{H}$, if and only if it does not contain any graph from $\mathbb{O}$ as a topological minor. 
\end{enumerate}
\end{theorem}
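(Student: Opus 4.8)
The plan is to obtain Theorem~\ref{thm:apps-uniform} as a uniform application of Theorem~\ref{thm:uni:elim}, packaged together with the two simpler sufficient conditions stated informally in Theorem~\ref{thm:towardsUniform} (the \emph{finite boundaried partial-obstruction witness set} and the \emph{strong irrelevant vertex rule}). For each of the six choices of $\cH$ we must check three things: (i) $\cH$ is hereditary and closed under disjoint union; (ii) $\cH$ is \msotwo\ definable; and (iii) $\cH$ satisfies one of the two simpler conditions, which — combined with (i) and (ii) — yields a finite (per $t$) refinement $\mathsf{R}$ of the canonical equivalence relation $\eqh$ that is closed under disjoint union and for which \textsc{Vertex Deletion to $\mathsf{R}$} is \FPT, so that Theorem~\ref{thm:uni:elim} applies. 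Item (i) is immediate in every case: each listed class is closed under induced subgraphs by definition, and for the obstruction-characterized classes we use crucially that all forbidden patterns are \emph{connected}, so a disjoint union of members of $\cH$ cannot contain any of them as a (topological) minor or as an induced subgraph. Item (ii) is standard: chordality, being an interval graph, bipartiteness, and ``forbidding a fixed $H$ as induced subgraph / minor / topological minor'' are all well known to be \msotwo-expressible (for minors and topological minors one quantifies over the branch sets, respectively the subdivision paths, together with the required adjacencies). Thus the only real content is item (iii).

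For the three classes whose forbidden-pattern list is \emph{finite} there is almost nothing to do. If $\cH$ is characterized by a finite family $\mathbb{O}$ of forbidden induced subgraphs, then the set $\mathcal{O}_t$ of $t$-boundaried subgraphs of members of $\mathbb{O}$ is itself finite, so we may take $\mathcal{O}'_t=\mathcal{O}_t$ and the finite-witness condition holds trivially. If $\cH$ is characterized by a finite set of forbidden (topological) minors, we first translate this to an induced-subgraph characterization as in the footnote of the statement — which makes the obstruction set infinite — and then fall back on the strong-irrelevant-vertex route of the next paragraph. For $\cH$ the class of bipartite graphs, the obstructions are the odd cycles; a $t$-boundaried subgraph of an odd cycle that can be completed, inside some glued graph, to an odd cycle contributes from the boundary side only a parity constraint between a pair of its labels, and every such constraint is already witnessed by a path of length one or two between those two labels. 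Hence $\mathcal{O}'_t$ may be taken to be the finite set of all such short labelled paths, and the finite-witness condition holds.

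The remaining, and most delicate, cases are $\cH\in\{\text{chordal},\text{interval}\}$ together with the minor- and topological-minor-closed families after translation; for all of these the obstruction set is infinite and we rely on the \emph{strong irrelevant vertex rule}. Here the families $\cX$ and $\cY$ are instantiated by (subdivided) walls of large order: condition (ii) of the rule is the Grid/Wall theorem — a graph of large treewidth contains a large wall, a topologically subdivided large wall in the topological-minor case, and a flat ``clique-structured'' wall after the usual flat-wall preprocessing in the chordal and interval cases — and condition (iii) is exactly the irrelevant-vertex lemma underlying the known \FPT\ algorithms for the corresponding deletion problems cited in the introduction: for \probPVD\ and more generally for deleting vertices to exclude a fixed (topological) minor, a vertex deep in the interior of a large flat wall is $k$-irrelevant for \probVDH; for \probCVD\ and \probIVD\ the analogous statements are contained in the algorithms of Cao--Marx and Cao, where a vertex in the interior of a large clique-like region can be removed without changing the answer. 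Feeding these ingredients into the construction of Section~\ref{sec:applicationsOfUniformAlgorithm} — the machinery that turns a finite witness set or a strong irrelevant vertex rule into a user-defined refinement $\mathsf{R}$ equipped with an \FPT\ deletion algorithm — and then invoking Theorem~\ref{thm:uni:elim} produces a uniform \FPT\ algorithm for \probEDH\ in each of the six cases.

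I expect the main obstacle to be precisely this last step for the topological-minor and interval-graph cases: the irrelevant-vertex arguments in the literature are not phrased in the exact ``strong irrelevant vertex rule'' shape demanded here (two families $\cX,\cY$ with each large member of $\cX$ containing a large member of $\cY$, large treewidth forcing an induced $\cX$-member, and almost all $\cY$-vertices being $k$-irrelevant), so they must be re-packaged — in particular one has to handle the high-degree ``apex''-type vertices arising in the structure theorem for topological-minor-free graphs, and the asteroidal triples in the interval case, where irrelevance of a vertex is not a purely local property of a single wall. As the authors note, however, these ingredients are already implicit in the analyses of the known deletion algorithms, so the re-packaging, while technically the crux, should not require substantial new ideas.
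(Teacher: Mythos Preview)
Your overall strategy---instantiate Theorem~\ref{thm:uni:elim} for each of the six classes by verifying the hypotheses via the two ``simpler'' sufficient conditions---is the same as the paper's. However, you have misread how those two conditions interact. They are not alternatives: the \emph{finite boundaried partial-obstruction witness set} is what lets the paper define the finite refinement $\equiv_{u,\cH}$ (via the sets $\mathsf{RBPO}(\cH,t)$ and the behaviour vectors of Definition~\ref{def:stamp-eq-class-app-H}), while the \emph{strong irrelevant vertex rule} is what lets the paper build the user's algorithm $\mathcal{A}_u$ (by repeatedly deleting irrelevant vertices until the treewidth is bounded, then invoking Courcelle via Lemma~\ref{lemma:twbounded}). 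The paper checks \emph{both} for every class; Section~\ref{sec:applicationsOfUniformAlgorithm} does not offer machinery that turns only one of them into ``a refinement $\mathsf{R}$ equipped with an \FPT\ deletion algorithm''.

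This creates concrete gaps in your treatment of the bipartite and finite-forbidden-induced-subgraph cases. For these you verify only the finite witness set and stop; you never say how to solve ``delete $k$ vertices so that the resulting boundaried graph lands in a class at least as good as $Q$''. The paper does not get this for free: for finite forbidden induced subgraphs it uses the Sunflower Lemma to find boundaried-irrelevant vertices (Lemma~\ref{lem:ind-app}), and for bipartite graphs it uses well-linked sets and the irrelevant-vertex machinery from the OCT literature (Observation~\ref{obs:bipartite-app} and Lemma~\ref{lem:convert-boundaried-irrelevant}). Conversely, for the chordal, interval, and topological-minor cases you only invoke an irrelevant-vertex rule and never verify finiteness of $\mathsf{RBPO}(\cH,t)$; the paper needs this too (Observation~\ref{obs:obst-t-boundary}), and while it is not deep, it is not automatic once the obstruction set is infinite.

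One further inaccuracy: for chordal and interval graphs you reach for the Grid/Wall theorem and speak of a ``flat clique-structured wall''. This is not what the paper does and is not the natural route. A graph that is $k$ vertices away from chordal and has large treewidth has a large \emph{clique} (because chordal graphs of treewidth $\ell$ contain $K_{\ell+1}$); the paper exploits this directly, building a tree decomposition whose bags are cliques plus at most $k+1$ extra vertices, and then marks irrelevant vertices inside a large maximal clique via Observations~\ref{obs:chordal-app} and~\ref{obs:interval-app}. No wall appears. Your expectation that the topological-minor case is the delicate one is correct, but there the paper simply imports the irrelevant-vertex lemma phrased for extended folios (Observation~\ref{obs:top-app}), which is already in the required boundaried form and does not need the re-packaging you anticipate.
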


For any fixed finite family of graphs $\C{F}$, we can find a finite family of graphs $\C{F}'$, such that a graph does not contain any minor from $\C{F}$ if and only if it does not contain any topological minor from $\C{F}'$. Thus, item 5 in Theorem~\ref{thm:apps-uniform} is subsumed by item 6 (see for instance,~\cite{FominLPSZ20} for a discussion on this). Thus, we will not focus on proving item 5, and our rest of the section will focus on proving the theorem for all other items its statement. 

We denote the families of bipartite, chordal, interval graphs by $\C{H}_{\sf bip}$, $\C{H}_{\sf cdl}$, and $\C{H}_{\sf int}$, respectively. For a fixed finite family of graphs $\C{F}$, we denote the families of graphs that have no graph from $\C{F}$ as an induced subgraph and a topological minor by $\C{H}_{\C{F}_{\sf ind}}$ and $\C{H}_{\C{F}_{\sf top}}$, respectively.

To invoke Theorem~\ref{thm:uni:elim}, we need to define a refinement of the canonical equivalence class (see Definition~\ref{def:canonical-eq}) for the graph families of our concern. To this end, we introduce the notion of obstructions.    

\paragraph{Obstructions to a family of graphs.} For a family of graphs $\C{H}$, a family of (possibly infinite) set of graphs $\mathbb{O}$ is an {\em induced obstruction set} to $\C{H}$, if a graph $G \in \C{H}$ if and only if $G$ does not contain any graph from $\mathbb{O}$ as an induced subgraph. Notice that we have the following: i) the family $\mathbb{O}_{\sf bib}$ of all cycles of odd length is an induced obstruction set for $\C{H}_{\sf bib}$ and ii) the family $\mathbb{O}_{\sf cdl}$ of all cycles of length at least $4$ is an induced obstruction set for $\C{H}_{\sf cdl}$.

\begin{figure}[t]
	\begin{center}
		\includegraphics[scale=0.6]{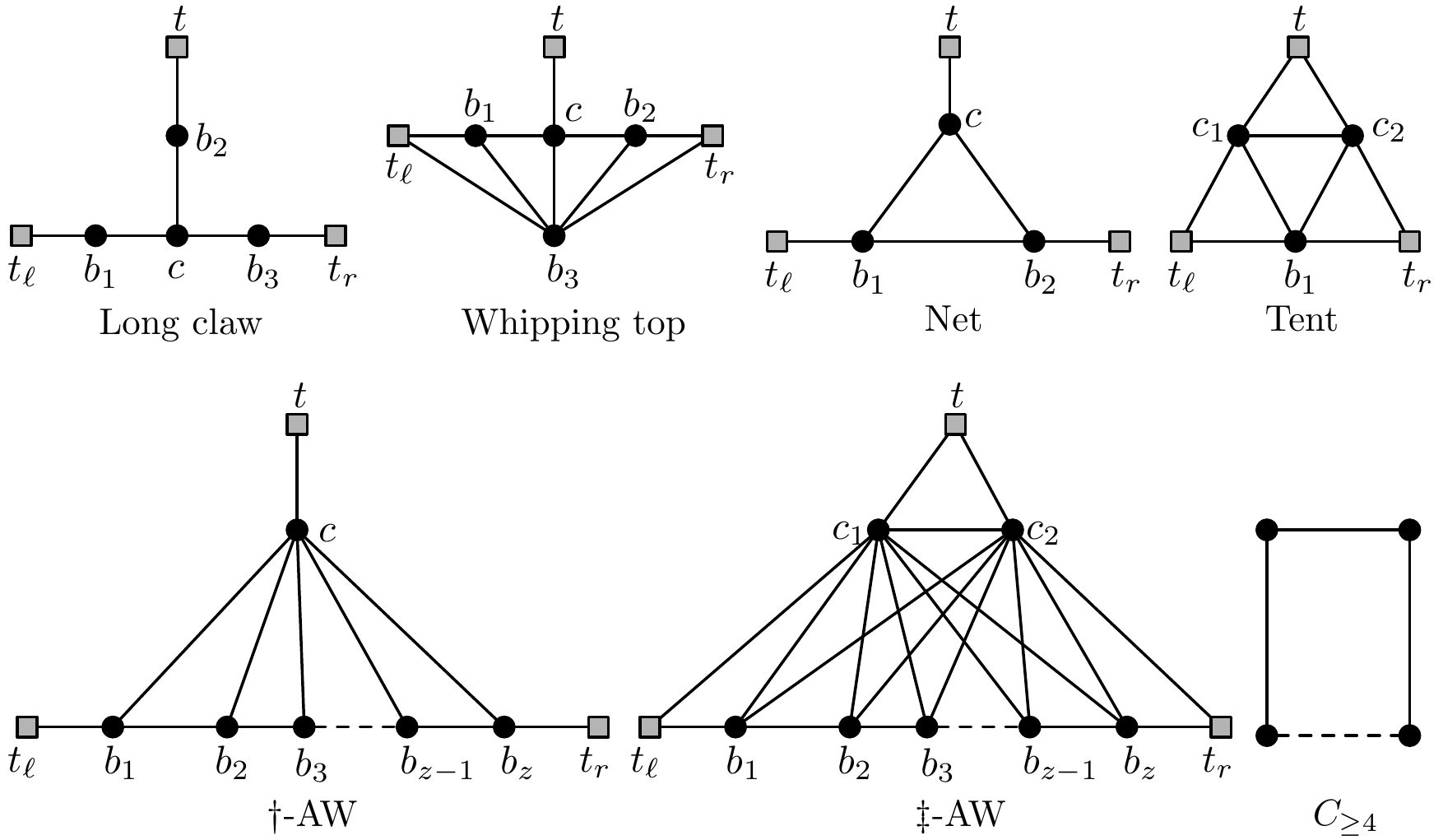}
	\end{center}
	\caption{The set of obstructions for the family of interval graph.} 
	\label{fig:list-obstruction}
\end{figure}

The set of obstructions to interval graphs have been completely characterized by Lekkerkerker and Boland,~\cite{lekkeikerker1962representation}. A graph is an interval graph if and only if it does not contain any of the following graphs as an induced subgraph (see Figure~\ref{fig:list-obstruction}).\footnote{The figure is borrowed from~\cite{AgrawalM0Z19}.}
\begin{itemize}\setlength{\itemsep}{-1pt}
\item {\bf Long Claw.} A graph $\mathbb{O}$ such that $V(\mathbb{O})=\{t_\ell,t_r,t,c,b_1,b_2,b_3\}$ and $E(\mathbb{O})=\{(t_\ell,b_1),(t_r,b_3),$ $(t,b_2),(c,b_1),(c,b_2),(c,b_3)\}$.

\item {\bf Whipping Top.} A graph $\mathbb{O}$ such that $V(\mathbb{O})=\{t_\ell,t_r,t,c,b_1,b_2,b_3\}$ and $E(\mathbb{O})=\{(t_\ell,b_1),$ $(t_r,b_2),(c,t),(c,b_1),(c,b_2),(b_3,t_\ell),(b_3,b_1), (b_3,c),(b_3,b_2), (b_3,t_r)\}$.

\item {\bf \longobs.} A graph $\mathbb{O}$ such that $V(\mathbb{O})=\{t_\ell,t_r,t,c\} \cup \{b_1,b_2,\ldots,b_z\}$, where $t_\ell=b_0$ and $t_r=b_{z+1}$, $E(\mathbb{O})=\{(t,c),(t_\ell,b_1), (t_r,b_z)\} \cup \{(c,b_i) \mid i \in [z]\}\cup \{(b_i,b_{i+1}) \mid i \in [z-1]\}$, and $z \geq 2$. A \longobs\ where $z=2$ will be called a \emph{net}.

\item {\bf \dlongobs.} A graph $\mathbb{O}$ such that $V(\mathbb{O})=\{t_\ell,t_r,t,c_1,c_2\} \cup \{b_1,b_2,\ldots,b_z\}$, where $t_\ell=b_0$ and $t_r=b_{z+1}$, $E(\mathbb{O})=\{(t,c_1),(t,c_2),(c_1,c_2),(t_\ell,b_1), (t_r,b_z), (t_\ell,c_1), (t_r,c_2)\} \cup \{(c,b_i) \mid i \in [z]\}\cup \{(b_i,b_{i+1}) \mid i \in [z-1]\}$, and $z \geq 1$. A \dlongobs\ where $z=1$ will be called a \emph{tent}.

\item {\bf Hole.} A chordless cycle on at least four vertices.
\end{itemize}

We denote the above family of graph by $\mathbb{O}_{\sf int}$, which is an induced obstruction set for $\C{H}_{\sf int}$.

Let $\C{G}$ denote the family of all graphs. Consider a finite set of graphs $\C{F}$. Notice that an induced obstruction set for $\C{H}_{\C{F}_{\sf ind}}$ is  the same as $\mathbb{O}_{\C{F}_{\sf ind}} = \C{F}$. Let $\mathbb{O}_{\C{F}_{\sf top}}$ be the family of all graphs that are not contained in $\C{H}_{\C{F}_{\sf top}}$, i.e., $\mathbb{O}_{\C{F}_{\sf top}} = \C{G} \setminus \C{H}_{\C{F}_{\sf top}}$. Notice that each graph in $\mathbb{O}_{\C{F}_{\sf top}}$ must contain some $G \in \C{F}$ as a topological minor. Thus, we can obtain that $\mathbb{O}_{\C{F}_{\sf top}}$ is an induced obstruction set for $\C{H}_{\C{F}_{\sf top}}$. 

Let $\mathbb{H} = \{\C{H}_{\sf cdl}, \C{H}_{\sf int}, \C{H}_{\sf bip}\} \cup \{\C{H}_{\C{F}_{\sf min}}, \C{H}_{\C{F}_{\sf top}} \mid \C{F} \mbox{ is a finite family of graphs}\}$. From our previous discussions we can obtain that each $\C{H}\in \mathbb{H}$ has an induced obstruction set $\mathbb{O}$. In the rest of the section, for $\C{H}\in \mathbb{H}$, we will work with the corresponding $\mathbb{O}$ that we stated previously. For the sake of simplicity, throughout this section we will assume that the label set of any boundaried graph $G$, we have $\Lambda(G) \subseteq \mathbb{N}$, where $0 \in \Lambda(G)$ and for no vertex $v\in \delta(G)$, we have $\lambda_G(v) = 0$. 

\begin{definition}[Boundaried Partial Obstruction]{\rm
Consider a family of graphs $\C{H}$ and an induced obstruction set $\mathbb{O}$ for it. A {\em partial obstruction of $\C{H}$} from $\mathbb{O}$ is any graph that is an induced subgraph of some graph in $\mathbb{O}$. A {\em boundaried partial obstruction of $\C{H}$} is boundaried graph whose unboundaried counterpart (graph after forgetting the labels of the boundary) is a partial obstruction from $\mathbb{O}$. We denote the set of all boundaried partial obstruction from $\mathbb{O}$ of $\C{H}$ by $\mathsf{BPO}_{\mathbb{O}}({\C{H}})$. (We skip the subscript in the above notation whenever the context is clear.) 
}\end{definition}

\begin{definition}[Relevant Boundaried Partial Obstruction]{\rm
Consider a family of graphs $\C{H}$ and an induced obstruction set $\mathbb{O}$ for it. We say that $O \in \mathsf{BPO}({\C{H}})$ is {\em relevant} if there does not exist another boundaried partial obstruction $O' \in \mathsf{BPO}({\C{H}})$, such that:
\begin{enumerate}\setlength{\itemsep}{-1pt}
\item $|V(O')|<|V(O)|$, and  
\item for every boundaried graph $G$, $O \oplus G \notin \C{H}$ if and only if $O' \oplus G \notin \C{H}$. 
\end{enumerate}
}\end{definition}

In the next observation we show that for each $t \in \mathbb{N}$, we can compute the set of relevant boundaried partial obstructions in time bounded by $t$ alone.   

\begin{observation}\label{obs:obst-t-boundary}
For each $\C{H} \in \mathbb{H}$, there exists a function $\zeta[\C{H}]: \mathbb{N} \rightarrow \mathbb{N}$, such that for any $t \in \mathbb{N}$, we can find an inclusion-wise maximal set of relevant $t$-boundaried partial obstructions, denoted by, $\mathsf{RBPO}({\C{H}},t)$ (with respect to $\mathbb{O}$) in time bounded by $\zeta[\C{H}](t)$, such that $|\mathsf{RBPO}({\C{H}},t)| \leq \zeta[\C{H}](t)$. 
\end{observation}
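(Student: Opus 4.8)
\textbf{Proof plan for Observation~\ref{obs:obst-t-boundary}.}
The plan is to reduce, for each family $\C{H}\in\mathbb{H}$, the computation of $\mathsf{RBPO}(\C{H},t)$ to a finite search that is bounded purely in terms of $t$, by exhibiting an explicit bound $p(\C{H},t)$ on the number of vertices of any \emph{relevant} $t$-boundaried partial obstruction. Once such a bound is available, the algorithm is immediate: enumerate all $t$-boundaried graphs on at most $p(\C{H},t)$ vertices (there are at most $2^{\binom{p}{2}}\cdot (t+1)^{p}$ of them, a function of $t$ alone), discard those that are not partial obstructions from $\mathbb{O}$ (testing whether a graph is an induced subgraph of some member of $\mathbb{O}$ is, in each of our cases, decidable — for interval/chordal/bipartite it is characterized by the explicit lists above, and for $\C{H}_{\C{F}_{\sf ind}},\C{H}_{\C{F}_{\sf top}}$ by the finite set $\C{F}$), and finally, among the survivors, keep an inclusion-wise maximal set of \emph{relevant} ones. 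The relevance test compares two boundaried graphs $O,O'$ by asking whether, for every boundaried graph $G$, $O\oplus G\notin\C{H}\Leftrightarrow O'\oplus G\notin\C{H}$; this is a property of the canonical equivalence class of the non-forgotten "obstruction-completion" behavior, and since all candidates have at most $p(\C{H},t)$ vertices and at most $t$ boundary labels, there are only finitely many equivalence classes to distinguish, and the relevant representative of each is the lexicographically-first smallest one — all computable in a number of steps bounded by a function $\zeta[\C{H}](t)$.

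The heart of the argument, therefore, is establishing the vertex bound $p(\C{H},t)$, and this is where the two "simple conditions" from the introduction (the \emph{finite boundaried partial-obstruction witness set} and the \emph{strong irrelevant vertex rule}) do the work. First I would treat the families with a \emph{finite} obstruction set, namely $\C{H}_{\C{F}_{\sf ind}}$ (here $\mathbb{O}=\C{F}$ is finite) and $\C{H}_{\sf bip},\C{H}_{\sf cdl},\C{H}_{\sf int}$ after a reduction: for bipartite and chordal the obstructions are cycles, and a $t$-boundaried partial obstruction is an induced path or set of paths whose internal (non-boundary) length can be shortcut — any non-boundary vertex of degree two on such a path that is "far" from the boundary is irrelevant, because attaching a chord-free path of the appropriate parity is what a completing graph $G$ must do, and the parity/length beyond $t+O(1)$ is immaterial; this gives $p\le t+c$ for an absolute constant $c$. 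For interval graphs one argues analogously using the structure of the Lekkerkerker--Boland obstructions in Figure~\ref{fig:list-obstruction}: the only "long" features are the $b_1,\dots,b_z$ chains in the $\dagger$-AW and $\ddagger$-AW families and the holes, and in each case a non-boundary vertex deep inside such a chain is $k$-irrelevant in the sense of the strong irrelevant vertex rule, so a relevant $t$-boundaried partial obstruction cannot have more than $t$ boundary vertices plus a constant number of "connector" vertices plus $O(t)$ chain vertices needed to realize all reachability patterns among the boundary, again $p\le \mathrm{poly}(t)$. For $\C{H}_{\C{F}_{\sf top}}$ the obstruction set is infinite, but every member contains a fixed $H\in\C{F}$ as a topological minor, so a partial obstruction is (an induced subgraph of) $H$ with its edges replaced by paths; a $t$-boundaried such graph whose subdivision-paths are longer than $t+|V(H)|$ in their non-boundary interior can be shortened without changing the $\oplus$-behavior with any completing graph, which yields $p(\C{H}_{\C{F}_{\sf top}},t)\le |V(\C{F})|+O(t)$ where $|V(\C{F})|$ is the maximum order of a graph in $\C{F}$.

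The main obstacle I expect is making the "shortening preserves $\oplus$-behavior" step fully rigorous and uniform: one must show that if $O$ has a non-boundary vertex lying on a long induced path far from $\delta(O)$, then deleting (or contracting past) that vertex produces $O'$ with $|V(O')|<|V(O)|$ and $O\oplus G\notin\C{H}\Leftrightarrow O'\oplus G\notin\C{H}$ for \emph{all} boundaried $G$ — the forward direction is a matter of exhibiting the obstruction in $O'\oplus G$ by rerouting through the remaining path, while the reverse direction requires knowing that no \emph{new} obstruction is created, which is where heredity of $\C{H}$ and the specific obstruction structure (cycles, subdivided copies of $\C{F}$, the L--B list) are used. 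A clean way to package this is to invoke the strong irrelevant vertex rule abstractly: it guarantees that in any graph of large treewidth — in particular in a long path-segment — almost all vertices of a witnessing substructure are $k$-irrelevant, and an irrelevant vertex is, by definition, removable without changing membership, which is exactly the reduction from a long partial obstruction to a short one. With $p(\C{H},t)$ in hand the counting bound on $|\mathsf{RBPO}(\C{H},t)|$ and the running-time bound $\zeta[\C{H}](t)$ follow by the trivial enumeration argument above, completing the proof.
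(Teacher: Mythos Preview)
Your plan is sound and reaches the result, but the paper takes a shorter, more direct route. Rather than bounding the number of vertices of a relevant partial obstruction and then brute-force enumerating all small $t$-boundaried graphs, the paper simply \emph{names the finite state space} that determines the $\oplus$-behaviour: for chordal graphs, the state of a $t$-boundaried partial obstruction records, for every pair of labels, whether there is no path, a one-edge path, or only paths with at least two edges between the corresponding boundary vertices; one representative per state is kept and the remaining families are handled by ``similarly''. Your path-shortening arguments are, in effect, a proof that this state space is finite, so the two approaches are dual --- yours is bottom-up (bound size, enumerate, filter for relevance), the paper's is top-down (describe states, build one small witness per state). The paper's route avoids having to justify the relevance test between arbitrary enumerated candidates, which in your plan still needs an argument that equivalence of $\oplus$-behaviour is decidable on bounded-size inputs.

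One genuine issue with your write-up: the appeal to the ``strong irrelevant vertex rule'' is off-target here. That rule concerns $(\C{H},k)$-irrelevance for the \emph{deletion} problem (removing a vertex does not change whether a size-$k$ modulator exists), whereas relevance of a partial obstruction concerns the $\oplus$-behaviour against \emph{all} boundaried $G$, with no parameter $k$ in sight; these notions are not interchangeable, and packaging the shortening step through $k$-irrelevance does not actually establish what you need. Your class-specific shortening arguments (parity threshold for bipartite, length threshold for chordal, chain-shortening for the Lekkerkerker--Boland list) are doing the real work and should stand on their own. Relatedly, for $\C{H}_{\C{F}_{\sf top}}$ your description of a partial obstruction as ``(an induced subgraph of) $H$ with its edges replaced by paths'' is incorrect: since $\mathbb{O}_{\C{F}_{\sf top}}$ consists of \emph{all} graphs containing some $H\in\C{F}$ as a topological minor, every graph is a partial obstruction, and the bound on relevant ones must come from a finite-folio type argument (as in~\cite{DBLP:journals/corr/abs-1904-02944}) rather than from shortening subdivision paths of a single copy of $H$.
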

\begin{proof}
Note that for the family of chordal graphs, if we have a boundaried graph $G$ with $\Lambda(G)\subseteq \{1,\ldots,t\}$, the following state information is enough: for every pair of labels in $\Lambda(G)$, whether there no path, a path with exactly one edge, or all paths have at least two edges, between the corresponding vertices associated with the labels. Notice that for each such state information, keeping one partial obstruction is enough for us, which can be found easily based on the state information. Similarly for all other graph classes, we can argue that only bounded number of state information are required, corresponding to each of which keeping one partial obstruction is enough for our purpose. 
\end{proof}

For $\C{H} \in \mathbb{H}$ and $t \in \mathbb{N}$ let $q[\C{H},t] = |{\sf RBPO}(\C{H},t)|$, we fix an arbitrary ordering among the $t$-boundaried graphs in ${\sf RBPO}(\C{H},t)$, and we let ${\sf RBPO}(\C{H},t) = \{P[\C{H},t,1], P[\C{H},t,2], \cdots,$ $P[\C{H},t,q[\C{H},t]]\}$. We will next define the notion of ``behaviour'' of a boundaried graph, that will help us in defining our (refinement of) canonical equivalent class (see Definition~\ref{def:canonical-eq}).

\begin{definition}\label{def:stamp-eq-class-app-H}{\rm
Consider $\C{H} \in \mathbb{H}$, $t \in \mathbb{N}$ and a $t$-boundaried graphs $G$. The $[\C{H},t]$-{\em behaviour} of $G$ is the vector ${\sf bhv}[\C{H},t](G) = (z_1,z_2,\cdots, z_{q[\C{H},t]})$, where for $i \in [q[\C{H},t]]$, $z_i = 1$ if and only if $G \oplus P[\C{H},i] \in \C{H}$.
}\end{definition}

\begin{definition}\label{def:equiv-bhv}{\rm 
Consider $\C{H} \in \mathbb{H}$ and $t \in \mathbb{N}$. Let $G_1$ and $G_2$ be two $t$-boundaried graphs, such that $\Lambda(G_1) = \Lambda(G_2)$ and $t$ is the largest number in $\Lambda(G_1)$. 
 We say that $G_1$ and $G_2$ are {\em $\C{H}$-behaviour equivalent}, denoted by $G_1 \equiv_{u,\C{H}} G_2$, if ${\sf bhv}[\C{H},t](G_1) = {\sf bhv}[\C{H},t](G_2)$.\footnote{The letter $u$ in $\equiv_{u,\C{H}}$ is to denote that it is a user defined equivalence class, as we are the user in this section.}
}\end{definition}

Notice that $\equiv_{u,\C{H}}$ defines an equivalence class over the boundaried graphs. We use $\C{Q}(\equiv_{u,\C{H}})$ to denote the set of equivalence classes of $\equiv_{u,\C{H}}$. The next observation follows from Definition~\ref{def:equiv-bhv}.

\begin{observation}\label{obs:bhc-can-equiv}
For each $\C{H} \in \mathbb{H}$, and boundaried graphs $G_1$ and $G_2$, such that $G_1 \equiv_{u,\C{H}} G_2$, we have $G_1 \equiv_{\C{H}} G_2$. Also, $\equiv_{u,\C{H}}$ satisfies the requirement of Definition~\ref{def:cdpeq}. 
\end{observation}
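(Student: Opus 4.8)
\textbf{Proof plan for Observation~\ref{obs:bhc-can-equiv}.}

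The statement has two parts, and the plan is to handle them separately. For the first part, I would show that $G_1 \equiv_{u,\C{H}} G_2$ implies $G_1 \equiv_{\C{H}} G_2$. Recall that $G_1 \equiv_{\C{H}} G_2$ means that for \emph{every} boundaried graph $H$, we have $G_1 \oplus H \in \C{H} \Leftrightarrow G_2 \oplus H \in \C{H}$, whereas $G_1 \equiv_{u,\C{H}} G_2$ only asserts the equality ${\sf bhv}[\C{H},t](G_1) = {\sf bhv}[\C{H},t](G_2)$, i.e.\ agreement over the finite set ${\sf RBPO}(\C{H},t)$ of relevant $t$-boundaried partial obstructions. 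So the crux is to reduce an arbitrary test graph $H$ to one of the finitely many relevant partial obstructions. The key observation is that since $\C{H}$ has an induced obstruction set $\mathbb{O}$, for any boundaried graph $G$ we have $G \oplus H \notin \C{H}$ if and only if $G \oplus H$ contains some graph from $\mathbb{O}$ as an induced subgraph; and any such induced copy decomposes as $(G \cap \text{copy}) \oplus (H \cap \text{copy})$ where $H \cap \text{copy}$ is itself a $t$-boundaried partial obstruction (an induced subgraph of a graph in $\mathbb{O}$). Hence $G \oplus H \notin \C{H}$ iff there exists some $t$-boundaried partial obstruction $O \in \mathsf{BPO}(\C{H})$ with $G \oplus O \notin \C{H}$. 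Then, using the definition of \emph{relevant} boundaried partial obstruction, each such $O$ can be replaced by some $O' \in {\sf RBPO}(\C{H},t)$ with the same rejection behaviour, so $G \oplus H \notin \C{H}$ iff $G \oplus P \notin \C{H}$ for some $P \in {\sf RBPO}(\C{H},t)$. Therefore membership of $G \oplus H$ in $\C{H}$ is entirely determined by the behaviour vector ${\sf bhv}[\C{H},t](G)$, and since these coincide for $G_1$ and $G_2$, we conclude $G_1 \oplus H \in \C{H} \Leftrightarrow G_2 \oplus H \in \C{H}$, i.e.\ $G_1 \equiv_{\C{H}} G_2$.

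One subtlety I would be careful about: the normalization convention that $\Lambda(G) \subseteq \mathbb{N}$ with $0 \in \Lambda(G)$ but no boundary vertex mapped to $0$, together with the requirement in Definition~\ref{def:equiv-bhv} that $t$ is the largest label in $\Lambda(G_1) = \Lambda(G_2)$. I need to make sure the reduction above is carried out with the \emph{right} value of $t$ — namely that when gluing $G$ with an arbitrary $H$, the only labels that matter are those in $\Lambda(G)$, and the induced copy of an obstruction meets $G$ exactly along (a subset of) its boundary, so the relevant partial obstruction $O = H \cap \text{copy}$ can be taken $t$-boundaried with label set contained in $\Lambda(G)$. This is where the restriction to $\mathsf{RBPO}(\C{H},t)$ in Observation~\ref{obs:obst-t-boundary} is used: the finite family there is exactly what is needed to certify non-membership.

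For the second part, I would verify that $\equiv_{u,\C{H}}$ satisfies the Component Deletion Property (Definition~\ref{def:cdpeq}): if $G$ lies in an equivalence class $Q$ of $\equiv_{u,\C{H}}$ and $C$ is a connected component of $G$ disjoint from $\delta(G)$, then $G - C \in Q$; and if $F \in \C{H}$ with $\delta(F) = \emptyset$ then $G \uplus F \in Q$. Both reduce to the fact, used already in the first part, that for a test graph $P$ (here a relevant partial obstruction), $G \oplus P \in \C{H}$ iff $G \oplus P$ contains no member of $\mathbb{O}$ as an induced subgraph. Since each graph in $\mathbb{O}$ is connected (cycles, and for the finite cases the obstructions can be taken connected — or at least each has a connected component one can work with, but for the classes in $\mathbb{H}$ the obstructions are connected), any induced copy of an obstruction in $G \oplus P$ is contained within a single connected component of $G \oplus P$; and the component $C$, being disjoint from $\delta(G)$, is also a component of $G \oplus P$ that is not glued to $P$, so it cannot be part of such a copy. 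Hence $G \oplus P \in \C{H} \Leftrightarrow (G - C) \oplus P \in \C{H}$, giving equality of behaviour vectors and $G - C \in Q$. Likewise $F$ with empty boundary sits as a separate component in $(G \uplus F) \oplus P$; since $F \in \C{H}$ it contains no obstruction, so it contributes no obstruction copy, giving $(G \uplus F) \oplus P \in \C{H} \Leftrightarrow G \oplus P \in \C{H}$ and thus $G \uplus F \in Q$.

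The main obstacle I anticipate is the clean handling of the ``connectedness of obstructions'' point and of the label bookkeeping: for the classes in $\mathbb{H}$ one must confirm that the induced obstruction sets $\mathbb{O}$ we fixed (odd cycles for bipartite, long cycles for chordal, the Lekkerkerker--Boland family for interval, $\C{F}$ itself for forbidden induced subgraphs, and $\C{G} \setminus \C{H}_{\C{F}_{\sf top}}$ for topological minors) consist of connected graphs — and in the last case, where $\mathbb{O}_{\C{F}_{\sf top}}$ is infinite and somewhat unwieldy, one should restrict attention to \emph{minimal} obstructions, which are connected whenever the graphs in $\C{F}$ are (and one may assume they are, since a disconnected forbidden topological minor splits into independent conditions). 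Making this precise, and stating it so it interacts correctly with the relevant-partial-obstruction machinery of Observation~\ref{obs:obst-t-boundary}, is the part that requires the most care; everything else is a direct unfolding of definitions.
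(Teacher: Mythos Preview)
The paper gives no proof here — it just says the observation follows from Definition~\ref{def:equiv-bhv}. Your Part~2 argument (component-deletion property) is correct: union-closedness of $\C{H}$ forces minimal obstructions to be connected, so a component disjoint from the boundary cannot meet any obstruction in $G \oplus P$, and the behaviour vector is unchanged by deleting it (or by adding an $F \in \C{H}$ with empty boundary).

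Your Part~1 argument, however, has a real gap. The biconditional you assert, ``$G \oplus H \notin \C{H}$ iff there exists some $t$-boundaried partial obstruction $O$ with $G \oplus O \notin \C{H}$'', is false in the reverse direction: take $O$ to be a full obstruction from $\mathbb{O}$ with any boundary labelling; then $G \oplus O \notin \C{H}$ for \emph{every} $G$ (the copy of $O$ sits inside $G \oplus O$), yet $G \oplus H \in \C{H}$ for plenty of $H$. Your biconditional would make membership of $G \oplus H$ in $\C{H}$ independent of $H$. The forward direction you argue is fine and gives, from an obstruction copy $O^\star$ in $G_1 \oplus H$, the $H$-side partial obstruction $O_H = H[V(O^\star) \cap V(H)]$ with $G_1 \oplus O_H \notin \C{H}$; relevance plus behaviour-agreement then yield $G_2 \oplus O_H \notin \C{H}$. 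What is missing is the step back to $G_2 \oplus H \notin \C{H}$. That would follow if $G_2 \oplus O_H$ were an induced subgraph of $G_2 \oplus H$, but this can fail: any $w \in \delta(H)$ with $\lambda_H(w) \in \Lambda(G_2)$ that is \emph{not} in $V(O_H)$ gets identified with a vertex of $\delta(G_2)$ in $G_2 \oplus H$ but not in $G_2 \oplus O_H$, and $H$-edges from $w$ into $V(O_H)$ then appear as extra edges on the image of $G_2 \oplus O_H$ inside $G_2 \oplus H$. Forcing $O_H$ to contain all such shared boundary vertices would repair the embedding, but then $O_H$ need no longer be an induced subgraph of an obstruction, so it is no longer a partial obstruction and the appeal to $\mathsf{RBPO}(\C{H},t)$ breaks. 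A correct argument has to bridge this; as written, your reduction of an arbitrary test graph $H$ to the finite set $\mathsf{RBPO}(\C{H},t)$ does not go through.
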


We will next give a definition that will be useful in obtaining the user defined function ${\sf ug}_{\C{H}}$. 

\begin{definition}\label{def:app-ug}{\rm
Consider $\C{H} \in \mathbb{H}$, and two distinct equivalence classes $Q_1,Q_2 \in \C{Q}(\equiv_{u,\C{H}})$. Moreover, let $t_1$ and $t_2$ be the largest numbers in $\Lambda(G_1)$, for $G_1 \in Q_1$ and $\Lambda(G_2)$, for $G_2 \in Q_2$, respectively. We say that $Q_1$ is {\em at least as good as} $Q_2$ if each of the following holds:
\begin{enumerate}\setlength{\itemsep}{-1pt}
\item $t_1 \leq t_2$, and
\item for each $H \in {\sf RBPO}(\C{H},t_1) \subseteq {\sf RBPO}(\C{H},t_2)$, whenever for any $G_2 \in Q_2$, $G_2 \oplus H \in \C{H}$, then any $G_1 \in Q_1$ we have $G_1 \oplus H \in \C{H}$.
\end{enumerate}
}\end{definition}

Using the above definition we are now ready to define the function ${\sf ug}_{\C{H}}$. 

\begin{definition}\label{def:ug-function-apps}{\rm 
For $\C{H} \in \mathbb{H}$, we define the function ${\sf ug}_{\C{H}}: \C{Q}(\equiv_{u,\C{H}}) \times \C{Q}(\equiv_{u,\C{H}}) \rightarrow \{0,1\}$ as follows: for $Q_1,Q_2 \in \C{Q}(\equiv_{u,\C{H}})$, if $Q_1$ is at least as good as $Q_2$ as per Definition~\ref{def:app-ug}, then set ${\sf ug}_{\C{H}}(Q_1,Q_2) = 1$, otherwise set ${\sf ug}_{\C{H}}(Q_1,Q_2) = 0$.  
}\end{definition}

We have now defined a refinement of $\equiv_{\C{H}}$ and ${\sf ug}_{\C{H}}$, for each $\C{H} \in \mathbb{H}$. We will next focus on designing the ``user's algorithm''. To this end, we begin by defining the notion of an irrelevant vertex.

\begin{definition}[Irrelevant Vertex]\label{def:irrelev-one}{\rm 
Consider a family of graphs $\C{H}$, a graph $G$, and an integer $k \in \mathbb{N}$. A vertex $v \in V(G)$ is {\em $(\C{H},k)$-irrelevant} if there exists $S \subseteq V(G)$ of size at most $k$ such that $G-S \in \C{H}$ if and only if there exists $S' \subseteq V(G-\{v\})$ of size at most $k$ such that $(G-\{v\})-S' \in \C{H}$. 
}\end{definition}

We now extend the above definition to boundaried graphs. 

\begin{definition}[Boundaried Irrelevant Vertex]\label{def:irrelev-two}{\rm 
Consider a family of graphs $\C{H}$, a boundaried graph $G$, and an integer $k \in \mathbb{N}$. A vertex $v\in V(G) \setminus \delta(G)$ is {\em boundaried $(\C{H},k)$-irrelevant} if for every boundaried graph $H$, $v$ is $(\C{H},k)$-irrelevant in $G \oplus H$. 
}\end{definition}

\begin{definition}[Boundaried Obstruction Irrelevant Vertex]{\rm 
Consider $\C{H} \in \mathbb{H}$, a $t$-boundaried graph $G$, where $t$ is the largest integer in $\Lambda(G)$, and an integer $k \in \mathbb{N}$. A vertex $v \in V(G) \setminus \delta(G)$ is {\em boundaried obstruction $(\C{H},k)$-irrelevant} if for every $H \in {\sf RBPO}({\C{H}},t)$, $v$ is $(\C{H},k)$-irrelevant in $G \oplus H$.  
}\end{definition} 

\begin{lemma}\label{lemma:irrelevance}
Consider $\C{H} \in \mathbb{H}$, a $t$-boundaried graph $G$, where $t$ is the largest integer in $\Lambda(G)$, and an integer $k \in \mathbb{N}$. A vertex $v\in V(G) \setminus \delta(G)$ is boundaried $(\C{H},k)$-irrelevant if and only if it is boundaried obstruction $(\C{H},k)$-irrelevant.  
\end{lemma}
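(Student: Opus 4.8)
\textbf{Proof plan for Lemma~\ref{lemma:irrelevance}.}

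The plan is to prove the two directions separately, where the forward direction (boundaried $(\C{H},k)$-irrelevant $\Rightarrow$ boundaried obstruction $(\C{H},k)$-irrelevant) is immediate and the reverse direction is the substantive one. For the forward direction, observe that $\mathsf{RBPO}(\C{H},t)$ is by construction a finite set of $t$-boundaried graphs, so if $v$ is $(\C{H},k)$-irrelevant in $G\oplus H$ for \emph{every} boundaried graph $H$, then in particular this holds for each $H\in\mathsf{RBPO}(\C{H},t)$; hence $v$ is boundaried obstruction $(\C{H},k)$-irrelevant. This needs only the definitions and the fact that $\mathsf{RBPO}(\C{H},t)$ consists of $t$-boundaried graphs (Observation~\ref{obs:obst-t-boundary}).

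For the reverse direction, suppose $v$ is boundaried obstruction $(\C{H},k)$-irrelevant; I want to show $v$ is $(\C{H},k)$-irrelevant in $G\oplus H$ for an \emph{arbitrary} boundaried graph $H$. The key idea is that an obstruction to membership in $\C{H}$ appearing in $G\oplus H$ that ``uses'' $v$ is an induced subgraph which is a member of $\mathbb{O}$ (or lies outside $\C{H}$ and hence contains a member of $\mathbb{O}$ as an induced subgraph); restricting this obstruction to the $G$-side yields a partial obstruction, and its boundary (the vertices shared with $H$, labelled by their labels) is a $t$-boundaried partial obstruction. By the definition of relevance, this $t$-boundaried partial obstruction is ``behaviourally equivalent'' to some element of $\mathsf{RBPO}(\C{H},t)$: there is $O'\in\mathsf{RBPO}(\C{H},t)$ such that for every boundaried graph $F$, the $G$-side piece glued to $F$ is outside $\C{H}$ iff $O'$ glued to $F$ is. Concretely, I would argue as follows. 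Fix $k$ and $H$. The direction ``if $G\oplus H$ has a deletion set of size $\le k$ into $\C{H}$ then so does $(G\oplus H)-v$'' is not what we want (that is trivial since $\C{H}$ is hereditary and deleting $v$ only helps); rather we need: if $(G\oplus H)-v$ has a size-$\le k$ deletion set $S'$, then $G\oplus H$ has a size-$\le k$ deletion set. Suppose for contradiction that every size-$\le k$ vertex set in $G\oplus H$ fails, yet $S'$ works in $(G\oplus H)-v$. Then $(G\oplus H)-S'\notin\C{H}$ but $(G\oplus H)-S'-v\in\C{H}$, so $(G\oplus H)-S'$ contains an induced obstruction $Q$ from $\mathbb{O}$ with $v\in V(Q)$. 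Let $Q_G=Q\cap V(G)$ and $Q_H=Q\cap V(H)$; these induce boundaried graphs with common boundary $B=Q\cap\delta(G)\cap\delta(H)$ (appropriately restricted), and $Q=Q_G\oplus Q_H$ where $Q_G$ is a $t$-boundaried partial obstruction containing $v$. Now here is where I would like to invoke the machinery: I replace $G$ by a structure where $v$ has been removed but the ``behaviour'' is preserved. Since $v$ is boundaried obstruction $(\C{H},k)$-irrelevant, for each $O\in\mathsf{RBPO}(\C{H},t)$ the instances $(G\oplus O,k)$ and $(G\oplus O-v,k)$ of \probVDH{} have the same answer. I then want to bootstrap from gluing with representatives in $\mathsf{RBPO}(\C{H},t)$ to gluing with arbitrary $H$, using that $Q_H$ — being a partial obstruction, hence an induced subgraph of something in $\mathbb{O}$ — ``behaves like'' a relevant partial obstruction, and that the rest of $H$ outside $Q_H$ can be handled because obstructions are the only witnesses of non-membership.

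The main obstacle I anticipate is precisely this bootstrapping step: \emph{from the $k$-irrelevance of $v$ against the finite family $\mathsf{RBPO}(\C{H},t)$, deduce $k$-irrelevance of $v$ against every boundaried $H$}. The delicate point is that a deletion set $S'$ for $(G\oplus H)-v$ may interact with $H$ in a complicated way, and translating the argument to the finite test family requires that (i) any obstruction in $(G\oplus H)-S'$ touching $v$ has its $H$-side realized, up to behaviour, by a member of $\mathsf{RBPO}(\C{H},t)$, and (ii) crucially, that a deletion set witnessing membership after this replacement can be transported back. I expect this to require an exchange/replacement argument of the flavour of Lemma~\ref{lem:sim:sig} (or of the replacement lemma, Lemma~\ref{lem:red2finiteindex}) restricted to the obstruction world: one argues that $G\oplus H$ and $G\oplus O'$ (for the appropriate $O'\in\mathsf{RBPO}(\C{H},t)$) agree on whether a small modification around $v$ fixes all obstructions, because obstructions are local and the behaviour vector records exactly which obstructions survive. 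I would structure the write-up so that this core claim is isolated as an internal sub-claim, proved by taking, for a hypothetical size-$\le k$ deletion set of $(G\oplus H)-v$, the induced obstructions of $G\oplus H$ it leaves behind, noting each has a $G$-side partial obstruction and an $H$-side partial obstruction, and then replacing $H$ in a controlled way by the corresponding element of $\mathsf{RBPO}(\C{H},t)$ to contradict boundaried obstruction irrelevance.
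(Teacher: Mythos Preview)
Your plan tracks the paper's proof closely: the forward direction is immediate, and for the reverse you correctly set up the contradiction, pick a solution $S'$ for $(G\oplus H)-v$, locate an obstruction $O^\star\in\mathbb{O}$ through $v$ inside $(G\oplus H)-S'$, take its $H$-side as a $t$-boundaried partial obstruction $O'$, and replace $O'$ by a behaviourally equivalent $O\in\mathsf{RBPO}(\C{H},t)$. This is exactly what the paper does.

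Where you part ways is in your ``main obstacle''. You anticipate needing an exchange/replacement argument in the style of Lemma~\ref{lem:sim:sig} or Lemma~\ref{lem:red2finiteindex} to \emph{transport a deletion set back to $G\oplus H$}. The paper never returns to $H$. It stays inside $G\oplus O$: from the behaviour equivalence of $O$ and $O'$ it derives that $G\oplus O-(S'\cap V(G))\notin\C{H}$ (because the obstruction $O^\star$ is still present) and $G\oplus O-((S'\cap V(G))\cup\{v\})\in\C{H}$ (because $G\oplus O'$ sits inside $G\oplus H$ as an induced subgraph and $\C{H}$ is hereditary), and then simply declares this to contradict $v$ being $(\C{H},k)$-irrelevant in $G\oplus O$. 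So the heavy machinery you propose is not used, and your plan to ``transport back'' is aiming in the wrong direction.

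That said, your caution at precisely this step is not groundless. The two facts the paper derives only show that $S'\cap V(G)$ is a solution for $(G\oplus O)-v$ but not for $G\oplus O$; since $(\C{H},k)$-irrelevance is an \emph{existential} statement (yes/no of the two instances agree), this is not literally a contradiction as written---one still needs that $(G\oplus O,k)$ is a no-instance, which has not been argued. The paper treats this as immediate; you flag it as an obstacle. Either way, the resolution is not a signature/replacement lemma but (if anything) a short local argument about $G\oplus O$, so drop the reference to Lemmas~\ref{lem:sim:sig} and~\ref{lem:red2finiteindex} from your plan.
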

\begin{proof}
Let $v\in V(G) \setminus \delta(G)$. One direction is trivial: If $v$ is boundaried $(\C{H},k)$-irrelevant, then it is boundaried obstruction $(\C{H},k)$-irrelevant.  Indeed, suppose that $v$ is boundaried $(\C{H},k)$-irrelevant. Then, for every boundaried graph $H$, $v$ is $(\C{H},k)$-irrelevant in $G \oplus H$. In particular, for every $O \in {\sf RBPO}({\C{H}},t)$, $v$ is $(\C{H},k)$-irrelevant in $G \oplus O$. So, $v$ is boundaried obstruction $(\C{H},k)$-irrelevant.

Now, consider the reverse direction. Suppose  $v$ is boundaried obstruction $(\C{H},k)$-irrelevant. Then, for every $O \in {\sf RBPO}({\C{H}},t)$, $v$ is $(\C{H},k)$-irrelevant in $G \oplus O$. We need to prove that 
for any boundaried graph $H$, $v$ is $(\C{H},k)$-irrelevant in $G \oplus H$. Towards that, let us fix a boundaried graph $H$. Without loss of generality, we can suppose that $\delta(H)\subseteq \{1,\ldots,t\}$ (else, we can ``unlabel'' the vertices in $H$ having a larger label, and thereby we do not change $G \oplus H$). Targeting a contradiction, suppose that $v$ is not $(\C{H},k)$-irrelevant in $G \oplus H$.  So, because $\C{H}$ is hereditary, this means that there exists $S' \subseteq V(G \oplus H-\{v\})$ of size at most $k$ such that $(G \oplus H-\{v\})-S' \in \C{H}$, but there does not exist $S \subseteq V(G \oplus H)$ of size at most $k$ such that $G \oplus H-S \in \C{H}$. In particular, $G \oplus H-(S'\cup\{v\})\notin \C{H}$, so $G \oplus H$ must have an obstruction $O^\star$ in $\mathbb{O}$ (as an induced subgraph) that contains $v$ and is disjoint from $S'$. Let $O'$ be the subgraph of $O^\star[V(H)]$ whose boundary is the boundary vertices of $H$ that occur in $O^\star$. So, $O'$ is a $t$-boundaried partial obstruction of  $\C{H}$ such that $G \oplus O' - (S'\cap V(G))\notin \C{H}$ and $G \oplus O' - ((S'\cap V(G))\cup\{v\})\in \C{H}$. By the definition of ${\sf RBPO}({\C{H}},t)$, there exists $O\in {\sf RBPO}({\C{H}},t)$ such that $G \oplus O - (S'\cap V(G))\notin \C{H}$ and $G \oplus O - ((S'\cap V(G))\cup\{v\})\in \C{H}$. However, $v$ is $(\C{H},k)$-irrelevant in $G \oplus O$, and thus we have reached a contradiction.
\end{proof}

For the families of graph that we are interested in, our objective will be to find an ``irrelevant'' vertex, towards obtaining a ``user's algorithm''. We first explain {\em intuitively} how we intend to obtain such a rule. For simplicity consider the family of chordal graphs. We know that any chordal graph with large treewidth must contain a large clique. More generally, any graph from which we can delete a small set of vertices so that the resulting graph is a chordal graph, either it contains a large clique, or its treewidth can be bounded by the size of the small deletion set. We remark that whenever a graph has a large clique, using the known (FPT/kernelization) algorithms for {\sc Chordal Vertex Deletion}~\cite{Marx10}, we will be able to find an irrelevant vertex. We give generic definitions below, and then we will see how our families fit into these definitions. 

\begin{definition}{\rm
Consider families of graphs $\what{\C{G}}$ and $\wtilde{\C{G}}$. We say that $\what{\C{G}}$ has {\em treewidth property} with respect to $\wtilde{\C{G}}$, if there is a function $\tau: \mathbb{N} \rightarrow \mathbb{N}$, such that any graph $G \in \what{\C{G}}$ with $\tw(G) \geq \ell$, contains a graphs $G' \in \wtilde{\C{G}}$ with $|V(G')| \geq \tau(\ell)$ as an induced subgraph. In that above, we say that $\tau$ is a {\em tw-certificate} for the treewidth-property of $\what{\C{G}}$ with respect to $\wtilde{\C{G}}$.    
}
\end{definition}

We will now extend the above definition to the classes where we can find large subgraphs of desired type in bounded amount of time. 

\begin{definition}\label{def:app-eff-tw-prop-def}{\rm 
Consider families of graphs $\what{\C{G}}$ and $\wtilde{\C{G}}$, such that $\what{\C{G}}$ has treewidth-property with respect to $\wtilde{\C{G}}$, with tw-certificate $\tau: \mathbb{N} \rightarrow \mathbb{N}$. We say that $\what{\C{G}}$ has {\em efficient treewidth-property} with respect to $\wtilde{\C{G}}$, if there a function $g: \mathbb{N} \rightarrow \mathbb{N}$ and an algorithm, which given an $n$-vertex graph $G \in \what{\C{G}}$ and an integer $\ell \in \mathbb{N}$, in time bounded by $g(\ell) \cdot n^{\C{O}(1)}$, either correctly concludes that $\tw(G) < \ell$, or outputs a graph $G' \in \wtilde{\C{G}}$, such that: i) $|V(G')| \geq \tau(\ell)$ and ii) $G'$ is an induced subgraph of $G$.
}\end{definition}

\begin{definition}{\rm
Consider families of graphs $\what{\C{G}}$, $\wtilde{\C{G}}$, and $\C{H}$, where $\wtilde{\C{G}} \subseteq \C{H}$. We say that $\what{\C{G}}$ is $(\wtilde{\C{G}}, \C{H})$-{\em deletable}, if there is a function $\mu: \mathbb{N} \rightarrow \mathbb{N}$, such that: for any graph $G \in \what{\C{G}}$ which contains a graph $G' \in \wtilde{\C{G}}$ with at least $k$ vertices as an induced subgraph, there exists $Z \subseteq V(G') \subseteq V(G)$ of size at least $\mu(k)$, such that each vertex $u \in Z$ is $(\C{H},k)$-irrelevant in $G$. In the above, we say that $\mu$ is a {\em del-certificate} of $\what{\C{G}}$ being $(\wtilde{\C{G}}, \C{H})$-deletable.
} 
\end{definition}

\begin{definition}{\rm 
Consider families of graphs $\what{\C{G}}$, $\wtilde{\C{G}}$, and $\C{H}$, where $\wtilde{\C{G}} \subseteq \C{H}$ and $\what{\C{G}}$ is $(\wtilde{\C{G}}, \C{H})$-deletable, where $\mu$ is a del-certificate of $\what{\C{G}}$ being $(\wtilde{\C{G}}, \C{H})$-deletable. A {\em $(\what{\C{G}}, \wtilde{\C{G}}, \C{H})$-irrelevance detector} is an algorithm, which given an $n$-vertex graph $G \in \what{\C{G}}$ and an induced subgraph $G' \in \wtilde{\C{G}}$ of $G$, where $k = |V(G')|$, outputs a subset $Z \subseteq V(G') \subseteq V(G)$ of size at least $\mu(k)$, such that each vertex in $Z$ is $(\C{H},k)$-irrelevant in $G$.  
}
\end{definition}

Now we will state results which follow from the known results, that will be important in obtaining our ``user's algorithm''. The next result says that, given a graph which admits a $k$-sized deletion set to chordal graphs, along with a large clique, then we will be able to obtain large set of vertices in this clique that are irrelevant for us. We remark that each of this vertices are ``individually'' irrelevant to us, and we won't be deleting all of them together. The above discussed result simply corresponds to the marking of vertices in a maximal clique in the clique-tree of the input graphs, obtained after removing a small deletion set. (A similar result can also be obtained for interval graphs.) Explicitly we can obtain such a result in various paper, like~\cite{Marx10,JansenP18,DBLP:journals/talg/AgrawalLMSZ19}, where to the best of our knowledge, the first time such a result appeared in~\cite{Marx10}. For an explicit reference, we direct the readers to Lemma 4.1 and Section 3.3 of~\cite{DBLP:journals/talg/AgrawalLMSZ19}. We would like to remark that, whenever a reduction rule from the above result, deletes a vertex and decrements $k$ by $1$, we can instead add such a vertex to our ``relevant set''. Also, since we will assume that the input graph comes with the promise that it admits an at most $k$-sized deletion set to chordal graphs, the known algorithms cannot return no for such a graph with the integer $k$ as input. Also, the kernelization algorithm for {\sc Interval Vertex Deletion} given in, say,~\cite{DBLP:journals/talg/AgrawalLMSZ19}, never explicitly returns yes.

\begin{observation}\label{obs:chordal-app}
There are computable (non-decreasing) functions $f_{\sf cdl}, g_{\sf cdl}: \mathbb{N} \rightarrow \mathbb{N}$, and an algorithm, which given an $n$-vertex graph $G$, an integer $k \geq 1$ and a maximal clique $G'$ in $G$ such that: i) $\hhmdp{\C{H}_{\sf cdl}}(G) \leq k$, and ii) $|V(G')| \geq g_{\C{H}_{\sf cdl}}(k)$; in time bounded by $f_{\C{H}_{\sf cdl}}(k) \cdot n^{\C{O}(1)}$, outputs a set $X \subseteq V(G')$, such that $|X| \leq g_{\C{H}_{\sf cdl}}(k)$ and each vertex in $V(G') \setminus X$ is $(\C{H}_{\sf cdl},k)$-irrelevant in $G$. 
\end{observation}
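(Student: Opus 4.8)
\textbf{Proof proposal for Observation~\ref{obs:chordal-app}.}

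The plan is to derive this observation essentially as a corollary of the existing kernelization machinery for \probCVD. Concretely, I would invoke the structural lemma underlying the polynomial kernel for \probCVD\ (the explicit reference being Lemma~4.1 and Section~3.3 of~\cite{DBLP:journals/talg/AgrawalLMSZ19}, with the idea going back to~\cite{Marx10}), which, given a graph $G$ with a chordal deletion set of size at most $k$ and a clique tree of $G-S$ for some such modulator $S$, marks in every bag of the clique tree a set of at most some $g_{\sf cdl}(k)$ vertices so that any unmarked vertex of that bag is $(\C{H}_{\sf cdl},k)$-irrelevant in $G$. The first step is therefore to fix notation: set $g_{\C{H}_{\sf cdl}}(k)$ to be the marking bound from that reference (a fixed polynomial in $k$ composed with the constants coming from the reduction rules), and let $f_{\C{H}_{\sf cdl}}(k)$ absorb both the running time of computing a modulator $S$ of size at most $k$ (here we do \emph{not} use an \FPT\ algorithm black-box; instead, we exploit that the input carries the promise $\hhmdp{\C{H}_{\sf cdl}}(G)\le k$, so a standard branching algorithm for \probCVD\ — e.g.\ the one whose existence is implied by the fact that \probCVD\ is \FPT~\cite{Marx10} — terminates and returns such an $S$) and the running time of the marking procedure.

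Next I would carry out the following steps in order. (1) Compute a chordal deletion set $S\subseteq V(G)$ with $|S|\le k$; this is where the promise is used, so the algorithm never has to output ``no''. (2) Compute a clique tree $(\mathcal{T},\beta)$ of the chordal graph $G-S$ in polynomial time. (3) Let $G'$ be the given maximal clique with $|V(G')|\ge g_{\C{H}_{\sf cdl}}(k)$. Since $G'$ is a clique of $G$, $G'-S$ is a clique of $G-S$, hence contained in some bag $\beta(t)$ of the clique tree. Apply the marking lemma of~\cite{DBLP:journals/talg/AgrawalLMSZ19} to the bag $\beta(t)$ (together with the attachments of the modulator $S$), obtaining a marked set $M_t$ with $|M_t|\le g_{\C{H}_{\sf cdl}}(k)$ such that every vertex of $\beta(t)\setminus M_t$ is $(\C{H}_{\sf cdl},k)$-irrelevant in $G$. (4) Output $X := (V(G')\cap S)\cup (V(G')\cap M_t)$. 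Then $|X|\le |S| + |M_t| \le k + g_{\C{H}_{\sf cdl}}(k)$, which we fold into a (possibly larger) redefinition of $g_{\C{H}_{\sf cdl}}$ so that $|X|\le g_{\C{H}_{\sf cdl}}(k)$ holds as stated. Every vertex of $V(G')\setminus X$ lies in $\beta(t)\setminus M_t$ and outside $S$, hence is $(\C{H}_{\sf cdl},k)$-irrelevant in $G$. One small point I would verify carefully: the marking lemma as stated in the literature sometimes marks vertices per bag relative to a \emph{fixed} modulator $S$, and irrelevance is with respect to \probCVD\ on $G$ with budget $k$; I need the statement in the form ``unmarked $\Rightarrow$ for \emph{every} solution of size $\le k$, deleting $v$ or not does not change solvability'', which is exactly Definition~\ref{def:irrelev-one}, and this is precisely what the reduction-rule safeness proofs in~\cite{DBLP:journals/talg/AgrawalLMSZ19} give (a reduction that deletes an unmarked vertex without decrementing $k$ is equivalence-preserving).

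The main obstacle I anticipate is bookkeeping rather than conceptual: matching the precise form of the marking/irrelevance guarantee in the cited kernelization papers to Definition~\ref{def:irrelev-one} used here, and handling the subtlety that those papers' reduction rules occasionally \emph{delete a vertex and decrement $k$} (i.e.\ identify a forced vertex of the solution). As noted in the excerpt's own remark preceding the observation, whenever such a rule fires we should instead place that vertex into the ``relevant set'' (equivalently, exclude it from the irrelevant set $V(G')\setminus X$), which is harmless since we only need \emph{some} large set of individually-irrelevant vertices, not a clean reduction. A secondary, purely cosmetic issue is that the cited algorithms are kernelization algorithms that may, as a side effect, shrink $G$; here we only want the \emph{marking} on the single clique $G'$, so I would extract just the marking subroutine and not run the full kernel, which keeps the running time at $f_{\C{H}_{\sf cdl}}(k)\cdot n^{\C{O}(1)}$ as claimed.
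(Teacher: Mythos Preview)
Your proposal is correct and follows essentially the same approach as the paper: the paper does not give a standalone proof but instead points to Lemma~4.1 and Section~3.3 of~\cite{DBLP:journals/talg/AgrawalLMSZ19} (with the idea originating in~\cite{Marx10}), and makes exactly the two adaptations you identified --- using the promise $\hhmdp{\C{H}_{\sf cdl}}(G)\le k$ so the algorithm never returns ``no'', and, whenever a cited reduction rule would delete a vertex and decrement $k$, instead placing that vertex into the relevant set $X$. Your explicit description of computing a modulator, passing to a clique tree of $G-S$, locating $G'-S$ in a bag, and applying the marking lemma there is a faithful unpacking of what the paper leaves implicit.
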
 

A result similar to the above can be obtained for the family of interval graphs, say, using Lemma 4.2 and Section 5 of~\cite{AgrawalM0Z19}.  

\begin{observation}\label{obs:interval-app}
There are computable (non-decreasing) functions $f_{\sf int}, g_{\sf int}: \mathbb{N} \rightarrow \mathbb{N}$, and an algorithm, which given an $n$-vertex graph $G$, an integer $k \geq 1$ and a maximal clique $G'$ in $G$ such that: i) $\hhmdp{\C{H}_{\sf int}}(G) \leq k$, and ii) $|V(G')| \geq g_{\C{H}_{\sf int}}(k)$; in time bounded by $f_{\C{H}_{\sf int}}(k) \cdot n^{\C{O}(1)}$, outputs a set $X \subseteq V(G')$, such that $|X| \leq g_{\C{H}_{\sf int}}(k)$ and each vertex in $V(G') \setminus X$ is $(\C{H}_{\sf int},k)$-irrelevant in $G$. 
\end{observation}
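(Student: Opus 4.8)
\textbf{Proof proposal for Observation~\ref{obs:interval-app}.}

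The plan is to mirror the proof structure underlying Observation~\ref{obs:chordal-app}, substituting the known kernelization machinery for \probIVD\ in place of the one for \probCVD. The starting point is that we are given an $n$-vertex graph $G$, an integer $k\geq 1$, and a maximal clique $G'$ in $G$ with $|V(G')|\geq g_{\C{H}_{\sf int}}(k)$ for a suitably chosen function $g_{\C{H}_{\sf int}}$, together with the promise $\hhmdp{\C{H}_{\sf int}}(G)\leq k$. First I would invoke the structural analysis from the kernelization algorithm for \probIVD\ (see Lemma~4.2 and Section~5 of~\cite{AgrawalM0Z19}, and also~\cite{DBLP:journals/talg/AgrawalLMSZ19}), which shows that once an interval deletion set $D$ of size at most $k$ is fixed, the remaining graph $G-D$ is an interval graph and hence admits a clique path; the vertices of any large clique of $G$ must, up to the $\leq k$ vertices of $D$, all lie inside a single bag (maximal clique) of that clique path. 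This is precisely the situation in which the irrelevant-vertex (marking) argument applies: among the vertices of $V(G')\setminus D$ lying in a common bag, all but a bounded number behave identically with respect to every obstruction from $\mathbb{O}_{\sf int}$ that could be completed through $G'$, because an obstruction passing through the clique can use at most a bounded number of its vertices (each obstruction in $\mathbb{O}_{\sf int}$ has bounded size apart from the ``long'' path part, and the long part cannot be routed through a clique).

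The key steps, in order, would be: (i) fix the promise that $\hhmdp{\C{H}_{\sf int}}(G)\leq k$ and note that, because $G'$ is a clique and $|V(G')|$ exceeds a threshold depending only on $k$, at least $|V(G')|-k$ vertices of $G'$ survive any optimal (or any size-$\leq k$) interval deletion set, so these survivors lie in a common maximal clique of the resulting interval graph's clique path; (ii) apply the reduction-rule analysis of~\cite{AgrawalM0Z19} (specifically the rule that marks a bounded number of vertices inside a large clique-path bag as the only ``relevant'' ones, declaring the rest deletable/irrelevant), which runs in time $f_{\C{H}_{\sf int}}(k)\cdot n^{\C{O}(1)}$ and produces a set $X\subseteq V(G')$ with $|X|\leq g_{\C{H}_{\sf int}}(k)$; (iii) verify that each $v\in V(G')\setminus X$ is $(\C{H}_{\sf int},k)$-irrelevant in $G$ in the sense of Definition~\ref{def:irrelev-one}, i.e.\ that $G$ has an interval deletion set of size $\leq k$ iff $G-\{v\}$ does --- the forward direction is trivial since $\C{H}_{\sf int}$ is hereditary, and the backward direction follows because a size-$\leq k$ interval deletion set of $G-v$ extends to one of $G$ using the marked/unmarked structure: if an obstruction in $G$ uses $v$, then since the unmarked vertices of $G'$ are interchangeable with respect to obstructions, one of the marked vertices in $X$ (or the structure guaranteeing $v$'s redundancy) lets us re-route or absorb the obstruction without increasing the solution size; (iv) observe that the reduction rule of~\cite{AgrawalM0Z19} never deletes a vertex while decrementing $k$ without our being able to instead record that vertex in a ``relevant set'', and that, under the promise, it never returns \textsf{no}, so the algorithm always outputs a valid $X$; finally set $f_{\C{H}_{\sf int}}$ and $g_{\C{H}_{\sf int}}$ to be the (non-decreasing) bounds extracted from that algorithm.

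I expect the main obstacle to be step~(iii): carefully justifying the backward direction of $(\C{H}_{\sf int},k)$-irrelevance, because the obstruction set $\mathbb{O}_{\sf int}$ is richer than a single clique obstruction --- it contains the long claw, whipping top, $\longobs$, $\dlongobs$, nets, tents, and holes, all potentially of unbounded size via their ``long path'' components. The argument must show that none of the ``long'' features of these obstructions can be threaded through the large clique $G'$ (since a chordless path of length $\geq 2$ cannot live inside a clique, and a hole uses at most two consecutive vertices of any clique), so that each obstruction interacts with $V(G')\setminus X$ in a uniform, bounded way; this is exactly the content encapsulated in Lemma~4.2 of~\cite{AgrawalM0Z19}, so the bulk of the work is to cite and instantiate that lemma correctly rather than to re-derive it. Once that interchangeability is in hand, the irrelevance claim and the running-time bound follow routinely, and the observation is proved. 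An entirely analogous remark explains why Observation~\ref{obs:chordal-app} is simpler: for chordal graphs the only unbounded obstruction is the hole, which again meets any clique in at most two vertices, making the marking argument cleaner.
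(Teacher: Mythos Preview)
Your proposal is correct and takes essentially the same approach as the paper: both reduce the observation to the marking/irrelevant-vertex machinery from Lemma~4.2 and Section~5 of~\cite{AgrawalM0Z19}, with the paper simply citing that reference while you spell out the underlying argument in more detail. Your elaboration of why the long obstructions in $\mathbb{O}_{\sf int}$ interact with a clique in a bounded way is accurate and is precisely what those cited results encapsulate.
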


Next we state a result which will help us find an irrelevant vertex for the case of $\C{H}_{\sf bip}$, which can be obtained from Lemma 3.2 of~\cite{DBLP:journals/corr/LokshtanovRS17}. 

\begin{observation}\label{obs:bipartite-app}
There are computable (non-decreasing) functions $f_{\sf bip}, g_{\sf bip}: \mathbb{N} \rightarrow \mathbb{N}$, and an algorithm, which given an $n$-vertex graph $G$, an integer $k \geq 1$, and a well-linked set $Z$ in $G$ such that: i) $\hhmdp{\C{H}_{\sf bip}}(G) \leq k$, and ii) $|Z| \geq g_{\C{H}_{\sf bip}}(k)$;\footnote{For the definition of well-linked sets, please see~\cite{DBLP:journals/corr/LokshtanovRS17}} in time bounded by $f_{\C{H}_{\sf bip}}(k) \cdot n^{\C{O}(1)}$, outputs a set $X \subseteq Z$, such that $|X| \leq g_{\C{H}_{\sf bip}}(k)$ and each vertex in $V(G') \setminus X$ is $(\C{H}_{\sf bip},k)$-irrelevant in $G$.
\end{observation}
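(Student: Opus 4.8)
The plan is to obtain Observation~\ref{obs:bipartite-app} from the irrelevant-vertex rule for \probOCT on highly connected instances, namely Lemma~3.2 of~\cite{DBLP:journals/corr/LokshtanovRS17}, after converting the abstract ``well-linked set'' hypothesis into the flow/connectivity certificate that rule consumes. First I would fix notation: write $\oct(G)=\hhmdp{\C{H}_{\sf bip}}(G)$, and fix an optimal odd cycle transversal $S$ of $G$, so $|S|\le k$ and $G-S$ is bipartite, say with proper $2$-colouring $(A,B)$. Because well-linkedness is inherited by a set after deleting at most $k$ vertices, $Z'=Z\setminus S$ is still well-linked in $G$ and $|Z'|\ge |Z|-k$. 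From $Z'$ one extracts, by repeated use of Menger's theorem, a family of $\Theta(|Z'|)$ pairwise vertex-disjoint $Z'$--$Z'$ paths (equivalently, a large ``flow-sink'' structure carved out of $Z'$); this is precisely the kind of high-connectivity witness around which Lemma~3.2 of~\cite{DBLP:journals/corr/LokshtanovRS17} is formulated.

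The heart of the argument is then the following interchangeability principle, which is what that lemma establishes. In $G-S$ the bipartition $(A,B)$ is essentially unique per connected component (up to swapping the two sides), so a vertex $z\in Z'$ influences the odd-cycle structure of $G$ only through (its component of $G-S$, with one of its two $2$-colourings fixed) and through the parities of its connections into $S$; since $|S|\le k$, there are at most $2^{k}$ such ``interaction types''. Within one interaction type, the abundance of vertex-disjoint paths furnished above lets one reroute any odd cycle that passes through a given vertex of $Z'$ through a different vertex of the same type, so that all but a bounded number of vertices per type are dispensable. Running the marking procedure of Lemma~3.2 on our flow structure therefore outputs, in time $f(k)\cdot n^{\Oh(1)}$, a set $X\subseteq Z$ of size bounded by a function of $k$ such that every $z\in Z\setminus X$ is irrelevant for \probOCT with budget $k$; unwinding the definition (and using that $\oct$ is non-increasing under vertex deletion) this is exactly the biconditional of Definition~\ref{def:irrelev-one}, so each such $z$ is $(\C{H}_{\sf bip},k)$-irrelevant in $G$. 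Taking $g_{\C{H}_{\sf bip}}$ to be the maximum of this size bound and of the connectivity threshold demanded by Lemma~3.2, and $f_{\C{H}_{\sf bip}}$ to be $f$ (both made non-decreasing by taking suprema over smaller arguments), gives the claimed functions.

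The main obstacle is the interface bookkeeping: Lemma~3.2 of~\cite{DBLP:journals/corr/LokshtanovRS17} is phrased for instances that are highly connected in a specific technical sense (a large flow whose sinks lie in a prescribed set, or a $(q,k)$-unbreakable core), whereas here we are only handed an abstract well-linked subset $Z$. Bridging this gap---showing that a well-linked $Z$ of size $\ge g_{\C{H}_{\sf bip}}(k)$ can, by standard Menger-type extraction, be trimmed to a well-linked subset that yields exactly the required certificate, and that the marked set the lemma returns can be taken inside $Z$---is routine but must be carried out carefully. A further point to check, which is genuinely used later, is that the same vertices remain irrelevant after gluing: applying the above to $G\oplus O$ for each $O\in{\sf RBPO}(\C{H}_{\sf bip},t)$ (where the OCT budget may rise past $k$ because $O$ is itself an obstruction) and invoking Lemma~\ref{lemma:irrelevance} upgrades ``$(\C{H}_{\sf bip},k)$-irrelevant in $G$'' to boundaried obstruction irrelevance; this requires no new idea beyond the rerouting argument above, applied with the small graph $O$ absorbed into the ``interaction into $S$'' part of the analysis. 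No new combinatorics beyond Lemma~3.2 is needed.
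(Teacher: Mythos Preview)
The paper gives no proof of this observation at all; it simply asserts in the sentence preceding it that the statement ``can be obtained from Lemma~3.2 of~\cite{DBLP:journals/corr/LokshtanovRS17}'' and moves on. Your proposal takes the same route---invoke that lemma---so at its core it matches the paper.

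That said, the extra detail you supply blurs the line between the algorithm and its analysis. You describe extracting disjoint paths from $Z'=Z\setminus S$ and classifying vertices by ``interaction type'' with respect to $S$ and the bipartition $(A,B)$, but the algorithm is not given $S$; it only knows that $\oct(G)\le k$. If what you mean is that the algorithm runs the marking procedure of Lemma~3.2 as a black box on the well-linked set $Z$, and the material about $S$, $Z'$, and interaction types is merely intuition for why that lemma is correct, you should say so explicitly---otherwise the write-up reads as though the procedure depends on the unknown optimal transversal. Also, your final paragraph about gluing with each $O\in{\sf RBPO}(\C{H}_{\sf bip},t)$ to upgrade to boundaried irrelevance is not part of this observation; in the paper that step is carried out separately in Lemma~\ref{lem:convert-boundaried-irrelevant}, which consumes Observation~\ref{obs:bipartite-app} as a subroutine.
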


In the next lemma we show how we can turn Observation~\ref{obs:chordal-app} to~\ref{obs:bipartite-app} to find a boundaried $(\C{H},k)$-irrelevant vertex (together with the known algorithms for deletion to $\C{H} \in \mathbb{H}$~\cite{ReedSV04,Marx10,CaoM15,FominLPSZ20,CyganFKLMPPS15,DBLP:journals/corr/LokshtanovRS17}).   

\begin{lemma}\label{lem:convert-boundaried-irrelevant}
Consider $\C{H} \in \{\C{H}_{\sf chl}, \C{H}_{\sf int}, \C{H}_{\sf bip}\}$. There are computable functions $f^*_{\C{H}}, g^*_{\C{H}} : \mathbb{N} \times \mathbb{N} \rightarrow \mathbb{N}$, and an algorithm, which given a $t$-boundaried graph $G$, where $t$ is the largest number in $\Lambda(G)$, an integer $k \geq 0$, and an equivalence class $Q = (t^*, \Lambda \subseteq [t^*], (z_1,z_2, \cdots, z_{q[\C{H}_{\sf cdl}, t^*]}))$, such that if $t^* \neq 0$, then $t^* \in \Lambda$; in time bounded by $f^*_{\C{H}}(t,k) \cdot n^{\C{O}(1)}$ it does one of the following:
\begin{enumerate}
\item correctly conclude that $\tw(G) \leq g^*_{\C{H}}(t+k)$,\footnote{The treewidth of a boundaried graph is same as the treewidth of its unboundaried counterpart.},
\item find a boundaried $(\C{H},k)$-irrelevant vertex $v \in V(G) \setminus \delta(G)$, or
\item correctly conclude that there is no $S \subseteq V(G)$ of size at most $k$, such that $G-S$ belongs to an equivalence class $Q'$, where ${\sf ug}_{\C{H}_{\sf cdl}}(Q',Q) = 1$ (see Definition~\ref{def:ug-function-apps}). 
\end{enumerate}   
\end{lemma}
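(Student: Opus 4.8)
\textbf{Proof proposal for Lemma~\ref{lem:convert-boundaried-irrelevant}.}
The plan is to reduce the statement to the single-graph irrelevance results recorded in Observations~\ref{obs:chordal-app}--\ref{obs:bipartite-app}, combined with Lemma~\ref{lemma:irrelevance}, which tells us that for $\C{H}\in\mathbb{H}$ a non-boundary vertex $v$ of a $t$-boundaried graph is boundaried $(\C{H},k)$-irrelevant if and only if it is $(\C{H},k)$-irrelevant in $G\oplus O$ for every $O\in{\sf RBPO}(\C{H},t)$. So the first step is to fix the target equivalence class $Q$ and recall, via Observation~\ref{obs:obst-t-boundary} (i.e.\ $\mathsf{RBPO}(\C{H},t^*)$), the finite list of relevant $t^*$-boundaried partial obstructions. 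We then form the (finitely many) glued graphs $G\oplus P$ for $P\in{\sf RBPO}(\C{H},t^*)$; since $|V(P)|$ is bounded by a function of $t^*$, these graphs differ from $G$ only by a bounded number of vertices, so $\tw(G\oplus P)\le \tw(G)+h(t^*)$ and in particular any $k$-sized $\C{H}$-deletion set of $G\oplus P$ witnesses $\hhmdp{\C{H}}(G\oplus P)\le k+\mathcal{O}(|V(P)|)$.

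The second step is the structural dichotomy. For each $P$, run the relevant known FPT/approximation algorithm for deletion to $\C{H}$ (\cite{Marx10,ReedSV04,CyganFKLMPPS15,DBLP:journals/corr/LokshtanovRS17}) on $(G\oplus P, k+\mathcal{O}(|V(P)|))$. If all of these return ``no'', then in particular there is no $S\subseteq V(G)$ of size at most $k$ with $G\oplus P-S\in\C{H}$ for the obstruction $P$ that is ``hardest'' in the sense of Definition~\ref{def:app-ug}; unwinding the definition of ${\sf ug}_{\C{H}}$ this lets us output conclusion (3), namely that no $S$ of size $\le k$ moves $G$ into an equivalence class at least as good as $Q$. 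Otherwise, we obtain for each $P$ a small deletion set $S_P$, hence a (tree/clique-tree/bipartite) decomposition of $G\oplus P-S_P$, and we ask whether $\tw(G)$ is large. This is where the ``large structure'' hypotheses feed in: if $\tw(G)>g^*_\C{H}(t+k)$, then $G$ itself contains a large clique (for $\C{H}_{\sf cdl}$, $\C{H}_{\sf int}$) or a large well-linked set (for $\C{H}_{\sf bip}$)---a large clique survives in the clique-tree after deleting a $k$-set, and a large well-linked set is extracted from a large grid minor---and in each case it survives in every $G\oplus P$ too, up to removing the bounded boundary of $P$. Feed this large substructure into the $(\what{\C{G}},\wtilde{\C{G}},\C{H})$-irrelevance detector of Observation~\ref{obs:chordal-app}/\ref{obs:interval-app}/\ref{obs:bipartite-app} for each $G\oplus P$: each yields a set $V(G')\setminus X_P$ of $(\C{H},k)$-irrelevant vertices inside the common substructure, with $|X_P|$ bounded by $g_\C{H}(k+\mathcal{O}(|V(P)|))$. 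Since $|{\sf RBPO}(\C{H},t^*)|$ is bounded by $\zeta[\C{H}](t^*)$, the union $\bigcup_P X_P$ is still bounded, so a sufficiently large common substructure contains a vertex $v$ that avoids every $X_P$ and lies outside $\delta(G)$. By construction $v$ is $(\C{H},k)$-irrelevant in $G\oplus P$ for all $P\in{\sf RBPO}(\C{H},t^*)$, hence by Lemma~\ref{lemma:irrelevance} it is boundaried $(\C{H},k)$-irrelevant, and we output it for conclusion (2). If $\tw(G)\le g^*_\C{H}(t+k)$ we output conclusion (1). The running-time bound $f^*_\C{H}(t,k)\cdot n^{\C{O}(1)}$ follows because each of the finitely many subroutines runs in $f_\C{H}(k')\cdot n^{\C{O}(1)}$ time with $k'$ a function of $t$ and $k$, treewidth can be $2$-approximated in single-exponential FPT time, and ${\sf RBPO}(\C{H},t^*)$ is computed in time $\zeta[\C{H}](t^*)$.

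The main obstacle I anticipate is making precise the claim that the ``large substructure'' of $G$ (clique or well-linked set) is genuinely common to all the glued graphs $G\oplus P$ and is compatible with the hypothesis formats of Observations~\ref{obs:chordal-app}--\ref{obs:bipartite-app}, which want a \emph{maximal} clique (resp.\ a well-linked set) in the glued graph rather than in $G$ alone: attaching $P$ can enlarge a clique or alter well-linkedness near the boundary. The fix is to work with the substructure induced strictly on $V(G)\setminus\delta(G)$ (or to observe that the bounded boundary can only affect a bounded suffix of the marked set, which is then absorbed into $X_P$), and to invoke the detector on a maximal clique/well-linked set of $G\oplus P$ that \emph{contains} the common one---this only shrinks the irrelevant set by a bounded amount. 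A secondary technical point is the exact translation between conclusion (3) and the phrasing in terms of ${\sf ug}_{\C{H}}$ and Definition~\ref{def:app-ug}: one must verify that ``no $k$-deletion reaches a class at least as good as $Q$'' is equivalent to failure on the single worst obstruction in ${\sf RBPO}(\C{H},t^*)$ compatible with $Q$, which is a direct unpacking of the $[\C{H},t]$-behaviour vector and the definition of ``at least as good''.
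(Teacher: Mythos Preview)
Your proposal follows the paper's approach closely: glue $G$ with each $P\in{\sf RBPO}(\C{H},t)$, invoke the known FPT deletion algorithm on each glued graph, extract a common large clique (or well-linked set), run the single-graph irrelevance detectors of Observations~\ref{obs:chordal-app}--\ref{obs:bipartite-app} on each glued graph to mark bounded sets $X_P$, and return any unmarked non-boundary vertex of the common structure as boundaried irrelevant via Lemma~\ref{lemma:irrelevance}. This is exactly what the paper does (spelled out for the chordal case).

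The one concrete slip to fix is the trigger for conclusion~(3). It is not ``all of these return no'' but rather ``\emph{some} $P_i$ with $z_i=1$ gives a no-instance on $(G\oplus P_i,k)$''. A single such witness already certifies that no $S$ of size at most $k$ can land $G-S$ in a class at least as good as $Q$, since by Definitions~\ref{def:app-ug} and~\ref{def:ug-function-apps} such a class requires $(G-S)\oplus P_i\in\C{H}$ precisely for those $i$ with $z_i=1$. With your quantifier as written, the subsequent step ``otherwise, we obtain for each $P$ a small deletion set $S_P$'' does not follow, because some glued graphs may still be no-instances while others are not. The paper restricts the no-instance test to indices with $z_i=1$ and uses parameter $k$ directly (the $+\mathcal{O}(|V(P)|)$ slack you add is unnecessary for that step). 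Your anticipated obstacle about the common clique not being maximal in each $G\oplus P$ is real, and your proposed fix---extend it to a maximal clique of the glued graph and absorb the bounded overshoot into $X_P$---is exactly the right move.
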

\begin{proof}
We explain the proof for the case when $\C{H} = \C{H}_{\sf cdl}$, as the proof for all the other cases can be obtained in a similar fashion. Recall that we have an ordering on ${\sf RPBO}(\C{H}_{\sf cdl},t)$, which has allowed us to assume that ${\sf RBPO}(\C{H},t) = \{P[\C{H},t,1], P[\C{H},t,2], \cdots,$ $P[\C{H},t,q[\C{H},t]]\}$. Let $q = |{\sf RPBO}(\C{H}_{\sf cdl},t)|$. The problem {\sc Chordal Vertex Deletion} admits an {\FPT} algorithm running in time $f(k) \cdot n^{\C{O}(1)}$, where $f$ is a computable function~\cite{Marx10}.

Let $g^*_{\C{H}}(t',k') = |{\sf RPBO}(\C{H}_{\sf cdl},t')| \cdot g_{\C{H}_{\sf cdl}}(k') + \sum_{i \in [q[\C{H}_{\sf cdl}, t']]} |V(P[\C{H}_{\sf cdl}, t'])| + k' + 1 $, for $t',k' \in \mathbb{N}$, where $g_{\C{H}_{\sf cdl}}$ is the function from Observation~\ref{obs:chordal-app}. For each $i \in [q]$, such that $z_i = 1$, we do the following:  
\begin{enumerate}
\item If $(G\oplus P[\C{H},t,i], k)$ is a no-instance of {\sc Chordal Vertex Deletion}, then report that there is no $S \subseteq V(G)$ of size at most $k$, such that $G-S$ belongs to an equivalence class $Q'$, where ${\sf ug}_{\C{H}_{\sf cdl}}(Q',Q) = 1$. The correctness of this step of the algorithm follows from its description. Moreover, it can be executed in time bounded by $f(k) \cdot n^{\C{O}(1)}$. 

\item We now assume that $(G\oplus P[\C{H},t,i], k)$ is a yes-instance of {\sc Chordal Vertex Deletion}, and we compute a tree decomposition $(T,\chi)$ for $G\oplus P[\C{H},t,i]$, where each bag is a clique along with at most $k+1$ more vertices (see~\cite{golumbic2004algorithmic}). If each bag in the tree decomposition has at most $g^*_{\C{H}}(t) + 1$ vertices, then the tree width of $G\oplus P[\C{H},t,i]$ (and thus, $G$) can be bounded by $g^*_{\C{H}}(t,k)$. Note that the above step can be done in time bounded by $f(k) \cdot n^{\C{O}(1)}$.  

\item Otherwise, we can obtain that there is a clique $G'$ in $G \oplus P[\C{H},t]$, such that $V(G') \subseteq V(G)$, and ii) $|V(G')| > |{\sf RPBO}(\C{H}_{\sf cdl},t)| \cdot g_{\C{H}_{\sf cdl}}(k)$ vertices. Now using Observation~\ref{obs:obst-t-boundary}, we obtain $X_i\subseteq V(G')$, such that: i) $|X_i| \leq g_{\C{H}_{\sf cdl}}(k)$, and ii) each $v \in V(G') \setminus X_i$ is $(\C{H}_{\sf cdl},k)$-relevant in $G$.  
\end{enumerate}

Let $X = \cup_{i \in q} X_i$ and $Y = V(G' \setminus )$. We note that we use the same $G'$ to compute $X_i$s, for each $i \in [q]$. Note that $|X|\leq |{\sf RPBO}(\C{H}_{\sf cdl},t)| \cdot g_{\C{H}_{\sf cdl}}(k)$ and $|V(G')| > |{\sf RPBO}(\C{H}_{\sf cdl},t)| \cdot g_{\C{H}_{\sf cdl}}(k)$, and thus $Y \neq \emptyset$. Moreover, notice that each $v \in Y$ is a boundaried $(\C{H}_{\sf cdl},k)$-irrelevant vertex. This concludes the proof. 
\end{proof}

Using Lemma 3.1, Theorem 6 and 7 from~\cite{DBLP:journals/corr/abs-1904-02944}, we can obtain the following result for $\C{H}_{\C{F}_{\sf top}}$. Roughly speaking, the notion of ``irrelevance'' with respect to extended-folio presented in~\cite{DBLP:journals/corr/abs-1904-02944} is a stronger notion than the one we give in Definition~\ref{def:irrelev-one} (which is equivalent to Definition~\ref{def:irrelev-two}, see Lemma~\ref{lemma:irrelevance}). This allows us to directly borrow their algorithmic detection of irrelevant vertices for our case.  

\begin{observation}\label{obs:top-app}
Consider a fixed finite family of graphs $\C{F}$. There are computable functions $f^*_{\C{F}_{\sf top}}, g^*_{\C{F}_{\sf top}}: \mathbb{N} \times \mathbb{N} \rightarrow \mathbb{N}$, and an algorithm, which given an $n$-vertex $t$-boundaried graph $G$ and an integer $k \geq 1$, in time bounded $f^*_{\C{F}_{\sf top}}(t,k) \cdot n^{\C{O}(1)}$, either correctly concludes that $\tw(G) \leq g^*_{\C{F}_{\sf min}}(t,k)$, or finds a boundaried $(\C{H}_{{\C{F}}_{\sf top}},k)$-irrelevant vertex $v \in V(G) \setminus \delta(G)$. 
\end{observation}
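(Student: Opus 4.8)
The plan is to reduce the statement to the topological-minor-deletion irrelevant-vertex machinery of Fomin et al.~\cite{DBLP:journals/corr/abs-1904-02944} and then convert the strong ``extended folio'' form of irrelevance provided there into boundaried $(\C{H}_{\C{F}_{\sf top}},k)$-irrelevance via Lemma~\ref{lemma:irrelevance}. First I would observe that, by Lemma~\ref{lemma:irrelevance}, a vertex $v\in V(G)\setminus\delta(G)$ is boundaried $(\C{H}_{\C{F}_{\sf top}},k)$-irrelevant if and only if it is $(\C{H}_{\C{F}_{\sf top}},k)$-irrelevant in $G\oplus O$ for every $O\in{\sf RBPO}(\C{H}_{\C{F}_{\sf top}},t)$, and by Observation~\ref{obs:obst-t-boundary} this set is finite and each member has $\C{O}_t(1)$ vertices. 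Hence it suffices to produce a single vertex of $G$ that is simultaneously $(\C{H}_{\C{F}_{\sf top}},k)$-irrelevant in all of the boundedly many graphs $G\oplus O$, each of which is obtained from $G$ by attaching a bounded-size gadget whose new vertices ($V(O)\setminus\delta(O)$) are adjacent only to $\delta(O)\subseteq\delta(G)$, together with possibly some extra edges inside $\delta(G)$.

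Second, I would invoke the irrelevant-vertex rule of~\cite{DBLP:journals/corr/abs-1904-02944} (their Lemma~3.1 and Theorems~6 and~7). Treating $\delta(G)$ as a bounded set of ``terminal'' vertices, that rule gives a computable bound $h$ and an FPT-time algorithm which, on input $(G,\delta(G),k)$, either certifies $\tw(G)\le h(t+k)$ or returns a vertex $v\in V(G)\setminus\delta(G)$ such that $G$ and $G-v$ have the same extended $k$-folio with respect to $\delta(G)$. The two facts I would extract are: \emph{(i)} having the same extended $k$-folio with respect to a terminal set $B$ is strictly stronger than $(\C{H}_{\C{F}_{\sf top}},k)$-irrelevance (the extended folio records, beyond the plain folio, the information needed to simulate deleting up to $k$ vertices and tracking the resulting topological-minor models through $B$); and \emph{(ii)} this equality is preserved under gluing along $B$: if $G$ and $G-v$ have the same extended $k$-folio with respect to $B$, then for every graph $\Gamma$ obtained from $G$ by adding vertices and edges incident only to $B$, $(\Gamma,k)$ is a yes-instance of \textsc{Topological Minor Deletion} if and only if $(\Gamma-v,k)$ is. Setting $g^*_{\C{F}_{\sf top}}(t,k):=h(t+k)$, enlarged by a constant depending only on $t$ and $\C{F}$ so that the flat wall underlying the rule still contains a vertex avoiding $\delta(G)$ as well as the $\C{O}_t(1)$ marked vertices, the algorithm simply runs the rule on $(G,\delta(G),k)$: if it certifies bounded treewidth we output the first alternative; otherwise we output the returned vertex $v$.

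Third, I would verify the second alternative. For each $O\in{\sf RBPO}(\C{H}_{\C{F}_{\sf top}},t)$, the graph $G\oplus O$ arises from $G$ by adding the $\C{O}_t(1)$ vertices of $V(O)\setminus\delta(O)$, all adjacent only to $\delta(O)\subseteq\delta(G)$, plus possibly edges inside $\delta(G)$; hence by fact~\emph{(ii)}, $(G\oplus O,k)$ is a yes-instance of \textsc{Topological Minor Deletion} if and only if $((G\oplus O)-v,k)=((G-v)\oplus O,k)$ is, i.e. $v$ is $(\C{H}_{\C{F}_{\sf top}},k)$-irrelevant in $G\oplus O$. Since $v$ works for every such $O$, Lemma~\ref{lemma:irrelevance} shows that $v$ is boundaried $(\C{H}_{\C{F}_{\sf top}},k)$-irrelevant, which is precisely what is required. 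The running time is $f^*_{\C{F}_{\sf top}}(t,k)\cdot n^{\C{O}(1)}$, inherited from the cited FPT algorithm (which internally uses a treewidth approximation), with only polynomial overhead for constructing $\delta(G)$ as the terminal set and reading off $v$.

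The main obstacle I anticipate is fact~\emph{(ii)}: carefully checking that the extended-folio equivalence used in the irrelevant-vertex argument of~\cite{DBLP:journals/corr/abs-1904-02944} is genuinely robust under attaching arbitrary bounded-size structure along $\delta(G)$ — in particular, that their framework permits treating $\delta(G)$ as a fixed terminal/apex set throughout, that the extra edges among $\delta(G)$ contributed by the obstructions $O$ cause no trouble (they can be absorbed by taking the folio in the torso of $\delta(G)$, or by a small enlargement of $B$), and that the folio parameter is chosen large enough relative to $k$ that no size-$\le k$ deletion set can distinguish $G$ from $G-v$ after gluing. The remaining steps — the reduction via Lemma~\ref{lemma:irrelevance}, finiteness of ${\sf RBPO}(\C{H}_{\C{F}_{\sf top}},t)$, and the bookkeeping of $f^*_{\C{F}_{\sf top}},g^*_{\C{F}_{\sf top}}$ — are routine.
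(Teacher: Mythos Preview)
Your proposal is correct and follows essentially the same approach as the paper. The paper does not give a detailed proof of this observation: the single paragraph preceding it states that the result follows from Lemma~3.1 and Theorems~6 and~7 of~\cite{DBLP:journals/corr/abs-1904-02944}, and that the extended-folio notion of irrelevance there is strictly stronger than Definition~\ref{def:irrelev-one} (hence, via Lemma~\ref{lemma:irrelevance}, it yields boundaried $(\C{H}_{\C{F}_{\sf top}},k)$-irrelevance directly). Your write-up is in fact more explicit than the paper about the mechanism --- you route through ${\sf RBPO}(\C{H}_{\C{F}_{\sf top}},t)$ and spell out the gluing-robustness property (your fact~\emph{(ii)}) that the paper simply asserts by saying ``stronger'' --- but the underlying argument is the same.
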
 

We will next prove a lemma that will allow us to find irrelevant vertices for $\C{H}_{\C{F}_{\sf ind}}$, where $\C{F}$ is a finite family of graphs, using an application of the Sunflower Lemma~\cite{erdos1960intersection, CyganFKLMPPS15}.  

\begin{lemma}\label{lem:ind-app}
Consider a fixed finite family of graphs $\C{F}$. There are computable functions $f_{\C{F}_{\sf ind}}, g_{\C{F}_{\sf ind}}: \mathbb{N} \rightarrow \mathbb{N}$, and an algorithm, which given an $n$-vertex $t$-boundaried graph $G$ and an integer $k \geq 1$, where $n \geq g_{\C{F}_{\sf ind}}(k)$, in time bounded $f_{\C{F}_{\sf ind}}(k) \cdot n^{\C{O}(1)}$, finds a boundaried $(\C{H}_{\C{F}_ {\sf ind}},k)$-irrelevant vertex $v \in V(G) \setminus \delta(G)$. 
\end{lemma}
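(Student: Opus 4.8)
The plan is to combine Lemma~\ref{lemma:irrelevance} with the classical Sunflower-Lemma based kernelization for $d$-\textsc{Hitting Set}, exploiting the fact that, since $\C{H}_{\C{F}_{\sf ind}}$ is characterized by the \emph{finite} induced-obstruction set $\mathbb{O}_{\C{F}_{\sf ind}}=\C{F}$, every obstruction, and hence every relevant boundaried partial obstruction, has at most $c:=\max_{F\in\C{F}}|V(F)|$ vertices (and we tacitly assume the members of $\C{F}$ are nonempty, the other case being degenerate). First, by Lemma~\ref{lemma:irrelevance} it suffices to produce a vertex $v\in V(G)\setminus\delta(G)$ that is $(\C{H}_{\C{F}_{\sf ind}},k)$-irrelevant in $G\oplus O$ \emph{simultaneously} for every $O$ in ${\sf RBPO}(\C{H}_{\C{F}_{\sf ind}},t)$, which by Observation~\ref{obs:obst-t-boundary} is computable and has size $N\le\zeta[\C{H}_{\C{F}_{\sf ind}}](t)$, each member having at most $c$ vertices.

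So fix one such $O$ and set $\Gamma:=G\oplus O$. Let $\mathcal{F}_\Gamma$ be the family of vertex sets $W\subseteq V(\Gamma)$ with $|W|\le c$ and $\Gamma[W]$ isomorphic to a member of $\C{F}$; then $\hhmdp{\C{H}_{\C{F}_{\sf ind}}}(\Gamma)\le k$ iff $\mathcal{F}_\Gamma$ has a hitting set of size at most $k$, and for any vertex $v$ we have $\mathcal{F}_{\Gamma-v}=\{W\in\mathcal{F}_\Gamma : v\notin W\}\subseteq\mathcal{F}_\Gamma$ (deleting a vertex can only destroy induced copies of members of $\C{F}$, never create them). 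We enumerate $\mathcal{F}_\Gamma$ in time $|\C{F}|\cdot n^{\C{O}(1)}$ and apply the standard Sunflower reduction: while some size-class of $\mathcal{F}_\Gamma$ has more than $c!\,(k+1)^c$ sets, find by the Sunflower Lemma a sunflower with $k+2$ petals and delete one petal. The usual argument shows that any $S$ with $|S|\le k$ hitting the reduced family must also hit the deleted petal, so throughout the process the collection of size-$\le k$ hitting sets is unchanged. We end with a subfamily $\mathcal{F}'_\Gamma\subseteq\mathcal{F}_\Gamma$ of size at most $c\cdot c!\,(k+1)^c$ such that (i) $\mathcal{F}_\Gamma$ and $\mathcal{F}'_\Gamma$ have exactly the same hitting sets of size $\le k$, and (ii) $\mathcal{F}'_\Gamma\subseteq\mathcal{F}_\Gamma$. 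Put $R_\Gamma:=\bigcup\mathcal{F}'_\Gamma$, so $|R_\Gamma|\le c^2\cdot c!\,(k+1)^c=:\beta(k)$.

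The key claim is that every $v\in V(G)\setminus(\delta(G)\cup R_\Gamma)$ is $(\C{H}_{\C{F}_{\sf ind}},k)$-irrelevant in $\Gamma$. The direction ``$\Gamma$ has a size-$\le k$ solution $\Rightarrow$ $\Gamma-v$ does'' is immediate from hereditariness (equivalently, from $\mathcal{F}_{\Gamma-v}\subseteq\mathcal{F}_\Gamma$). For the converse, let $S\subseteq V(\Gamma)\setminus\{v\}$ hit $\mathcal{F}_{\Gamma-v}$ with $|S|\le k$. Since $v\notin R_\Gamma$, every set of $\mathcal{F}'_\Gamma$ avoids $v$, so $\mathcal{F}'_\Gamma\subseteq\mathcal{F}_{\Gamma-v}$ and $S$ hits $\mathcal{F}'_\Gamma$; by (i), $S$ then hits $\mathcal{F}_\Gamma$, i.e.\ $\Gamma-S\in\C{H}_{\C{F}_{\sf ind}}$. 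This proves the claim. Repeating for each $O_i\in{\sf RBPO}(\C{H}_{\C{F}_{\sf ind}},t)$ gives sets $R_{\Gamma_1},\dots,R_{\Gamma_N}$ (with $\Gamma_i=G\oplus O_i$); we output any $v\in(V(G)\setminus\delta(G))\setminus\bigcup_{i=1}^N R_{\Gamma_i}$, which by the claim and Lemma~\ref{lemma:irrelevance} is a boundaried $(\C{H}_{\C{F}_{\sf ind}},k)$-irrelevant vertex. Such a $v$ exists whenever $n-t>N\beta(k)$, which holds once $n\ge g_{\C{F}_{\sf ind}}(t,k):=t+N\beta(k)+1$ (a function of $k$ once $t\le 2k$, which is the case in all our applications), and the running time is $N$ times the cost of enumerating $\mathcal{F}_{\Gamma_i}$ and running the Sunflower reduction, i.e.\ $f_{\C{F}_{\sf ind}}(t,k)\cdot n^{\C{O}(1)}$ for a suitable computable $f_{\C{F}_{\sf ind}}$.

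The only genuinely delicate point is the key claim: that a vertex outside the Sunflower-kernel $R_\Gamma$ is truly irrelevant. This rests exactly on the monotonicity $\mathcal{F}_{\Gamma-v}\subseteq\mathcal{F}_\Gamma$ of induced-subgraph obstructions together with property (i) of the reduction (same small hitting sets); everything else is bookkeeping over the finitely many partial obstructions supplied by Observation~\ref{obs:obst-t-boundary}.
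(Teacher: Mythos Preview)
Your proof takes essentially the same route as the paper's: reduce via Lemma~\ref{lemma:irrelevance} to the finitely many boundaried partial obstructions $O$, run the Sunflower-Lemma kernelization for $d$-\textsc{Hitting Set} on each $G\oplus O$ to mark a bounded set of ``essential'' vertices, and return any unmarked non-boundary vertex; your key claim and its justification coincide with the paper's correctness argument.

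The one point worth flagging is the $t$-dependence. The lemma asks for single-variable functions $f_{\C{F}_{\sf ind}},g_{\C{F}_{\sf ind}}:\mathbb{N}\to\mathbb{N}$, but your bound $g_{\C{F}_{\sf ind}}(t,k)=t+N\beta(k)+1$ with $N\le\zeta[\C{H}_{\C{F}_{\sf ind}}](t)$ depends on $t$, which you then absorb via the global convention $t\le 2k$. The paper instead iterates over the set $\mathcal{O}$ of \emph{all} $t$-boundaried induced subgraphs of members of $\C{F}$ (a superset of ${\sf RBPO}(\C{H}_{\C{F}_{\sf ind}},t)$) and asserts $|\mathcal{O}|=O_{\C{F}}(1)$ outright, thereby writing $f,g$ as functions of $k$ alone. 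Either route suffices for how the lemma is used; your handling is arguably more honest about where the $t$-dependence sits, but to match the statement verbatim you should fold $t\le 2k$ into the definitions of $f_{\C{F}_{\sf ind}}$ and $g_{\C{F}_{\sf ind}}$ rather than leave it as a parenthetical.
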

\begin{proof}
Let $\C{H}=\C{H}_{\C{F}_ {\sf ind}}$. Let $c=c_{\C{F}}$ be maximum among the size of the largest graph in $\C{F}$, denoted by $d$, and the size of $\C{F}$. So, $c$ is a fixed constant. Let ${\cal O}=\{O: O$ is a $t$-boundaried graph that is an induced subgraph of at least one graph in $\C{F}\}$. Notice that $|{\cal O}|\leq c\cdot 2^c\cdot \sum_{i=0}^{\min(t,c)}{c\choose i}\leq \OO(1)$, and for each $O\in{\cal O}$, $|V(O)|\leq c$. We define $f_{\C{F}_{\sf ind}}(x)=|{\cal O}|$, and $g_{\C{F}_{\sf ind}}(x)=d^2\cdot d!\cdot (x-1)^d\cdot |{\cal O}|+1$ for all $x\in\mathbb{N}$.

Now, we describe the algorithm. Let $(G,k)$ be its input. Then:
\begin{enumerate}
\item Initialize $M=\emptyset$.
\item For every $O\in{\cal O}$:
	\begin{enumerate}
	\item Let $G'=G\oplus O$.
	\item Let $U=V(G')$ and ${\cal Q}=\{A\subseteq U: G'[A]\in\C{F}\}$.
	\item Call the algorithm in Theorem 2.26~\cite{CyganFKLMPPS15} on $(U,{\cal Q},k)$ to obtain $(U',{\cal H}',k)$.
	\item Update $M\leftarrow M\cup (U'\cap V(G))$.
	\end{enumerate}
\item Return any vertex $v$ from $V(G)\setminus M$.
\end{enumerate}

We first argue that $V(G)\setminus M\neq\emptyset$, thus the algorithm returns a vertex. For this purpose, note that the number of iterations is $|{\cal O}|$. By Theorem 2.26~\cite{CyganFKLMPPS15}, in each one of them, at most $d^2\cdot d!\cdot (k-1)^d$ new vertices are inserted into $M$. Thus, at the end, $|M|\leq d^2\cdot d!\cdot (k-1)^d\cdot |{\cal O}|<g_{\C{F}_{\sf ind}}(k)$.
Second, we consider the time complexity of the algorithm. Notice that the algorithm in Theorem 2.26~\cite{CyganFKLMPPS15} runs in polynomial time, hence each iteration is performed in polynomial time, which means that the overall runtime is $|{\cal O}|\cdot n^{\OO(1)}\leq f_{\C{F}_{\sf ind}}(k) \cdot n^{\C{O}(1)}\leq n^{\OO(1)}$.

We now consider the correctness of the algorithm.
Trivially, ${\sf RBPO}({\C{H}},t)\subseteq{\cal O}$.
So, by Lemma \ref{lemma:irrelevance}, it suffice to show that $v$ is $k$-irrelevant in $G\oplus O$ for every $O\in{\cal O}$. For this purpose, fix some $O\in{\cal O}$, and consider the iteration corresponding to this $O$. We need to show that $(G',k)$ is a yes-instance if and only if $(G'-v,k)$ is a yes-instance. One direction is trivial: If $(G',k)$ is a yes-instance, then $(G'-v,k)$ is a yes-instance. For the other direction, suppose that $(G'-v,k)$ is a yes-instance. Let $U^\star=V(G'-v)$ and ${\cal Q}^\star=\{A\subseteq U^\star: G'[A]\in\C{F}\}$. Then, $(U^\star,{\cal Q}^\star,k)$ is a yes-instance of {\sc $d$-Hitting Set}. As $v\notin M$, we have that $U'\subseteq U^\star$ and ${\cal Q}'\subseteq {\cal Q}^\star$. However, this means that $(U',{\cal Q}',k)$ is also a yes-instance of {\sc $d$-Hitting Set}. In turn, because $(U,{\cal Q},k)$ and $(U',{\cal Q}',k)$ are equivalent (see, for example, Theorem 2.26~\cite{CyganFKLMPPS15}), we have that $(U,{\cal Q},k)$ is also a yes-instance of {\sc $d$-Hitting Set}. Hence, $(G',k)$ is a yes-instance. This completes the proof.
\end{proof}

We next handle the bounded treewidth case. 

\begin{lemma}\label{lemma:twbounded}
There is a function $f_{\C{H}}$, and an algorithm that given $\C{H} \in \mathbb{H}$, a $t$-boundaried graph $G$,  integers $k, \ell \in \mathbb{N}$, and an equivalence class $Q=(t^{\star}, \{t^{\star}\}\subseteq \Lambda \subseteq [t^{\star}], (z_1,\ldots,z_{q[\C{H},t]}))$ such that the $\tw(G) \leq \ell$, runs in time $f_{\C{H}}(k,\ell,t)n^{\OO(1)}$ and outputs a vertex subset $S$ of size at most $k$ 
such that $G-S$ belongs to an equivalence class which is at least as good as $Q$.  
\end{lemma}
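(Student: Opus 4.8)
\textbf{Proof plan for Lemma~\ref{lemma:twbounded}.}

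The plan is to reduce the problem to an application of Courcelle's theorem (Proposition~\cite{Courcelle90}) on graphs of bounded treewidth, after first verifying that membership in an equivalence class at least as good as $Q$ can be phrased as a \mso-expressible property. First I would observe that, by the way the refinement $\equiv_{u,\C{H}}$ is defined in Definition~\ref{def:equiv-bhv}, for the families in $\mathbb{H}$ a $t$-boundaried graph $G'$ belongs to an equivalence class that is at least as good as $Q$ if and only if, for every relevant boundaried partial obstruction $P \in {\sf RBPO}(\C{H},t^\star)$ with the corresponding coordinate $z_i = 1$ in $Q$, we have $G' \oplus P \in \C{H}$. Since ${\sf RBPO}(\C{H},t^\star)$ is a finite set computable in time depending only on $t^\star$ (Observation~\ref{obs:obst-t-boundary}), and each $\C{H} \in \mathbb{H}$ is \mso-definable (for the induced-subgraph, minor and topological-minor families this is standard; for chordal, interval, bipartite it is classical), the target property ``there is $S \subseteq V(G)$, $|S| \le k$, such that $G - S$ lies in a class at least as good as $Q$'' is expressible by a single \mso\ sentence $\varphi_{\C{H},k,Q,t}$ whose size is bounded by a function of $k$, $t$ and $|{\sf RBPO}(\C{H},t^\star)|$ (hence of $k$, $t$ only). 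Here one uses that gluing with a fixed finite boundaried graph $P$ can be simulated inside the logic by quantifying over which boundary labels are present and hard-coding the finitely many vertices and edges of $P$, and that $\delta(G)$ can be identified in \mso\ by the $t$ (constantly many) labels.

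Next I would invoke Courcelle's theorem: given that $\tw(G) \le \ell$, one can in time $f_{\C{H}}(k,\ell,t)\cdot n^{\OO(1)}$ decide whether $G \models \varphi_{\C{H},k,Q,t}$, and moreover, since the sentence is existential over the vertex set $S$, the standard self-reduction / constructive version of Courcelle's theorem (or a direct dynamic-programming argument over a tree decomposition, tracking the constantly many boundary vertices and the behaviour vector) produces an actual witnessing set $S$ of size at most $k$ when one exists. If no such $S$ exists this should not happen under the hypothesis that an answer is promised; but if the lemma is stated without that promise, the algorithm can simply also test the empty/full cases or report that $G$ itself (or $V(G)$ as a trivial deletion, once $\C{H}$ contains the empty graph) works — in any event the behaviour of the algorithm when the property fails is handled by the generic guarantee of Theorem~\ref{thm:uni:elim}'s interface, where {\sigtest} and {\cal A}_u\ may output {\sf Fail}. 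The main point is the positive direction: output $S$ with $G - S$ in a class at least as good as $Q$.

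I expect the main obstacle to be the bookkeeping in writing $\varphi_{\C{H},k,Q,t}$ correctly, in particular making precise that ``$G'$ belongs to a class at least as good as $Q$'' is equivalent to the conjunction of the conditions $G' \oplus P[\C{H},t^\star,i] \in \C{H}$ over all $i$ with $z_i = 1$ (this is exactly Definition~\ref{def:app-ug}, together with the observation that the behaviour vector of $G'$ determines its class), and arguing that the labels in $\Lambda$ versus $[t^\star]$ are handled consistently when $\Lambda \subsetneq [t^\star]$ — i.e.\ that gluing with $P$ only identifies vertices whose labels actually occur in $G'$, as fixed in the running convention of this section. Once the sentence is set up, the treewidth-bounded case is immediate from Courcelle. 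A minor secondary point is to confirm that the constructive version of Courcelle's theorem yields the set $S$ and not merely a yes/no answer; this follows either by a direct DP or by the standard trick of iteratively querying ``is there a solution avoiding vertex $v$?'' to peel off vertices of $S$, at a polynomial multiplicative overhead, which is absorbed into $f_{\C{H}}(k,\ell,t)\cdot n^{\OO(1)}$.
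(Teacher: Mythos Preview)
Your overall strategy matches the paper's: express ``there exists $S$ of size at most $k$ such that $G-S$ lies in a class at least as good as $Q$'' in \mso\ and invoke Courcelle's theorem. The paper, like you, unpacks ``at least as good as $Q$'' into the label condition $\Lambda(G')\subseteq\Lambda$ together with the conjunction, over all $i$ with $z_i=1$, of the conditions $(G-S)\oplus H_i\in\C{H}$ where $H_i=P[\C{H},t,i]$.

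The one technical difference worth noting is how the gluing with the $H_i$ is handled. You propose to simulate the glue inside the formula by ``hard-coding the finitely many vertices and edges of $P$''; this is delicate because \mso\ over the graph $G$ cannot directly speak about vertices that do not exist in $G$. The paper sidesteps this by instead constructing the concrete graph
\[
G^{\star}=\bigl(\cdots((G\oplus H_1)\oplus H_2)\cdots\bigr)\oplus H_{q[\C{H},t]},
\]
introducing free set variables for $V(G),V(H_1),\ldots$ and $E(G),E(H_1),\ldots$ so that each $G\oplus H_i$ is identifiable inside $G^{\star}$, and then applying Courcelle to $G^{\star}$. Since the $H_i$ are a fixed finite family, $\tw(G^{\star})$ is still bounded by a function of $k,\ell,t$ and $\C{H}$. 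This is a cleaner way to realise exactly what you intended, and avoids having to justify a Feferman--Vaught-style transfer for the glued extension. Otherwise your plan, including the remark on extracting the witness $S$ by self-reduction, is in line with the paper's sketch.
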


\begin{proof}[Proof sketch]
Recall that ${\sf RBPO}(\C{H},t) = \{P[\C{H},t,1], P[\C{H},t,2], \cdots,$ $P[\C{H},t,q[\C{H},t]]\}$. For simplicity, let us denote $H_i=P[\C{H},t,i]$ for all $i\in [q[\C{H},t]]$. Recall that a graph $G'$ belongs to an equivalence class which is at least as good as $Q$ if the following holds.
\begin{itemize}
\item[(i)] $\max \Lambda(G') \leq t^{\star}$ and $\Lambda(G') \subseteq \Lambda$
\item[(ii)]  For each $i\in q[\C{H},t]]$, if $z_i=1$ then $G'\oplus H \in \C{H}$.  
\end{itemize}
One can write a {\sf CMSO} formula $\psi_1$ such that  a $t$-boundaried graph $G'$ 
satisfies the formula $\psi_1$ if and only if the property $(i)$ holds. Also, one can write a {\sf CMSO} formula $\psi_2$ such that $G'$ satisfies the formula $\psi_2$ if and only if $G'\in \C{H}$. 

Here our objective is to test whether there are $k$ vertices $x_1,\ldots,x_k$ such that 
for each $i\in q[\C{H},t]]$, $(G\oplus H_i)-S$ satisfies $\psi_1$ and for each $i\in [q[\C{H},t]]]$ such that $z_i=1$, $(G\oplus H_i) -S$ satisfies $\psi_2$, where $S=\{x_1,\ldots,x_k\}$. We can write one {\sf CMSO} formula for it. Towards that we construct a graph 
$$G^{\star}=((G\oplus H_1)\oplus H_2)\ldots ) \oplus H_{q[\C{H},t]]}.$$ Then we will have free variables for each vertex subsets $V(G)$, $V(H_1),\ldots V(H_{q[\C{H},t]]})$ and for each edge subsets $E(G)$, $E(H_1),\ldots E(H_{q[\C{H},t]]})$. Then, each graph $G\oplus H_i$ is specified by four free variables $V(G),V(H_i),E(G)$, and $E(H_i)$. Thus, our objective is to test the existence of  a vertex subset $S\subseteq V(G^{\star})$ of size $k$ such that for each $i$, the subgraph $G\oplus H_i$ satisfies $\psi_1$ and if $z_i=1$, then it satisfies $\psi_2$. Thus, it can be expressed using a {\sf CMSO} formula $\psi$. This formula size is upper bounded by a function of $k, t$ and $\C{H}$. 

We know that  $\tw(G) \leq \ell$ and $\{H_1,H_2,\ldots,H_{q[\C{H}}\}$ is a fixed finite set of graphs independent of the input graph, we have that the $\tw(G^{\star})$ is upper bounded by a function of $k,\ell,t$ and $\C{H}$. Therefore the lemma follows from Courcelle's Theorem~\cite{Courcelle90}.
\end{proof}

We are now ready to prove Theorem~\ref{thm:apps-uniform}.  

\begin{proof}[Proof of Theorem~\ref{thm:apps-uniform}]
Consider $\C{H} \in \mathbb{H}$. Let $f^*, g^*: \mathbb{N}\times \mathbb{N} \rightarrow \mathbb{N}$ be the appropriate functions based on $\C{H}$ returned by Lemma~\ref{lem:convert-boundaried-irrelevant}, Observation~\ref{obs:top-app} or Lemma~\ref{lem:ind-app}. To prove the theorem, we will use Theorem~\ref{thm:uni:elim}, together with the refinement of the equivalence class given in Definition~\ref{def:equiv-bhv}, the function ${\sf ug}_{\C{H}}$, given in Definition~\ref{def:ug-function-apps}, and the algorithm $\C{A}$ we describe next. Our algorithm $\C{A}$ will take as input a $t$-boundaried graph $G$, where $t \in \Gamma(G)$, an integer $k$, and an equivalence class $Q \in \C{Q}(\equiv_{u,\C{H}})$, in time bounded by $h(t,k) \cdot n^{\C{O}(1)}$, it will check if there is $S \subseteq V(G)$ of size at most $k$, such that $G-S$ is in the equivalence class $Q'$, where ${\sf ug}_{\C{H}}(Q',Q) = 1$. 

If Lemma~\ref{lem:convert-boundaried-irrelevant}, Observation~\ref{obs:top-app} or Lemma~\ref{lem:ind-app} concludes that the treewidth of $G$ is bounded by $g^*(t,k)$, then we resolve the instance using Lemma~\ref{lemma:twbounded}. Also, if using one of Lemma~\ref{lem:convert-boundaried-irrelevant}, Observation~\ref{obs:top-app} or Lemma~\ref{lem:ind-app} we are able to obtain the a set $S$ of desired type does not exists, the algorithm returns no. Otherwise, we have a boundaried $(\C{H},k)$-irrelevant vertex $v \in V(G)$, and we do the following: i) If $v \notin \delta(G)$, we (recursively) solve the instance $(G,k,Q)$; ii) Otherwise, let $Q'= t', \Lambda', \overline{z}'$ be the equivalence class obtained from $Q = (t^*, \Lambda, \overline{z})$, where $\Lambda' = \Lambda \setminus \lambda_G(v)$, $t'$ is the largest number in $\Lambda'$, and $\overline{z}'$ is the restriction of $\overline{z}$ to ${\sf RBPO}(\C{H},t') \subseteq {\sf RBPO}(\C{H},t^*)$. We (recursively) solve the instance $(G-\{v\},k,Q')$. The correctness of the algorithm follows from its description. Also, from Lemma~\ref{lem:convert-boundaried-irrelevant}, Observation~\ref{obs:top-app}, Lemma~\ref{lem:ind-app} and Lemma~\ref{lemma:twbounded}, we can obtain that the algorithm is uniform and it runs in time bounded by $h(t,k) \cdot n^{\C{O}(1)}$, for some function $h: \mathbb{N}\times \mathbb{N} \rightarrow \mathbb{N}$.  
\end{proof}

\section{Conclusion}\label{sec:conclusion}

In this paper, we have shown that as far as the task of identifying the boundaries of tractability in parameterized complexity is concerned, two recently popular hybrid parameterizations that combine solution size and width measures ($\cH$-elimination distance and $\cH$-treewidth) are effectively only as powerful as the standard parameterization by the size of the modulator to $\cH$ for a host of commonly studied graph classes $\cH$. This is a surprising result, and one that unifies several recent results in the literature. Moreover, using our main result, we have resolved several open problems that were either stated explicitly or have naturally arisen in recent literature. We have also developed a framework for cross-parameterization and demonstrated how this can be applied to answer an open problem posed by Jansen et al.~\cite{JansenK021}.   

Hybrid parameterizations have been the subject of a flurry of interest in the last half-a-decade and it would be interesting to identify new, algorithmically useful, hybrid parameterizations that are provably stronger than existing ones. Further, as our characterization result gives non-uniform algorithms, we have developed a framework to design uniform {\FPT} algorithms to compute elimination distance to $\cH$ when $\cH$ has certain properties. We have demonstrated that these properties are fairly mild by showing a number of well-studied graph classes that possess these properties.

We leave the following questions arising from our work as interesting future research directions. 

\begin{itemize}
	\item Is {\probEDH} parameterized by $\hhtw(G)$ equivalent to the eight problems in Theorem~\ref{thm:mainEquiv} for hereditary, union-closed and CMSO definable $\cH$? We conjecture that the answer is yes. 
	\item Could one develop a framework to design uniform {\FPT} algorithms for {\probTDH} in the same spirit as our framework to design uniform {\FPT} algorithms for {\probEDH}? 
\end{itemize}

Finally, although we provide powerful classification tools, the fact still remains that $\hhtw(G)\leq \hhdepth(G)\leq \hhmd(G)$ and that each parameter could be arbitrarily smaller than the parameters to its right. Thus, it is still an interesting direction of research to study {\probVDH} for specific classes $\cH$ parameterized by $\hhdepth(G)$ and $\hhtw(G)$ and aim to optimize the running time, in the spirit of Jansen et al~\cite{JansenK021}. 

\bibliographystyle{plainurl}
\bibliography{elimination,references,metakernels}

\begin{thebibliography}{10}

\bibitem{abrahamson1993finite}
Karl Abrahamson and Michael Fellows.
\newblock Finite automata, bounded treewidth, and well-quasiordering.
\newblock {\em Contemporary Mathematics}, 147:539--539, 1993.

\bibitem{AgrawalKFR21}
Akanksha Agrawal, Lawqueen Kanesh, Fahad Panolan, M.~S. Ramanujan, and Saket
  Saurabh.
\newblock {An FPT Algorithm for Elimination Distance to Bounded Degree Graphs}.
\newblock In Markus Bl\"{a}ser and Benjamin Monmege, editors, {\em 38th
  International Symposium on Theoretical Aspects of Computer Science (STACS
  2021)}, volume 187 of {\em Leibniz International Proceedings in Informatics
  (LIPIcs)}, pages 5:1--5:11, Dagstuhl, Germany, 2021. Schloss Dagstuhl --
  Leibniz-Zentrum f{\"u}r Informatik.
\newblock \href {https://doi.org/10.4230/LIPIcs.STACS.2021.5}
  {\path{doi:10.4230/LIPIcs.STACS.2021.5}}.

\bibitem{DBLP:journals/talg/AgrawalLMSZ19}
Akanksha Agrawal, Daniel Lokshtanov, Pranabendu Misra, Saket Saurabh, and
  Meirav Zehavi.
\newblock Feedback vertex set inspired kernel for chordal vertex deletion.
\newblock {\em {ACM} Trans. Algorithms}, 15(1):11:1--11:28, 2019.
\newblock \href {https://doi.org/10.1145/3284356} {\path{doi:10.1145/3284356}}.

\bibitem{AgrawalM0Z19}
Akanksha Agrawal, Pranabendu Misra, Saket Saurabh, and Meirav Zehavi.
\newblock Interval vertex deletion admits a polynomial kernel.
\newblock In Timothy~M. Chan, editor, {\em Proceedings of the Thirtieth Annual
  {ACM-SIAM} Symposium on Discrete Algorithms, {SODA} 2019, San Diego,
  California, USA, January 6-9, 2019}, pages 1711--1730. {SIAM}, 2019.
\newblock \href {https://doi.org/10.1137/1.9781611975482.103}
  {\path{doi:10.1137/1.9781611975482.103}}.

\bibitem{Agrawal020}
Akanksha Agrawal and M.~S. Ramanujan.
\newblock On the parameterized complexity of clique elimination distance.
\newblock In Yixin Cao and Marcin Pilipczuk, editors, {\em 15th International
  Symposium on Parameterized and Exact Computation, {IPEC} 2020, December
  14-18, 2020, Hong Kong, China (Virtual Conference)}, volume 180 of {\em
  LIPIcs}, pages 1:1--1:13. Schloss Dagstuhl - Leibniz-Zentrum f{\"{u}}r
  Informatik, 2020.
\newblock \href {https://doi.org/10.4230/LIPIcs.IPEC.2020.1}
  {\path{doi:10.4230/LIPIcs.IPEC.2020.1}}.

\bibitem{ArnborgLS91}
Stefan Arnborg, Jens Lagergren, and Detlef Seese.
\newblock Easy problems for tree-decomposable graphs.
\newblock {\em Journal of Algorithms}, 12:308--340, 1991.

\bibitem{Bodlaender96}
Hans~L. Bodlaender.
\newblock A linear-time algorithm for finding tree-decompositions of small
  treewidth.
\newblock {\em {SIAM} J. Comput.}, 25(6):1305--1317, 1996.
\newblock \href {https://doi.org/10.1137/S0097539793251219}
  {\path{doi:10.1137/S0097539793251219}}.

\bibitem{BodlaenderFLPST16}
Hans~L. Bodlaender, Fedor~V. Fomin, Daniel Lokshtanov, Eelko Penninkx, Saket
  Saurabh, and Dimitrios~M. Thilikos.
\newblock (meta) kernelization.
\newblock {\em J. {ACM}}, 63(5):44:1--44:69, 2016.
\newblock \href {https://doi.org/10.1145/2973749} {\path{doi:10.1145/2973749}}.

\bibitem{BodlaenderF01}
Hans~L. Bodlaender and Babette van Antwerpen{-}de~Fluiter.
\newblock Reduction algorithms for graphs of small treewidth.
\newblock {\em Inf. Comput.}, 167(2):86--119, 2001.
\newblock \href {https://doi.org/10.1006/inco.2000.2958}
  {\path{doi:10.1006/inco.2000.2958}}.

\bibitem{BoriePT92}
Richard~B. Borie, R.~Gary Parker, and Craig~A. Tovey.
\newblock Automatic generation of linear-time algorithms from predicate
  calculus descriptions of problems on recursively constructed graph families.
\newblock {\em Algorithmica}, 7(5{\&}6):555--581, 1992.
\newblock \href {https://doi.org/10.1007/BF01758777}
  {\path{doi:10.1007/BF01758777}}.

\bibitem{BulianD16}
Jannis Bulian and Anuj Dawar.
\newblock Graph isomorphism parameterized by elimination distance to bounded
  degree.
\newblock {\em Algorithmica}, 75(2):363--382, 2016.
\newblock \href {https://doi.org/10.1007/s00453-015-0045-3}
  {\path{doi:10.1007/s00453-015-0045-3}}.

\bibitem{BulianD17}
Jannis Bulian and Anuj Dawar.
\newblock Fixed-parameter tractable distances to sparse graph classes.
\newblock {\em Algorithmica}, 79(1):139--158, 2017.
\newblock \href {https://doi.org/10.1007/s00453-016-0235-7}
  {\path{doi:10.1007/s00453-016-0235-7}}.

\bibitem{Cai96}
Leizhen Cai.
\newblock Fixed-parameter tractability of graph modification problems for
  hereditary properties.
\newblock {\em Inf. Process. Lett.}, 58(4):171--176, 1996.
\newblock \href {https://doi.org/10.1016/0020-0190(96)00050-6}
  {\path{doi:10.1016/0020-0190(96)00050-6}}.

\bibitem{CaoM15}
Yixin Cao and D{\'{a}}niel Marx.
\newblock Interval deletion is fixed-parameter tractable.
\newblock {\em {ACM} Trans. Algorithms}, 11(3):21:1--21:35, 2015.
\newblock \href {https://doi.org/10.1145/2629595} {\path{doi:10.1145/2629595}}.

\bibitem{CaoM16}
Yixin Cao and D{\'{a}}niel Marx.
\newblock Chordal editing is fixed-parameter tractable.
\newblock {\em Algorithmica}, 75(1):118--137, 2016.
\newblock \href {https://doi.org/10.1007/s00453-015-0014-x}
  {\path{doi:10.1007/s00453-015-0014-x}}.

\bibitem{cayley1889theorem}
Arthur Cayley.
\newblock A theorem on trees.
\newblock {\em Quarterly Journal of Pure and Applied Mathematics}, 23:376--378,
  1889.

\bibitem{DBLP:journals/siamcomp/ChitnisCHPP16}
Rajesh Chitnis, Marek Cygan, MohammadTaghi Hajiaghayi, Marcin Pilipczuk, and
  Michal Pilipczuk.
\newblock Designing {FPT} algorithms for cut problems using randomized
  contractions.
\newblock {\em {SIAM} J. Comput.}, 45(4):1171--1229, 2016.
\newblock URL: \url{http://dx.doi.org/10.1137/15M1032077}, \href
  {https://doi.org/10.1137/15M1032077} {\path{doi:10.1137/15M1032077}}.

\bibitem{Courcelle90}
Bruno Courcelle.
\newblock The monadic second-order logic of graphs. i. recognizable sets of
  finite graphs.
\newblock {\em Inf. Comput.}, 85(1):12--75, 1990.
\newblock \href {https://doi.org/10.1016/0890-5401(90)90043-H}
  {\path{doi:10.1016/0890-5401(90)90043-H}}.

\bibitem{Courcelle97}
Bruno Courcelle.
\newblock The expression of graph properties and graph transformations in
  monadic second-order logic.
\newblock {\em Handbook of Graph Grammars}, pages 313--400, 1997.

\bibitem{CourcelleMR00}
Bruno Courcelle, Johann~A. Makowsky, and Udi Rotics.
\newblock Linear time solvable optimization problems on graphs of bounded
  clique-width.
\newblock {\em Theory Comput. Syst.}, 33(2):125--150, 2000.
\newblock \href {https://doi.org/10.1007/s002249910009}
  {\path{doi:10.1007/s002249910009}}.

\bibitem{CourcelleO00}
Bruno Courcelle and Stephan Olariu.
\newblock Upper bounds to the clique width of graphs.
\newblock {\em Discret. Appl. Math.}, 101(1-3):77--114, 2000.
\newblock \href {https://doi.org/10.1016/S0166-218X(99)00184-5}
  {\path{doi:10.1016/S0166-218X(99)00184-5}}.

\bibitem{CourcelleO07}
Bruno Courcelle and Sang{-}il Oum.
\newblock Vertex-minors, monadic second-order logic, and a conjecture by seese.
\newblock {\em J. Comb. Theory, Ser. {B}}, 97(1):91--126, 2007.
\newblock \href {https://doi.org/10.1016/j.jctb.2006.04.003}
  {\path{doi:10.1016/j.jctb.2006.04.003}}.

\bibitem{CyganFKLMPPS15}
Marek Cygan, Fedor~V Fomin, {\L}ukasz Kowalik, Daniel Lokshtanov, D{\'a}niel
  Marx, Marcin Pilipczuk, Micha{\l} Pilipczuk, and Saket Saurabh.
\newblock {\em Parameterized algorithms}.
\newblock Springer, 2015.
\newblock \href {https://doi.org/10.1007/978-3-319-21275-3}
  {\path{doi:10.1007/978-3-319-21275-3}}.

\bibitem{DBLP:journals/siamdm/CyganPPW13}
Marek Cygan, Marcin Pilipczuk, Michal Pilipczuk, and Jakub~Onufry Wojtaszczyk.
\newblock Subset feedback vertex set is fixed-parameter tractable.
\newblock {\em {SIAM} J. Discret. Math.}, 27(1):290--309, 2013.

\bibitem{de1997algorithms}
Babette Lucie~Elisabeth de~Fluiter.
\newblock {\em Algorithms for graphs of small treewidth}.
\newblock PhD thesis, 1997.

\bibitem{DowneyF13}
Rodney~G. Downey and Michael~R. Fellows.
\newblock {\em Fundamentals of Parameterized Complexity}.
\newblock Texts in Computer Science. Springer, 2013.
\newblock \href {https://doi.org/10.1007/978-1-4471-5559-1}
  {\path{doi:10.1007/978-1-4471-5559-1}}.

\bibitem{EibenGHK19}
Eduard Eiben, Robert Ganian, Thekla Hamm, and O{-}joung Kwon.
\newblock Measuring what matters: {A} hybrid approach to dynamic programming
  with treewidth.
\newblock In Peter Rossmanith, Pinar Heggernes, and Joost{-}Pieter Katoen,
  editors, {\em 44th International Symposium on Mathematical Foundations of
  Computer Science, {MFCS} 2019, August 26-30, 2019, Aachen, Germany}, volume
  138 of {\em LIPIcs}, pages 42:1--42:15. Schloss Dagstuhl - Leibniz-Zentrum
  f{\"{u}}r Informatik, 2019.
\newblock \href {https://doi.org/10.4230/LIPIcs.MFCS.2019.42}
  {\path{doi:10.4230/LIPIcs.MFCS.2019.42}}.

\bibitem{EibenGK18}
Eduard Eiben, Robert Ganian, and O{-}joung Kwon.
\newblock A single-exponential fixed-parameter algorithm for
  distance-hereditary vertex deletion.
\newblock {\em J. Comput. Syst. Sci.}, 97:121--146, 2018.
\newblock \href {https://doi.org/10.1016/j.jcss.2018.05.005}
  {\path{doi:10.1016/j.jcss.2018.05.005}}.

\bibitem{erdos1960intersection}
Paul Erd{\"o}s and Richard Rado.
\newblock Intersection theorems for systems of sets.
\newblock {\em Journal of the London Mathematical Society}, 1(1):85--90, 1960.

\bibitem{corr/abs-2104-02998}
Fedor~V. Fomin, Petr~A. Golovach, and Dimitrios~M. Thilikos.
\newblock Parameterized complexity of elimination distance to first-order logic
  properties.
\newblock {\em CoRR (to appear in the proceedings fo LICS 2021)},
  abs/2104.02998, 2021.
\newblock URL: \url{https://arxiv.org/abs/2104.02998}, \href
  {http://arxiv.org/abs/2104.02998} {\path{arXiv:2104.02998}}.

\bibitem{FominLMS12}
Fedor~V. Fomin, Daniel Lokshtanov, Neeldhara Misra, and Saket Saurabh.
\newblock Planar {${\cal F}$}-deletion: Approximation, kernelization and
  optimal {FPT} algorithms.
\newblock In {\em 53rd Annual {IEEE} Symposium on Foundations of Computer
  Science, {FOCS} 2012, New Brunswick, NJ, USA, October 20-23, 2012}, pages
  470--479. {IEEE} Computer Society, 2012.
\newblock \href {https://doi.org/10.1109/FOCS.2012.62}
  {\path{doi:10.1109/FOCS.2012.62}}.

\bibitem{DBLP:journals/corr/abs-1904-02944}
Fedor~V. Fomin, Daniel Lokshtanov, Fahad Panolan, Saket Saurabh, and Meirav
  Zehavi.
\newblock Hitting topological minors is fpt.
\newblock {\em CoRR}, abs/1904.02944, 2019.
\newblock URL: \url{http://arxiv.org/abs/1904.02944}.

\bibitem{FominLPSZ20}
Fedor~V. Fomin, Daniel Lokshtanov, Fahad Panolan, Saket Saurabh, and Meirav
  Zehavi.
\newblock Hitting topological minors is {FPT}.
\newblock In Konstantin Makarychev, Yury Makarychev, Madhur Tulsiani, Gautam
  Kamath, and Julia Chuzhoy, editors, {\em Proccedings of the 52nd Annual {ACM}
  {SIGACT} Symposium on Theory of Computing, {STOC} 2020, Chicago, IL, USA,
  June 22-26, 2020}, pages 1317--1326. {ACM}, 2020.
\newblock \href {https://doi.org/10.1145/3357713.3384318}
  {\path{doi:10.1145/3357713.3384318}}.

\bibitem{FominLST18}
Fedor~V. Fomin, Daniel Lokshtanov, Saket Saurabh, and Dimitrios~M. Thilikos.
\newblock Kernels for (connected) dominating set on graphs with excluded
  topological minors.
\newblock {\em {ACM} Trans. Algorithms}, 14(1):6:1--6:31, 2018.
\newblock \href {https://doi.org/10.1145/3155298} {\path{doi:10.1145/3155298}}.

\bibitem{DBLP:journals/combinatorica/FominV12}
Fedor~V. Fomin and Yngve Villanger.
\newblock Treewidth computation and extremal combinatorics.
\newblock {\em Comb.}, 32(3):289--308, 2012.

\bibitem{GanianOR17}
Robert Ganian, Sebastian Ordyniak, and M.~S. Ramanujan.
\newblock Going beyond primal treewidth for {(M)ILP}.
\newblock In Satinder~P. Singh and Shaul Markovitch, editors, {\em Proceedings
  of the Thirty-First {AAAI} Conference on Artificial Intelligence, February
  4-9, 2017, San Francisco, California, {USA}}, pages 815--821. {AAAI} Press,
  2017.
\newblock URL:
  \url{http://aaai.org/ocs/index.php/AAAI/AAAI17/paper/view/14272}.

\bibitem{GanianRS17a}
Robert Ganian, M.~S. Ramanujan, and Stefan Szeider.
\newblock Backdoor treewidth for {SAT}.
\newblock In Serge Gaspers and Toby Walsh, editors, {\em Theory and
  Applications of Satisfiability Testing - {SAT} 2017 - 20th International
  Conference, Melbourne, VIC, Australia, August 28 - September 1, 2017,
  Proceedings}, volume 10491 of {\em Lecture Notes in Computer Science}, pages
  20--37. Springer, 2017.
\newblock \href {https://doi.org/10.1007/978-3-319-66263-3_2}
  {\path{doi:10.1007/978-3-319-66263-3_2}}.

\bibitem{GanianRS17d}
Robert Ganian, M.~S. Ramanujan, and Stefan Szeider.
\newblock Combining treewidth and backdoors for {CSP}.
\newblock In Heribert Vollmer and Brigitte Vall{\'{e}}e, editors, {\em 34th
  Symposium on Theoretical Aspects of Computer Science, {STACS} 2017, March
  8-11, 2017, Hannover, Germany}, volume~66 of {\em LIPIcs}, pages 36:1--36:17.
  Schloss Dagstuhl - Leibniz-Zentrum f{\"{u}}r Informatik, 2017.
\newblock \href {https://doi.org/10.4230/LIPIcs.STACS.2017.36}
  {\path{doi:10.4230/LIPIcs.STACS.2017.36}}.

\bibitem{GanianRS17}
Robert Ganian, M.~S. Ramanujan, and Stefan Szeider.
\newblock Combining treewidth and backdoors for {CSP}.
\newblock In Heribert Vollmer and Brigitte Vall{\'{e}}e, editors, {\em 34th
  Symposium on Theoretical Aspects of Computer Science, {STACS} 2017, March
  8-11, 2017, Hannover, Germany}, volume~66 of {\em LIPIcs}, pages 36:1--36:17.
  Schloss Dagstuhl - Leibniz-Zentrum f{\"{u}}r Informatik, 2017.
\newblock \href {https://doi.org/10.4230/LIPIcs.STACS.2017.36}
  {\path{doi:10.4230/LIPIcs.STACS.2017.36}}.

\bibitem{GanianRS17b}
Robert Ganian, M.~S. Ramanujan, and Stefan Szeider.
\newblock Discovering archipelagos of tractability for constraint satisfaction
  and counting.
\newblock {\em {ACM} Trans. Algorithms}, 13(2):29:1--29:32, 2017.
\newblock \href {https://doi.org/10.1145/3014587} {\path{doi:10.1145/3014587}}.

\bibitem{golumbic2004algorithmic}
Martin~Charles Golumbic.
\newblock {\em Algorithmic graph theory and perfect graphs}.
\newblock Elsevier, 2004.

\bibitem{DBLP:conf/stoc/GroheKMW11}
Martin Grohe, Ken{-}ichi Kawarabayashi, D{\'{a}}niel Marx, and Paul Wollan.
\newblock Finding topological subgraphs is fixed-parameter tractable.
\newblock In {\em Proceedings of the 43rd {ACM} Symposium on Theory of
  Computing, {STOC} 2011, San Jose, CA, USA, 6-8 June 2011}, pages 479--488,
  2011.
\newblock URL: \url{http://doi.acm.org/10.1145/1993636.1993700}, \href
  {https://doi.org/10.1145/1993636.1993700}
  {\path{doi:10.1145/1993636.1993700}}.

\bibitem{HlinenyOSG08}
Petr Hlinen{\'{y}}, Sang{-}il Oum, Detlef Seese, and Georg Gottlob.
\newblock Width parameters beyond tree-width and their applications.
\newblock {\em Comput. J.}, 51(3):326--362, 2008.
\newblock \href {https://doi.org/10.1093/comjnl/bxm052}
  {\path{doi:10.1093/comjnl/bxm052}}.

\bibitem{corr/abs-2105-04660}
Ashwin Jacob, Jari J.~H. de~Kroon, Diptapriyo Majumdar, and Venkatesh Raman.
\newblock Parameterized complexity of deletion to scattered graph classes.
\newblock {\em CoRR}, abs/2105.04660, 2021.
\newblock URL: \url{https://arxiv.org/abs/2105.04660}, \href
  {http://arxiv.org/abs/2105.04660} {\path{arXiv:2105.04660}}.

\bibitem{JacobM020}
Ashwin Jacob, Diptapriyo Majumdar, and Venkatesh Raman.
\newblock Parameterized complexity of deletion to scattered graph classes.
\newblock In Yixin Cao and Marcin Pilipczuk, editors, {\em 15th International
  Symposium on Parameterized and Exact Computation, {IPEC} 2020, December
  14-18, 2020, Hong Kong, China (Virtual Conference)}, volume 180 of {\em
  LIPIcs}, pages 18:1--18:17. Schloss Dagstuhl - Leibniz-Zentrum f{\"{u}}r
  Informatik, 2020.
\newblock \href {https://doi.org/10.4230/LIPIcs.IPEC.2020.18}
  {\path{doi:10.4230/LIPIcs.IPEC.2020.18}}.

\bibitem{corr/abs-2106-04191}
Bart M.~P. Jansen and Jari J.~H. de~Kroon.
\newblock {FPT} algorithms to compute the elimination distance to bipartite
  graphs and more.
\newblock {\em CoRR (to appear in the proceedings fo WG 2021)}, abs/2106.04191,
  2021.
\newblock URL: \url{https://arxiv.org/abs/2106.04191}, \href
  {http://arxiv.org/abs/2106.04191} {\path{arXiv:2106.04191}}.

\bibitem{JansenKW21}
Bart M.~P. Jansen, Jari J.~H. de~Kroon, and Michal Wlodarczyk.
\newblock Vertex deletion parameterized by elimination distance and even less.
\newblock {\em CoRR}, abs/2103.09715, 2021.
\newblock \href {http://arxiv.org/abs/2103.09715} {\path{arXiv:2103.09715}}.

\bibitem{JansenK021}
Bart M.~P. Jansen, Jari J.~H. de~Kroon, and Michal Wlodarczyk.
\newblock Vertex deletion parameterized by elimination distance and even less.
\newblock In Samir Khuller and Virginia~Vassilevska Williams, editors, {\em
  {STOC} '21: 53rd Annual {ACM} {SIGACT} Symposium on Theory of Computing,
  Virtual Event, Italy, June 21-25, 2021}, pages 1757--1769. {ACM}, 2021.
\newblock \href {https://doi.org/10.1145/3406325.3451068}
  {\path{doi:10.1145/3406325.3451068}}.

\bibitem{JansenLS14}
Bart M.~P. Jansen, Daniel Lokshtanov, and Saket Saurabh.
\newblock A near-optimal planarization algorithm.
\newblock In {\em Proceedings of the twenty-fifth annual ACM-SIAM symposium on
  Discrete algorithms}, pages 1802--1811. SIAM, {SIAM}, 2014.
\newblock \href {https://doi.org/10.1137/1.9781611973402.130}
  {\path{doi:10.1137/1.9781611973402.130}}.

\bibitem{JansenP18}
Bart M.~P. Jansen and Marcin Pilipczuk.
\newblock Approximation and kernelization for chordal vertex deletion.
\newblock {\em {SIAM} J. Discret. Math.}, 32(3):2258--2301, 2018.
\newblock \href {https://doi.org/10.1137/17M112035X}
  {\path{doi:10.1137/17M112035X}}.

\bibitem{JansenRV14}
Bart M.~P. Jansen, Venkatesh Raman, and Martin Vatshelle.
\newblock Parameter ecology for feedback vertex set.
\newblock {\em Tsinghua Science and Technology}, 19(4):387--409, 2014.
\newblock Special Issue dedicated to Jianer Chen.
\newblock \href {https://doi.org/10.1109/TST.2014.6867520}
  {\path{doi:10.1109/TST.2014.6867520}}.

\bibitem{DBLP:conf/soda/KakimuraKK12}
Naonori Kakimura, Ken{-}ichi Kawarabayashi, and Yusuke Kobayashi.
\newblock Erd{\"{o}}s-p{\'{o}}sa property and its algorithmic applications:
  parity constraints, subset feedback set, and subset packing.
\newblock In Yuval Rabani, editor, {\em Proceedings of the Twenty-Third Annual
  {ACM-SIAM} Symposium on Discrete Algorithms, {SODA} 2012, Kyoto, Japan,
  January 17-19, 2012}, pages 1726--1736. {SIAM}, 2012.

\bibitem{DBLP:conf/focs/KawarabayashiT11}
Ken{-}ichi Kawarabayashi and Mikkel Thorup.
\newblock The minimum k-way cut of bounded size is fixed-parameter tractable.
\newblock In {\em {IEEE} 52nd Annual Symposium on Foundations of Computer
  Science, {FOCS} 2011, Palm Springs, CA, USA, October 22-25, 2011}, pages
  160--169, 2011.
\newblock URL: \url{http://dx.doi.org/10.1109/FOCS.2011.53}, \href
  {https://doi.org/10.1109/FOCS.2011.53} {\path{doi:10.1109/FOCS.2011.53}}.

\bibitem{DBLP:journals/talg/0002LPRRSS16}
Eun~Jung Kim, Alexander Langer, Christophe Paul, Felix Reidl, Peter Rossmanith,
  Ignasi Sau, and Somnath Sikdar.
\newblock Linear kernels and single-exponential algorithms via protrusion
  decompositions.
\newblock {\em {ACM} Trans. Algorithms}, 12(2):21:1--21:41, 2016.
\newblock \href {https://doi.org/10.1145/2797140} {\path{doi:10.1145/2797140}}.

\bibitem{lekkeikerker1962representation}
C~Lekkeikerker and J~Boland.
\newblock Representation of a finite graph by a set of intervals on the real
  line.
\newblock {\em Fundamenta Mathematicae}, 51(1):45--64, 1962.

\bibitem{LindermayrSV20}
Alexander Lindermayr, Sebastian Siebertz, and Alexandre Vigny.
\newblock Elimination distance to bounded degree on planar graphs.
\newblock In Javier Esparza and Daniel Kr{\'{a}}l', editors, {\em 45th
  International Symposium on Mathematical Foundations of Computer Science,
  {MFCS} 2020, August 24-28, 2020, Prague, Czech Republic}, volume 170 of {\em
  LIPIcs}, pages 65:1--65:12. Schloss Dagstuhl - Leibniz-Zentrum f{\"{u}}r
  Informatik, 2020.
\newblock \href {https://doi.org/10.4230/LIPIcs.MFCS.2020.65}
  {\path{doi:10.4230/LIPIcs.MFCS.2020.65}}.

\bibitem{DBLP:journals/siamdm/LokshtanovMRS17}
Daniel Lokshtanov, Pranabendu Misra, M.~S. Ramanujan, and Saket Saurabh.
\newblock Hitting selected (odd) cycles.
\newblock {\em {SIAM} J. Discret. Math.}, 31(3):1581--1615, 2017.

\bibitem{LokshtanovNRRS14}
Daniel Lokshtanov, N.~S. Narayanaswamy, Venkatesh Raman, M.~S. Ramanujan, and
  Saket Saurabh.
\newblock Faster parameterized algorithms using linear programming.
\newblock {\em {ACM} Trans. Algorithms}, 11(2):15:1--15:31, 2014.
\newblock \href {https://doi.org/10.1145/2566616} {\path{doi:10.1145/2566616}}.

\bibitem{DBLP:journals/corr/LokshtanovRS17}
Daniel Lokshtanov, M.~S. Ramanujan, and Saket Saurabh.
\newblock The half-integral erd{\"{o}}s-p{\'{o}}sa property for non-null
  cycles.
\newblock {\em CoRR}, abs/1703.02866, 2017.
\newblock URL: \url{http://arxiv.org/abs/1703.02866}, \href
  {http://arxiv.org/abs/1703.02866} {\path{arXiv:1703.02866}}.

\bibitem{LokshtanovR0Z18}
Daniel Lokshtanov, M.~S. Ramanujan, Saket Saurabh, and Meirav Zehavi.
\newblock Reducing {CMSO} model checking to highly connected graphs.
\newblock In Ioannis Chatzigiannakis, Christos Kaklamanis, D{\'{a}}niel Marx,
  and Donald Sannella, editors, {\em 45th International Colloquium on Automata,
  Languages, and Programming, {ICALP} 2018, July 9-13, 2018, Prague, Czech
  Republic}, volume 107 of {\em LIPIcs}, pages 135:1--135:14. Schloss Dagstuhl
  - Leibniz-Zentrum f{\"{u}}r Informatik, 2018.
\newblock \href {https://doi.org/10.4230/LIPIcs.ICALP.2018.135}
  {\path{doi:10.4230/LIPIcs.ICALP.2018.135}}.

\bibitem{MajumdarR18}
Diptapriyo Majumdar and Venkatesh Raman.
\newblock Structural parameterizations of undirected feedback vertex set: {FPT}
  algorithms and kernelization.
\newblock {\em Algorithmica}, 80(9):2683--2724, 2018.
\newblock \href {https://doi.org/10.1007/s00453-018-0419-4}
  {\path{doi:10.1007/s00453-018-0419-4}}.

\bibitem{Marx06}
D{\'{a}}niel Marx.
\newblock Parameterized graph separation problems.
\newblock {\em Theor. Comput. Sci.}, 351(3):394--406, 2006.
\newblock \href {https://doi.org/10.1016/j.tcs.2005.10.007}
  {\path{doi:10.1016/j.tcs.2005.10.007}}.

\bibitem{Marx10}
D{\'{a}}niel Marx.
\newblock Chordal deletion is fixed-parameter tractable.
\newblock {\em Algorithmica}, 57(4):747--768, 2010.
\newblock \href {https://doi.org/10.1007/s00453-008-9233-8}
  {\path{doi:10.1007/s00453-008-9233-8}}.

\bibitem{NesetrilM06}
Jaroslav Nesetril and Patrice~Ossona de~Mendez.
\newblock Tree-depth, subgraph coloring and homomorphism bounds.
\newblock {\em Eur. J. Comb.}, 27(6):1022--1041, 2006.
\newblock \href {https://doi.org/10.1016/j.ejc.2005.01.010}
  {\path{doi:10.1016/j.ejc.2005.01.010}}.

\bibitem{Oum05}
Sang{-}il Oum.
\newblock Rank-width and vertex-minors.
\newblock {\em J. Comb. Theory, Ser. {B}}, 95(1):79--100, 2005.
\newblock \href {https://doi.org/10.1016/j.jctb.2005.03.003}
  {\path{doi:10.1016/j.jctb.2005.03.003}}.

\bibitem{Oum08}
Sang{-}il Oum.
\newblock Approximating rank-width and clique-width quickly.
\newblock {\em {ACM} Trans. Algorithms}, 5(1):10:1--10:20, 2008.
\newblock \href {https://doi.org/10.1145/1435375.1435385}
  {\path{doi:10.1145/1435375.1435385}}.

\bibitem{Oum08a}
Sang{-}il Oum.
\newblock Rank-width and well-quasi-ordering.
\newblock {\em {SIAM} J. Discret. Math.}, 22(2):666--682, 2008.
\newblock \href {https://doi.org/10.1137/050629616}
  {\path{doi:10.1137/050629616}}.

\bibitem{Oum17}
Sang{-}il Oum.
\newblock Rank-width: Algorithmic and structural results.
\newblock {\em Discret. Appl. Math.}, 231:15--24, 2017.
\newblock \href {https://doi.org/10.1016/j.dam.2016.08.006}
  {\path{doi:10.1016/j.dam.2016.08.006}}.

\bibitem{OumS06}
Sang{-}il Oum and Paul~D. Seymour.
\newblock Approximating clique-width and branch-width.
\newblock {\em J. Comb. Theory, Ser. {B}}, 96(4):514--528, 2006.
\newblock \href {https://doi.org/10.1016/j.jctb.2005.10.006}
  {\path{doi:10.1016/j.jctb.2005.10.006}}.

\bibitem{ReedSV04}
Bruce~A. Reed, Kaleigh Smith, and Adrian Vetta.
\newblock Finding odd cycle transversals.
\newblock {\em Oper. Res. Lett.}, 32(4):299--301, 2004.
\newblock \href {https://doi.org/10.1016/j.orl.2003.10.009}
  {\path{doi:10.1016/j.orl.2003.10.009}}.

\bibitem{RobertsonS95b}
Neil Robertson and Paul~D. Seymour.
\newblock Graph minors. {XIII.} {T}he disjoint paths problem.
\newblock {\em J. Comb. Theory, Ser. B}, 63(1):65--110, 1995.
\newblock \href {https://doi.org/10.1006/jctb.1995.1006}
  {\path{doi:10.1006/jctb.1995.1006}}.

\bibitem{SauST20}
Ignasi Sau, Giannos Stamoulis, and Dimitrios~M. Thilikos.
\newblock An fpt-algorithm for recognizing k-apices of minor-closed graph
  classes.
\newblock In Artur Czumaj, Anuj Dawar, and Emanuela Merelli, editors, {\em 47th
  International Colloquium on Automata, Languages, and Programming, {ICALP}
  2020, July 8-11, 2020, Saarbr{\"{u}}cken, Germany (Virtual Conference)},
  volume 168 of {\em LIPIcs}, pages 95:1--95:20. Schloss Dagstuhl -
  Leibniz-Zentrum f{\"{u}}r Informatik, 2020.
\newblock \href {https://doi.org/10.4230/LIPIcs.ICALP.2020.95}
  {\path{doi:10.4230/LIPIcs.ICALP.2020.95}}.

\bibitem{BevernKMN10}
Ren{\'{e}} van Bevern, Christian Komusiewicz, Hannes Moser, and Rolf
  Niedermeier.
\newblock Measuring indifference: Unit interval vertex deletion.
\newblock In Dimitrios~M. Thilikos, editor, {\em Graph Theoretic Concepts in
  Computer Science - 36th International Workshop, {WG} 2010, Zar{\'{o}}s,
  Crete, Greece, June 28-30, 2010 Revised Papers}, volume 6410 of {\em Lecture
  Notes in Computer Science}, pages 232--243, 2010.
\newblock \href {https://doi.org/10.1007/978-3-642-16926-7\_22}
  {\path{doi:10.1007/978-3-642-16926-7\_22}}.

\end{thebibliography}

\appendix
\section{Problem Definitions}

\smallskip
\noindent{{\sc Chordal Vertex Deletion}}\\ 
\noindent{\em Input:}  An undirected graph $G$, and
a positive integer $k$.\\
\noindent{\em Output: } Does there exist a vertex subset $S$ of size at most $k$ such that
 $G-S$ is a chordal graph?\\

 \smallskip
\noindent{{\sc Feedback Vertex Set (FVS)}}\\ 
\noindent{\em Input: } An undirected graph $G$, and a positive integer $k$.\\
\noindent{\em Output: } Does there exist a vertex subset $S$ of size at most $k$ that intersects all cycles in $G$?\\

 \smallskip
\noindent{{\sc Hitting Set}}\\ 
\noindent{\em Input: } A universe $U$, a family $\cal A$ of sets over $U$,
and an integer $k$.\\
\noindent{\em Output: }Does there exist a set $X\subseteq U$ of size at most $k$ that has a nonempty intersection with every element of $\cal A$?\\

\smallskip
\noindent{{\sc Interval Vertex Deletion}}\\ 
\noindent{\em Input: } An undirected graph $G$, and
a positive integer $k$.\\
\noindent{\em Output: } Does there exist a vertex subset $S$ of size at most $k$ such that
$G-S$ is an interval graph?\\

\smallskip
\noindent{{\sc Odd Cycle Transversal (OCT)}}\\ 
\noindent{\em Input: } An undirected graph $G$, and
a positive integer $k$.\\
\noindent{\em Output: } Does there exist a vertex subset $S$ of size at most $k$ that intersects all odd cycles in $G$?\\

\smallskip
\noindent{{\sc Mutiway Cut}}\\ 
\noindent{\em Input: } An undirected graph $G$, a vertex subset $T$, and
a positive integer $k$.\\
\noindent{\em Question: } Does there exist a vertex subset $S$ of size at most $k$ such that
each connected component of $G-S$ has at most one vertex of $T$?\\

\smallskip
\noindent{{\sc Planar Vertex Deletion}}\\ 
\noindent{\em Input: } An undirected graph $G$, and
a positive integer $k$.\\
\noindent{\em Output: } Does there exist a vertex subset $S$ of size at most $k$ such that
$G-S$ is a planar graph?\\

\smallskip
\noindent{{\sc Subset Feedback Vertex Set (Subset FVS)}}\\ 
\noindent{\em Input: } An undirected graph $G$, a vertex subset $T$, and
a positive integer $k$.\\
\noindent{\em Output: } Does there exist a vertex subset $S$ of size at most $k$ that intersects all cycles containing a vertex of $T$?\\

\smallskip
\noindent{{\sc Subset Odd Cycle Transversal (Subset OCT)}}\\ 
\noindent{\em Input: } An undirected graph $G$, a vertex subset $T$, and
a positive integer $k$.\\
\noindent{\em Output: } Does there exist a vertex subset $S$ of size at most $k$ that intersects all odd cycles containing a vertex of $T$?\\

\smallskip
\noindent{{\sc Topological Minor Deletion}}\\ 
\noindent{\em Input: } An undirected graph $G$, a family of undirected graphs $\cal F$, and
a positive integer $k$.\\
\noindent{\em Output: } Does there exist a vertex subset $S$ of size at most $k$ such that
 $G-S$ contains no graph from $\cal H$ as a topological minor?\\

\smallskip
\noindent{{\sc Vertex Cover}}\\ 
\noindent{\em Input: } An undirected graph $G$, and
a positive integer $k$.\\
\noindent{\em Output: } Does there exist a vertex subset $S$ of size at most $k$ such that
 $G-S$ is edgeless?\\

\end{document}